\newtheorem{theorem}{Theorem}[section]
\newtheorem{proposition}[theorem]{Proposition}
\newtheorem{lemma}[theorem]{Lemma}
\newtheorem{corollary}[theorem]{Corollary}
\theoremstyle{definition}
\newtheorem{definition}[theorem]{Definition}
\theoremstyle{remark}
\newtheorem{remark}[theorem]{Remark}
\numberwithin{equation}{section}
\DeclareMathOperator{\Ad}{Ad}
\DeclareMathOperator{\Aut}{Aut}
\DeclareMathOperator{\Bimod}{Bimod}
\DeclareMathOperator{\bimod}{bimod}
\DeclareMathOperator{\comp}{comp}
\DeclareMathOperator{\diag}{diag}
\DeclareMathOperator{\Id}{id}
\DeclareMathOperator{\im}{im}
\DeclareMathOperator{\mult}{mult}
\DeclareMathOperator{\nullity}{null}
\DeclareMathOperator{\Orth}{O}
\DeclareMathOperator{\rank}{rank}
\DeclareMathOperator{\sgn}{sgn}
\DeclareMathOperator{\skel}{skel}
\DeclareMathOperator{\Sp}{Sp}
\DeclareMathOperator{\SU}{SU}
\DeclareMathOperator{\supp}{supp}
\DeclareMathOperator{\Sym}{Sym}
\DeclareMathOperator{\tr}{tr}
\DeclareMathOperator{\U}{U}
\newcommand{\Cstar}{{\ensuremath{C^*}}}
\newcommand{\even}{\textup{even}}
\newcommand{\latin}[1]{{\it{#1}\/}}
\newcommand{\odd}{\textup{odd}}
\newcommand{\sa}{\textup{sa}}
\newcommand{\Star}{\ensuremath{\ast}}
\newcommand{\term}[1]{{\it{#1}\/}}
\newcommand{\alg}[1]{\mathcal{#1}}
\newcommand{\bdd}{\mathcal{L}}
\newcommand{\crep}[1]{\overline{\bf {#1}}}
\newcommand{\eps}{\varepsilon}
\newcommand{\epsp}{\varepsilon^\prime}
\newcommand{\epspp}{\varepsilon^{\prime\prime}}
\newcommand{\field}[1]{{\mathbb{#1}}}
\newcommand{\form}[1]{\left\langle {#1} \right\rangle}
\newcommand{\hs}[1]{{\mathcal{#1}}}
\newcommand{\lbdd}{\bdd^{\textup{L}}}
\newcommand{\lrbdd}{\bdd^{\textup{LR}}}
\newcommand{\lrU}{\U^{\textup{LR}}}
\newcommand{\ms}[1]{\mathcal{#1}}
\newcommand{\norm}[1]{{\| {#1} \|}}
\newcommand{\rbdd}{\bdd^{\textup{R}}}
\newcommand{\rep}[1]{{\bf #1}}
\newcommand{\ring}[1]{{\mathbb{#1}}}
\newcommand{\semiring}[1]{{\mathbb{#1}}}
\newcommand{\fsl}{\mathfrak{sl}}
\newcommand{\fso}{\mathfrak{so}}
\newcommand{\spec}[1]{{\widehat{#1}}}
\newcommand{\ud}{\mathop{}\!\mathrm{d}}
\newcommand{\unit}[1]{{\tilde{#1}}}
\newcommand{\ZO}{\semiring{Z}_{\geq 0}}
\newcommand{\ie}{{\it i.e.\/}\ }
\newcommand{\cf}{{\it cf.\/}\ }
\begin{document}

\title[Finite Spectral Triples]{Moduli Spaces of Dirac Operators for Finite Spectral Triples}

\author{Branimir {\'C}a{\'c}i{\'c}}
\address{Max Planck Institute for Mathematics\\ Vivatsgasse 7\\ 53111 Bonn\\ Germany}
\curraddr{California Institute of Technology\\ Department of Mathematics\\ MC 253-37\\ Pasadena, CA 91125\\ U.S.A.}
\email{branimir@caltech.edu}

\subjclass[2000]{Primary 58J42; Secondary 58B34, 58D27, 81R60}

\begin{abstract}
The structure theory of finite real spectral triples developed by Krajewski and by Paschke and Sitarz is generalised to allow for arbitrary $KO$-dimension and the failure of orientability and Poincar{\'e} duality, and moduli spaces of Dirac operators for such spectral triples are defined and studied. This theory is then applied to recent work by Chamseddine and Connes towards deriving the finite spectral triple of the noncommutative-geometric Standard Model.
\end{abstract}

\maketitle

\section{Introduction}

From the time of Connes's 1995 paper~\cite{Connes95}, spectral triples with finite-di{\-}men{\-}sion{\-}al \Star-algebra and Hilbert space, or \term{finite spectral triples}, have been central to the noncommutative-geometric (NCG) approach to the Standard Model of elementary particle physics, where they are used to encode the fermionic physics. As a result, they have been the focus of considerable research activity.

The study of finite spectral triples began in earnest with papers by Paschke and Sitarz~\cite{PS98} and by Krajewski~\cite{Kraj98}, first released nearly simultaneously in late 1996 and early 1997, respectively, which gave detailed accounts of the structure of finite spin geometries, \ie of finite real spectral triples of $KO$-dimension $0 \bmod 8$ satisfying orientability and Poincar{\'e} duality. In their approach, the study of finite spectral triples is reduced, for the most part, to the study of \term{multiplicity matrices}, integer-valued matrices that explicitly encode the underlying representation-theoretic structure. Krajewski, in particular, defined what are now called \term{Krajewski diagrams} to facilitate the classification of such spectral triples. Iochum, Jureit, Sch{\"u}cker, and Stephan have since undertaken a programme of classifying Krajewski diagrams for finite spectral triples satisfying certain additional physically desirable  assumptions~\cites{ACG1,ACG2,ACG3,Sch05} using combinatorial computations~\cite{JS08}, with the aim of fixing the finite spectral triple of the Standard Model amongst all other such triples.

However, there were certain issues with the then-current version of the NCG Standard Model, including difficulty with accomodating massive neutrinos and the so-called fermion doubling problem, that were only to be resolved in the 2006 papers by Connes~\cite{Connes06} and by Chamseddine, Connes and Marcolli~\cite{CCM07}, which use the Euclidean signature of earlier papers, and by Barrett~\cite{Bar07}, which instead uses Lorentzian signature; we restrict our attention to the Euclidean signature approach of~\cite{Connes06} and~\cite{CCM07}, which has more recently been set forth in the monograph~\cite{CM08} of Connes and Marcolli. The finite spectral triple of the current version has $KO$-dimension $6 \bmod 8$ instead of $0 \bmod 8$, fails to be orientable, and only satisfies a certain modified version of Poincar{\'e} duality. It also no longer satisfies $S^0$-reality, another condition that holds for the earlier finite geometry of~\cite{Connes95}, though only because of the Dirac operator. Jureit, and Stephan~\cites{ACG4,ACG5} have since adopted the new value for the $KO$-dimension, but further assume orientability and Poincar{\'e} duality. As well, Stephan~\cite{St06} has proposed an alternative finite spectral triple for the current NCG Standard Model with the same physical content but satisfying Poincar{\'e} duality; it also just fails to be $S^0$-real in the same manner as the finite geometry of~\cite{CCM07}; in the same paper, Stephan also discusses non-orientable finite spectral triples.

More recently, Chamseddine and Connes~\cites{CC08a,CC08b} have sought a purely algebraic method of isolating the finite spectral triple of the NCG Standard Model, by which they have obtained the correct \Star-algebra, Hilbert space, grading and real structure using a small number of fairly elementary assumptions. In light of these successes, it would seem reasonable to try to view this new approach of Chamseddine and Connes through the lens of the structure theory of Krajewski and Paschke--Sitarz, at least in order to understand better their method and the assumptions involved. This, however, would require adapting that structure theory to handle the failure of orientability and Poincar{\'e} duality, yielding the initial motivation of this work.

To that end, we provide, for the first time, a comprehensive account of the structure theory of Krajewski and Paschke--Sitarz for finite real spectral triples of arbitrary $KO$-dimension, without the assumptions of orientability or Poincar{\'e} duality; this consists primarily of straightforward generalisations of the results and techniques of~\cite{PS98} and~\cite{Kraj98}. In this light, the main features of the approach presented here are the following:
\begin{enumerate}
 \item A finite real spectral triple with algebra $\alg{A}$ is to be viewed as an $\alg{A}$-bimodule with some additional structure, together with a choice of Dirac operator compatible with that structure.
 \item For fixed algebra $\alg{A}$, an $\alg{A}$-bimodule is entirely characterised by its multiplicity matrix (in the ungraded case) or matrices (in the graded case), which also completely determine(s) what sort of additional structure the bimodule can admit; this additional structure is then unique up to unitary equivalence.
 \item The form of suitable Dirac operators for an $\alg{A}$-bimodule with real structure is likewise determined completely by the multiplicity matrix or matrices of the bimodule and the choice of additional structure.
\end{enumerate}
However, we do not discuss Krajewski diagrams, though suitable generalisation thereof should follow readily from the generalised structure theory for Dirac operators.

Once we view a real spectral triple as a certain type of bimodule together with a \emph{choice} of suitable Dirac operator, it then becomes natural to consider moduli spaces of suitable Dirac operators, up to unitary equivalence, for a bimodule with fixed additional structure, yielding finite real spectral triples of the appropriate $KO$-dimension. The construction and study of such moduli spaces of Dirac operators first appear in~\cite{CCM07}, though the focus there is on the sub-moduli space of Dirac operators commuting with a certain fixed subalgebra of the relevant \Star-algebra. Our last point above almost immediately leads us to relatively concrete expressions for general moduli spaces of Dirac operators, which also appear here for the first time. Multiplicity matrices and moduli spaces of Dirac operators are then worked out for the bimodules appearing in the Chamseddine--Connes--Marcolli formulation of the NCG Standard Model~\cites{CCM07,CM08} as examples.

Finally, we apply these methods to the work of Chamseddine and Connes~\cites{CC08a,CC08b}, offering concrete proofs and some generalisations of their results. In particular, the choices determining the finite geometry of the current NCG Standard Model within their framework are made explicit.

This work, a revision of the author's qualifying year project (master's thesis equivalent) at the Bonn International Graduate School in Mathematics (BIGS) at the University of Bonn, is intended as a first step towards a larger project of investigating in generality the underlying noncommutative-geometric formalism for field theories found in the NCG Standard Model, with the aim of both better understanding current versions of the NCG Standard Model and facilitating the further development of the formalism itself.

The author would like to thank his supervisor, Matilde Marcolli, for her extensive comments and for her advice, support, and patience, Tobias Fritz for useful comments and corrections, and George Elliott for helpful conversations. The author also gratefully acknowledges the financial and administrative support of BIGS and of the Max Planck Institute for Mathematics, as well as the hospitality and support of the Department of Mathematics at the California Institute of Technology and of the Fields Institute.

\section{Preliminaries and Definitions}

\subsection{Real \Cstar-algebras}

In light of their relative unfamiliarity compared to their complex counterparts, we begin with some basic facts concerning real \Cstar-algebras.

First, recall that a \term{real \Star-algebra} is a real associative algebra $\alg{A}$ together with an \term{involution} on $\alg{A}$, namely an antihomomorphism $\ast$ satisfying $\ast^2 = \Id$, and that the \term{unitalisation} of a real \Star-algebra $\alg{A}$ is the unital real \Star-algebra $\tilde{\alg{A}}$ defined to be $\alg{A} \oplus \field{R}$ as a real vector space, together with the multiplication $(a,\alpha)(b,\beta) := (ab + \alpha b + \beta a, \alpha\beta)$ for $a$, $b \in \alg{A}$, $\alpha$, $\beta \in \field{R}$ and the involution $\star \oplus \Id_{\field{R}}$. Note that if $\alg{A}$ is already unital, then $\unit{\alg{A}}$ is simply $\alg{A}\oplus\field{R}$.

\begin{definition}
 A \term{real \Cstar-algebra} is a real \Star-algebra $\alg{A}$ endowed with a norm $\norm{\cdot}$ making $\alg{A}$ a real Banach algebra, such that the following two conditions hold:
\begin{enumerate}
 \item $\forall a \in \alg{A}$, $\norm{a^* a} = \norm{a}^2$ (\term{\Cstar-identity});
 \item $\forall a \in \unit{\alg{A}}$, $1 + a^* a$ is invertible in $\unit{\alg{A}}$ (\term{symmetry}).
\end{enumerate}
\end{definition}

The symmetry condition is redundant for complex \Cstar-algebras, but not for real \Cstar-algebras. Indeed, consider $\field{C}$ as a real algebra together with the trivial involution $\Star = \Id$ and the usual norm $\norm{\zeta} = |\zeta|$, $\zeta \in \field{C}$. Then $\field{C}$ with this choice of involution and norm yields a real Banach \Star-algebra satisfying the \Cstar-identity but not symmetry, for $1 + i^* i = 0$ is certainly not invertible in $\unit{\field{C}} = \field{C}\oplus\field{R}$.

Now, in the finite-dimensional case, one can give a complete description of real \Cstar-algebras, which we shall use extensively in what follows:

\begin{theorem}[Wedderburn's theorem for real \Cstar-algebras \cite{Fare}]
 Let $\alg{A}$ be a finite-dimensional real \Cstar-algebra. Then
\begin{equation}
 \alg{A} \cong \bigoplus_{i=1}^N M_{n_i}(\field{K}_i),
\end{equation}
where $\field{K}_i = \field{R}$, $\field{C}$, or $\field{H}$, and $n_i \in \semiring{N}$. Moreover, this decomposition is unique up to permutation of the direct summands.
\end{theorem}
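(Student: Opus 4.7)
The plan is to deduce the decomposition from the classical Artin--Wedderburn theorem together with Frobenius's classification of finite-dimensional associative real division algebras. First, one shows that $\alg{A}$ is semisimple, \ie that its Jacobson radical $J$ is zero. Since $\ast$ permutes maximal left and right ideals, $J$ is a self-adjoint two-sided ideal; for any $a \in J$ we have $b := a^*a \in J$ self-adjoint, iterating the \Cstar-identity gives $\norm{b^{2^n}} = \norm{b}^{2^n}$, and the symmetry axiom is exactly what is needed to ensure elements of $J$ are quasi-nilpotent in $\unit{\alg{A}}$ (so that a sensible spectrum, and hence spectral radius formula for self-adjoint elements, is available). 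Hence $\norm{a^*a} = 0$, so $a = 0$ and $J = 0$. Artin--Wedderburn then yields $\alg{A} \cong \bigoplus_{i=1}^N M_{n_i}(D_i)$ with each $D_i$ a finite-dimensional associative division algebra over $\field{R}$, and Frobenius's theorem identifies every $D_i$ with $\field{R}$, $\field{C}$, or $\field{H}$.

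Uniqueness up to permutation is inherited from the uniqueness part of Artin--Wedderburn, as $M_n(\field{R})$, $M_n(\field{C})$, and $M_n(\field{H})$ have centres of distinct $\field{R}$-dimensions and mutually non-isomorphic opposite algebras, so no two can be confused. The main obstacle is precisely the semisimplicity step: the familiar complex-\Cstar\ argument rests on positivity of $a^*a$, which in the real setting must be reconstructed using the symmetry axiom. Indeed, without symmetry one already has the counterexample of $\field{C}$ with the trivial involution noted in the text, whose ``\Cstar-algebra'' structure does not conform to the Wedderburn list. A cleaner alternative route sidesteps the positivity question entirely by complexifying: form $\alg{A}_{\field{C}} := \alg{A} \otimes_\field{R} \field{C}$, use symmetry to equip it with a genuine complex \Cstar-norm, apply the complex Wedderburn theorem to obtain $\alg{A}_\field{C} \cong \bigoplus_j M_{m_j}(\field{C})$, and then descend the conjugation involution on $\alg{A}_\field{C}$ (whose fixed-point subalgebra is $\alg{A}$) summand by summand: a summand fixed by conjugation contributes $M_n(\field{R})$ or $M_n(\field{H})$ according to whether the induced antilinear involution squares to $+1$ or $-1$, while a conjugate pair of swapped summands contributes $M_n(\field{C})$.
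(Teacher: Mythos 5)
The paper supplies no proof of this theorem; it simply cites Farenick~\cite{Fare}, so there is nothing in the text to compare against, only the question of whether your argument is sound. It contains a genuine misdiagnosis. You present semisimplicity as the crux and claim the symmetry axiom is what rescues the radical argument, offering $\field{C}$ under the trivial involution as the witness. Neither half of that is right. In finite dimensions the Jacobson radical is automatically nilpotent, so for $a$ in the radical one has $(a^*a)^k = 0$ for some $k$; iterating the \Cstar-identity gives $\norm{a^*a}^{2^m} = \norm{(a^*a)^{2^m}} = 0$ once $2^m \geq k$, hence $\norm{a}^2 = \norm{a^*a} = 0$ and the radical vanishes, with symmetry never touched. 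And your proposed counterexample is in fact semisimple ($\field{C}$ is a field, so its radical is zero) and, as a bare real algebra, it \emph{does} conform to the Wedderburn list, being $M_1(\field{C})$. What that example violates is not semisimplicity but the $*$-structure: the identity involution on $\field{C}$ is not conjugate to complex conjugation.

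That observation exposes the real gap. The theorem as the paper uses it (to parametrise $\spec{\alg{A}}$, to build the complex form $\alg{A}_{\field{C}}$, to discuss $*$-representations) asserts a $*$-isomorphism onto $\bigoplus_i M_{n_i}(\field{K}_i)$ equipped with the standard adjoint involutions, not merely an algebra isomorphism. Artin--Wedderburn plus Frobenius delivers only the latter; you must still show that on each simple summand the given involution is conjugate to the standard adjoint, and this is exactly where symmetry (equivalently, invertibility of $1 + a^*a$, i.e.\ positivity of $a^*a$) is the decisive hypothesis, since it is what excludes involutions like the trivial one on $\field{C}$. Your complexification route does supply this missing step correctly: symmetry is what guarantees $\alg{A} \otimes_{\field{R}} \field{C}$ carries a complex \Cstar-norm, the complex Wedderburn theorem then gives a $*$-isomorphism onto $\bigoplus_j M_{m_j}(\field{C})$, and the standard analysis of the antilinear conjugation on each summand or swapped pair of summands recovers $\field{R}$, $\field{H}$, or $\field{C}$ together with the correct involution. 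I would promote that route from an aside to the main argument and delete, or at least reframe, the first one.
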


Note, in particular, that a finite-dimensional real \Cstar-algebra is necessarily unital.

Given a finite-dimensional real \Cstar-algebra $\alg{A}$ with fixed \term{Wedderburn decomposition} $\oplus_{i=1}^N M_{n_i}(\field{K}_i)$ we can associate to $\alg{A}$ a finite dimensional complex \Cstar-algebra $\alg{A}_{\field{C}}$, the \term{complex form} of $\alg{A}$, by setting
\begin{equation}
 \alg{A}_{\field{C}} := \bigoplus_{i=1}^N M_{m_i}(\field{C}),
\end{equation}
where $m_i = 2n_i$ if $\field{K}_i = \field{H}$, and $m_i = n_i$ otherwise. Then $\alg{A}$ can be viewed as a real \Star-subalgebra of $\alg{A}_\field{C}$ such that $\alg{A}_{\field{C}} = \alg{A} + i \alg{A}$, that is, as a \term{real form} of $\alg{A}_{\field{C}}$. Here, $\field{H}$ is considered as embedded in $M_2(\field{C})$ by
\[
 \zeta_1 + j \zeta_2 \mapsto
 \begin{pmatrix}
  \zeta_1 & \zeta_2 \\
  -\overline{\zeta_2} & \overline{\zeta_1}
 \end{pmatrix},
\]
for $\zeta_1$, $\zeta_2 \in \field{C}$.

In what follows, we will consider only finite-dimensional real \Cstar-algebras with fixed Wedderburn decomposition.

\subsection{Representation theory}

In keeping with the conventions of noncommutative differential geometry, we shall consider \Star-representations of real \Cstar-algebras on complex Hilbert spaces. Recall that such a (left) representation of a real \Cstar-algebra $\alg{A}$ consists of a complex Hilbert space $\hs{H}$ together with a \Star-homomorphism $\lambda : \alg{A} \to \bdd(\hs{H})$ between real \Cstar-algebras. Similarly, a \term{right representation} of $\alg{A}$ is defined to be a complex Hilbert space $\hs{H}$ together with a \Star-\emph{antihomomorphism} $\rho : \alg{A} \to \bdd(\hs{H})$ between real \Cstar-algebras. For our purposes, then, an \term{$\alg{A}$-bimodule} consists of a complex Hilbert space $\hs{H}$ together with a left \Star-representation $\lambda$ and a right \Star-representation $\rho$ that commute, \ie such that $[\lambda(a),\rho(b)]=0$ for all $a$, $b \in \alg{A}$. In what follows, we will consider only finite-dimensional representations and hence only finite-dimensional bimodules; since finite-dimensional \Cstar-algebras are always unital, we shall require all representations to be unital as well.

Now, given a left [right] representation $\alpha = (\hs{H},\pi)$ of an algebra $\alg{A}$, one can define its \term{transpose} to be the right [left] representation $\alpha^T = (\hs{H}^*,\pi^T)$ , where $\pi^T(a) := \pi(a)^T$ for all $a \in \alg{A}$. Note that for any left or right representation $\alpha$, $(\alpha^T)^T$ can naturally be identified with $\alpha$ itself. In the case that $\hs{H} = \field{C}^N$, we shall identify $\hs{H}^*$ with $\hs{H}$ by identifying the standard ordered basis on $\hs{H}$ with the corresponding dual basis on $\hs{H}^*$. The notion of the transpose of a representation allows us to reduce discussion of right representations to that of left representations.

Since real \Cstar-algebras are semisimple, any left representation can be written as a direct sum of irreducible representations, unique up to permutation of the direct summands, and hence any right representation can be written as a direct sum of transposes of irreducible representations, again unique up to permutation of the direct summands.

\begin{definition}
 The \term{spectrum} $\spec{\alg{A}}$ of a real \Cstar-algebra $\alg{A}$ is the set of unitary equivalence classes of irreducible representations of $\alg{A}$.
\end{definition}

Now, let $\alg{A}$ be a real \Cstar-algebra with Wedderburn decomposition $\oplus_{i=1}^N M_{k_i}(\field{K}_i)$. Then
\begin{equation}
 \spec{\alg{A}} = \bigsqcup_{i=1}^N \spec{M_{k_i}(\field{K}_i)},
\end{equation}
where the embedding of $\spec{M_{k_i}(\field{K}_i)}$ in $\spec{\alg{A}}$ is given by composing the representation maps with the projection of $\alg{A}$ onto the direct summand $M_{k_i}(\field{K}_i)$. The building blocks for $\spec{\alg{A}}$ are as follows:
\begin{enumerate}
 \item $\spec{M_{n}(\field{R})} = \{[(\field{C}^n,\lambda)]\}$,
 \item $\spec{M_{n}(\field{C})} = \{[(\field{C}^n,\lambda)],[(\field{C}^N,\overline{\lambda})]\}$,
 \item $\spec{M_{n}(\field{H})} = \{[(\field{C}^{2n},\lambda)]\}$,
\end{enumerate}
where $\lambda(a)$ denotes left multiplication by $a$ and $\overline{\lambda}(a)$ denotes left multiplication by $\overline{a}$.

\begin{definition}
 Let $\alg{A}$ be a real \Cstar-algebra, and let $\alpha \in \spec{\alg{A}}$. We shall call $\alpha$ \term{conjugate-linear} if it arises from the conjugate-linear irreducible representation $(a \mapsto \overline{a},\field{C}^{n_i})$ of a direct summand of $\alg{A}$ of the form $M_{n_i}(\field{C})$; otherwise we shall call it \term{complex-linear}.
\end{definition}

Thus, a representation $\alpha$ of the real \Cstar-algebra $\alg{A}$ extends to a $\field{C}$-linear \Star-representation of $\alg{A}_{\field{C}}$ if and only if $\alpha$ is the sum of complex-linear irreducible representations of $\alg{A}$.

Finally, for an individual direct summand $M_{k_i}(\field{K}_i)$ of $\alg{A}$, let $e_i$ denote its unit, $n_i$ the dimension of its irreducible representations (which is therefore equal to $2 k_i$ if $\field{K}_i = \field{H}$, and to $k_i$ itself otherwise), $\rep{n}_i$ its complex-linear irreducible representation, and, if $\field{K}_i = \field{C}$, $\crep{n}_i$ its conjugate-linear irreducible representation. We define a strict ordering $<$ on $\spec{\alg{A}}$ by setting $\alpha < \beta$ whenever $\alpha \in \spec{M_{n_i}(\field{K}_i)}$, $\beta \in \spec{M_{n_j}(\field{K}_j)}$ for $i < j$, and by setting $\rep{n}_i < \crep{n}_i$ in the case that $\field{K}_i = \field{C}$. Note that the ordering depends on the choice of Wedderburn decomposition, \ie on the choice of ordering of the direct summands. Let $S$ denote the cardinality of $\spec{\alg{A}}$. We shall identify $M_S(\field{R})$ with the real algebra of functions $\spec{\alg{A}}^2 \to \ring{R}$, and hence index the standard basis $\{E_{\alpha\beta}\}$ of $M_S(\field{R})$ by $\spec{\alg{A}}^2$.

\subsection{Bimodules and spectral triples}

Let us now turn to spectral triples. Recall that we are considering only finite-dimensional algebras and representations (\ie Hilbert spaces), so that we are dealing only with what are termed \term{finite} or \term{discrete} spectral triples.

Let $\hs{H}$ and $\hs{H}^\prime$ be $\alg{A}$-bimodules. We shall denote by $\lbdd_\alg{A}(\hs{H},\hs{H}^\prime)$, $\rbdd_\alg{A}(\hs{H},\hs{H}^\prime)$, and $\lrbdd_\alg{A}(\hs{H},\hs{H}^\prime)$ the subspaces of $\bdd(\hs{H},\hs{H}^\prime)$ consisting of left $\alg{A}$-linear, right $\alg{A}$-linear, and left and right $\alg{A}$-linear operators, respectively. In the case that $\hs{H}^\prime = \hs{H}$, we shall write simply $\lbdd_\alg{A}(\hs{H})$, $\rbdd_\alg{A}(\hs{H})$ and $\lrbdd_\alg{A}(\hs{H})$. If $N$ is a subalgebra or linear subspace of a real or complex \Cstar-algebra, we shall denote by $N_\sa$ the real linear subspace of $N$ consisting of the self-adjoint elements of $N$, and we shall denote by $\U(N)$ set of unitary elements of $N$. Finally, for operators $A$ and $B$ on a Hilbert space, we shall denote their anticommutator $AB + BA$ by $\{ A,B \}$.

\subsubsection{Conventional definitions}

We begin by recalling the standard definitions for spectral triples of various forms. Since we are working with the finite case, all analytical requirements become redundant, leaving behind only the algebraic aspects of the definitions.

The following definition first appeared in a 1995 paper~\cite{Connes95a} by Connes:

\begin{definition}
 A \term{spectral triple} is a triple $(\alg{A},\hs{H},D)$, where:
\begin{itemize}
 \item $\alg{A}$ is a unital real or complex \Star-algebra;
 \item $\hs{H}$ is a complex Hilbert space on which $\alg{A}$ has a left representation  $\lambda : \alg{A} \to \bdd(\hs{H})$;
 \item $D$, the \term{Dirac operator}, is a self-adjoint operator on $\hs{H}$.
\end{itemize}

Moreover, if there exists  a $\ring{Z}/2\ring{Z}$-grading $\gamma$ on $\hs{H}$ (\ie a self-adjoint unitary on $\hs{H}$) such that:
\begin{enumerate}
 \item $[\gamma, \lambda(a)]=0$ for all $a \in \alg{A}$,
 \item $\{\gamma,D\}=0$;
\end{enumerate}
then the spectral triple is said to be \term{even}. Otherwise, it is said to be \term{odd}.
\end{definition}

In the context of the general definition for spectral triples, a finite spectral triple necessarily has metric dimension $0$.

In a slightly later paper~\cite{Connes95}, Connes defines the additional structure on spectral triples necessary for defining the noncommutative spacetime of the NCG Standard Model; indeed, the same paper also contains the first version of the NCG Standard Model to use the language of spectral triples, in the form of a reformulation of the so-called Connes-Lott model.

\begin{definition}\label{realdef}
 A spectral triple $(\alg{A},\hs{H},D)$ is called a \term{real spectral triple of $KO$-dimension $n \bmod 8$} if, in the case of $n$ even, it is an even spectral triple, and if there exists an antiunitary $J: \hs{H} \to \hs{H}$ such that:
\begin{enumerate}
 \item $J$ satisfies $J^2 = \eps$, $JD = \epsp DJ$ and $J\gamma = \epspp \gamma J$ (in the case of even $n$), where $\eps$, $\epsp$, $\epspp \in \{-1,1\}$ depend on $n \bmod 8$ as follows:
 \begin{center}
  \begin{tabular}{crrrrrrrr}
   \toprule
   $n$ & $0$ & $1$ & $2$ & $3$ & $4$ & $5$ & $6$ & $7$ \\
   \midrule
   $\eps$ & $1$ & $1$ & $-1$ & $-1$ & $-1$ & $-1$ & $1$ & $1$ \\
   $\epsp$ & $1$ & $-1$ & $1$ & $1$ & $1$ & $-1$ & $1$ & $1$ \\
   $\epspp$ & $1$ & & $-1$ & & $1$ & & $-1$ & \\
   \bottomrule
  \end{tabular}
 \end{center}
 \item The \term{order zero condition} is satisfied, namely $[\lambda(a),J\lambda(b)J^*]=0$ for all $a$, $b \in \alg{A}$;
 \item The \term{order one condition} is satisfied, namely $[[D,\lambda(a)],J\lambda(b)J^*]=0$ for all $a$, $b \in \alg{A}$.
\end{enumerate}

Moreover, if there exists a self-adjoint unitary $\epsilon$ on $\hs{H}$ such that:
\begin{enumerate}
 \item $[\epsilon,\lambda(a)]=0$ for all $a \in \alg{A}$;
 \item $[\epsilon, D]=0$;
 \item $\{\epsilon, J\}=0$;
 \item $[\epsilon, \gamma]=0$ (even case);
\end{enumerate}
then the real spectral triple is said to be \term{$S^0$-real}.
\end{definition}

\begin{remark}[Krajewski~\cite{Kraj98}*{\S 2.2}, Paschke--Sitarz~\cite{PS98}*{Obs.~1}]
 If $(\alg{A},\hs{H},D)$ is a real spectral triple, then the order zero condition is equivalent to the statement that $\hs{H}$ is an $\alg{A}$-bimodule for the usual left action $\lambda$ and the right action $\rho: a \mapsto J\lambda(a^*)J^*$.
\end{remark}

It was commonly assumed until fairly recently that the finite geometry of the NCG Standard Model should be $S^0$-real. Though the current version of the NCG Standard Model no longer makes such an assumption~\cites{Connes06,CCM07}, we shall later see that its finite geometry can still be seen as satisfying a weaker version of $S^0$-reality.

\subsubsection{Structures on bimodules}

In light of the above remark, the order one condition, the strongest algebraic condition placed on Dirac operators for real spectral triples, should be viewed more generally as a condition applicable to operators on bimodules~\cite{Kraj98}*{\S 2.4}. This then motivates our point of view that a finite real spectral triple $(\alg{A},\hs{H},D)$ should be viewed rather as an $\alg{A}$-bimodule with additional structure, together with a Dirac operator satisfying the order one condition that is compatible with that additional structure. We therefore begin by defining a suitable notion of ``additional structure'' for bimodules.

\begin{definition}
 A \term{bimodule structure} $P$ consists of the following data:
\begin{itemize}
 \item A set $\alg{P} = \alg{P}_\gamma \sqcup \alg{P}_J \sqcup \alg{P}_\epsilon$, where each set $\alg{P}_X$ is either empty or the singleton $\{X\}$, and where $\alg{P}_\epsilon$ is non-empty only if $\alg{P}_J$ is non-empty;
 \item If $\alg{P}_J$ is non-empty, a choice of \term{$KO$-dimension} $n \bmod 8$, where $n$ is even if and only if $\alg{P}_\gamma$ is non-empty.
\end{itemize}

In particular, we call a structure $P$:
\begin{itemize}
 \item \term{odd} if $\alg{P}$ is empty;
 \item \term{even} if $\alg{P} = \alg{P}_\gamma = \{\gamma\}$;
 \item \term{real} if $\alg{P}_J$ is non-empty and $\alg{P}_\epsilon$ is empty
 \item \term{$S^0$-real} if $\alg{P}_\epsilon$ is non-empty.
\end{itemize}

Finally, if $P$ is a graded structure, we call $\gamma$ the \term{grading}, and if $P$ is real or $S^0$-real, we call $J$ the \term{charge conjugation}.
\end{definition}

Since this notion of $KO$-dimension is meant to correspond with the usual $KO$-dimension of a real spectral triple, we assign to each real or $S^0$-real structure $P$ of $KO$-dimension $n \bmod 8$ constants $\eps$, $\epsp$ and, in the case of even $n$, $\epspp$, according to the table in Definition~\ref{realdef}.

We now define the \term{structure algebra} of a structure $P$ to be the real associative algebra with generators $\alg{P}$ and relations, as applicable,
\[
 \gamma^2 = 1, \quad J^2 = \eps, \quad \epsilon^2 = 1; \gamma J = \epspp J \gamma, \quad [\gamma,\epsilon] = 0, \quad \{\epsilon,J\} = 0.
\]

\begin{definition}
 An $\alg{A}$-bimodule $\hs{H}$ is said to have structure $P$ whenever it admits a faithful representation of the structure algebra of $P$ such that, when applicable, $\gamma$ and $\epsilon$ are represented by self-adjoint unitaries in $\lrbdd_{\alg{A}}(\hs{H})$, and $J$ is represented by an antiunitary on $\hs{H}$ such that
\begin{equation}\label{realintertwine}
 \forall a \in \alg{A}, \quad \rho(a) = J \lambda(a^*) J.
\end{equation}
\end{definition}

Note that a $S^0$-real bimodule can always be considered as a real bimodule, and a real bimodule of even [odd] $KO$-dimension can always be considered as an even [odd] bimodule. Note also that an even bimodule is simply a graded bimodule such that the algebra acts from both left and right by degree $0$ operators, and the grading itself respects the Hilbert space structure; an odd bimodule is then simply an ungraded bimodule. We use the terms ``even'' and ``odd'' so as to keep the terminology consistent with that for spectral triples.

Note also that for a real or $S^0$-real structure $P$, the structure algebra of $P$ is independent of the value of $\epsp$. Thus the notions of real [$S^0$-real] $\alg{A}$-bimodule with $KO$-dimension $1 \bmod 8$ and $7 \bmod 8$ are identical, as are the notions of [$S^0$-real] $\alg{A}$-bimodule with $KO$-dimension $3 \bmod 8$ and $5 \bmod 8$; again, we make the distinction with an eye to the discussion of Dirac operators (and hence of spectral triples) later on.

Now, a \term{unitary equivalence} of $\alg{A}$-bimodules $\hs{H}$ and $\hs{H}^\prime$ with structure $P$ is a unitary equivalence of $\alg{A}$-bimodules (\ie a unitary element of $\lrbdd_\alg{A}(\hs{H},\hs{H}^\prime)$) that effects unitary equivalence of the representations of the structure algebra of $P$. We denote the set of all such unitary equivalences $\hs{H} \to \hs{H}^\prime$ by $\lrU_\alg{A}(\hs{H},\hs{H}^\prime;\alg{P})$. In particular, $\lrU_\alg{A}(\hs{H},\hs{H};\alg{P})$, which we denote by $\lrU_\alg{A}(\hs{H};\alg{P})$, is a subgroup of $\lrU_\alg{A}(\hs{H}) := \U(\lrbdd_\alg{A}(\hs{H}))$. In all such notation, we suppress the argument $\alg{P}$ whenever $\alg{P}$ is empty.

\begin{definition}
 Let $\alg{A}$ be a real \Cstar-algebra, and let $P$ be a bimodule structure. The abelian monoid $(\Bimod(\alg{A},P),+)$ of $\alg{A}$-bimodules with structure $P$ is defined as follows:
\begin{itemize}
 \item $\Bimod(\alg{A},P)$ is the set of unitary equivalence classes of $\alg{A}$-bimodules with structure $P$;
 \item For $[\hs{H}]$, $[\hs{H}^\prime] \in \Bimod(\alg{A},P)$, $[\hs{H}] + [\hs{H}^\prime] := [\hs{H} \oplus \hs{H}^\prime]$.
\end{itemize}
\end{definition}

For convenience, we shall denote $\Bimod(\alg{A},P)$ by:
\begin{itemize}
 \item $\Bimod(\alg{A})$ if $P$ is the odd structure;
 \item $\Bimod^\even(\alg{A})$ if $P$ is the even structure;
 \item $\Bimod(\alg{A},n)$ if $P$ is the real structure of $KO$-dimension $n \bmod 8$;
 \item $\Bimod^0(\alg{A},n)$ if $P$ is the $S^0$-real structure of $KO$-dimension $n \bmod 8$.
\end{itemize}
These monoids will be studied in depth in the next section. In light of our earlier comment, we therefore have that
\[
 \Bimod(\alg{A},1) = \Bimod(\alg{A},7), \quad \Bimod(\alg{A},3) = \Bimod(\alg{A},5).
\]
and
\[
 \Bimod^0(\alg{A},1) = \Bimod^0(\alg{A},7), \quad \Bimod^0(\alg{A},3) = \Bimod^0(\alg{A},5).
\]

Finally, for the sake of completeness, we now define the notions of orientabilty and Poincar{\'e} duality in this more general context; in the case of a real spectral triple $(\alg{A},\hs{H},D,\gamma, J)$ of even $KO$-dimension, where the right action is given by $\rho(a) := J\lambda(a^*)J^*$, these definitions yield precisely the usual ones (\cf~\cite{Kraj98}*{\S\S 2.2, 2.3}).

\begin{definition}
 We call an even $\alg{A}$-bimodule $(\hs{H},\gamma)$ \term{orientable} if there exist $a_1, \dotsc, a_k$, $b_1, \dotsc, b_k \in \alg{A}$ such that
\begin{equation}
 \gamma = \sum_{i=1}^k \lambda(a_i)\rho(b_i).
\end{equation}
\end{definition}

\begin{definition}
 Let $\alg{A}$ be a real $\Cstar$-algebra, and let $(\hs{H},\gamma)$ be an even $\alg{A}$-bimodule. Then the \term{intersection form} $\left\langle \cdot, \cdot \right\rangle : KO_0(\alg{A}) \times KO_0(\alg{A}) \to \ring{Z}$ associated with $(\hs{H},\gamma)$ is defined by setting
\begin{equation}
 \left\langle \left[e\right], \left[f\right] \right\rangle := \tr(\gamma\lambda(e)\rho(f))
\end{equation}
for projections $e$, $f \in \alg{A}$.

In the case that the intersection form is non-degenerate, we shall say that $(\hs{H},\gamma)$ satisfies \term{Poincar{\'e} duality}.
\end{definition}

The orientability assumption was used extensively in \cite{PS98} and \cite{Kraj98}, as it leads to considerable algebraic simplifactions; we shall later define a weakened version of orientability that will yield precisely those simplifications.

\subsubsection{Bilateral spectral triples}

We now turn to Dirac operators on bimodules satisfying a generalised order one condition, and define the appropriate notion of compatibility with additional structure on the bimodule.

\begin{definition}
 A \term{Dirac operator} for an $\alg{A}$-bimodule $\hs{H}$ with structure $P$ is a self-adjoint operator $D$ on $\hs{H}$ satisfying the \term{order one condition}:
\begin{equation}
 \forall a, b \in \alg{A}, \quad [[D,\lambda(a)],\rho(b)] = 0,
\end{equation}
together with the following relations, as applicable:
\[
 \{D,\gamma\} = 0, \quad DJ = \epsp JD, \quad [D,\epsilon]=0.
\]
\end{definition}

We denote the finite-dimensional real vector space of Dirac operators for an an $\alg{A}$-bimodule $\hs{H}$ with structure $P$ by $\ms{D}_0(\alg{A},\hs{H},\alg{P})$.

\begin{definition}
 A \term{bilateral spectral triple} with structure $P$ is a triple of the form $(\alg{A},\hs{H},D)$, where $\alg{A}$ is a real \Cstar-algebra, $\hs{H}$ is an $\alg{A}$-bimodule with structure $P$, and $D$ is a Dirac operator for $(\hs{H},P)$.
\end{definition}

We shall generally denote such a spectral triple by $(\alg{A},\hs{H},D;\alg{P})$, where $\alg{P}$ is the set of generators of the structure algebra; in cases where the presence or absence of a grading $\gamma$ is immaterial, we will suppress the generator $\gamma$ in this notation.

\begin{remark}
 In the case that $P$ is a real [$S^0$-real] structure of $KO$-dimension $n \bmod 8$, a bilateral spectral triple with structure $P$ is precisely a real [$S^0$-real] spectral triple of $KO$-dimension $n \bmod 8$.

More generally, an odd [even] bilateral spectral triple $(\alg{A},\hs{H},D)$ is equivalent to an odd [even] spectral triple $(\alg{A} \otimes \alg{A}^\textnormal{op},\hs{H},D)$ such that $[[D,\alg{A}\otimes 1],1\otimes\alg{A}^\textnormal{op}]=\{0\}$, an object that first appears in connection with $S^0$-real spectral triples~\cite{Connes95}
\end{remark}

A \term{unitary equivalence} of spectral triples $(\alg{A},\hs{H},D)$ and $(\alg{A},\hs{H}^\prime, D^\prime)$ is then a unitary $U \in \lrU_\alg{A}(\hs{H},\hs{H}^\prime)$ such that $D^\prime = U D U^*$. This concept leads us to the following definition:

\begin{definition}
 Let $\alg{A}$ be a real \Cstar-algebra, and let $\hs{H}$ be an $\alg{A}$-bimodule with structure $P$. The \term{moduli space of Dirac operators} for $\hs{H}$ is defined by
\begin{equation}
 \ms{D}(\alg{A},\hs{H},\alg{P}) := \ms{D}_0(\alg{A},\hs{H},\alg{P})/\lrU_\alg{A}(\hs{H},\alg{P}),
\end{equation}
where $\lrU_\alg{A}(\hs{H},\alg{P})$ acts on $\ms{D}_0(\alg{A},\hs{H},\alg{P})$ by conjugation.
\end{definition}

If $\alg{C}$ is a central subalgebra of $\alg{A}$, we can form the subspace
\begin{equation}
 \ms{D}_0(\alg{A},\hs{H},\alg{P};\alg{C}) := \{D \in \ms{D}_0(\alg{A},\hs{H},\alg{P}) \mid [D,\lambda(\alg{C})] = [D,\rho(\alg{C})] = \{0\}\}.
\end{equation}
and hence the sub-moduli space
\begin{equation}
 \ms{D}(\alg{A},\hs{H},\alg{P};\alg{C}) := \ms{D}_0(\alg{A},\hs{H},\alg{P};\alg{C})/\lrU_\alg{A}(\hs{H},\alg{P}),
\end{equation}
of $\ms{D}_0(\alg{A},\hs{H},\alg{P})$; the moduli space of Dirac operators studied by Chamseddine, Connes and Marcolli~\cite{CCM07}*{\S 2.7},\cite{CM08}*{\S 13.4} is in fact a sub-moduli space of this form.

Since $\ms{D}(\alg{A},\hs{H},\alg{P})$ [$\ms{D}(\alg{A},\hs{H},\alg{P};\alg{C})$] is the orbit space of a smooth finite-di\-men\-sion\-al representation of a compact Lie group, it is \latin{a priori} locally compact Hausdorff, and is thus homeomorphic to a semialgebraic subset of $\field{R}^d$ for some $d$~\cite{Schwarz75}. The dimension of $\ms{D}(\alg{A},\hs{H},\alg{P})$ [$\ms{D}(\alg{A},\hs{H},\alg{P};\alg{C})$] can then be defined as the dimension of this semialgebraic set. Such moduli spaces will be discussed in some detail.

\subsubsection{$S^0$-reality}

Following Connes~\cite{Connes95}, we now describe how to reduce the study of $S^0$-real bimodules of even [odd] $KO$-dimension to the study of even [odd] bimodules.

Let $(\hs{H},J,\epsilon)$ be an $S^0$-real $\alg{A}$-bimodule of even [odd] $KO$-dimension. Define mutually orthogonal projections $P_i$, $P_{-i}$ in $\lrbdd_\alg{A}(\hs{H})$ by $P_{\pm i} = \frac{1}{2}(1 \pm \epsilon)$. Then, at the level of even [odd] bimodules, $\hs{H} = \hs{H}_i \oplus \hs{H}_{-i}$ for $\hs{H}_{\pm i} := P_{\pm i} \hs{H}$, where the left and right actions on $\hs{H}_{\pm i}$ are given by
\[
 \lambda_{\pm i}(a) := P_{\pm i}\lambda(a)P_{\pm i}, \quad \rho_{\pm i}(a) := P_{\pm i}\rho(a)P_{\pm i},
\]
for $a \in \alg{A}$, and, in the case of even $KO$-dimension, the grading on $\hs{H}_{\pm i}$ is given by $\gamma_{\pm i} := P_{\pm i} \gamma P_{\pm i}$. Moreoever,
\[
 J =
\begin{pmatrix}
 0 & \eps \tilde{J}^*\\
 \tilde{J} & 0
\end{pmatrix},
\]
where $\tilde{J} := P_{-i} J P_i$ is an antiunitary $\hs{H}_i \to \hs{H}_{-i}$, so that for $a \in \alg{A}$,
\[
 \lambda_{-i}(a) = \tilde{J} \rho_i(a^*) \tilde{J}^*, \quad \rho_{-i}(a) = \tilde{J} \lambda_i(a^*) \tilde{J}^*,
\]
and in the case of even $KO$-dimension, $\gamma_{-i} = \epspp \tilde{J} \gamma \tilde{J}^*$. Finally, note that $\tilde{J}$ can also be viewed as a unitary $\overline{\hs{H}_i} \to \hs{H}_{-i}$, where $\overline{\hs{H}_i}$ denotes the conjugate space of $\hs{H}$. Hence, for fixed $KO$-dimension, an $S^0$-real $\alg{A}$-bimodule $\hs{H}$ is determined, up to unitary equivalence, by the bimodule $\hs{H}_i$.

On the other hand, if $\hs{V}$ is an even [odd] $\alg{A}$-bimodule, we can construct an $S^0$-real $\alg{A}$-bimodule $\hs{H}$ for any even [odd] $KO$-dimension $n \bmod 8$ such that $\hs{H}_i = \hs{V}$, by setting $\hs{H} := \hs{H}_i \oplus \hs{H}_{-i}$ for $\hs{H}_i := \hs{V}$, $\hs{H}_{-i} := \overline{\hs{V}}$, defining $\tilde{J} : \hs{H}_i \to \hs{H}_{-i}$ as the identity map on $\hs{V}$ viewed as an antiunitary $\hs{V} \to \overline{\hs{V}}$, then using the above formulas to define $J$, $\gamma$ (as necessary), $\lambda$, $\rho$, and finally setting $\epsilon = 1_\hs{V} \oplus (-1_{\overline{\hs{V}}})$. In the case that $\hs{V}$ is already $\hs{H}_i$ for some $S^0$-real bimodule $\hs{H}$, this procedure reproduces $\hs{H}$ up to unitary equivalence. We have therefore proved the following:

\begin{proposition}\label{s0reduction}
 Let $\alg{A}$ be a real \Cstar-algebra, and let $n \in \ring{Z}_8$. Then the map
\[
 \Bimod^0(\alg{A},n) \to
 \begin{cases}
  \Bimod(\alg{A}), &\text{if $n$ is odd,}\\
  \Bimod^\even(\alg{A}), &\text{if $n$ is even,}
 \end{cases}
\]
defined by $[\hs{H}] \mapsto [\hs{H}_i]$ is an isomorphism of monoids.
\end{proposition}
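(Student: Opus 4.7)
The proof is essentially bookkeeping on top of the explicit constructions the author has already given just before the proposition, so the plan is to assemble those pieces into a clean verification that the map $\Phi : [\hs{H}] \mapsto [\hs{H}_i]$ is (i)~well-defined, (ii)~a monoid homomorphism, and (iii)~a bijection, by exhibiting an explicit inverse.

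First I would verify well-definedness: if $U \in \lrU_\alg{A}(\hs{H}, \hs{H}';\alg{P})$ is an $S^0$-real equivalence, then $U$ intertwines the grading operators $\epsilon$, $\epsilon'$, so it restricts to a unitary $\hs{H}_i \to \hs{H}'_i$; this restriction intertwines the restricted actions $\lambda_i$, $\rho_i$ (and $\gamma_i$ in the even case), since those are defined by compression with the spectral projection $P_i = \tfrac{1}{2}(1+\epsilon)$, which is preserved by~$U$. Monoid structure is immediate: $\epsilon$ on $\hs{H} \oplus \hs{H}'$ is the direct sum of the individual grading operators, so the $+i$-eigenspace decomposes as $\hs{H}_i \oplus \hs{H}'_i$ with the correct induced bimodule (and grading) structure.

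Next I would construct the inverse $\Psi$ using the recipe given in the paragraph before the proposition: send $[\hs{V}]$ to $[\hs{V} \oplus \overline{\hs{V}}]$, equipped with the actions, grading (if $n$ even), $\tilde J$, $J$ and $\epsilon$ that the author has already written down. One needs to check that this data genuinely satisfies the axioms of an $S^0$-real bimodule of $KO$-dimension $n \bmod 8$: the relations $J^2=\eps$, $\gamma J = \epspp J \gamma$, $\{\epsilon,J\}=0$, $[\epsilon,\gamma]=0$, and the bimodule property $\rho(a) = J\lambda(a^*)J$; each of these reduces to a short computation using the block form of $J$ together with the antiunitarity of $\tilde J : \hs{V} \to \overline{\hs{V}}$. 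Well-definedness of $\Psi$ on equivalence classes follows because a unitary $V : \hs{V} \to \hs{V}'$ of even (resp.\ odd) bimodules induces $V \oplus \overline V$, which by construction intertwines all of $J$, $\gamma$, $\epsilon$.

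Finally I would verify $\Phi \circ \Psi = \Id$ and $\Psi \circ \Phi = \Id$. The composition $\Phi \circ \Psi$ is the identity on the nose by construction, since $(\hs{V}\oplus\overline{\hs{V}})_i = \hs{V}$ with the intended restricted structure. For $\Psi \circ \Phi$, given an $S^0$-real bimodule $\hs{H}$ with decomposition $\hs{H} = \hs{H}_i \oplus \hs{H}_{-i}$, the author's formulas already show that $\tilde J : \hs{H}_i \to \hs{H}_{-i}$, viewed as a unitary $\overline{\hs{H}_i} \to \hs{H}_{-i}$, identifies $\hs{H}_{-i}$ (with its left action, right action, and grading) with the bimodule $\overline{\hs{H}_i}$ produced by $\Psi$; then $\Id_{\hs{H}_i} \oplus \tilde J$ is the required unitary equivalence of $S^0$-real bimodules. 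The only place requiring genuine care is checking that this block-diagonal unitary really does intertwine $J$ and $\epsilon$ on both sides, which is the main (but minor) obstacle and reduces to the block formula for $J$ together with the relation $\tilde J \lambda_i(a^*)\tilde J^* = \lambda_{-i}(a)$ recorded above.
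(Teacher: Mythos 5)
Your proposal is correct and follows essentially the same route as the paper: the paper's proof is the discussion immediately preceding the proposition, which gives exactly the block decomposition of $J$ via $\tilde{J}$, the observation that $\tilde{J}$ is a unitary $\overline{\hs{H}_i} \to \hs{H}_{-i}$ identifying $\hs{H}$ up to unitary equivalence with $\hs{H}_i \oplus \overline{\hs{H}_i}$, and the reconstruction recipe $\hs{V} \mapsto \hs{V} \oplus \overline{\hs{V}}$ serving as inverse. You merely make explicit a few bookkeeping points (well-definedness of the map on equivalence classes and additivity) that the paper leaves as understood.
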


Now, let $\hs{H}$ is an $S^0$-real $\alg{A}$-bimodule, and suppose that $D$ is a Dirac operator for $\hs{H}$. We can define Dirac operators $D_i$ and $D_{-i}$ on $\hs{H}_i$ and $\hs{H}_{-i}$, respectively, by $D_{\pm i} := P_{\pm i} D P_{\pm i}$; then $D = D_i \oplus D_{-i}$ and, in fact, $D_{-i} = \epsp \tilde{J} D_i \tilde{J}^*$. Thus, a Dirac operator $D$ on $\hs{H}$ is completely determined by $D_i$; indeed, the map $D \mapsto D_i$ defines an isomorphism $\ms{D}_0(\alg{A},\hs{H},J,\epsilon) \cong \ms{D}_0(\alg{A},\hs{H})$. 

Along similar lines, one can show that $\lrU_{\alg{A}}(\hs{H},J) \cong \lrU_\alg{A}(\hs{H}_i)$ by means of the map $U \mapsto U_i := P_i U P_i$; this isomorphism is compatible with the isomorphism $\ms{D}_0(\alg{A},\hs{H},J,\epsilon) \cong \ms{D}_0(\alg{A},\hs{H})$. Hence, the functional equivalence between $\hs{H}$ and $\hs{H}_i$ holds at the level of moduli spaces of Dirac operators:

\begin{proposition}\label{diracs0reduction}
 Let $\hs{H}$ be an $S^0$-real $\alg{A}$-bimodule. Then
\begin{equation}
 \ms{D}(\alg{A},\hs{H},J,\epsilon) \cong \ms{D}(\alg{A},\hs{H}_i).
\end{equation}
\end{proposition}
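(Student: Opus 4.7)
The plan is to verify the two bijections claimed in the paragraph preceding the proposition --- at the level of Dirac operators and at the level of unitary equivalences --- and then check their compatibility so as to descend to the quotients.

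First, I would establish the linear bijection $D \mapsto D_i$ from $\ms{D}_0(\alg{A},\hs{H},J,\epsilon)$ onto $\ms{D}_0(\alg{A},\hs{H}_i)$. Given $D \in \ms{D}_0(\alg{A},\hs{H},J,\epsilon)$, the relation $[D,\epsilon]=0$ forces $D = D_i \oplus D_{-i}$, and $DJ = \epsp JD$ then forces $D_{-i} = \epsp\, \tilde{J} D_i \tilde{J}^*$. Since $P_i \in \lrbdd_\alg{A}(\hs{H})$ and commutes with $\gamma$ (when present), the order one condition and the anticommutation with $\gamma$ restrict cleanly across $P_i$ to yield $D_i \in \ms{D}_0(\alg{A},\hs{H}_i)$. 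Conversely, given $D_i \in \ms{D}_0(\alg{A},\hs{H}_i)$, set $D := D_i \oplus \epsp\, \tilde{J} D_i \tilde{J}^*$; the key thing to check is that $D$ satisfies the order one condition on $\hs{H}$, which reduces to transferring the identity $[[D_i,\lambda_i(a)],\rho_i(b)]=0$ across $\tilde{J}$ using $\lambda_{-i}(a) = \tilde{J} \rho_i(a^*) \tilde{J}^*$ and $\rho_{-i}(a) = \tilde{J} \lambda_i(a^*) \tilde{J}^*$, together with the vanishing of mixed commutators between $\lambda_i$ and $\rho_i$ on $\hs{H}_i$.

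Next, I would verify that $U \mapsto U_i := P_i U P_i$ is a group isomorphism $\lrU_\alg{A}(\hs{H},J,\epsilon) \cong \lrU_\alg{A}(\hs{H}_i)$, with inverse $U_i \mapsto U_i \oplus \tilde{J} U_i \tilde{J}^*$. The argument is parallel to the one for Dirac operators: commutation with $\epsilon$ yields block-diagonal form, commutation with $J$ determines the lower block in terms of $U_i$, and left/right $\alg{A}$-linearity together with unitarity descend to $\hs{H}_i$ by conjugation by $P_i$.

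Finally, the two maps are compatible because $P_i$ commutes with both such $U$ and such $D$, giving $(UDU^*)_i = P_i U P_i \cdot P_i D P_i \cdot P_i U^* P_i = U_i D_i U_i^*$. Hence the bijection of Dirac operators intertwines the two conjugation actions and descends to a bijection of orbit spaces, which is in fact a homeomorphism since both moduli spaces carry quotient topologies from finite-dimensional real representations of compact Lie groups. I expect the main obstacle to be the bookkeeping in the converse construction of $D$ from $D_i$ --- specifically, ensuring that $\epsp\, \tilde{J} D_i \tilde{J}^*$, viewed as an operator on $\hs{H}_{-i}$ equipped with the actions $\lambda_{-i}$ and $\rho_{-i}$, genuinely satisfies the order one condition there, with all the signs from $\eps$, $\epsp$, $\epspp$ accounted for correctly.
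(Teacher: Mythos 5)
Your proposal follows essentially the same route the paper takes: the paper also argues via the linear isomorphism $D \mapsto D_i = P_i D P_i$ between $\ms{D}_0(\alg{A},\hs{H},J,\epsilon)$ and $\ms{D}_0(\alg{A},\hs{H}_i)$, the parallel group isomorphism $U \mapsto U_i$, and the observation that these two maps intertwine the conjugation actions so that the identification descends to the orbit spaces. Your write-up simply supplies the verification steps (the converse construction of $D$ from $D_i$ via $\tilde{J}$ and the transfer of the order one condition) that the paper leaves as a sketch.
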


One can similarly show that for a central subalgebra $\alg{C}$ of $\alg{A}$,
\[
 \ms{D}(\alg{A},\hs{H},J,\epsilon;\alg{C}) \cong \ms{D}(\alg{A},\hs{H}_i;\alg{C}).
\]

Let us conclude by considering the relation between orientability and Poincar{\'e} duality for an $S^0$-real bimodule $\hs{H}$ of even $KO$-dimension and orientability and Poincar{\'e} duality, respectively, for the associated even bimodule $\hs{H}_i$.

\begin{proposition}\label{s0orient}
 Let $\hs{H}$ be an $S^0$-real $\alg{A}$-bimodule of even $KO$-dimension. Then $\hs{H}$ is orientable if and only if there exist $a_1, \dotsc, a_k, b_1, \dotsc, b_k \in \alg{A}$ such that
\begin{equation}
 \gamma_i = \sum_{j=1}^k \lambda_i(a_j)\rho_i(b_j) = \epspp \sum_{j=1}^k \lambda_i(b_j^*)\rho_i(a_j^*).
\end{equation}
\end{proposition}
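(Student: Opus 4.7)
The plan is to decompose the orientability equation $\gamma = \sum_{j=1}^{k} \lambda(a_j)\rho(b_j)$ on $\hs{H} = \hs{H}_i \oplus \hs{H}_{-i}$ into its two block components and use $\tilde{J}$ to translate the $(-i)$-component into a second condition purely on $\hs{H}_i$. Since $\lambda$, $\rho$, $\gamma$ and $\epsilon$ all commute, each $\lambda(a_j)\rho(b_j)$ and $\gamma$ is block-diagonal with respect to the decomposition $\hs{H}_i \oplus \hs{H}_{-i}$, so the orientability equation holds on $\hs{H}$ if and only if it holds separately on each summand.

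For the forward direction, assume $\gamma = \sum_j \lambda(a_j)\rho(b_j)$. Restricting to $\hs{H}_i$ gives immediately $\gamma_i = \sum_j \lambda_i(a_j)\rho_i(b_j)$. Restricting to $\hs{H}_{-i}$, I would substitute $\lambda_{-i}(a) = \tilde{J}\rho_i(a^*)\tilde{J}^*$ and $\rho_{-i}(a) = \tilde{J}\lambda_i(a^*)\tilde{J}^*$, use that $\tilde{J}^*\tilde{J} = \Id_{\hs{H}_i}$ and that $\lambda_i$ and $\rho_i$ commute, to obtain
\[
 \gamma_{-i} = \sum_j \tilde{J}\,\lambda_i(b_j^*)\rho_i(a_j^*)\,\tilde{J}^*.
\]
Then conjugating the identity $\gamma_{-i} = \epspp \tilde{J}\gamma_i\tilde{J}^*$ back by $\tilde{J}^*$ on the left and $\tilde{J}$ on the right yields $\gamma_i = \epspp\sum_j \lambda_i(b_j^*)\rho_i(a_j^*)$, which (together with the first relation) is exactly the claimed pair of identities.

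For the converse, assume the two displayed equalities on $\hs{H}_i$. I need to verify $\gamma = \sum_j \lambda(a_j)\rho(b_j)$ block by block. The $\hs{H}_i$-block is immediate from the first equality. For the $\hs{H}_{-i}$-block, running the same computation as above transforms $\sum_j \lambda_{-i}(a_j)\rho_{-i}(b_j)$ into $\tilde{J}\bigl(\sum_j \lambda_i(b_j^*)\rho_i(a_j^*)\bigr)\tilde{J}^*$; the second hypothesis (and the fact that $\epspp^2 = 1$) rewrites the bracketed sum as $\epspp\gamma_i$, giving $\epspp \tilde{J}\gamma_i\tilde{J}^* = \gamma_{-i}$, as required.

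No serious obstacle arises here; the content is entirely bookkeeping with the $S^0$-real decomposition already set up in the paragraphs preceding Proposition~\ref{s0reduction}. The only subtlety worth double-checking is the correct handling of the antiunitarity of $\tilde{J}$ (the real scalar $\epspp$ passes through it, but one must not mistake $\tilde{J}^*\tilde{J}$ for an operator on $\hs{H}_{-i}$) and the order reversal $\lambda_i(b_j^*)\rho_i(a_j^*)$ versus $\rho_i(a_j^*)\lambda_i(b_j^*)$, which is harmless precisely because left and right actions commute.
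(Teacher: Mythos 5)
Your proof is correct and takes essentially the same route as the paper's: both reduce to the computation $\sum_j \lambda_{-i}(a_j)\rho_{-i}(b_j) = \tilde{J}\bigl(\sum_j \lambda_i(b_j^*)\rho_i(a_j^*)\bigr)\tilde{J}^*$ and compare it against $\gamma_{-i} = \epspp\tilde{J}\gamma_i\tilde{J}^*$. The paper packages the key step as a single reusable equivalence for an arbitrary $T = \sum_j\lambda(a_j)\rho(b_j)$ and then invokes it twice, while you split the two directions explicitly, but the content is identical.
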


\begin{proof}
 Let $a_1, \dotsc, a_k$, $b_1, \dotsc, b_k \in \alg{A}$, and set $T = \sum_{j=1}^k \lambda(a_j)\rho(b_j)$. Then
\[
 T_i := P_i T P_i = \sum_{j=1}^k \lambda_i(a_j)\rho_i(b_j),
\]
while
\[
 \quad T_{-i} := P_{-i} T P_{-i} = \sum_{j=1}^k \lambda_{-i}(a_j)\rho_{-i}(b_j) = \tilde{J} \biggl( \sum_{j=1}^k \lambda_i(b_j^*)\rho_i(a_j^*) \biggr) \tilde{J}^*.
\]
Hence, $T_{-i} = \epspp \tilde{J} T_i \tilde{J}^*$ if and only if
\[
 \epspp \sum_{j=1}^k \lambda_i(b_j^*)\rho_i(a_j^*) = T_i = \sum_{j=1}^k \lambda_i(a_j)\rho_i(b_j).
\]
Applying this intermediate result to $a_j$ and $b_j$ such that $\gamma = \sum_{j=1}^k \lambda(a_j)\rho(b_j)$, in the case that $\hs{H}$ is orientable, and then to $a_j$ and $b_j$ such that $\gamma_i = \sum_{j=1}^k \lambda_i(a_j)\rho_i(b_j)$, in the case that $\hs{H}_i$ is orientable, yields the desired result.
\end{proof}

Thus, orientability of an $S^0$-real bimodule $\hs{H}$ is equivalent to a stronger version of orientability on the bimodule $\hs{H}_i$.

Turning to Poincar{\'e} duality, we can obtain the following result:

\begin{proposition}\label{s0poincare}
 Let $\hs{H}$ be an $S^0$-real $\alg{A}$-bimodule of even $KO$-dimension with intersection form $\form{\cdot,\cdot}$, and let $\form{\cdot,\cdot}_i$ be the intersection form for $\hs{H}_i$. Then for any $p$, $q \in KO_0(\alg{A})$,
\[
 \form{p,q} = \form{p,q}_i + \epspp \form{q,p}_i.
\]
\end{proposition}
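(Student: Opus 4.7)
The plan is to work directly from the definition of the intersection form and exploit the orthogonal decomposition $\hs{H} = \hs{H}_i \oplus \hs{H}_{-i}$ from the preceding discussion of $S^0$-reality. Since both sides of the claimed identity are $\ring{Z}$-bilinear in $p$ and $q$, it suffices to verify it on classes of projections $p = [e]$, $q = [f]$ with $e, f \in \alg{A}$.

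First I would write
\[
 \form{[e],[f]} = \tr(\gamma \lambda(e) \rho(f)) = \tr(\gamma_i \lambda_i(e) \rho_i(f)) + \tr(\gamma_{-i} \lambda_{-i}(e) \rho_{-i}(f)),
\]
using that $\gamma$, $\lambda$, and $\rho$ each decompose block-diagonally with respect to the projections $P_{\pm i} = \tfrac{1}{2}(1 \pm \epsilon)$. The first summand is already $\form{[e],[f]}_i$ by definition, so the task reduces to showing that the second summand equals $\epspp \form{[f],[e]}_i$.

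For this I would substitute the explicit formulas recorded just before Proposition~\ref{s0reduction}, namely $\gamma_{-i} = \epspp \tilde{J}\gamma_i\tilde{J}^*$, $\lambda_{-i}(a) = \tilde{J}\rho_i(a^*)\tilde{J}^*$, and $\rho_{-i}(a) = \tilde{J}\lambda_i(a^*)\tilde{J}^*$. Since $e$ and $f$ are self-adjoint, this collapses the $\tilde{J}$'s to a single conjugation, giving
\[
 \tr(\gamma_{-i} \lambda_{-i}(e) \rho_{-i}(f)) = \epspp \tr\bigl(\tilde{J}\, \gamma_i \rho_i(e) \lambda_i(f)\, \tilde{J}^*\bigr).
\]
Now I invoke the fact that conjugation by an antiunitary complex-conjugates the trace: $\tr(\tilde{J} A \tilde{J}^*) = \overline{\tr(A)}$. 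Because $\gamma_i$ lies in $\lrbdd_\alg{A}(\hs{H}_i)$ while $\lambda_i(e)$ and $\rho_i(f)$ are self-adjoint and commute with each other as well as with $\gamma_i$, the operator $\gamma_i \rho_i(e) \lambda_i(f)$ is self-adjoint; hence its trace is real. The complex conjugation is therefore harmless, and reordering via commutativity gives $\tr(\gamma_i \lambda_i(f) \rho_i(e)) = \form{[f],[e]}_i$.

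The only real subtlety is the antiunitary trace identity and justifying the reality of the intermediate trace; once that is in hand, the computation is essentially mechanical. Combining the two pieces yields $\form{[e],[f]} = \form{[e],[f]}_i + \epspp \form{[f],[e]}_i$, and extending bilinearly to $KO_0(\alg{A})$ completes the proof.
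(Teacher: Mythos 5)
Your proof is correct and follows essentially the same computation as the paper's: decompose the trace over $\hs{H} = \hs{H}_i \oplus \hs{H}_{-i}$, substitute the $\tilde{J}$-conjugation formulas, and observe that conjugation by the antiunitary $\tilde{J}$ complex-conjugates the trace. The only (cosmetic) difference is in disposing of that conjugation: the paper appeals to integer-valuedness of the intersection form, while you argue directly that $\gamma_i\rho_i(e)\lambda_i(f)$ is self-adjoint as a product of pairwise-commuting self-adjoint operators, so its trace is already real — an equally valid and slightly more self-contained justification.
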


\begin{proof}
 Let $e$, $f \in \alg{A}$ be projections. Then
\begin{align*}
 \form{[e],[f]} &= \tr(\gamma\lambda(e)\rho(f))\\
 &= \tr(\gamma_i \lambda_i(e) \rho_i(f)) + \tr(\gamma_{-i}\lambda_{-i}(e)\rho_{-i}(f))\\
 &= \tr(\gamma_i \lambda_i(e) \rho_i(f)) + \epspp \tr(\tilde{J}\gamma_i \lambda_i(f) \rho_i(e) \tilde{J}^*)\\
 &= \tr(\gamma_i \lambda_i(e) \rho_i(f)) + \epspp \overline{\tr(\gamma_i \lambda_i(f) \rho_i(e))}\\
 &= \form{[e],[f]}_i + \epspp \form{[f],[e]}_i,
\end{align*}
where we have used the fact that the intersection forms are integer-valued.
\end{proof}

Thus, Poincar{\'e} duality on an $S^0$-real bimodule $\hs{H}$ is equivalent to nondegeneracy of either the symmetrisation or antisymmetrisation of the intersection form on $\hs{H}_i$, as the case may be.

\section{Bimodules and Multiplicity Matrices}

We now turn to the study of bimodules, and in particular, to their characterisation by multiplicity matrices. We shall find that a bimodule admits, up to unitary equivalence, at most one real structure of any given $KO$-dimension, and that the multiplicity matrix or matrices of a bimodule will determine entirely which real structures, if any, it does admit.

In what follows, $\alg{A}$ will be a fixed real \Cstar-algebra.

\subsection{Odd bimodules}

Let us begin with the study of odd bimodules.

For $m \in M_S(\ZO)$, we define an $\alg{A}$-bimodule $\hs{H}_m$ by setting
\begin{align*}
 \hs{H}_m &:= \bigoplus_{\alpha,\beta \in \spec{\alg{A}}} \field{C}^{n_\alpha} \otimes \field{C}^{m_{\alpha\beta}} \otimes \field{C}^{n_\beta},\\
 \lambda_m(a) &:= \bigoplus_{\alpha,\beta \in \spec{\alg{A}}} \lambda_\alpha(a) \otimes 1_{m_{\alpha\beta}} \otimes 1_{n_\beta}, \quad a \in \alg{A},\\
 \rho_m(a) &:= \bigoplus_{\alpha, \beta \in \spec{\alg{A}}} 1_{n_\alpha} \otimes 1_{m_{\alpha\beta}} \otimes \lambda_\beta(a)^T, \quad a \in \alg{A}.
\end{align*}
Here we use the convention that $1_n$ is the identity on $\field{C}^n$, with $\field{C}^0 := \{0\}$ and hence $1_0 := 0$. 

\begin{proposition}[Krajewski~\cite{Kraj98}*{\S 3.1}, Paschke--Sitarz~\cite{PS98}*{Lemmas 1, 2}] \label{oddmult}
The map $\bimod : M_S(\ZO) \to \Bimod(\alg{A})$ given by $m \mapsto [\hs{H}_m]$ is an isomorphism of monoids.
\end{proposition}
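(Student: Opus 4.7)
The plan is to verify that $\bimod$ is a well-defined monoid homomorphism and then establish its bijectivity via the standard decomposition of a bimodule into its left and right isotypic components.

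That $\bimod$ is a monoid homomorphism is immediate from the definition of $\hs{H}_m$: the construction is additive in $m$ entry by entry, so $\hs{H}_{m+m'}$ is unitarily equivalent to $\hs{H}_m \oplus \hs{H}_{m'}$, and $\hs{H}_0$ is the zero bimodule. Injectivity will follow once surjectivity is proved with a uniqueness statement, so the real work is in the surjectivity/uniqueness step.

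For surjectivity, given an $\alg{A}$-bimodule $(\hs{H},\lambda,\rho)$, I would first decompose $\hs{H}$ as a left $\alg{A}$-module. Since $\alg{A}$ is a finite-dimensional real \Cstar-algebra, its *-representations on complex Hilbert spaces are completely reducible, so there is a unitary left $\alg{A}$-isomorphism $\hs{H} \cong \bigoplus_{\alpha \in \spec{\alg{A}}} V_\alpha \otimes M_\alpha$, where $V_\alpha = \field{C}^{n_\alpha}$ carries the irreducible class $\alpha$, and $M_\alpha$ is its multiplicity space. The Schur-type computation—valid case by case for summands $M_n(\field{R})$, $M_n(\field{C})$, and $M_n(\field{H})$ using that $\spec{\alg{A}}$ counts conjugate-linear representations separately—shows that the commutant of $\lambda(\alg{A})$ in $\bdd(\hs{H})$ is $\bigoplus_\alpha 1_{V_\alpha} \otimes \bdd(M_\alpha)$. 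Since $\rho$ commutes with $\lambda$, it lies in this commutant, so $\rho(a) = \bigoplus_\alpha 1_{V_\alpha} \otimes \rho_\alpha(a)$ for some *-antihomomorphisms $\rho_\alpha$, making each $M_\alpha$ a right $\alg{A}$-module. Passing to transposes and decomposing each $M_\alpha$ as a left $\alg{A}$-module (via the correspondence between right and left representations explained in the preliminaries), I obtain unitaries $M_\alpha \cong \bigoplus_\beta N_{\alpha\beta} \otimes V_\beta^*$. Setting $m_{\alpha\beta} := \dim N_{\alpha\beta}$ and using the paper's identification of $V_\beta^*$ with $\field{C}^{n_\beta}$ via standard and dual bases, the composite unitary realises an equivalence $\hs{H} \cong \hs{H}_m$ of $\alg{A}$-bimodules.

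For injectivity, if $\hs{H}_m \cong \hs{H}_{m'}$ as $\alg{A}$-bimodules, the same double decomposition—applied to both sides and invoking the uniqueness up to permutation of isotypic decompositions for *-representations of a semisimple real \Cstar-algebra—forces $\dim N_{\alpha\beta} = \dim N'_{\alpha\beta}$ for every $(\alpha,\beta)$, so $m = m'$. The main obstacle is simply the careful bookkeeping across the three Wedderburn cases when invoking Schur's lemma: one must verify that the complex commutant of each irreducible class in $\spec{\alg{A}}$ is exactly $\field{C}$, which uses that $M_n(\field{H}) \otimes_\field{R} \field{C} \cong M_{2n}(\field{C})$ and that the conjugate-linear representation of $M_n(\field{C})$ is distinguished as a separate point of $\spec{\alg{A}}$. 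Once that is in hand, both steps of the double decomposition are standard and the unitarity of all intertwiners is automatic because we are dealing with *-representations on Hilbert spaces.
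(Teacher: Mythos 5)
Your proof is correct and follows essentially the same strategy as the paper's: both decompose $\hs{H}$ into bi-isotypic components indexed by pairs $(\alpha,\beta) \in \spec{\alg{A}}^2$ and read off the multiplicities. The paper packages this via the explicit (near-)central projections $P^L_\alpha, P^R_\alpha$ and their products $P_{\alpha\beta}$, while you reach the same decomposition in two stages (left isotypic decomposition, then the commutant computation, then right isotypic decomposition of the multiplicity spaces); the underlying Schur/complete-reducibility content, including the careful handling of the three Wedderburn cases, is identical.
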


\begin{proof}
By construction, $\bimod$ is an injective morphism of monoids. It therefore suffices to show that $\bimod^\odd$ is also surjective.

Now, let $\hs{H}$ be an $\alg{A}$-bimodule. For $\alpha \in \spec{\alg{A}}$ define projections $P^L_\alpha$ and $P^R_\alpha$ by
\[
 P^L_\alpha :=
 \begin{cases}
  \lambda(e_i) &\text{if $\alpha = \rep{n}_i$ for $\field{K}_i \neq \field{C}$,}\\
  \frac{1}{2}\left(\lambda(e_i) - i \lambda(i e_i) \right) &\text{if $\alpha = \rep{n}_i$ for $\field{K}_i = \field{C}$,}\\
  \frac{1}{2}\left(\lambda(e_i) + i \lambda(i e_i) \right) &\text{if $\alpha = \crep{n}_i$ for $\field{K}_i = \field{C}$,}
 \end{cases}
\]
and
\[
 P^R_\alpha :=
 \begin{cases}
  \rho(e_i) &\text{if $\alpha = \rep{n}_i$ for $\field{K}_i \neq \field{C}$,}\\
  \frac{1}{2}\left(\rho(e_i) - i \rho(i e_i) \right) &\text{if $\alpha = \rep{n}_i$ for $\field{K}_i = \field{C}$,}\\
  \frac{1}{2}\left(\rho(e_i) + i \rho(i e_i) \right) &\text{if $\alpha = \crep{n}_i$ for $\field{K}_i = \field{C}$,}
 \end{cases}
\]
respectively; by construction, $P^L_\alpha \in \lambda(\alg{A}) + i \lambda(\alg{A})$ and $P^R_\alpha \in \rho(\alg{A}) + i \rho(\alg{A})$, so that for $\alpha$, $\beta \in \spec{\alg{A}}$, $P^L_\alpha$ and $P^R_\beta$ commute. We can therefore define projections $P_{\alpha\beta} := P^L_\alpha P^R_\beta$ for each $\alpha$, $\beta \in \spec{\alg{A}}$; it is then easy to see that each $\hs{H}_{\alpha\beta} := P_{\alpha\beta}\hs{H}$ is a sub-$\alg{A}$-bimodule of $\hs{H}$, and that $\hs{H} = \oplus_{\alpha,\beta \in \spec{\alg{A}}} \hs{H}_{\alpha\beta}$.

Let $\alpha$, $\beta \in \spec{\alg{A}}$. As noted before, the left action of $\alg{A}$ on $\hs{H}_{\alpha\beta}$ must decompose as a direct sum of irreducible representations, but by construction of $\hs{H}_{\alpha\beta}$, those irreducible representations must all be $\alpha$. Similarly, the right action on $\hs{H}_{\alpha\beta}$ must be a direct sum of copies of $\beta$. Since the left action and right action commute, we must therefore have that $\hs{H}_{\alpha\beta} \cong \hs{H}_{m_{\alpha\beta} E_{\alpha\beta}}$ for some $m_{\alpha\beta} \in \ZO$. Taking the direct sum of the $\hs{H}_{\alpha\beta}$, we therefore see that $\hs{H}$ is unitarily equivalent to $\hs{H}_m$ for $m = (m_{\alpha\beta}) \in M_S(\ZO)$, that is, $[\hs{H}] = \bimod(m)$.
\end{proof}

We denote the inverse map $\bimod^{-1} : \Bimod(\alg{A}) \to M_S(\ZO)$ by $\mult$.

\begin{definition}
 Let $\hs{H}$ be an $\alg{A}$-bimodule. Then the \term{multiplicity matrix} of $\alg{A}$ is the matrix $\mult[\hs{H}] \in M_S(\ZO)$.
\end{definition}

From now on, without any loss of generality, we shall assume that an $\alg{A}$-bimodule $\hs{H}$ with multiplicity matrix $m$ is $\hs{H}_m$ itself.

\begin{remark}
 Multiplicity matrices readily admit a $K$-theoretic interpretation~\cite{Ell08}. For simplicity, suppose that $\alg{A}$ is a complex \Cstar-algebra and consider only complex-linear representations. Then for $\hs{H}$  an $\alg{A}$-bimodule, $\mult[\hs{H}]$ is essentially the \term{Bratteli matrix} of the inclusion $\lambda(\alg{A}) \hookrightarrow \rho(\alg{A})^\prime \subset \bdd(\hs{H})$ (cf.~\cite{Ell07}*{\S 2}), and can thus be interpreted as representing the induced map $K_0(\lambda(\alg{A})) \to K_0(\rho(\alg{A})^\prime)$ in complex $K$-theory. Likewise, $\mult[\hs{H}]^T$ can be interpreted as representing the map $K_0(\rho(\alg{A})) \to K_0(\lambda(\alg{A})^\prime)$ induced by the inclusion $\rho(\alg{A}) \hookrightarrow \lambda(\alg{A})^\prime \subset \bdd(\hs{H})$. Similar interpretations can be made in the more general context of real \Cstar-algebras and $KO$-theory.
\end{remark}

We shall now characterise left, right, and left and right $\alg{A}$-linear maps between $\alg{A}$-bimodules. Let $\hs{H}$ and $\hs{H}^\prime$ be $\alg{A}$-bimodules with multiplicity matrices $m$ and $m^\prime$, respectively, let $P_{\alpha\beta}$ be the projections on $\hs{H}$ defined as in the proof of Proposition~\ref{oddmult}, and let $P_{\alpha\beta}^\prime$ be the analogous projections on $\hs{H}^\prime$. Then any linear map $T : \hs{H} \to \hs{H}^\prime$ is characterised by the components
\begin{equation}
 T_{\alpha\beta}^{\gamma\delta} := P^\prime_{\gamma\delta} T P_{\alpha\beta},
\end{equation}
which we view as maps $T_{\alpha\beta}^{\gamma\delta} : \field{C}^{n_\alpha} \otimes \field{C}^{m_{\alpha\beta}} \otimes \field{C}^{n_\beta} \to \field{C}^{n_\gamma} \otimes \field{C}^{m_{\gamma\delta}^\prime} \otimes \field{C}^{n_\delta}$, or equivalently, as elements $
 T_{\alpha\beta}^{\gamma\delta} \in M_{n_\gamma \times n_\alpha}(\field{C}) \otimes M_{m_{\gamma\delta}^\prime \times m_{\alpha\beta}} \otimes M_{n_\delta \times n_\beta}(\field{C})$. Thus we have an isomorphism
\[
 \comp : \bdd(\hs{H},\hs{H}^\prime) \to \bigoplus_{\alpha,\beta,\gamma,\delta \in \spec{\alg{A}}} M_{n_\gamma \times n_\alpha}(\field{C}) \otimes M_{m_{\gamma\delta}^\prime \times m_{\alpha\beta}} \otimes M_{n_\delta \times n_\beta}(\field{C})
\]
given by $\comp(T) := (T_{\alpha\beta}^{\gamma\delta})_{\alpha,\beta,\gamma,\delta \in \spec{\alg{A}}}$. Note that when $\hs{H} = \hs{H}^\prime$, $T$ is self-adjoint if and only if $T_{\gamma\delta}^{\alpha\beta} = (T_{\alpha\beta}^{\gamma\delta})^*$ for all $\alpha$, $\beta$, $\gamma$, $\delta \in \spec{\alg{A}}$.

\begin{proposition}[Krajewski~\cite{Kraj98}*{\S3.4}]\label{linear}
 Let $\hs{H}$ and $\hs{H}^\prime$ be $\alg{A}$-bimodules with multiplicity matrices $m$ and $m^\prime$, respectively. Then
\begin{align}
 \comp(\lbdd_{\alg{A}}(\hs{H},\hs{H}^\prime)) &= \bigoplus_{\alpha,\beta,\delta \in \spec{\alg{A}}} 1_{n_\alpha} \otimes M_{m_{\alpha\delta}^\prime \times m_{\alpha\beta}}(\field{C}) \otimes M_{n_\delta \times n_\beta}(\field{C}),\\
 \comp(\rbdd_\alg{A}(\hs{H},\hs{H}^\prime)) &= \bigoplus_{\alpha,\beta,\gamma \in \spec{\alg{A}}} M_{n_\gamma \times n_\alpha}(\field{C}) \otimes M_{m_{\gamma\beta}^\prime \times m_{\alpha\beta}}(\field{C}) \otimes 1_{n_\beta},\\
 \comp(\lrbdd_{\alg{A}}(\hs{H},\hs{H}^\prime)) &= \bigoplus_{\alpha,\beta \in \spec{\alg{A}}} 1_{n_\alpha} \otimes M_{m_{\alpha\beta}^\prime \times m_{\alpha\beta}}(\field{C}) \otimes 1_{n_\beta}.
\end{align} 
\end{proposition}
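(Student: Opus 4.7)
The plan is to use the $\comp$ isomorphism to decompose the intertwining conditions block-by-block, then invoke Schur's lemma on each block. Given $T \in \bdd(\hs{H}_m, \hs{H}_{m^\prime})$, each component $T_{\alpha\beta}^{\gamma\delta}$ is naturally viewed as an element of
\[
 \bdd(\field{C}^{n_\alpha}, \field{C}^{n_\gamma}) \otimes M_{m^\prime_{\gamma\delta} \times m_{\alpha\beta}}(\field{C}) \otimes \bdd(\field{C}^{n_\beta}, \field{C}^{n_\delta}).
\]
Since $\lambda_m(a)$ preserves each summand $\hs{H}_{\alpha\beta}$ and acts on it only in the first tensor factor as $\lambda_\alpha(a)$, the equation $T\lambda_m(a) = \lambda_{m^\prime}(a) T$ decouples over the index quadruples $(\alpha,\beta,\gamma,\delta)$ into
\[
 T_{\alpha\beta}^{\gamma\delta} (\lambda_\alpha(a) \otimes 1_{m_{\alpha\beta}} \otimes 1_{n_\beta}) = (\lambda_\gamma(a) \otimes 1_{m^\prime_{\gamma\delta}} \otimes 1_{n_\delta}) T_{\alpha\beta}^{\gamma\delta}, \quad a \in \alg{A},
\]
which, by the standard tensor-product characterisation, is equivalent to saying that the first tensor factor of $T_{\alpha\beta}^{\gamma\delta}$ lies in the space of complex-linear $\alg{A}$-intertwiners from $\field{C}^{n_\alpha}$ to $\field{C}^{n_\gamma}$.

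The essential input is then Schur's lemma applied to the irreducibles in $\spec{\alg{A}}$: for distinct $\alpha,\gamma \in \spec{\alg{A}}$ the representations are inequivalent as complex $\alg{A}$-modules, so the intertwiner space vanishes; and for each $\alpha \in \spec{\alg{A}}$, the complex-linear $\alg{A}$-endomorphism algebra of $\field{C}^{n_\alpha}$ is $\field{C} \cdot 1_{n_\alpha}$. The only point requiring a brief verification is when $\alpha$ arises from a summand $M_{n_i}(\field{H})$: there, any complex-linear $X$ commuting with the real embedding $M_{n_i}(\field{H}) \subset M_{2n_i}(\field{C})$ automatically commutes with the complex span $\field{C} \cdot M_{n_i}(\field{H}) = M_{2n_i}(\field{C})$, forcing $X \in \field{C} \cdot 1$. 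Consequently $T_{\alpha\beta}^{\gamma\delta}$ vanishes unless $\gamma = \alpha$, and in that case its first tensor factor is a scalar multiple of $1_{n_\alpha}$; absorbing the scalar into the second tensor factor yields exactly the first formula.

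The right-linear case is handled identically on the last tensor factor, using that $\rho_m(a)$ acts there as $\lambda_\beta(a)^T$ and that the transposes $\{\lambda_\beta^T\}_{\beta \in \spec{\alg{A}}}$ are again mutually inequivalent irreducibles with scalar complex-linear commutants; Schur then forces $T_{\alpha\beta}^{\gamma\delta} = 0$ unless $\delta = \beta$, with the last tensor factor a multiple of $1_{n_\beta}$. The bilateral formula is immediate as the intersection of the two previous characterisations. No serious obstacle arises: the proof is essentially bookkeeping with tensor indices, anchored by the uniform appearance of a scalar commutant for the irreducibles across the three Wedderburn types $\field{R}$, $\field{C}$, and $\field{H}$.
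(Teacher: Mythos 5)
Your argument is correct and follows essentially the same route as the paper: decompose $T$ into components $T_{\alpha\beta}^{\gamma\delta}$, observe the intertwining condition decouples block-wise and acts only on one tensor factor, and then invoke Schur's lemma for the irreducibles of $\alg{A}$ to conclude that the first (resp.\ last) factor is a scalar multiple of the identity unless $\alpha\neq\gamma$ (resp.\ $\beta\neq\delta$), in which case the block vanishes. The paper makes your ``standard tensor-product characterisation'' explicit by writing $T = \sum_i A_i \otimes B_i$ with linearly independent $B_i$ and passing the intertwining relation to each $A_i$; your added remark about the quaternionic case (and, one should add, the real case $M_{n_i}(\field{R})$, which is handled by the identical complex-span argument) is a useful elaboration of what the paper tacitly assumes when it invokes Schur for a real algebra acting on a complex space.
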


\begin{proof}
Observe that $T \in \bdd(\hs{H},\hs{H}^\prime)$ is left, right, or left and right $\alg{A}$-linear if and only if each $T_{\alpha\beta}^{\gamma\delta}$ is left, right, or left and right $\alg{A}$. Thus, let $\alpha$, $\beta$, $\gamma$ and $\delta \in \spec{\alg{A}}$ be fixed, and let $T \in M_{n_\gamma \times n_\alpha}(\field{C}) \otimes M_{m_{\gamma\delta}^\prime \times m_{\alpha\beta}} \otimes M_{n_\delta \times n_\beta}(\field{C})$.

First, write $T = \sum_{i=1}^k A_i \otimes B_i$ for $A_i \in M_{n_\gamma \times n_\alpha}(\field{C})$ and for linearly independent $B_i \in M_{m_{\gamma\delta}^\prime \times m_{\alpha\beta}} \otimes M_{n_\delta \times n_\beta}(\field{C})$. Then, for $a \in \alg{A}$, 
\[
 (\lambda_\gamma(a) \otimes 1_{m_{\gamma\delta}^\prime} \otimes 1_{n_\delta}) T - T (\lambda_\alpha(a) \otimes 1_{m_{\alpha\beta}} \otimes 1_{n_\beta}) = \sum_{i=1}^k (\lambda_\gamma(a) A_i - A_i \lambda_\alpha(a)) \otimes B_i, 
\]
so that by linear independence of the $B_i$, $T$ is left $\alg{A}$-linear if and only if each $A_i$ intertwines the irreducible representations $\alpha$ and $\gamma$, and hence, by Schur's lemma, if and only if $\alpha = \gamma$ and each $A_i$ is a constant multiple of $1_{n_\alpha}$ or each $A_i = 0$. Thus,
\begin{multline*}
 \lbdd_{\alg{A}}(\field{C}^{n_\alpha} \otimes \field{C}^{m_{\alpha\beta}} \otimes \field{C}^{n_\beta}, \field{C}^{n_\gamma} \otimes \field{C}^{m_{\gamma\delta}^\prime} \otimes \field{C}^{n_\delta})\\ =
 \begin{cases}
  1_{n_\alpha} \otimes M_{m_{\alpha\delta}^\prime \times m_{\alpha\beta}}(\field{C}) \otimes M_{n_\delta \times n_\beta}(\field{C}) &\text{if $\alpha = \gamma$,}\\
  \{0\} &\text{otherwise.}
 \end{cases}
\end{multline*}

Analogously, one can show that
\begin{multline*}
 \rbdd_{\alg{A}}(\field{C}^{n_\alpha} \otimes \field{C}^{m_{\alpha\beta}} \otimes \field{C}^{n_\beta}, \field{C}^{n_\gamma} \otimes \field{C}^{m_{\gamma\delta}^\prime} \otimes \field{C}^{n_\delta})\\ = 
 \begin{cases}
  M_{n_\gamma \times n_\alpha}(\field{C}) \otimes M_{m_{\gamma\beta}^\prime \times m_{\alpha\beta}}(\field{C}) \otimes 1_{n_\beta} &\text{if $\beta = \delta$,}\\
  \{0\} &\text{otherwise},
 \end{cases}
\end{multline*}
and then these first two results together imply that
\begin{multline*}
 \lrbdd_{\alg{A}}(\field{C}^{n_\alpha} \otimes \field{C}^{m_{\alpha\beta}} \otimes \field{C}^{n_\beta}, \field{C}^{n_\gamma} \otimes \field{C}^{m_{\gamma\delta}^\prime} \otimes \field{C}^{n_\delta})\\ = 
 \begin{cases}
  1_{n_\alpha} \otimes M_{m_{\alpha\beta}^\prime \times m_{\alpha\beta}}(\field{C}) \otimes 1_{n_\beta} &\text{if $(\alpha,\beta)=(\gamma,\delta)$,}\\
  \{0\} &\text{otherwise,}
 \end{cases}
\end{multline*}
as was claimed.
\end{proof}

An immediate consequence is the following description of the group $\lrU_\alg{A}(\hs{H})$:

\begin{corollary}\label{oddunitary}
 Let $\hs{H}$ be an $\alg{A}$-bimodule. Then
\[
 \comp(\lrU_\alg{A}(\hs{H})) = \bigoplus_{\alpha,\beta \in \spec{\alg{A}}} 1_{n_\alpha} \otimes \U(m_{\alpha\beta}) \otimes 1_{n_\beta} \cong \prod_{\alpha,\beta \in \spec{\alg{A}}} \U(m_{\alpha\beta}),
\]
with the convention that $\U(0) = \{0\}$ is the trivial group.
\end{corollary}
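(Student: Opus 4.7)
The strategy is to deduce the claim directly from Proposition~\ref{linear} applied with $\hs{H}^\prime = \hs{H}$ and $m^\prime = m$. Since $\comp$ is a $\Star$-algebra isomorphism from $\bdd(\hs{H})$ onto its image — the components $T_{\alpha\beta}^{\gamma\delta}$ are obtained by compressing by mutually orthogonal projections, adjoint simply swaps the upper and lower index pairs, and composition translates into ordinary matrix multiplication — it restricts to a bijection between the unitary elements of $\lrbdd_\alg{A}(\hs{H})$ and the unitary elements of $\comp(\lrbdd_\alg{A}(\hs{H}))$.

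Next, Proposition~\ref{linear} identifies $\comp(\lrbdd_\alg{A}(\hs{H}))$ with the orthogonal direct sum $\bigoplus_{\alpha,\beta \in \spec{\alg{A}}} 1_{n_\alpha} \otimes M_{m_{\alpha\beta}}(\field{C}) \otimes 1_{n_\beta}$. In a finite orthogonal direct sum of unital $\Star$-algebras, an element is unitary precisely when each of its components is unitary in the corresponding summand, so the remaining task is to identify the unitary group of a single summand. Within the summand indexed by $(\alpha,\beta)$ with $m_{\alpha\beta} \geq 1$, the element $1_{n_\alpha} \otimes B \otimes 1_{n_\beta}$ has adjoint $1_{n_\alpha} \otimes B^* \otimes 1_{n_\beta}$ and the summand's unit is $1_{n_\alpha} \otimes 1_{m_{\alpha\beta}} \otimes 1_{n_\beta}$, so such an element is unitary if and only if $B \in \U(m_{\alpha\beta})$; summands with $m_{\alpha\beta}=0$ are trivial and contribute $\{0\} = \U(0)$ under the stated convention.

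For the final isomorphism, the assignment $B \mapsto 1_{n_\alpha} \otimes B \otimes 1_{n_\beta}$ is an injective unital $\Star$-homomorphism from $M_{m_{\alpha\beta}}(\field{C})$ into its summand, hence restricts to a group isomorphism $\U(m_{\alpha\beta}) \cong 1_{n_\alpha} \otimes \U(m_{\alpha\beta}) \otimes 1_{n_\beta}$, and taking the product over all $(\alpha,\beta) \in \spec{\alg{A}}^2$ yields the claimed product decomposition. No serious obstacle is anticipated, as the argument reduces mechanically to Proposition~\ref{linear}; the only minor subtlety is bookkeeping for degenerate summands where some $m_{\alpha\beta}$ vanishes, handled uniformly by the convention $\U(0) = \{0\}$.
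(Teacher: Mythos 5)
Your proposal is correct and follows exactly the route the paper takes, namely reading the corollary off directly from Proposition~\ref{linear} with $\hs{H}^\prime = \hs{H}$ (the paper simply presents it as an immediate consequence without spelling out the steps). The only very minor imprecision is the phrase that $\comp$ is a $\Star$-algebra isomorphism from all of $\bdd(\hs{H})$ onto its image with ``ordinary matrix multiplication''; composition of general operators mixes components via block matrix multiplication, but once one restricts to $\lrbdd_\alg{A}(\hs{H})$, where only the $T_{\alpha\beta}^{\alpha\beta}$ survive, $\comp$ genuinely becomes a $\Star$-isomorphism onto the orthogonal direct sum and your argument goes through as written.
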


\subsection{Even bimodules}

We now turn to the study of even bimodules; let us begin by considering the decomposition of an even bimodule into its even and odd sub-bimodules.

Let $(\hs{H},\gamma)$ be an even $\alg{A}$-bimodule. Define mutually orthogonal projections $P^\even$ and $P^\odd$ by 
\[
 P^\even = \frac{1}{2}(1+\gamma), \quad P^\odd = \frac{1}{2}(1-\gamma).
\]
We can then define sub-bimodules $\hs{H}^\even$ and $\hs{H}^\odd$ of $\hs{H}$ by $\hs{H}^\even = P^\even \hs{H}$, $\hs{H}^\odd = P^\odd \hs{H}$; one has that $\hs{H} = \hs{H}^\even \oplus \hs{H}^\odd$ at the level of bimodules.

On the other hand, given $\alg{A}$-bimodules $\hs{H}_1$ and $\hs{H}_2$, we can construct an even $\alg{A}$-bimodule $(\hs{H},\gamma)$ such that $\hs{H}^\even = \hs{H}_1$ and $\hs{H}^\odd = \hs{H}_2$ by setting $\hs{H} = \hs{H}_1 \oplus \hs{H}_2$ and $\gamma = 1_{\hs{H}_1} \oplus (-1_{\hs{H}_2})$. If $\hs{H}_1$ and $\hs{H}_2$ are already $\hs{H}^\even$ and $\hs{H}^\odd$ for some $(\hs{H},\gamma)$, then this procedure precisely reconstructs $(\hs{H},\gamma)$. Since this procedure manifestly respects direct summation and unitary equivalence at either end, we have therefore proved the following:

\begin{proposition}\label{evensplit}
 Let $\alg{A}$ be a real \Cstar-algebra. The map 
\[
 C : \Bimod^\even(\alg{A}) \to \Bimod(\alg{A}) \times \Bimod(\alg{A})
\]
given by
\[
 C([\hs{H}]) := ([\hs{H}^\even],[\hs{H}^\odd])
\]
is an isomorphism of monoids.
\end{proposition}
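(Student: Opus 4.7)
The plan is to formalize what is already sketched in the two paragraphs preceding the proposition. Two observations drive everything: (i) since $\gamma$ is required to be a self-adjoint unitary in $\lrbdd_\alg{A}(\hs{H})$, the spectral projections $P^\even = \frac{1}{2}(1+\gamma)$ and $P^\odd = \frac{1}{2}(1-\gamma)$ lie in $\lrbdd_\alg{A}(\hs{H})$, so the $\pm 1$-eigenspaces $\hs{H}^\even$, $\hs{H}^\odd$ are genuine sub-$\alg{A}$-bimodules orthogonally decomposing $\hs{H}$; (ii) conversely, given any pair of $\alg{A}$-bimodules $\hs{H}_1$, $\hs{H}_2$, the operator $1_{\hs{H}_1}\oplus(-1_{\hs{H}_2})$ is automatically a self-adjoint unitary in $\lrbdd_\alg{A}(\hs{H}_1\oplus\hs{H}_2)$ and so defines a grading.

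First I would verify that $C$ is well-defined on unitary equivalence classes: any $U \in \lrU_\alg{A}(\hs{H},\hs{H}^\prime)$ with $U\gamma = \gamma^\prime U$ also satisfies $U P^\even = (P^\prime)^\even U$ and $U P^\odd = (P^\prime)^\odd U$, and so restricts to bimodule unitary equivalences $\hs{H}^\even \to (\hs{H}^\prime)^\even$ and $\hs{H}^\odd \to (\hs{H}^\prime)^\odd$. Compatibility with direct sums is immediate from $P^\even_{\hs{H}_a\oplus\hs{H}_b} = P^\even_{\hs{H}_a} \oplus P^\even_{\hs{H}_b}$, so $C$ is a monoid homomorphism.

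Next I would define the candidate inverse $D: \Bimod(\alg{A}) \times \Bimod(\alg{A}) \to \Bimod^\even(\alg{A})$ by $D([\hs{H}_1],[\hs{H}_2]) := [(\hs{H}_1\oplus\hs{H}_2,\,1_{\hs{H}_1}\oplus(-1_{\hs{H}_2}))]$. A pair of bimodule unitaries $U_i:\hs{H}_i\to\hs{H}_i^\prime$ gives $U_1\oplus U_2$, which intertwines the two gradings on the nose, so $D$ is well-defined; it visibly respects direct sums. Then $C\circ D = \Id$ by inspection, since the even and odd parts of $(\hs{H}_1\oplus\hs{H}_2,1\oplus(-1))$ are exactly $\hs{H}_1$ and $\hs{H}_2$. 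For $D\circ C = \Id$, given $(\hs{H},\gamma)$ the identity map $\hs{H}^\even\oplus\hs{H}^\odd \to \hs{H}$, $v_+ + v_- \mapsto v_+ + v_-$, is a unitary lying in $\lrbdd_\alg{A}$ (both summands are sub-bimodules) and intertwines $1_{\hs{H}^\even}\oplus(-1_{\hs{H}^\odd})$ with $\gamma$ because $\gamma$ acts as $\pm 1$ on $\hs{H}^{\even/\odd}$.

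There is no real obstacle here; the proof is a direct consequence of the fact that $\gamma$ lies in the bimodule commutant, so the spectral decomposition of $\gamma$ is automatically a decomposition in the category of $\alg{A}$-bimodules. The only mild bookkeeping point is checking that the reconstructed grading on $\hs{H}^\even\oplus\hs{H}^\odd$ matches $\gamma$ under the natural identification, which is the content of the last sentence above.
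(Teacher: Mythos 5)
Your proof is correct and takes the same approach as the paper, which presents exactly this argument (spectral decomposition of $\gamma$ via $P^{\pm} = \frac{1}{2}(1\pm\gamma)$ for one direction, the direct-sum construction with $\gamma = 1 \oplus (-1)$ for the other) in the two paragraphs immediately preceding the proposition. Your version simply makes the well-definedness and bijectivity checks that the paper leaves implicit more explicit.
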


One readily obtains a similar decomposition at the level of unitary groups:

\begin{corollary}
Let $(\hs{H},\gamma)$ be an even $\alg{A}$-bimodule. Then
\[
 \lrU_\alg{A}(\hs{H},\gamma) = \lrU_\alg{A}(\hs{H}^\even) \oplus \lrU_\alg{A}(\hs{H}^\odd).
\]
\end{corollary}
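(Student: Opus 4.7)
The plan is to exhibit the isomorphism explicitly as the restriction/extension of the bimodule decomposition $\hs{H} = \hs{H}^\even \oplus \hs{H}^\odd$ from Proposition~\ref{evensplit} to the level of automorphism groups.

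First I would unpack definitions: an element $U \in \lrU_\alg{A}(\hs{H},\gamma)$ is a unitary in $\lrbdd_\alg{A}(\hs{H})$ which additionally intertwines the representation of the (even) structure algebra, \ie\ satisfies $U\gamma = \gamma U$. Since $P^\even = \tfrac{1}{2}(1+\gamma)$ and $P^\odd = \tfrac{1}{2}(1-\gamma)$, commutation with $\gamma$ is equivalent to commutation with both projections, so any such $U$ preserves the sub-bimodules $\hs{H}^\even$ and $\hs{H}^\odd$. Define $U^\even := U|_{\hs{H}^\even}$ and $U^\odd := U|_{\hs{H}^\odd}$; these are unitary because $U$ is, and they are left and right $\alg{A}$-linear because $\hs{H}^\even$ and $\hs{H}^\odd$ are sub-$\alg{A}$-bimodules of $\hs{H}$ (as $P^\even$, $P^\odd \in \lrbdd_\alg{A}(\hs{H})$). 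Thus $U \mapsto (U^\even, U^\odd)$ gives a group homomorphism
\[
 \Phi : \lrU_\alg{A}(\hs{H},\gamma) \to \lrU_\alg{A}(\hs{H}^\even) \oplus \lrU_\alg{A}(\hs{H}^\odd).
\]

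Next I would construct the inverse. Given $(V,W) \in \lrU_\alg{A}(\hs{H}^\even) \oplus \lrU_\alg{A}(\hs{H}^\odd)$, set $\Psi(V,W) := V \oplus W$ acting on $\hs{H} = \hs{H}^\even \oplus \hs{H}^\odd$. This operator is unitary, left and right $\alg{A}$-linear (being the block-diagonal sum of two such unitaries on sub-bimodules), and manifestly commutes with $\gamma = 1_{\hs{H}^\even} \oplus (-1_{\hs{H}^\odd})$, so $\Psi(V,W) \in \lrU_\alg{A}(\hs{H},\gamma)$. Both $\Phi$ and $\Psi$ are group homomorphisms by construction, and one checks immediately that $\Phi \circ \Psi = \Id$ and $\Psi \circ \Phi = \Id$ — the latter using that any $U$ commuting with $\gamma$ equals $P^\even U P^\even + P^\odd U P^\odd$.

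There is no real obstacle here; the only thing to be careful about is reading the equality in the statement correctly as an internal direct sum decomposition of the group (\ie\ a group isomorphism between $\lrU_\alg{A}(\hs{H},\gamma)$ and the direct product $\lrU_\alg{A}(\hs{H}^\even) \times \lrU_\alg{A}(\hs{H}^\odd)$), which is precisely what $\Phi$ and $\Psi$ establish. This corollary is simply the functoriality of the monoid isomorphism of Proposition~\ref{evensplit} transported to the endomorphism/automorphism level.
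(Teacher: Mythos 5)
Your proof is correct and is exactly the straightforward unpacking that the paper leaves implicit when it says the decomposition is "readily obtained" from Proposition~\ref{evensplit}. The key step in both cases is the observation that commutation with $\gamma$ is equivalent to preserving the sub-bimodules $\hs{H}^\even$ and $\hs{H}^\odd$, so there is no substantive difference in approach.
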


Another immediate consequence is the following analogue of Proposition~\ref{oddmult}:

\begin{proposition}\label{evenmult}
 Let $\alg{A}$ be a real \Cstar-algebra. The map 
\[
 \bimod^\even : M_S(\ZO) \times M_S(\ZO) \to \Bimod^\even(\alg{A})
\]
defined by $bimod^\even := C^{-1} \circ (\bimod \times \bimod)$ is an isomorphism of monoids.
\end{proposition}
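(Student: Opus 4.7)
The plan is to observe that $\bimod^\even$ has been constructed as a composition of two maps, each of which has already been shown to be an isomorphism of monoids, so the result follows immediately by functoriality.

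More precisely, I would proceed in two short steps. First, by Proposition~\ref{oddmult}, the map $\bimod : M_S(\ZO) \to \Bimod(\alg{A})$ is an isomorphism of monoids. Taking the product of this map with itself yields an isomorphism of monoids
\[
 \bimod \times \bimod : M_S(\ZO) \times M_S(\ZO) \to \Bimod(\alg{A}) \times \Bimod(\alg{A}),
\]
since the direct product (with coordinatewise addition) is a product in the category of abelian monoids and isomorphisms are preserved by this functor. Second, by Proposition~\ref{evensplit}, the splitting map $C : \Bimod^\even(\alg{A}) \to \Bimod(\alg{A}) \times \Bimod(\alg{A})$ is an isomorphism of monoids, hence so is its inverse $C^{-1}$.

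The composition $\bimod^\even := C^{-1} \circ (\bimod \times \bimod)$ is therefore an isomorphism of monoids, as the composition of two such isomorphisms. There is no substantive obstacle here; the entire content of the proposition has been packaged into the two prior results, and only the elementary categorical observation that products and compositions of monoid isomorphisms are again monoid isomorphisms remains. One could, if desired, unpack the construction explicitly to note that $\bimod^\even(m^\even, m^\odd)$ is the even $\alg{A}$-bimodule $(\hs{H}_{m^\even} \oplus \hs{H}_{m^\odd},\, 1_{\hs{H}_{m^\even}} \oplus (-1_{\hs{H}_{m^\odd}}))$, but this is merely a restatement of the definitions involved.
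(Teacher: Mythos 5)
Your proof is correct and matches the paper's reasoning exactly: the paper presents Proposition~\ref{evenmult} as an ``immediate consequence'' of Propositions~\ref{oddmult} and~\ref{evensplit}, which is precisely the composition-of-isomorphisms argument you spell out. Nothing is missing.
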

Just as in the odd case, we will find it convenient to denote $(\bimod^\even)^{-1} : \Bimod^\even(\alg{A}) \to M_S(\ZO) \times M_S(\ZO)$ by $\mult^\even$. It then follows that $\mult^\even = (\mult \times \mult) \circ C$.

\begin{definition}
 Let $(\hs{H},\gamma)$ be an even $\alg{A}$-bimodule. Then the \term{multiplicity matrices} of $(\hs{H},\gamma)$ are the pair of matrices
\[
 (\mult[\hs{H}^\even],\mult[\hs{H}^\odd]) = \mult^\even[(\hs{H},\gamma)] \in M_S(\ZO) \times M_S(\ZO).
\]
\end{definition}

Let us now consider orientability of even bimodules.

\begin{lemma}[Krajewski~\cite{Kraj98}*{\S3.4}]\label{orientable}
 Let $(\hs{H},\gamma)$ be an even $\alg{A}$-bimodule. Then $(\hs{H},\gamma)$ is orientable only if $\lrbdd(\hs{H}^\even,\hs{H}^\odd)=\{0\}$.
\end{lemma}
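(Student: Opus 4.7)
The plan is to show that any $T \in \lrbdd_\alg{A}(\hs{H}^\even,\hs{H}^\odd)$ must vanish, by extending $T$ to an operator on all of $\hs{H}$ that is forced simultaneously to commute and anticommute with $\gamma$.

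First I would define $\tilde{T} := \iota \circ T \circ P^\even \in \bdd(\hs{H})$, where $P^\even = \frac{1}{2}(1+\gamma)$ is the projection onto $\hs{H}^\even$ and $\iota : \hs{H}^\odd \hookrightarrow \hs{H}$ is the inclusion. Since $\gamma$ is left and right $\alg{A}$-linear by the definition of an even bimodule, $P^\even$ lies in $\lrbdd_\alg{A}(\hs{H})$, and combining this with the left-and-right $\alg{A}$-linearity of $T$ gives $\tilde{T} \in \lrbdd_\alg{A}(\hs{H})$. Next, a direct eigenspace computation shows that $\{\gamma,\tilde{T}\} = 0$: $\tilde{T}$ annihilates $\hs{H}^\odd$ and sends $\hs{H}^\even$ into $\hs{H}^\odd$, so on $\hs{H}^\even$ the anticommutator reads $(-1)\tilde{T} + \tilde{T}(+1) = 0$, and on $\hs{H}^\odd$ both terms are zero.

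Now I would invoke orientability: writing $\gamma = \sum_{i=1}^k \lambda(a_i)\rho(b_i)$, each summand commutes with $\tilde{T}$ because $\tilde{T} \in \lrbdd_\alg{A}(\hs{H})$, so $[\gamma,\tilde{T}] = 0$. Adding this to the anticommutation relation yields $2\gamma\tilde{T} = 0$; since $\gamma$ is a self-adjoint unitary, $\tilde{T} = 0$, and restricting to $\hs{H}^\even$ gives $T = 0$.

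The single substantive observation is that orientability forces $\gamma$ into the subalgebra generated by $\lambda(\alg{A})$ and $\rho(\alg{A})$, every element of which automatically commutes with every operator in $\lrbdd_\alg{A}(\hs{H})$; once this is in hand the algebraic pincer closes immediately, so I do not anticipate any real obstacle.
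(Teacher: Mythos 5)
Your proof is correct and follows essentially the same strategy as the paper: extend $T$ to an operator $\tilde{T} \in \lrbdd_\alg{A}(\hs{H})$ that is forced to anticommute with $\gamma$ by the block structure and to commute with $\gamma$ by orientability, then conclude $\tilde{T} = 0$. The only cosmetic difference is that the paper takes the self-adjoint extension $\tilde{T} = \bigl(\begin{smallmatrix} 0 & T^* \\ T & 0 \end{smallmatrix}\bigr)$, whereas you use the one-sided extension $\iota \circ T \circ P^\even = \bigl(\begin{smallmatrix} 0 & 0 \\ T & 0 \end{smallmatrix}\bigr)$, which works just as well and avoids the need to handle $T^*$.
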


\begin{proof}
 Suppose that $(\hs{H},\gamma)$ is orientable, so that $\gamma = \sum_{i=1}^k \lambda(a_i)\rho(b_i)$ for some $a_1,\dotsc,a_k$, $b_1,\dotsc,b_k \in \alg{A}$. Now, let $T \in \lrbdd_\alg{A}(\hs{H}^\even,\hs{H}^\odd)$, and define $\tilde{T} \in \lrbdd_\alg{A}(\hs{H})$ by
\[
 \tilde{T} =
 \begin{pmatrix}
  0 & T^*\\
  T & 0
 \end{pmatrix}.
\]
Then, on the one hand, since $\gamma = 1_{\hs{H}^\even} \oplus (-1_{\hs{H}^\odd})$, $\tilde{T}$ anticommutes with $\gamma$, and on the other, since $\gamma = \sum_{i=1}^k \lambda(a_i)\rho(b_i)$, $\tilde{T}$ commutes with $\gamma$, so that $\tilde{T} = 0$. Hence, $T = 0$.
\end{proof}

This last result motivates the following weaker notion of orientability:

\begin{definition}
 An even $\alg{A}$-bimodule $(\hs{H},\gamma)$ shall be called \term{quasi-orientable} whenever $\lrbdd_\alg{A}(\hs{H}^\even,\hs{H}^\odd) = \{0\}$.
\end{definition}

The subset of $\Bimod^\even(\alg{A})$ consisting of the unitary equivalence classes of the quasi-orientable even $\alg{A}$-bimodules will be denoted by $\Bimod^\even_q(\alg{A})$. 

We define the \term{support} of a real $p \times q$ matrix $A$ to be the set
\[
 \supp(A) := \{(i,j) \in \{1,\dotsc,p\} \times \{1,\dotsc,q\} \mid A_{ij} \neq 0\}.
\]
For $A \in M_S(\field{R})$, we shall view $\supp(A)$ as a subset of $\spec{\alg{A}}^2$ by means of the identification of $\{1,\dotsc,S\}$ with $\spec{\alg{A}}$ as ordered sets. We shall also find it convenient to associate to each matrix $m \in M_S(\ring{Z})$ a matrix $\widehat{m} \in M_N(\ring{Z})$ by
\begin{equation}
 \widehat{m}_{ij} := \sum_{\alpha \in \spec{M_{n_i}(\field{K}_i)}} \sum_{\beta \in \spec{M_{n_j}(\field{K}_j)}} m_{\alpha\beta}.
\end{equation}
One can check the map $M_S(\ring{Z}) \to M_N(\ring{Z})$ defined by $m \mapsto \widehat{m}$ is linear and respects transposes.

We can now offer the following characterisation of quasi-orientable bimodules:

\begin{proposition}[Krajewski~\cite{Kraj98}*{\S3.3}, Paschke--Sitarz~\cite{PS98}*{Lemma 3}]
 Let $\alg{A}$ be a real \Cstar-algebra. Then
\begin{multline}
 \mult^\even(\Bimod^\even_q(\alg{A})) \\ = \{(m^\even,m^\odd) \in M_S(\ZO)^2 \mid \supp(m^\even) \cap \supp(m^\odd) = \emptyset\}.
\end{multline}
\end{proposition}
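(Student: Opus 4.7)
The plan is to reduce this to a direct application of Proposition~\ref{linear}, which describes $\lrbdd_\alg{A}(\hs{H},\hs{H}^\prime)$ for arbitrary $\alg{A}$-bimodules $\hs{H}$, $\hs{H}^\prime$ in terms of their multiplicity matrices. Since quasi-orientability of $(\hs{H},\gamma)$ is defined by the vanishing of $\lrbdd_\alg{A}(\hs{H}^\even,\hs{H}^\odd)$, and Proposition~\ref{evensplit} (together with Proposition~\ref{evenmult}) allows us to regard $\hs{H}^\even$ and $\hs{H}^\odd$ as arbitrary $\alg{A}$-bimodules with independently chosen multiplicity matrices $m^\even$ and $m^\odd$, the proposition should fall out by unpacking the relevant formula.

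First I would let $(\hs{H},\gamma)$ be an even $\alg{A}$-bimodule with multiplicity matrices $(m^\even,m^\odd) = \mult^\even[(\hs{H},\gamma)]$, so that $\mult[\hs{H}^\even] = m^\even$ and $\mult[\hs{H}^\odd] = m^\odd$ by construction. Applying Proposition~\ref{linear} to the pair $(\hs{H}^\even,\hs{H}^\odd)$ yields
\[
 \comp\bigl(\lrbdd_\alg{A}(\hs{H}^\even,\hs{H}^\odd)\bigr) = \bigoplus_{\alpha,\beta \in \spec{\alg{A}}} 1_{n_\alpha} \otimes M_{m^\odd_{\alpha\beta} \times m^\even_{\alpha\beta}}(\field{C}) \otimes 1_{n_\beta}.
\]
Since $\comp$ is an isomorphism, this right-hand side vanishes iff every summand vanishes, iff for every $(\alpha,\beta) \in \spec{\alg{A}}^2$ at least one of $m^\even_{\alpha\beta}$, $m^\odd_{\alpha\beta}$ equals zero, which is precisely the condition $\supp(m^\even) \cap \supp(m^\odd) = \emptyset$.

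Finally, to conclude the set-level equality, I would note that the argument above shows $[\hs{H}] \in \Bimod^\even_q(\alg{A})$ if and only if $\mult^\even[(\hs{H},\gamma)]$ lies in the claimed subset, and that Proposition~\ref{evenmult} guarantees $\mult^\even$ is a bijection onto $M_S(\ZO)^2$. There is no real obstacle here; the only thing to be careful about is that the condition ``disjoint supports'' is symmetric in $m^\even$ and $m^\odd$, which is consistent with the symmetry of the definition of $\lrbdd_\alg{A}(\hs{H}^\even,\hs{H}^\odd) = \{0\}$ under taking adjoints (i.e., the equivalent condition $\lrbdd_\alg{A}(\hs{H}^\odd,\hs{H}^\even) = \{0\}$). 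Thus the proof amounts to a one-line invocation of Proposition~\ref{linear}, with the definitional bookkeeping above providing the bridge.
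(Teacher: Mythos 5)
Your proposal is correct and follows exactly the paper's own argument: both invoke Proposition~\ref{linear} to identify $\lrbdd_\alg{A}(\hs{H}^\even,\hs{H}^\odd)$ with a direct sum of matrix spaces $M_{m^\odd_{\alpha\beta} \times m^\even_{\alpha\beta}}(\field{C})$, whose vanishing is immediately seen to be the disjoint-support condition. The only difference is that you spell out the bijectivity of $\mult^\even$ to nail the set-level equality, which the paper leaves implicit.
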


\begin{proof}
Let $(\hs{H},\gamma)$ be an even $\alg{A}$-bimodule and let $(m^\even,m^\odd)$ be its multiplicity matrices. Then by Proposition~\ref{linear},
\[
 \lrbdd_\alg{A}(\hs{H}^\even,\hs{H}^\odd) \cong \bigoplus_{\alpha,\beta \in \spec{\alg{A}}} M_{m_{\alpha\beta}^\odd \times m_{\alpha\beta}^\even}(\field{C}),
\]
whence the result follows immediately.
\end{proof}

We therefore define the \term{signed multiplicity matrix} of a quasi-orientable even $\alg{A}$-bimodule $(\hs{H},\gamma)$, or rather, the unitary equivalence class thereof, to be the matrix
\[
 \mult_q[(\hs{H},\gamma)] := \mult[\hs{H}^\even] - \mult[\hs{H}^\odd] \in M_S(\ring{Z}).
\]
The map $\Bimod^\even_q(\alg{A}) \to M_S(\ring{Z})$ defined by 
\[
 [(\hs{H},\gamma)] \mapsto \mult_q[(\hs{H},\gamma)]
\]
is then bijective, and $\mult^\even[(\hs{H},\gamma)]$ is readily recovered from $\mult_q[(\hs{H},\gamma)]$. Indeed, if $(\hs{H},\gamma)$ is a quasi-orientable even $\alg{A}$-bimodule with signed multiplicity matrix $\mu$, then (cf.~\cite{PS98}*{Lemma 3},\cite{Kraj98}*{3.3})
\begin{equation}
 \gamma = \bigoplus_{\alpha,\beta \in \spec{\alg{A}}} \mu_{\alpha\beta} 1_{\hs{H}_{\alpha\beta}}.
\end{equation}
These algebraic consequences of quasi-orientability, which were derived from the stronger condition of orientability in the original papers \cite{PS98} and \cite{Kraj98}, are key to the formalism developed by Krajewski and Paschke--Sitarz, and hence to the later work by Iochum, Jureit, Sch{\"u}cker, and Stephan~\cites{ACG1,ACG2,ACG3,Sch05}.

We can now characterise orientable bimodules amongst quasi-orientable bimodules:

\begin{proposition}[Krajewski~\cite{Kraj98}*{\S 3.3}]\label{converseorientable}
 Let $(\hs{H},\gamma)$ be a quasi-orientable $\alg{A}$-bimodule with signed multiplicity matrix $\mu$. Then $(\hs{H},\gamma)$ is orientable if and only if the following conditions all hold:
\begin{enumerate}
 \item For each $i \in \{1,\dotsc,N\}$ such that $\field{K}_i = \field{C}$ and all $\beta \in \spec{\alg{A}}$,
\[
 \mu_{\rep{n}_i \beta}\mu_{\crep{n}_i \beta} \geq 0;
\]
 \item For all $\alpha \in \spec{\alg{A}}$ and each $j \in \{1,\dotsc,N\}$ such that $\field{K}_j = \field{C}$, 
\[
 \mu_{\alpha \rep{n}_j}\mu_{\alpha \crep{n}_j} \geq 0;
\]
 \item For all $i$, $j \in \{1,\dotsc,N\}$ such that $\field{K}_i = \field{K}_j = \field{C}$,
\[
 \mu_{\rep{n}_i \crep{n}_j}\mu_{\crep{n}_i \rep{n}_j} \geq 0.
\]
\end{enumerate}
In particular, if $(\hs{H},\gamma)$ is orientable, then
\begin{equation}
 \gamma = \sum_{i,j=1}^N \lambda(\sgn(\widehat{\mu}_{ij})e_i)\rho(e_j).
\end{equation}
\end{proposition}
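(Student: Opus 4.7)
The strategy is to exploit the fact that $\gamma \in \lrbdd_{\alg{A}}(\hs{H})$, so that any presentation $\gamma = \sum_k \lambda(a_k)\rho(b_k)$ is fixed by the canonical conditional expectation $E\colon \bdd(\hs{H}) \to \lrbdd_{\alg{A}}(\hs{H})$. Using the tensor decomposition $\hs{H}_{\alpha\beta} \cong \field{C}^{n_\alpha}\otimes\field{C}^{m_{\alpha\beta}}\otimes\field{C}^{n_\beta}$ and Proposition~\ref{linear}, one checks that
\[
 E(\lambda(a)\rho(b))\bigr|_{\hs{H}_{\alpha\beta}} = \chi_\alpha(a)\chi_\beta(b)\cdot 1_{\hs{H}_{\alpha\beta}},
\]
where $\chi_\alpha(a) := n_\alpha^{-1}\tr(\lambda_\alpha(a))$ is the normalised character associated with the irreducible $\alpha$. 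Crucially, $\chi_\alpha$ takes real values when its underlying summand is $M_k(\field{R})$ or $M_k(\field{H})$, while $\chi_{\crep{n}_i}(a) = \overline{\chi_{\rep{n}_i}(a)}$ when $\field{K}_i = \field{C}$. Since quasi-orientability forces $\gamma|_{\hs{H}_{\alpha\beta}} = \sgn(\mu_{\alpha\beta})\cdot 1_{\hs{H}_{\alpha\beta}}$, orientability is equivalent to the solvability of the system $\sum_k \chi_\alpha(a_k)\chi_\beta(b_k) = \sgn(\mu_{\alpha\beta})$ over all $\alpha, \beta \in \spec{\alg{A}}$, with the $(a_k,b_k)$ drawn from $\alg{A}$.

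For the forward direction, fix $i$ with $\field{K}_i = \field{C}$ and any $\beta \in \spec{\alg{A}}$, and compare the two equations indexed by $(\rep{n}_i,\beta)$ and $(\crep{n}_i,\beta)$. Conjugating the first and using $\chi_{\crep{n}_i} = \overline{\chi_{\rep{n}_i}}$ shows that when $\chi_\beta$ is $\field{R}$-valued (i.e.\ $\beta \in \spec{M_k(\field{R})}$ or $\spec{M_k(\field{H})}$) the left-hand sides coincide, yielding $\sgn(\mu_{\rep{n}_i\beta}) = \sgn(\mu_{\crep{n}_i\beta})$ and hence condition (1) in this sub-case; a symmetric argument in the right-hand index gives (2). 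When $\field{K}_i = \field{K}_j = \field{C}$, one has a quadruple of equations indexed by $\{\rep{n}_i,\crep{n}_i\}\times\{\rep{n}_j,\crep{n}_j\}$, and pairing them under conjugation in the off-diagonal positions produces condition (3).

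For the reverse direction, assume (1)--(3). The key claim is that within each block $(i,j)$---the set of pairs $(\alpha,\beta)\in \spec{\alg{A}_i}\times\spec{\alg{A}_j}$---the nonzero values of $\mu_{\alpha\beta}$ all carry a common sign, so that $\sgn(\widehat{\mu}_{ij})$ coincides with this sign. Granted the claim, define
\[
 \gamma_0 := \sum_{i,j=1}^N \lambda(\sgn(\widehat{\mu}_{ij})e_i)\rho(e_j);
\]
since $\lambda(e_i)$ is the identity on $\hs{H}_{\alpha\beta}$ when $\alpha\in\spec{\alg{A}_i}$ and zero otherwise, and likewise for $\rho(e_j)$, one reads off $\gamma_0|_{\hs{H}_{\alpha\beta}} = \sgn(\widehat{\mu}_{ij})\cdot 1_{\hs{H}_{\alpha\beta}}$ for $(\alpha,\beta)\in\spec{\alg{A}_i}\times\spec{\alg{A}_j}$. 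On nontrivial blocks the common-sign claim yields $\sgn(\widehat{\mu}_{ij}) = \sgn(\mu_{\alpha\beta})$, and on trivial ones $\hs{H}_{\alpha\beta}=\{0\}$; in either case $\gamma_0 = \gamma$, simultaneously establishing orientability and the closed-form expression for $\gamma$.

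The principal obstacle is the common-sign claim in the case $\field{K}_i = \field{K}_j = \field{C}$, where the block contains four entries and the five pairwise inequalities provided by (1)--(3) must be combined, with careful bookkeeping for zero entries, to extract a single common sign across the block. The remaining combinations of $\field{K}_i$ and $\field{K}_j$ each involve a block of size at most two, for which the claim is immediate from a single inequality in (1) or (2), or is trivial when the block is a singleton.
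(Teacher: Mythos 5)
Your forward direction is where the proposal falls short of the paper. You project the orientability equation onto $\lrbdd_\alg{A}(\hs{H})$ via the conditional expectation $E$, and then argue at the level of the scalar quantities $\chi_\alpha(a_k)\chi_\beta(b_k)$. But once you have passed to scalars, conjugation acts on the \emph{product} $\chi_\alpha(a_k)\chi_\beta(b_k)$, so you get identities of the form $\sgn(\mu_{\alpha\beta}) = \sgn(\mu_{\overline{\alpha}\,\overline{\beta}})$ where \emph{both} indices flip to their conjugates simultaneously. As you candidly note, this only yields condition (1) when $\chi_\beta$ is real-valued, i.e.\ when $\beta$ comes from an $\field{R}$- or $\field{H}$-type summand; the same restriction applies symmetrically to condition (2). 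When both $\field{K}_i = \field{C}$ and $\field{K}_j = \field{C}$, your diagonal and off-diagonal pairings give $\sgn(\mu_{\rep{n}_i\rep{n}_j}) = \sgn(\mu_{\crep{n}_i\crep{n}_j})$ and $\sgn(\mu_{\rep{n}_i\crep{n}_j}) = \sgn(\mu_{\crep{n}_i\rep{n}_j})$, which is \emph{not} enough to deduce conditions (1) and (2) restricted to that block (you need same sign along a row or column, not across the diagonal). This is a genuine gap, not a matter of careful bookkeeping.

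The paper closes exactly this gap by \emph{not} collapsing to scalars. It works with the operator identity $\sgn(\mu_{\alpha\beta}) 1_{n_\alpha}\otimes 1_{n_\beta} = \sum_l \lambda_\alpha(a_l)\otimes\lambda_\beta(b_l)^T$ and then conjugates by $X\otimes 1_{n_\beta}$, where $X$ is complex conjugation on the \emph{left} tensor factor only. This replaces $(a_l)_i$ by $\overline{(a_l)_i}$ while leaving $\lambda_\beta(b_l)^T$ untouched, which is precisely the one-sided flip $\rep{n}_i\leftrightarrow\crep{n}_i$ with $\beta$ fixed. That step is the whole content of conditions (1) and (2) in the $\field{C}\times\field{C}$ case, and there is no way to reconstruct it from the character data alone, because $\chi_\alpha(a)\chi_\beta(b)$ has already fused the two factors into a single complex number. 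Put differently, the conditional expectation $E$ retains strictly less information than the full operator equation, and conditions (1)--(2) live in the part you have discarded.

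A smaller point: you write that orientability ``is equivalent to the solvability of the system'' $\sum_k\chi_\alpha(a_k)\chi_\beta(b_k) = \sgn(\mu_{\alpha\beta})$. Only the forward implication follows from applying $E$; solvability of the scalar system does not by itself produce an actual decomposition $\gamma = \sum_k\lambda(a_k)\rho(b_k)$ (the candidate sum need not lie in $\lrbdd_\alg{A}(\hs{H})$). Your reverse direction works anyway because you exhibit the explicit $\gamma_0$ built from central projections, and there the agreement on each $\hs{H}_{\alpha\beta}$ directly gives equality of operators. So the reverse direction is sound in structure (and essentially matches the paper's), but the framing via an ``equivalence'' is misleading and should not be invoked to justify the forward direction.

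In summary, the reverse direction is essentially the paper's argument and is fine modulo the sign bookkeeping you defer; the forward direction, however, is genuinely incomplete. To repair it you would need to reinstate the operator-valued equation on $\hs{H}_{\alpha\beta}$ (i.e.\ stop at Proposition~\ref{linear} rather than pushing through $E$) so that you can perform the one-factor conjugation the paper uses.
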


\begin{proof}
 First, suppose that $(\hs{H},\gamma)$ is indeed orientable, so that there exist $a_1,\dotsc,a_n$, $b_1,\dotsc,b_n \in \alg{A}$ such that $\gamma = \sum_{l=1}^n \lambda(a_l)\rho(b_l)$; in particular, then, for each $\alpha$, $\beta \in \spec{\alg{A}}$,
\[
 \sgn(\mu_{\alpha\beta}) 1_{n_\alpha} \otimes 1_{|\mu_{\alpha\beta}|} \otimes 1_{n_\beta} = \gamma_{\alpha\beta}^{\alpha\beta} = \sum_{l=1}^n \lambda_\alpha(a_l) \otimes 1_{|\mu_{\alpha\beta}|} \otimes \lambda_\beta(b_l)^T.
\]

Now, let $i \in \{1,\dotsc,N\}$ be such that $\field{K}_i = \field{C}$, and let $\beta \in \spec{\alg{A}}$, and suppose that $\mu_{\rep{n}_i \beta}$ and $\mu_{\crep{n}_i \beta}$ are both non-zero. It then follows that
\[
 \sgn(\mu_{\rep{n}_i \beta}) 1_{n_i} \otimes 1_{n_\beta} = \sum_{l=1}^n (a_l)_i \otimes \lambda_\beta(b_l)^T, \quad
 \sgn(\mu_{\crep{n}_i \beta}) 1_{n_i} \otimes 1_{n_\beta} = \sum_{l=1}^n \overline{(a_l)_i} \otimes \lambda_\beta(b_l)^T,
\]
where $(a_l)_i$ denotes the component of $a_l$ in the direct summand $M_{k_i}(\field{C})$ of $\alg{A}$. If $X$ denotes complex conjugation on $\field{C}^{n_i}$, it then follows from this that
\[
 \sgn(\mu_{\rep{n}_i \beta}) 1_{n_i} \otimes 1_{n_\beta} = (X \otimes 1_{n_\beta}) (\sgn(\mu_{\rep{n}_i \beta}) 1_{n_i} \otimes 1_{n_\beta}) (X \otimes 1_{n_\beta}) = \sgn(\mu_{\crep{n}_i \beta}) 1_{n_i} \otimes 1_{n_\beta},
\]
so that $\sgn(\mu_{\rep{n}_i \beta}) = \sgn(\mu_{\crep{n}_i \beta})$, or equivalently $\mu_{\rep{n}_i \beta}\mu_{\crep{n}_i \beta} > 0$. One can similarly show that the other two conditions hold.

Now, suppose instead that the three conditions on $\mu$ hold. Then for all $i$, $j \in \{1,\dotsc,N\}$, all non-zero entries $\mu_{\alpha\beta}$ for $\alpha \in \spec{M_{k_i}(\field{K}_i)}$, $\beta \in \spec{M_{k_j}(\field{K}_j)}$, have the same sign, so set $\gamma_{ij}$ equal to this common value of non-zero $\sgn(\mu_{\alpha\beta})$ if at least one such $\mu_{\alpha\beta}$ is non-zero, and set $\gamma_{ij} = 0$ otherwise. One can then easily check that $\gamma = \sum_{i,j=1}^{N} \lambda(\gamma_{ij}e_i)\rho(e_j)$, so that $(\hs{H},\gamma)$ is indeed orientable. Moreover, using the same three conditions, one can readily check that $\gamma_{ij} = \sgn(\widehat{\mu}_{ij})$, which yields the last part of the claim.
\end{proof}

Let us now turn to intersection forms and Poincar{\'e} duality. In particular, we are now able to provide explicit expressions for intersection forms in terms of multiplicity matrices.

Recall that for $\field{K} = \field{R}$, $\field{C}$ or $\field{H}$, $KO_0(M_k(\field{K}))$ is the infinite cyclic group generated by $[p]$ for $p \in M_k(\field{K})$ a minimal projection, so that for $\alg{A}$ a real \Cstar-algebra with Wedderburn decomposition $\oplus_{i=1}^N M_{n_i}(\field{K}_i)$,
\[
 KO_0(\alg{A}) \cong \prod_{i=1}^N KO_0(M_{n_i}(\field{K}_i)) \cong \ring{Z}^N,
\]
which can be viewed as the infinite abelian group generated by $\{[p_i]\}_{i=1}^N$ for $p_i$ a minimal projection in $M_{n_i}(\field{K}_i)$. Since
\[
 \tau_i := \tr(p_i) =
 \begin{cases}
  2 &\text{if $\field{K}_i = \field{H}$,}\\
  1 &\text{otherwise,}
 \end{cases}
\]
it follows that for $\alpha \in \spec{\alg{A}}$,
\begin{equation}
 \tr(\lambda_\alpha(p_i)) =
 \begin{cases}
  \tau_i &\text{if $\alpha \in \spec{M_{n_i}(\field{K}_i)}$,}\\
  0 &\text{otherwise.}\\
 \end{cases}
\end{equation}

Now, if $(\hs{H},\gamma)$ is an even $\alg{A}$-bimodule with intersection form $\form{\cdot,\cdot}$, we can define a matrix $\cap \in M_N(\ring{Z})$ by
\begin{equation}
 \cap_{ij} := \form{[p_i],[p_j]}.
\end{equation}
The intersection form $\form{\cdot,\cdot}$ is completely determined by the matrix $\cap$, and in particular, $\form{\cdot,\cdot}$ is non-degenerate (\ie $(\hs{H},\gamma)$ satisfies Poincar{\'e} duality) if and only if $\cap$ is non-degenerate.

\begin{proposition}[Krajewski~\cite{Kraj98}*{\S 3.3}, Paschke--Sitarz~\cite{PS98}*{\S 2.4}]\label{intform}
 Let $(\hs{H},\gamma)$ be an even $\alg{A}$-bimodule with pair of multiplicity matrices $(m^\even,m^\odd)$. Then
\begin{equation}
 \cap_{ij} = \tau_i \tau_j \left(\widehat{m^\even}_{ij} - \widehat{m^\odd}_{ij}\right),
\end{equation}
so that $(\hs{H},\gamma)$ satisfies Poincar{\'e} duality if and only if the matrix $\widehat{m^\even} - \widehat{m^\odd}$ is non-degenerate.
\end{proposition}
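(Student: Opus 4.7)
The plan is to compute $\cap_{ij} = \tr(\gamma \lambda(p_i) \rho(p_j))$ explicitly using the decomposition $(\hs{H},\gamma) = (\hs{H}^\even, 1) \oplus (\hs{H}^\odd, -1)$ from Proposition~\ref{evensplit}, together with the canonical tensor-product model $\hs{H}_m = \bigoplus_{\alpha,\beta} \field{C}^{n_\alpha} \otimes \field{C}^{m_{\alpha\beta}} \otimes \field{C}^{n_\beta}$ supplied by Proposition~\ref{oddmult}.

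First, since $\gamma$ acts as $+1$ on $\hs{H}^\even$ and as $-1$ on $\hs{H}^\odd$, I would split
\[
 \cap_{ij} = \tr_{\hs{H}^\even}\bigl(\lambda(p_i)\rho(p_j)\bigr) - \tr_{\hs{H}^\odd}\bigl(\lambda(p_i)\rho(p_j)\bigr),
\]
reducing the problem to computing, for an ordinary (odd) $\alg{A}$-bimodule $\hs{H}_m$, the quantity $\tr\bigl(\lambda_m(p_i)\rho_m(p_j)\bigr)$.

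Second, I would plug the minimal projections $p_i \in M_{n_i}(\field{K}_i)$ into the explicit left and right action on $\hs{H}_m$. Since $\lambda_m(p_i) = \bigoplus_{\alpha,\beta} \lambda_\alpha(p_i) \otimes 1_{m_{\alpha\beta}} \otimes 1_{n_\beta}$ and $\rho_m(p_j) = \bigoplus_{\alpha,\beta} 1_{n_\alpha} \otimes 1_{m_{\alpha\beta}} \otimes \lambda_\beta(p_j)^T$, the trace factorises on each summand as $\tr(\lambda_\alpha(p_i)) \cdot m_{\alpha\beta} \cdot \tr(\lambda_\beta(p_j)^T)$. Using the formula recorded just before the statement for $\tr(\lambda_\alpha(p_i))$, which vanishes unless $\alpha \in \spec{M_{n_i}(\field{K}_i)}$ and equals $\tau_i$ otherwise (and likewise for $\beta$ and $p_j$, with $\tr(\lambda_\beta(p_j)^T) = \tr(\lambda_\beta(p_j))$), the double sum collapses to
\[
 \tr\bigl(\lambda_m(p_i)\rho_m(p_j)\bigr) = \tau_i \tau_j \!\!\sum_{\alpha \in \spec{M_{n_i}(\field{K}_i)}} \sum_{\beta \in \spec{M_{n_j}(\field{K}_j)}}\!\! m_{\alpha\beta} = \tau_i \tau_j \widehat{m}_{ij}.
\]
Applying this with $m = m^\even$ and $m = m^\odd$ and subtracting yields the asserted formula.

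Finally, the Poincaré duality statement is immediate bookkeeping. Letting $D = \diag(\tau_1,\dotsc,\tau_N)$, the identity just proved reads $\cap = D(\widehat{m^\even} - \widehat{m^\odd})D$ as matrices in $M_N(\ring{Z})$. Since each $\tau_i \in \{1,2\}$ is nonzero, $D$ is invertible (over $\field{Q}$), so $\cap$ is non-degenerate if and only if $\widehat{m^\even} - \widehat{m^\odd}$ is. The whole argument is essentially routine; the only real step requiring care is confirming that the transpose in the right action does not affect the trace $\tr(\lambda_\beta(p_j)^T) = \tr(\lambda_\beta(p_j))$, so that the support condition on $\beta$ is correctly indexed by the summand containing $p_j$.
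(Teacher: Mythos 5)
Your proposal is correct and follows essentially the same route as the paper's proof: both expand the trace $\tr(\gamma\lambda(p_i)\rho(p_j))$ over the tensor-product model $\bigoplus_{\alpha,\beta}\field{C}^{n_\alpha}\otimes\field{C}^{m_{\alpha\beta}}\otimes\field{C}^{n_\beta}$, factor the trace using $\tr(\lambda_\alpha(p_i))$ and $\tr(\lambda_\beta(p_j)^T)=\tr(\lambda_\beta(p_j))$, and then identify the resulting double sum with $\widehat{m^\even}-\widehat{m^\odd}$; the only cosmetic difference is that you split into even and odd pieces first, while the paper keeps $\gamma_{\alpha\beta}=1_{m^\even_{\alpha\beta}}\oplus(-1_{m^\odd_{\alpha\beta}})$ in the middle slot. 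Your closing observation, writing $\cap = D(\widehat{m^\even}-\widehat{m^\odd})D$ with $D=\diag(\tau_1,\dotsc,\tau_N)$ invertible, is a slightly cleaner phrasing of the paper's remark that $\cap$ is obtained by elementary row and column operations, but it is the same idea.
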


\begin{proof}
First, since $\hs{H} = \hs{H}^\even \oplus \hs{H}^\odd$, we can write
\[
 \gamma = \bigoplus_{\alpha,\beta\in\spec{\alg{A}}} 1_{n_\alpha} \otimes \gamma_{\alpha\beta} \otimes 1_{n_\beta},
\]
where $\gamma_{\alpha\beta} = 1_{m_{\alpha\beta}^\even} \oplus (-1_{m_{\alpha\beta}^\odd})$. Then,
\begin{align*}
\cap_{ij} &= \form{[p_i],[p_j]}\\
&= \tr(\gamma\lambda(p_i)\rho(p_j))\\ 
&= \tr\left(\bigoplus_{\alpha,\beta\in\spec{\alg{A}}} \lambda_\alpha(p_i) \otimes \gamma_{\alpha\beta} \otimes \lambda_\beta(p_j)\right)\\
&= \sum_{\alpha,\beta \in \spec{\alg{A}}} \tr(\lambda_\alpha(p_i))\tr(\lambda_\beta(p_j))(m_{\alpha\beta}^\even - m_{\alpha\beta}^\odd)\\
&= \sum_{i,j=1}^N \tau_i \tau_j (\widehat{m^\even}_{ij} - \widehat{m^\odd}_{ij}).
\end{align*}
This calculation implies, in particular, that $\cap$ can be obtained from $\widehat{m^\even} - \widehat{m^\odd}$ by a finite sequence of elementary row or column operations, so that $\cap$ is indeed non-degenerate if and only if $\widehat{m^\even} - \widehat{m^\odd}$ is.
\end{proof}

\begin{corollary}
 Let $(\hs{H},\gamma)$ be a quasi-orientable $\alg{A}$-bimodule with signed multiplicity matrix $\mu$. Then $(\hs{H},\gamma)$ satisfies Poincar{\'e} duality if and only if $\widehat{\mu}$ is non-degenerate.
\end{corollary}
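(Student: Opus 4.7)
The plan is to reduce this immediately to Proposition~\ref{intform}, which already characterises Poincar\'e duality of a general even $\alg{A}$-bimodule $(\hs{H},\gamma)$ in terms of the non-degeneracy of $\widehat{m^\even} - \widehat{m^\odd}$, where $(m^\even,m^\odd)$ is the pair of multiplicity matrices. The only work is to translate this condition into one involving the signed multiplicity matrix $\mu = \mult[\hs{H}^\even] - \mult[\hs{H}^\odd]$.

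First I would recall that for a quasi-orientable bimodule the supports of $m^\even$ and $m^\odd$ are disjoint, so the definition $\mu := m^\even - m^\odd \in M_S(\ring{Z})$ does indeed encode both matrices without loss of information; in particular $m^\even$ and $m^\odd$ are recovered as the positive and negative parts of $\mu$. Next I would invoke the observation, noted just before the statement that the map $m \mapsto \widehat{m}$ is linear, to conclude
\[
 \widehat{\mu} = \widehat{m^\even - m^\odd} = \widehat{m^\even} - \widehat{m^\odd}.
\]

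Finally, applying Proposition~\ref{intform} gives at once that $(\hs{H},\gamma)$ satisfies Poincar\'e duality if and only if $\widehat{m^\even} - \widehat{m^\odd}$ is non-degenerate, which is exactly non-degeneracy of $\widehat{\mu}$. There is really no obstacle here: the content is contained in Proposition~\ref{intform}, and the corollary merely repackages its conclusion in the notation appropriate to the quasi-orientable setting, using the linearity of $m \mapsto \widehat{m}$.
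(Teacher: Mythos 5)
Your proof is correct and matches the route the paper intends: the corollary is an immediate consequence of Proposition~\ref{intform} together with the linearity of $m \mapsto \widehat{m}$, giving $\widehat{\mu} = \widehat{m^\even} - \widehat{m^\odd}$.
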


In particular, if we restrict ourselves to complex \Cstar-algebras and complex-linear representations, a quasi-orientable bimodule is completely characterised by the $K$-theoretic datum of its intersection form.

\subsection{Real bimodules of odd $KO$-dimension}

Let us now consider real bimodules of odd $KO$-dimension. Before continuing, recall that
\[
 \Bimod(\alg{A},1) = \Bimod(\alg{A},7), \quad \Bimod(\alg{A},3) = \Bimod(\alg{A},5).
\]

For $m \in \Sym_S(\ZO)$, we define an antilinear operator $X_m$ on $\hs{H}_m$ by defining $(X_m)_{\alpha\beta}^{\gamma\delta} : \field{C}^{n_\alpha} \otimes \field{C}^{m_{\alpha\beta}} \otimes \field{C}^{n_\beta} \to \field{C}^{n_\gamma} \otimes \field{C}^{m_{\gamma\delta}} \otimes \field{C}^{n_\delta}$ by
\begin{equation}
 (X_m)_{\alpha\beta}^{\beta\alpha}: \xi_1 \otimes \xi_2 \otimes \xi_3 \mapsto \overline{\xi_3} \otimes \overline{\xi_2} \otimes \overline{\xi_1},
\end{equation}
and by setting $(X_m)_{\alpha\beta}^{\gamma\delta} = 0$ whenever $(\gamma,\delta) \neq (\beta,\alpha)$.

\subsubsection{$KO$-dimension $1$ or $7 \bmod 8$}

We begin by determining the form of the multiplicity matrix for a real bimodule of $KO$-dimension $1$ or $7 \bmod 8$.

\begin{lemma}[Krajewski~\cite{Kraj98}*{\S3.2}, Paschke--Sitarz~\cite{PS98}*{Lemma 4}]\label{real17a}
 Let $(\hs{H},J)$ be a real $\alg{A}$-bimodule of $KO$-dimension $1$ or $7 \bmod 8$ with multiplicity matrix $m$. Then $m$ is symmetric, and the only non-zero components of $J$ are of the form $J_{\alpha\beta}^{\beta\alpha}$ for $\alpha$, $\beta \in \spec{\alg{A}}$, which are anti-unitaries $\hs{H}_{\alpha\beta} \to \hs{H}_{\beta\alpha}$ satisfing the relations $J_{\beta\alpha}^{\alpha\beta} = (J_{\alpha\beta}^{\beta\alpha})^*$.
\end{lemma}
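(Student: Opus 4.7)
The strategy is to show that $J$ swaps the decomposition $\hs{H} = \bigoplus \hs{H}_{\alpha\beta}$ into itself by interchanging $(\alpha,\beta)$ and $(\beta,\alpha)$, which will force the multiplicity matrix to be symmetric and constrain the component structure of $J$.

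The first and main step is to prove that $J P_{\alpha\beta} J^* = P_{\beta\alpha}$ for all $\alpha, \beta \in \spec{\alg{A}}$, where $P_{\alpha\beta} = P^L_\alpha P^R_\beta$ as defined in the proof of Proposition~\ref{oddmult}. It suffices to show that $J P^L_\alpha J^* = P^R_\alpha$ (and likewise with $L$ and $R$ swapped). From the order zero condition and \eqref{realintertwine}, we have $J\lambda(a)J^* = \rho(a^*)$ for all $a \in \alg{A}$. Evaluating this on the generators $e_i$ (and, when $\field{K}_i = \field{C}$, also on $ie_i$) that define $P^L_\alpha$, and using the antilinearity of $J$ (so that $J(i X)J^* = -i\, JXJ^*$ for any linear operator $X$), one checks case by case that the formulas for $P^L_\alpha$ are transformed into exactly those for $P^R_\alpha$; the signs work out because the antilinearity of $J$ cancels with the sign introduced by $\rho((ie_i)^*) = -\rho(ie_i)$. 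This is the main computation, and the main obstacle is keeping track of signs in the three cases $\field{K}_i = \field{R},\field{C},\field{H}$ (and, when $\field{K}_i = \field{C}$, of the two choices $\alpha = \rep{n}_i, \crep{n}_i$).

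Once we have $J P_{\alpha\beta} J^* = P_{\beta\alpha}$, it follows that $J$ maps $\hs{H}_{\alpha\beta}$ onto $\hs{H}_{\beta\alpha}$, and the restriction $J|_{\hs{H}_{\alpha\beta}} : \hs{H}_{\alpha\beta} \to \hs{H}_{\beta\alpha}$ is antiunitary as a restriction of an antiunitary. This immediately says that the only nonzero components of $J$ in the decomposition are those of the form $J_{\alpha\beta}^{\beta\alpha}$, and each of them is an antiunitary bijection. Comparing dimensions $\dim \hs{H}_{\alpha\beta} = n_\alpha m_{\alpha\beta} n_\beta$ with $\dim \hs{H}_{\beta\alpha} = n_\beta m_{\beta\alpha} n_\alpha$ gives $m_{\alpha\beta} = m_{\beta\alpha}$, so $m$ is symmetric.

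Finally, the relation $J_{\beta\alpha}^{\alpha\beta} = (J_{\alpha\beta}^{\beta\alpha})^*$ comes from the fact that $J^2 = \eps = 1$ when $n \equiv 1, 7 \bmod 8$. For any $\psi \in \hs{H}_{\alpha\beta}$, expanding $\psi = J^2 \psi$ using the only nonzero components of $J$ yields $J_{\beta\alpha}^{\alpha\beta} \circ J_{\alpha\beta}^{\beta\alpha} = \Id_{\hs{H}_{\alpha\beta}}$, so $J_{\beta\alpha}^{\alpha\beta}$ is the inverse of the antiunitary $J_{\alpha\beta}^{\beta\alpha}$; but the inverse of an antiunitary equals its adjoint, giving the stated relation.
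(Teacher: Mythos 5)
Your proposal is correct and takes essentially the same approach as the paper: establishing $J P_{\alpha\beta} = P_{\beta\alpha} J$ via the intertwining relation and the definitions of $P^L_\alpha$, $P^R_\alpha$, then reading off the block structure of $J$, the symmetry of $m$, and the adjoint relation. The paper's proof is more condensed but identical in substance.
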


\begin{proof}
Let the projections $P_{\alpha}^L$, $P_{\beta}^R$ and $P_{\alpha\beta}$ be defined as in the proof of Proposition~\ref{oddmult}, and recall that $P_{\alpha\beta} = P_{\alpha}^L P_{\beta}^R$. By Equation~\ref{realintertwine}, it follows that for all $\alpha \in \spec{\alg{A}}$,$J P_{\alpha}^L = P_{\alpha}^R J$ and $J P_{\alpha}^R = P_{\alpha}^L J$, and hence that for all $\alpha$, $\beta \in \spec{\alg{A}}$, $J P_{\alpha\beta} = J P_\alpha^L P_\beta^R = P_\alpha^R P_\beta^L J = P_{\beta\alpha} J$. Thus, the only non-zero components of $J$ are the anti-unitaries $J_{\alpha\beta}^{\beta\alpha} : \hs{H}_{\alpha\beta} \to \hs{H}_{\beta\alpha}$ which satisfy $J_{\beta\alpha}^{\alpha\beta} = (J_{\alpha\beta}^{\beta\alpha})^*$; this, in turn, implies that $m$ is indeed symmetric.
\end{proof}

Next, we show that for every $m \in \Sym_S(\ZO)$, not only does $\hs{H}_m$ admit a real structure of $KO$-dimension $1$ or $7 \bmod 8$, but it is also unique up to unitary equivalence.

\begin{lemma}[Krajewski~\cite{Kraj98}*{\S3.2}, Paschke--Sitarz~\cite{PS98}*{Lemma 5}]\label{real17b}
 Let $m \in \Sym_S(\ZO)$. Then, up to unitary equivalence, $J_m := X_m$ is the unique real structure on $\hs{H}_m$ of $KO$-dimension $1$ or $7 \bmod 8$.
\end{lemma}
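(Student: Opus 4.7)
The plan is to prove existence and uniqueness separately. For existence, I would verify directly that $X_m$ is an antiunitary with $X_m^2 = 1$ (matching $\eps = 1$ for $KO$-dimension $1$ or $7$) satisfying the intertwining identity~\eqref{realintertwine}. Antiunitarity and $X_m^2 = 1$ are immediate from the block formula, since on each $\hs{H}_{\alpha\beta}$ the map $X_m$ is the composition of a linear isometric swap with componentwise complex conjugation, and applying it twice returns any vector to itself. The intertwining identity reduces to a block calculation: for $v = \xi_1 \otimes \xi_2 \otimes \xi_3 \in \hs{H}_{\alpha\beta}$, one computes $X_m \lambda(a^*) X_m(v) = \xi_1 \otimes \xi_2 \otimes \overline{\lambda_\beta(a^*)}\xi_3$, and the identity $\overline{\lambda_\beta(a)^*} = \lambda_\beta(a)^T$ (valid because $\lambda_\beta$ is a $\ast$-representation) converts this expression into $\rho(a) v$ as required.

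For uniqueness, given any real structure $J$ on $\hs{H}_m$ of the given $KO$-dimension, consider the operator $V := J X_m^{-1}$. This is linear, since it is a composition of two antilinear operators, and the intertwining identities for both $J$ and $X_m$ can be combined to show that $V$ commutes with $\lambda(\alg{A})$ and with $\rho(\alg{A})$; antiunitarity of $J$ and $X_m$ makes $V$ unitary. Hence $V \in \lrU_\alg{A}(\hs{H}_m)$, and by Corollary~\ref{oddunitary} one may write $V = \bigoplus_{\alpha,\beta} 1_{n_\alpha} \otimes v_{\alpha\beta} \otimes 1_{n_\beta}$ with $v_{\alpha\beta} \in \U(m_{\alpha\beta})$. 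The relation $J_{\beta\alpha}^{\alpha\beta} = (J_{\alpha\beta}^{\beta\alpha})^*$ from Lemma~\ref{real17a}, equivalently $J^2 = 1$, then translates into the compatibility condition $v_{\beta\alpha} = v_{\alpha\beta}^T$; in particular, each diagonal block $v_{\alpha\alpha}$ is a symmetric unitary.

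It remains to find $U = \bigoplus_{\alpha,\beta} 1_{n_\alpha} \otimes u_{\alpha\beta} \otimes 1_{n_\beta} \in \lrU_\alg{A}(\hs{H}_m)$ with $UJU^* = X_m$. A direct block computation (using the identity $X_m (1 \otimes u_{\beta\alpha} \otimes 1) X_m = 1 \otimes \overline{u_{\beta\alpha}} \otimes 1$ on $\hs{H}_{\alpha\beta}$) reduces this to the family of matrix equations $u_{\alpha\beta} v_{\alpha\beta} = \overline{u_{\beta\alpha}}$. For $\alpha \neq \beta$ the two equations indexed by $(\alpha,\beta)$ and $(\beta,\alpha)$ are solved simultaneously by $u_{\alpha\beta} = 1$, $u_{\beta\alpha} = \overline{v_{\alpha\beta}}$, the compatibility $v_{\beta\alpha} = v_{\alpha\beta}^T$ ensuring consistency. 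The main obstacle is the diagonal case $\alpha = \beta$: with $w := \overline{u_{\alpha\alpha}}$, the equation becomes $w^T w = v_{\alpha\alpha}$ for a given symmetric unitary $v_{\alpha\alpha}$. This is precisely the content of the Autonne--Takagi factorization, which writes any complex symmetric matrix as $Q \Sigma Q^T$ for some unitary $Q$ and nonnegative diagonal $\Sigma$; for a symmetric unitary all singular values are $1$, so $v_{\alpha\alpha} = Q Q^T$ and one takes $w := Q^T$. This completes the plan.
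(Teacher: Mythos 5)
Your proof is correct and follows essentially the same route as the paper: set $V = JX_m \in \lrU_\alg{A}(\hs{H}_m)$, derive the transpose symmetry $v_{\beta\alpha} = v_{\alpha\beta}^T$ from $J^2 = 1$, and build the conjugating unitary block by block, invoking the Autonne--Takagi factorization of unitary complex-symmetric matrices for the diagonal blocks. The only (harmless) deviation is in the off-diagonal blocks, where you fix $u_{\alpha\beta} = 1$ and $u_{\beta\alpha} = \overline{v_{\alpha\beta}}$ directly, whereas the paper first diagonalises $K_{\alpha\beta}$ and takes a diagonal square root to produce $(U_{\alpha\beta},U_{\beta\alpha})$; your choice is simpler and works equally well.
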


\begin{proof}
 First, $X_m$ is indeed by construction a real structure on $\hs{H}_m$ of $KO$-dimension $1$ or $7 \bmod 8$.
 
 Now, let $J$ be another real structure on $\hs{H}_m$ of $KO$-dimension $1$ or $7 \bmod 8$. Define a unitary $K$ on $\hs{H}$ by $K = J X_m$; thus, $J = K X_m$. Since the intertwining condition of Equation~\ref{realintertwine} applies to both $J$ and $X_m$, we have, in fact, that $K \in \lrU_\alg{A}(\hs{H}_m)$, and hence
\[
 K = \bigoplus_{\alpha,\beta \in \spec{\alg{A}}} 1_{n_\alpha} \otimes K_{\alpha\beta} \otimes 1_{n_\beta},
\]
for $K_{\alpha\beta} \in \U(m_{\alpha\beta})$. In particular, since $K^* = X_m J = X_m K X_m$, we have that $K_{\beta\alpha} = K_{\alpha\beta}^T$.

Let $(\alpha,\beta) \in \supp(m)$, and suppose that $\alpha < \beta$. Let $K_{\alpha\beta} = V_{\alpha\beta} \tilde{K}_{\alpha\beta} V_{\alpha\beta}^*$ be a unitary diagonalisation of $K_{\alpha\beta}$, and let $L_{\alpha\beta}$ be a diagonal square root of $\tilde{K}_{\alpha\beta}$. Then$ K_{\alpha\beta} = V_{\alpha\beta} L_{\alpha\beta} L_{\alpha\beta} V_{\alpha\beta}^* = (V_{\alpha\beta}L_{\alpha\beta})(\overline{V_{\alpha\beta}} L_{\alpha\beta})^T$, and hence $K_{\beta\alpha} = (\overline{V_{\alpha\beta}} L_{\alpha\beta})(V_{\alpha\beta}L_{\alpha\beta})^T$. If, instead, $\alpha = \beta$, then $K_{\alpha\alpha}$ is unitary and complex symmetric, so that there exists a unitary $W_{\alpha\alpha}$ such that $K_{\alpha\alpha} = W_{\alpha\alpha}W_{\alpha\alpha}^T$. We can now define a unitary $U \in \lrU_\alg{A}(\hs{H}_m)$ by
\[
 U = \bigoplus_{\alpha,\beta \in \spec{\alg{A}}} 1_{n_\alpha} \otimes U_{\alpha\beta} \otimes 1_{n_\beta},
\]
where $U_{\alpha\beta} = 0$ if $m_{\alpha\beta} = 0$, and for $(\alpha,\beta) \in \supp(m)$,
\[
 U_{\alpha\beta} = 
\begin{cases}
 V_{\alpha\beta}L_{\alpha\beta}, &\text{if $\alpha < \beta$,}\\
 \overline{V_{\beta\alpha}}L_{\beta\alpha}, &\text{if $\alpha > \beta$,}\\
 W_{\alpha\alpha}, &\text{if $\alpha = \beta$.}
\end{cases}
\]
Then, by construction, $K = U  X_m U^* X_m$, and hence, $J = U X_m U^*$, so that $U$ is the required unitary equivalence between $(\hs{H}_m,X_m)$ and $(\hs{H}_m,J)$.
\end{proof}

We can now give our characterisation of real bimodules of $KO$-dimension $1$ or $7 \bmod 8$:

\begin{proposition}[Krajewski~\cite{Kraj98}*{\S3.2}]\label{real17mult}
Let $n = 1$ or $7 \bmod 8$. Then the map $\iota_n : \Bimod(\alg{A},n) \to \Bimod(\alg{A})$ defined by $\iota_n : [(\hs{H},J)] \mapsto [\hs{H}]$ is injective, and
\begin{equation}
 (\mult \circ \iota_n)(\Bimod(\alg{A},n)) = \Sym_S(\ZO).
\end{equation}
\end{proposition}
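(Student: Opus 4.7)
The plan is to unpack the proposition into three separate claims and dispatch each by invoking the two preceding lemmas; this proposition is essentially a repackaging of Lemmas~\ref{real17a} and~\ref{real17b}.

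First I would verify the containment $(\mult\circ\iota_n)(\Bimod(\alg{A},n))\subseteq \Sym_S(\ZO)$. This is immediate from Lemma~\ref{real17a}, which already forces the multiplicity matrix of any real $\alg{A}$-bimodule of $KO$-dimension $1$ or $7 \bmod 8$ to be symmetric.

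Next, for the reverse containment, I would invoke the existence half of Lemma~\ref{real17b}: given any $m\in\Sym_S(\ZO)$, the antilinear operator $X_m$ is a real structure on $\hs{H}_m$ of the required $KO$-dimension, so $m = \mult[\hs{H}_m] = (\mult\circ\iota_n)([(\hs{H}_m,X_m)])$ lies in the image. Combined with the previous paragraph, this establishes the equality of sets claimed.

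Finally, for injectivity of $\iota_n$, suppose $[(\hs{H},J)]$ and $[(\hs{H}',J')]$ in $\Bimod(\alg{A},n)$ have the same image, so that $\hs{H}$ and $\hs{H}'$ are unitarily equivalent as $\alg{A}$-bimodules; conjugating by such an equivalence, I can assume $\hs{H}=\hs{H}'=\hs{H}_m$ for a common multiplicity matrix $m$, which by Lemma~\ref{real17a} must be symmetric. Then $J$ and $J'$ are two real structures on $\hs{H}_m$ of the same $KO$-dimension, and the uniqueness half of Lemma~\ref{real17b} provides unitaries $U,U'\in\lrU_\alg{A}(\hs{H}_m)$ with $U J U^* = X_m = U' J' (U')^*$; composing yields a unitary equivalence $(U')^*U$ between $(\hs{H}_m,J)$ and $(\hs{H}_m,J')$ as real bimodules.

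The substantive technical work, namely the intertwining identity $J P_{\alpha\beta} = P_{\beta\alpha} J$ (giving the symmetry of $m$) and the Takagi-style unitary diagonalisation used to conjugate one real structure into the canonical $X_m$, is already carried out inside Lemmas~\ref{real17a} and~\ref{real17b}, so there is no genuine new obstacle here; the only care required is to track the identification $[\hs{H}]\leftrightarrow\hs{H}_{\mult[\hs{H}]}$ so that the reduction to a common underlying bimodule in the injectivity argument is legitimate.
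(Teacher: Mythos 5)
Your proposal is correct and follows the paper's argument essentially verbatim: both inclusions come from Lemmas~\ref{real17a} and~\ref{real17b} respectively, and injectivity is obtained by reducing to the standard model $\hs{H}_m$ and invoking the uniqueness half of Lemma~\ref{real17b} to conjugate both transported real structures to $X_m$. The only cosmetic difference is that you conjugate both $(\hs{H},J)$ and $(\hs{H}',J')$ directly to $\hs{H}_m$, whereas the paper first identifies $\hs{H}'$ with $\hs{H}$ and then moves to $\hs{H}_m$; these are equivalent.
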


\begin{proof}
First, since a unitary equivalence of real $\alg{A}$-bimodules of $KO$-di\-men\-sion $n \bmod 8$ is, in particular, a unitary equivalence of odd $\alg{A}$-bimodules, the map $\iota_n$ is well defined.

Next, let $(\hs{H},J)$ and $(\hs{H}^\prime,J^\prime)$ be real $\alg{A}$-bimodules of $KO$-dimension $n \bmod 8$, and suppose that $\hs{H}$ and $\hs{H}^\prime$ are unitarily equivalent as bimodules; let $U \in \lrU_\alg{A}(\hs{H}^\prime,\hs{H})$. Now, if $m$ is the multiplicity matrix of $\hs{H}$, then $\hs{H}$ and $\hs{H}_m$ are unitarily equivalent, so let $V \in \lrU_\alg{A}(\hs{H},\hs{H}_m)$. Then $V J V^*$ and $V U J^\prime U^* V^*$ are both real structures of $KO$-dimension $n \bmod 8$, so by Lemma~\ref{real17b}, they are both unitarily equivalent to $J_m$. This implies that $J$ and $U J^\prime U^*$ are unitarily equivalent as real structures on $\hs{H}$, and hence that $(\hs{H},J)$ and $(\hs{H}^\prime,J^\prime)$ are unitarily equivalent. Thus, $\iota_n$ is injective.

Finally, Lemma~\ref{real17a} implies that $(\mult \circ \iota_n)(\Bimod(\alg{A},n)) \subseteq \Sym_S(\ZO)$, while Lemma~\ref{real17b} implies the reverse inclusion.
\end{proof}

Thus, without any loss of generality, a real bimodule $\hs{H}$ of $KO$-dimension $1$ or $7 \bmod 8$ with multiplicity matrix $m$ can be assumed to be simply $(\hs{H}_m,J_m)$.

One following characterisation of $\lrU_\alg{A}(\hs{H},J)$ now follows by direct calculation:

\begin{proposition}\label{real17unitary}
Let $(\hs{H},J)$ be a real $\alg{A}$-bimodule of $KO$-dimension $1$ or $7 \bmod 8$ with multiplicity matrix $m$. Then
\begin{equation}\begin{split}
 \comp(\lrU_\alg{A}(\hs{H},J)) &= \{(1_{n_\alpha} \otimes U_{\alpha\beta} \otimes 1_{n_\beta})_{\alpha,\beta \in \spec{\alg{A}}} \in \comp(\lrU_\alg{A}(\hs{H})) \mid U_{\beta\alpha} = \overline{U_{\alpha\beta}}\}\\ &\cong \prod_{\alpha \in \spec{\alg{A}}} \biggl( \Orth(m_{\alpha\alpha}) \times \prod_{\substack{\beta \in \spec{\alg{A}}\\ \beta > \alpha}} \U(m_{\alpha\beta}) \biggr).
\end{split}\end{equation}
\end{proposition}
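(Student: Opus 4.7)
The plan is to identify $\lrU_\alg{A}(\hs{H},J)$ as the subgroup of $\lrU_\alg{A}(\hs{H})$ consisting of those left-right $\alg{A}$-linear unitaries which additionally commute with $J$, and then translate this condition to the block decomposition supplied by Corollary~\ref{oddunitary}. By Lemma~\ref{real17b} we may, without loss of generality, take $J = J_m = X_m$, and by Corollary~\ref{oddunitary} a general $U \in \lrU_\alg{A}(\hs{H})$ has the form
\[
 U = \bigoplus_{\alpha,\beta \in \spec{\alg{A}}} 1_{n_\alpha} \otimes U_{\alpha\beta} \otimes 1_{n_\beta}, \quad U_{\alpha\beta} \in \U(m_{\alpha\beta}).
\]

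Next I would compute $U X_m U^*$ block by block. Since $X_m$ permutes $(\alpha,\beta)$-blocks with $(\beta,\alpha)$-blocks, conjugation by $U$ sends the $(\alpha,\beta) \to (\beta,\alpha)$ component of $X_m$ to $(1_{n_\beta} \otimes U_{\beta\alpha} \otimes 1_{n_\alpha})(X_m)_{\alpha\beta}^{\beta\alpha}(1_{n_\alpha} \otimes U_{\alpha\beta}^* \otimes 1_{n_\beta})$. Applied to a simple tensor $\xi_1 \otimes \xi_2 \otimes \xi_3 \in \field{C}^{n_\alpha}\otimes\field{C}^{m_{\alpha\beta}}\otimes\field{C}^{n_\beta}$, the antilinear map $X_m$ conjugates each factor and reverses their order, so one quickly checks that
\[
 (U X_m U^*)_{\alpha\beta}^{\beta\alpha} \colon \xi_1\otimes\xi_2\otimes\xi_3 \mapsto \overline{\xi_3}\otimes \overline{U_{\alpha\beta}}\,\overline{\xi_2}\otimes\overline{\xi_1},
\]
after using $U_{\alpha\beta}^*\xi_2$ on the middle factor, taking the complex conjugate, and then applying $U_{\beta\alpha}$. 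Hence $U X_m U^* = X_m$ if and only if $U_{\beta\alpha}\overline{U_{\alpha\beta}^*} = 1_{m_{\beta\alpha}}$ on each block, equivalently $U_{\beta\alpha} = \overline{U_{\alpha\beta}}$ for all $\alpha,\beta \in \spec{\alg{A}}$. This establishes the first claimed equality.

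For the second equality, I would use the strict ordering $<$ on $\spec{\alg{A}}$ to parametrise independent data: the condition $U_{\beta\alpha} = \overline{U_{\alpha\beta}}$ determines $U_{\beta\alpha}$ from $U_{\alpha\beta}$ whenever $\alpha < \beta$, so for each such pair $U_{\alpha\beta}$ ranges freely over $\U(m_{\alpha\beta})$. For the diagonal pairs $\alpha = \beta$, the relation collapses to $U_{\alpha\alpha} = \overline{U_{\alpha\alpha}}$, which together with unitarity forces $U_{\alpha\alpha} \in \Orth(m_{\alpha\alpha})$. Assembling these contributions yields the product description
\[
 \prod_{\alpha \in \spec{\alg{A}}} \biggl(\Orth(m_{\alpha\alpha}) \times \prod_{\substack{\beta \in \spec{\alg{A}} \\ \beta > \alpha}} \U(m_{\alpha\beta})\biggr).
\]

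The only subtle point is the block computation of $U X_m U^*$, since $X_m$ is antilinear and reverses the order of tensor factors; once that is done carefully the remaining arguments are purely bookkeeping on the parameter set $\spec{\alg{A}}^2$ cut in half by the ordering $<$.
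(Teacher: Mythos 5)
Your argument is correct and is exactly the "direct calculation" the paper alludes to: reduce to $J = X_m$ via Lemma~\ref{real17b}, parametrise $U$ via Corollary~\ref{oddunitary}, compute $UX_mU^*$ block by block, and read off $U_{\beta\alpha} = \overline{U_{\alpha\beta}}$, which on the diagonal collapses to reality of $U_{\alpha\alpha}$ and hence $U_{\alpha\alpha} \in \Orth(m_{\alpha\alpha})$. The only blemish is the intermediate displayed formula for $(UX_mU^*)_{\alpha\beta}^{\beta\alpha}$, whose middle factor should read $U_{\beta\alpha}U_{\alpha\beta}^T\overline{\xi_2}$ rather than $\overline{U_{\alpha\beta}}\,\overline{\xi_2}$; the condition you then extract, $U_{\beta\alpha}\overline{U_{\alpha\beta}^*}=1$, i.e.\ $U_{\beta\alpha}=\overline{U_{\alpha\beta}}$, is nonetheless the right one, so the slip does not propagate.
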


\subsubsection{$KO$-dimension $3$ or $5 \bmod 8$}

Let us now turn to real bimodules of $KO$-dimension $3$ or $5 \bmod 8$. We begin with the relevant analogue of Lemma~\ref{real17a}.

\begin{lemma}\label{real35a}
Let $(\hs{H},J)$ be a real $\alg{A}$-bimodule of $KO$-dimension $3$ or $5 \bmod 8$ with multiplicity matrix $m$. Then $m$ is symmetric with even diagonal entries, and the only non-zero components of $J$ are of the form $J_{\alpha\beta}^{\beta\alpha}$ for $\alpha$, $\beta \in \spec{\alg{A}}$, which are anti-unitaries $\hs{H}_{\alpha\beta} \to \hs{H}_{\beta\alpha}$ satisfying the relations $J_{\beta\alpha}^{\alpha\beta} = -(J_{\alpha\beta}^{\beta\alpha})^*$.
\end{lemma}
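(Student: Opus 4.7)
The plan is to adapt the proof of Lemma~\ref{real17a} with only cosmetic changes for the block structure of $J$ and for the off-diagonal symmetry of $m$, and then to add one genuinely new ingredient to force the diagonal entries of $m$ to be even. First, the intertwining relation of Equation~\ref{realintertwine} is independent of the sign $\eps$, so exactly the same manipulation with the projections $P^L_\alpha$ and $P^R_\alpha$ as in the proof of Lemma~\ref{real17a} shows that $JP_{\alpha\beta}=P_{\beta\alpha}J$ for all $\alpha,\beta\in\spec{\alg{A}}$ (the two factors of $\eps$ picked up when moving $J$ past $P^L_\alpha$ and past $P^R_\beta$ cancel). Hence the only non-zero components of $J$ are antiunitaries $J_{\alpha\beta}^{\beta\alpha}\colon\hs{H}_{\alpha\beta}\to\hs{H}_{\beta\alpha}$. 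The relation $J^2=-1$, restricted to $\hs{H}_{\alpha\beta}$, now reads $J_{\beta\alpha}^{\alpha\beta}\circ J_{\alpha\beta}^{\beta\alpha}=-\Id_{\hs{H}_{\alpha\beta}}$; since $J_{\alpha\beta}^{\beta\alpha}$ is antiunitary its inverse equals its adjoint, and we conclude $J_{\beta\alpha}^{\alpha\beta}=-(J_{\alpha\beta}^{\beta\alpha})^*$. For distinct $\alpha\neq\beta$, the mere existence of an antiunitary $\hs{H}_{\alpha\beta}\to\hs{H}_{\beta\alpha}$ forces $m_{\alpha\beta}=m_{\beta\alpha}$ by comparing complex dimensions.

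The genuinely new step is showing that $m_{\alpha\alpha}$ is even. My approach is to compare $J$ on the diagonal block $\hs{H}_{\alpha\alpha}$ with the model antiunitary $X_m$ introduced just before Lemma~\ref{real17a}, whose restriction to $\hs{H}_{\alpha\alpha}$ satisfies $X_m^2|_{\hs{H}_{\alpha\alpha}}=\Id$ and which enjoys exactly the same bimodule intertwining relation as $J$. Consequently $U:=J_{\alpha\alpha}^{\alpha\alpha}\circ X_m^{-1}$ is a genuinely complex-linear, left-and-right $\alg{A}$-linear unitary on $\hs{H}_{\alpha\alpha}$, so Proposition~\ref{linear} gives $U=1_{n_\alpha}\otimes U'\otimes 1_{n_\alpha}$ for some $U'\in\U(m_{\alpha\alpha})$. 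Because conjugating $1_{n_\alpha}\otimes U'\otimes 1_{n_\alpha}$ by $X_m$ leaves the outer factors untouched and simply conjugates the middle one, the condition $(J_{\alpha\alpha}^{\alpha\alpha})^2=-\Id_{\hs{H}_{\alpha\alpha}}$ reduces to the matrix identity $U'\overline{U'}=-1_{m_{\alpha\alpha}}$. Taking determinants then gives $(-1)^{m_{\alpha\alpha}}=|\det U'|^2=1$, so $m_{\alpha\alpha}$ must be even.

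I expect the main obstacle to be this last step, specifically the careful bookkeeping for the antilinear conjugation by $X_m$ on $\hs{H}_{\alpha\alpha}\cong\field{C}^{n_\alpha}\otimes\field{C}^{m_{\alpha\alpha}}\otimes\field{C}^{n_\alpha}$: one has to verify that $X_m$ really does leave the outer representation factors alone and act by complex conjugation on the multiplicity factor, so that the identity $U'\overline{U'}=-1$ emerges cleanly. Once that relation is in hand the parity conclusion is immediate, and in fact it is just the classical observation that a finite-dimensional complex vector space admitting a quaternionic structure is even-dimensional.
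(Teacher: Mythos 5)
Your proof is correct and follows essentially the same route as the paper, which simply says that the proof of Lemma~\ref{real17a} goes through unchanged until the relation $J_{\beta\alpha}^{\alpha\beta}=-(J_{\alpha\beta}^{\beta\alpha})^*$ appears, and then asserts that $(J_{\alpha\alpha}^{\alpha\alpha})^2=-1$ forces $m_{\alpha\alpha}$ even. Where you genuinely add value is in making that last inference precise: the bare fact that an antiunitary on $\hs{H}_{\alpha\alpha}$ squares to $-1$ only forces $\dim_{\field{C}}\hs{H}_{\alpha\alpha}=n_\alpha^2 m_{\alpha\alpha}$ to be even, which is weaker than what is claimed when $n_\alpha$ is even. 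Your reduction -- writing $J_{\alpha\alpha}^{\alpha\alpha}=(1_{n_\alpha}\otimes U'\otimes 1_{n_\alpha})X_m$ via the intertwining relation and Proposition~\ref{linear}, computing $X_m(1\otimes U'\otimes 1)X_m^{-1}=1\otimes\overline{U'}\otimes 1$, and extracting $U'\overline{U'}=-1_{m_{\alpha\alpha}}$ -- is exactly the bookkeeping the paper leaves implicit (and in fact reuses in the proof of Lemma~\ref{real35b}, where the ansatz $J=KX_m$ with $K\in\lrU_\alg{A}(\hs{H}_m)$ is made explicit). The determinant argument at the end is a clean way to finish, equivalent to the usual observation that a quaternionic structure on $\field{C}^{m_{\alpha\alpha}}$ forces $m_{\alpha\alpha}$ to be even.
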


\begin{proof}
The proof follows just as for Lemma~\ref{real17a}, except that the equation $J^2 = -1$ forces the relations $J_{\beta\alpha}^{\alpha\beta} = -(J_{\alpha\beta}^{\beta\alpha})^*$, which imply, in particular, that for each $\alpha \in \spec{\alg{A}}$, $(J_{\alpha\alpha}^{\alpha\alpha})^2 = -1$, so that $m_{\alpha\alpha}$ must be even.
\end{proof}

Let us denote by $\Sym_S^0(\ZO)$ the set of all matrices in $\Sym_S(\ZO)$ with even diagonal entries. For $n = 2k$, let
\[
 \Omega_n =
 \begin{pmatrix}
 0 & -1_k\\
 1_k & 0
 \end{pmatrix}.
\]

\begin{lemma}~\label{real35b}
Let $m \in \Sym_S^0(\ZO)$. Define an antiunitary $J_m$ on $\hs{H}_m$ by
\[
 (J_m)_{\alpha\beta}^{\gamma\delta} =
 \begin{cases}
  (X_m)_{\alpha\beta}^{\beta\alpha} &\text{if $(\gamma,\delta)=(\beta,\alpha)$ and $\alpha < \beta$,}\\
  -(X_m)_{\alpha\beta}^{\beta\alpha} &\text{if $(\gamma,\delta)=(\beta,\alpha)$ and $\alpha > \beta$,}\\
  \Omega_{m_{\alpha\alpha}}(X_m)_{\alpha\alpha}^{\alpha\alpha} &\text{if $\alpha = \beta = \gamma = \delta$,}\\
  0 &\text{otherwise.}
 \end{cases}
\]
Then, up to unitary equivalence, $J_m$ is the unique real structure on $\hs{H}_m$ of $KO$-dimension $3$ or $5 \bmod 8$.
\end{lemma}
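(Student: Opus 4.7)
The plan is to follow the same basic strategy as in the proof of Lemma~\ref{real17b}, with two modifications: a sign bookkeeping to accommodate $J^2 = -\Id$ on the off-diagonal blocks, and the use of a symplectic normal-form factorisation on the diagonal blocks in place of the Autonne--Takagi factorisation.

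First I would check that $J_m$ is indeed a real structure of $KO$-dimension $3$ or $5 \bmod 8$ on $\hs{H}_m$. Antiunitarity is clear from the block definition, and the intertwining identity $\rho_m(a) = J_m \lambda_m(a^*) J_m^{-1}$ reduces, via the definition of $J_m$ and the fact that $\Omega_{m_{\alpha\alpha}}$ acts on the multiplicity factor (where both left and right $\alg{A}$-actions are trivial), to the analogous identity already used for $X_m$ in the proof of Lemma~\ref{real17b}. The relation $J_m^2 = -\Id$ then splits block-wise into two computations: on $\hs{H}_{\alpha\beta} \oplus \hs{H}_{\beta\alpha}$ for $\alpha < \beta$, the two factors of $\pm X_m$ compose to $-X_m^2 = -\Id$; on $\hs{H}_{\alpha\alpha}$, the calculation $\Omega X_m \Omega X_m = \Omega \overline{\Omega}\,X_m^2 = \Omega^2 = -\Id$ uses only the reality of $\Omega_{m_{\alpha\alpha}}$ and $X_m^2 = \Id$.

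For uniqueness, I would take another real structure $J$ on $\hs{H}_m$ of the same $KO$-dimension and seek $U \in \lrU_\alg{A}(\hs{H}_m)$ such that $U J_m U^* = J$. By Lemma~\ref{real35a} together with the order-zero intertwining, $J$ is parametrised by a collection of unitaries $B_{\alpha\beta} \in \U(m_{\alpha\beta})$ acting on the multiplicity factor via $J_{\alpha\beta}^{\beta\alpha} \colon \xi_1 \otimes \xi_2 \otimes \xi_3 \mapsto \overline{\xi_3} \otimes B_{\alpha\beta} \overline{\xi_2} \otimes \overline{\xi_1}$, subject to the relations $B_{\beta\alpha} = -B_{\alpha\beta}^T$ for $\alpha \neq \beta$ and $B_{\alpha\alpha}^T = -B_{\alpha\alpha}$ (both forced by $J^2 = -\Id$ via Lemma~\ref{real35a}). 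A short computation, using $\overline{U_{\gamma\delta}^{\,*}} = U_{\gamma\delta}^T$ and following the same pattern, shows that $U J_m U^*$ admits the analogous parametrisation with $B_{\alpha\beta}$ replaced by $U_{\beta\alpha} U_{\alpha\beta}^T$ in the off-diagonal blocks and by $U_{\alpha\alpha} \Omega_{m_{\alpha\alpha}} U_{\alpha\alpha}^T$ on the diagonal.

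The problem therefore reduces, block by block, to finding unitaries $U_{\alpha\beta}$ satisfying $U_{\beta\alpha} U_{\alpha\beta}^T = B_{\alpha\beta}$ (for $\alpha < \beta$) and $U_{\alpha\alpha} \Omega_{m_{\alpha\alpha}} U_{\alpha\alpha}^T = B_{\alpha\alpha}$. The off-diagonal case is routine: choose $U_{\alpha\beta} \in \U(m_{\alpha\beta})$ freely and set $U_{\beta\alpha} := B_{\alpha\beta}\overline{U_{\alpha\beta}}$; unitarity and the required identity follow from $\overline{U}\,U^T = \Id$ for any unitary $U$. The diagonal case is the main obstacle: it requires that every antisymmetric unitary $B \in \U(m_{\alpha\alpha})$ admit a factorisation $B = V \Omega_{m_{\alpha\alpha}} V^T$ with $V$ unitary. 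This is the Hua--Youla normal-form theorem for skew-symmetric complex matrices and serves as the symplectic counterpart of the Autonne--Takagi factorisation used in Lemma~\ref{real17b}; it is the sole nontrivial analytic input needed, and once granted, both the well-definedness of $J_m$ and the uniqueness of real structures up to $\lrU_\alg{A}(\hs{H}_m)$-equivalence follow at once from the block-wise construction above.
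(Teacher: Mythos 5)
Your proof is correct and follows the same overall strategy as the paper's: reduce to block-wise data and conjugate $J_m$ into $J$ by a suitable $U \in \lrU_\alg{A}(\hs{H}_m)$. On the diagonal blocks your reduction to the Hua--Youla factorisation (an antisymmetric unitary $B$ can be written $B = V\Omega V^T$ with $V$ unitary) is exactly the input the paper invokes, stated there as: $K_{\alpha\alpha}\Omega_{m_{\alpha\alpha}}$ is unitary and skew-symmetric, hence $K_{\alpha\alpha}\Omega_{m_{\alpha\alpha}} = W_{\alpha\alpha}\Omega_{m_{\alpha\alpha}}W_{\alpha\alpha}^T$. The one place you deviate is the off-diagonal blocks: the paper, referring back to Lemma~\ref{real17b}, factors $K_{\alpha\beta}$ by a unitary diagonalisation together with a diagonal square root, whereas you simply fix $U_{\alpha\beta}$ arbitrarily and solve $U_{\beta\alpha} := B_{\alpha\beta}\overline{U_{\alpha\beta}}$, using only $\overline{U}U^T = \Id$. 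This is cleaner and equally valid (and would also simplify the corresponding step in Lemma~\ref{real17b}); the $\alpha > \beta$ equation is then automatic from the constraint $B_{\beta\alpha} = -B_{\alpha\beta}^T$, as you indicate. The only small thing worth flagging is to handle the trivial blocks with $m_{\alpha\beta} = 0$ by setting $U_{\alpha\beta} = 0$, as the paper does.
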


\begin{proof}
The proof follows that of Lemma~\ref{real17b}, except we now have that
$K_{\alpha\alpha}^T = \Omega_{m_{\alpha\alpha}} K_{\alpha\alpha} \Omega_{m_{\alpha\alpha}}^T$ instead of $K_{\alpha\alpha}^T = K_{\alpha\alpha}$; each $K_{\alpha\alpha} \Omega_{m_{\alpha\alpha}}$ is therefore unitary and complex skew-symmetric, so that we choose $W_{\alpha\alpha}$ unitary such that
\[
 K_{\alpha\alpha}\Omega_{m_{\alpha\alpha}} = W_{\alpha\alpha}\Omega_{m_{\alpha\alpha}}W_{\alpha\alpha}^T,
\]
or equivalently, $K_{\alpha\alpha} = W_{\alpha\alpha}\Omega_{m_{\alpha\alpha}}W_{\alpha\alpha}^T\Omega_{m_{\alpha\alpha}}^T$. One can then construct the unitary equivalence $U$ between $(\hs{H}_m,J)$ and $(\hs{H},J_m)$ as before.
\end{proof}

Much as in the analogous case of $KO$-dimension $1$ or $7 \bmod 8$, Lemmas~\ref{real35a} and~\ref{real35b} together imply the following characterisation of real bimodules of $KO$-dimension $3$ or $5 \bmod 8$:

\begin{proposition}\label{real35mult}
Let $n=3$ or $5 \bmod 8$. Then the map $\iota_n : \Bimod(\alg{A},n) \to \Bimod(\alg{A})$ defined by $\iota_n : [(\hs{H},J)] \mapsto [\hs{H}]$ is injective, and
\begin{equation}
 (\mult \circ \iota_n)(\Bimod(\alg{A},n)) = \Sym_S^0(\ZO).
\end{equation}
\end{proposition}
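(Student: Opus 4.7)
The plan is to follow the proof of Proposition~\ref{real17mult} almost verbatim, relying on Lemmas~\ref{real35a} and~\ref{real35b} in place of Lemmas~\ref{real17a} and~\ref{real17b} respectively; the entire argument for $KO$-dimension $1$ or $7\bmod 8$ was structured so as to depend only on the existence-and-uniqueness pattern established in those two lemmas, and that pattern has been reproduced in the $3$ or $5\bmod 8$ setting.

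First I would observe that $\iota_n$ is well-defined, since a unitary equivalence of real $\alg{A}$-bimodules of $KO$-dimension $n \bmod 8$ is in particular a unitary equivalence of the underlying odd $\alg{A}$-bimodules, so the multiplicity matrix is insensitive to the real structure.

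Next I would establish injectivity. Let $(\hs{H},J)$ and $(\hs{H}^\prime, J^\prime)$ be real $\alg{A}$-bimodules of $KO$-dimension $n \bmod 8$ whose underlying bimodules are unitarily equivalent, and pick $U \in \lrU_\alg{A}(\hs{H}^\prime,\hs{H})$. Letting $m$ denote the common multiplicity matrix, choose $V \in \lrU_\alg{A}(\hs{H},\hs{H}_m)$. Both $VJV^*$ and $VUJ^\prime U^* V^*$ are then real structures on $\hs{H}_m$ of $KO$-dimension $n \bmod 8$, so by Lemma~\ref{real35b} each is unitarily equivalent to the canonical $J_m$. Composing these equivalences shows that $J$ and $UJ^\prime U^*$ are unitarily equivalent as real structures on $\hs{H}$, and hence that $(\hs{H},J)$ and $(\hs{H}^\prime,J^\prime)$ are unitarily equivalent as real bimodules. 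Thus $\iota_n$ is injective.

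Finally I would identify the image. Lemma~\ref{real35a} gives the forward inclusion $(\mult \circ \iota_n)(\Bimod(\alg{A},n)) \subseteq \Sym_S^0(\ZO)$, since any real $\alg{A}$-bimodule of $KO$-dimension $3$ or $5 \bmod 8$ must have symmetric multiplicity matrix with even diagonal entries. Conversely, given $m \in \Sym_S^0(\ZO)$, Lemma~\ref{real35b} explicitly constructs $J_m$ so that $(\hs{H}_m,J_m)$ is a real $\alg{A}$-bimodule of the requisite $KO$-dimension with $(\mult \circ \iota_n)[(\hs{H}_m,J_m)] = m$, giving the reverse inclusion. I expect no real obstacle here: the only subtlety was already absorbed into the proof of Lemma~\ref{real35b}, where the complex skew-symmetric analogue of Autonne–Takagi (decomposition $K_{\alpha\alpha}\Omega_{m_{\alpha\alpha}} = W_{\alpha\alpha}\Omega_{m_{\alpha\alpha}}W_{\alpha\alpha}^T$) plays the role that the symmetric Autonne–Takagi factorisation played in Lemma~\ref{real17b}.
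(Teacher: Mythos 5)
Your proposal matches the paper's own argument exactly: the paper presents Proposition~\ref{real35mult} as following \emph{mutatis mutandis} from Lemmas~\ref{real35a} and~\ref{real35b} in parallel with the proof of Proposition~\ref{real17mult}, which is precisely what you have written out. The reasoning is correct, including the observation that the skew-symmetric Autonne--Takagi factorisation in Lemma~\ref{real35b} is where the real substance lies.
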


Finally, these results immediately imply the following description of $\lrU_\alg{A}(\hs{H},J)$:

\begin{proposition}\label{real35unitary}
Let $(\hs{H},J)$ be a real $\alg{A}$-bimodule of $KO$-dimension $3$ or $5 \bmod 8$ with multiplicity matrix $m$. Then
\begin{equation}\begin{split}
 \comp(\lrU_\alg{A}(\hs{H},J)) &= \left\{(1_{n_\alpha} \otimes U_{\alpha\beta} \otimes 1_{n_\beta})_{\alpha,\beta \in \spec{\alg{A}}} \in \comp(\lrU_\alg{A}(\hs{H})) \mid \substack{U_{\alpha\alpha} \in \Sp(m_{\alpha\alpha}), \\ U_{\beta\alpha} = \overline{U_{\alpha\beta}}, \: \alpha \neq \beta}\right\}\\
 &\cong \prod_{\alpha \in \spec{\alg{A}}} \biggl(\Sp(m_{\alpha\alpha}) \times \prod_{\substack{\beta \in \spec{\alg{A}}\\ \beta > \alpha}} \U(m_{\alpha\beta})\biggr).
\end{split}\end{equation}
\end{proposition}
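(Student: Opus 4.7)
The plan is to imitate the direct computation that underlies Proposition~\ref{real17unitary}, adapting it to the modified componentwise formula for $J_m$ given in Lemma~\ref{real35b}. By Proposition~\ref{real35mult}, no generality is lost in assuming $(\hs{H},J) = (\hs{H}_m,J_m)$. By Corollary~\ref{oddunitary}, an element $U \in \lrU_\alg{A}(\hs{H})$ is of the form $\comp(U) = \bigoplus_{\alpha,\beta} 1_{n_\alpha} \otimes U_{\alpha\beta} \otimes 1_{n_\beta}$ with $U_{\alpha\beta} \in \U(m_{\alpha\beta})$, and $U \in \lrU_\alg{A}(\hs{H},J)$ is equivalent to $U J_m = J_m U$. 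The strategy is therefore to translate this commutation relation into conditions on the blocks $U_{\alpha\beta}$, exploiting the fact that $J_m$ maps $\hs{H}_{\alpha\beta}$ to $\hs{H}_{\beta\alpha}$.

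First I would handle the off-diagonal case $\alpha \neq \beta$. For either choice of ordering between $\alpha$ and $\beta$, the component $(J_m)_{\alpha\beta}^{\beta\alpha}$ equals $\pm (X_m)_{\alpha\beta}^{\beta\alpha}$, and the overall sign factors identically on both sides of $U J_m = J_m U$ restricted to $\hs{H}_{\alpha\beta}$. Unwinding the tensor-product action on $\xi_1 \otimes \xi_2 \otimes \xi_3$ shows that $U J_m = J_m U$ on $\hs{H}_{\alpha\beta}$ is equivalent to $U_{\beta\alpha} \overline{\xi_2} = \overline{U_{\alpha\beta} \xi_2}$ for all $\xi_2 \in \field{C}^{m_{\alpha\beta}}$, i.e.\ to $U_{\beta\alpha} = \overline{U_{\alpha\beta}}$; the condition coming from $\hs{H}_{\beta\alpha}$ is the same relation read in reverse, so it is automatically consistent.

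Next I would treat the diagonal case $\alpha = \beta$, where $(J_m)_{\alpha\alpha}^{\alpha\alpha} = \Omega_{m_{\alpha\alpha}} (X_m)_{\alpha\alpha}^{\alpha\alpha}$, with $\Omega_{m_{\alpha\alpha}}$ acting on the middle tensor factor. A direct computation on $\xi_1 \otimes \xi_2 \otimes \xi_3 \in \hs{H}_{\alpha\alpha}$ reduces $U J_m = J_m U$ to the middle-factor identity $U_{\alpha\alpha}\,\Omega_{m_{\alpha\alpha}} = \Omega_{m_{\alpha\alpha}}\,\overline{U_{\alpha\alpha}}$. Combined with unitarity $U_{\alpha\alpha}^{-1} = U_{\alpha\alpha}^*$ and the identities $\Omega^T = -\Omega$, $\Omega^2 = -1$, this rearranges to $U_{\alpha\alpha}^T \Omega_{m_{\alpha\alpha}} U_{\alpha\alpha} = \Omega_{m_{\alpha\alpha}}$, which is precisely the defining relation of $\Sp(m_{\alpha\alpha})$ (so $m_{\alpha\alpha}$ must be even, as guaranteed by Lemma~\ref{real35a}). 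This short linear-algebra verification is the one slightly fiddly step in the argument, but it is otherwise routine.

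Finally, for the isomorphism with the displayed product, I would note that the relation $U_{\beta\alpha} = \overline{U_{\alpha\beta}}$ determines the off-diagonal blocks below the diagonal from those strictly above it, while the diagonal blocks range freely over $\Sp(m_{\alpha\alpha})$ and the strictly upper off-diagonal blocks (with respect to the ordering on $\spec{\alg{A}}$) range freely over $\U(m_{\alpha\beta})$. Assigning to $U$ the tuple of these free parameters gives a group isomorphism onto $\prod_{\alpha}\bigl(\Sp(m_{\alpha\alpha}) \times \prod_{\beta > \alpha} \U(m_{\alpha\beta})\bigr)$, completing the proof.
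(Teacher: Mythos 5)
Your proof is correct and follows exactly the direct computational route the paper intends: it states Proposition~\ref{real17unitary} as following "by direct calculation" and Proposition~\ref{real35unitary} as something the preceding lemmas "immediately imply," and your component-by-component unwinding of the commutation relation $U J_m = J_m U$ against the explicit formula for $J_m$ from Lemma~\ref{real35b} is precisely that calculation. Both the observation that the $\pm$ sign on the off-diagonal components cancels and the reduction of the diagonal constraint to $U_{\alpha\alpha}^T \Omega_{m_{\alpha\alpha}} U_{\alpha\alpha} = \Omega_{m_{\alpha\alpha}}$ (using $\Omega^2 = -1$) are handled correctly.
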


\subsection{Real bimodules of even $KO$-dimension}

We now come to the case of even $KO$-dimension. Before continuing, note that for $(\hs{H},\gamma,J)$ a real bimodule of even $KO$-dimension,
\[
 \forall p,q \in KO_0(\alg{A}), \: \form{q,p} = \epspp \form{p,q},
\]
as a direct result of the relation $J \gamma = \epspp \gamma J$; this is then equivalent to the condition
\begin{equation}
 \cap = \epspp \cap^T,
\end{equation}
where $\cap$ is the matrix of the intersection form. Thus, for $KO$-dimension $0$ or $4 \bmod 8$, the intersection form is symmetric, whilst for $KO$-dimension $2$ or $6 \bmod 8$, it is anti-symmetric. It then follows, in particular, that a real $\alg{A}$-bimodule of $KO$-dimension $2$ or $6 \bmod 8$ satisfies Poincar{\'e} duality only if $\alg{A}$ has an even number of direct summands in its Wedderburn decomposition, as an anti-symmetric $k \times k$ matrix for $k$ odd is necessarily degenerate.

\subsubsection{$KO$-dimension $0$ or $4 \bmod 8$}

We begin with the case where $\epspp = 1$ and hence $[\gamma,J]=0$, \ie of $KO$-dimension $0$ or $4 \bmod 8$.

Let $(\hs{H},\gamma,J)$ be a real $\alg{A}$-bimodule of $KO$-dimension $n \bmod 8$, for $n = 0$ or $4$; let the mutually orthogonal projections $P^\even$ and $P^\odd$ on $\hs{H}$ be defined as before. Then, since $[J,\gamma]=0$, we have that $J = J^\even \oplus J^\odd$, where $J^\even = P^\even J P^\even$ and $J^\odd = P^\odd J P^\odd$. One can then check that $(\hs{H}^\even,J^\even)$ and $(\hs{H}^\odd,J^\odd)$ are real $\alg{A}$-bimodules of $KO$-dimension $1$ or $7 \bmod 8$ if $n = 0$, and $3$ or $5 \bmod 8$ if $n=4$. On the other hand, given $(\hs{H}^\even,J^\even)$ and $(\hs{H}^\odd,J^\odd)$, one can immediately reconstruct $(\hs{H},\gamma,J)$ by setting $\gamma = 1_{\hs{H}^\even} \oplus (-1_{\hs{H}^\odd})$ and $J = J^\even \oplus J^\odd$. Thus we have proved the following analogue of Proposition~\ref{evensplit}:

\begin{proposition}~\label{real04split}
 Let $\alg{A}$ be a real \Cstar-algebra. Let $k_0$ denote $1$ or $7 \bmod 8$, and let $k_4$ denote $3$ or $5 \bmod 8$. Then for $n = 0, 4 \bmod 8$, the map
 \[
  C_n : \Bimod(\alg{A},n) \to \Bimod(\alg{A},k_n) \times \Bimod(\alg{A},k_n)
 \]
 given by $C_n([(\hs{H},\gamma,J)]) := ([(\hs{H}^\even,J^\even)],[(\hs{H}^\odd,J^\odd)])$ is an isomorphism of monoids.
\end{proposition}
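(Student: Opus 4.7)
The plan is to formalise the construction already sketched in the paragraph preceding the proposition, in direct parallel to the proof of Proposition~\ref{evensplit}. I will exhibit an explicit inverse map and then verify that both directions preserve all the relevant structure.

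First, I check that $C_n$ is well-defined. Since $\epspp = 1$ for $n \equiv 0, 4 \pmod 8$, the relation $J\gamma = \epspp \gamma J$ reads $[J,\gamma] = 0$, so $J$ commutes with the projections $P^\even, P^\odd$ and restricts to antiunitaries $J^\pm$ on $\hs{H}^\pm$. Because $\gamma \in \lrbdd_\alg{A}(\hs{H})$, the bimodule structure restricts to each $\hs{H}^\pm$, and since the intertwining relation $\rho(a) = J\lambda(a^*)J$ is preserved by each orthogonal summand, $J^\pm$ intertwines the left and right actions on $\hs{H}^\pm$ correctly. Moreover $(J^\pm)^2 = \eps$, and from the table in Definition~\ref{realdef} one reads off $\eps = +1$ for $n = 0$ and $\eps = -1$ for $n = 4$, matching the allowed values of $\eps$ for $KO$-dimension $k_0 \in \{1,7\}$ and $k_4 \in \{3,5\}$, respectively. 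The construction is manifestly invariant under unitary equivalence, so $C_n$ descends to a well-defined map on equivalence classes, and it clearly respects direct sums.

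Next, I construct an inverse. Given classes $[(\hs{H}_1,J_1)]$ and $[(\hs{H}_2,J_2)]$ in $\Bimod(\alg{A},k_n)$, set
\[
 \hs{H} := \hs{H}_1 \oplus \hs{H}_2, \quad \gamma := 1_{\hs{H}_1} \oplus (-1_{\hs{H}_2}), \quad J := J_1 \oplus J_2.
\]
Then $\gamma$ is a self-adjoint unitary in $\lrbdd_\alg{A}(\hs{H})$, $[J,\gamma] = 0$ is automatic, $J$ satisfies the intertwining condition summand by summand, and $J^2 = \eps$ with the correct sign for $n$. Hence $(\hs{H},\gamma,J)$ is a real $\alg{A}$-bimodule of $KO$-dimension $n$. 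This assignment respects unitary equivalence and direct sums.

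Finally, the two maps are mutually inverse: applying the inverse construction to $C_n([(\hs{H},\gamma,J)])$ reproduces $(\hs{H},\gamma,J)$ up to the canonical unitary equivalence $\hs{H}^\even \oplus \hs{H}^\odd \cong \hs{H}$, while $C_n$ applied to a direct-sum bimodule built as above returns the original pair of summands identically. The one delicate point, which is really just bookkeeping rather than a genuine obstacle, is checking that the sign constants $\eps$ and $\epspp$ line up so that even $KO$-dimensions $0$ and $4$ correspond bijectively to the odd pairs $\{1,7\}$ and $\{3,5\}$; everything else is immediate from the definitions.
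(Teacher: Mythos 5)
Your proof is correct and follows essentially the same approach as the paper: split $J$ using $[J,\gamma]=0$ into $J^\even \oplus J^\odd$, match the sign $\eps$ to get the right odd $KO$-dimensions, and invert via the direct sum construction with $\gamma = 1_{\hs{H}^\even}\oplus(-1_{\hs{H}^\odd})$. You spell out the well-definedness and sign bookkeeping a little more explicitly than the paper does, but the argument is the same.
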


One can then apply this decomposition to the group $\lrU_\alg{A}(\hs{H},\gamma,J)$ to find:

\begin{corollary}\label{real04unitary}
 Let $(\hs{H},\gamma,J)$ be a real $\alg{A}$-bimodule of $KO$-dimension $0$ or $4 \bmod 8$. Then
 \begin{equation}
 \lrU_\alg{A}(\hs{H},\gamma,J) = \lrU_\alg{A}(\hs{H}^\even,J^\even) \oplus \lrU_\alg{A}(\hs{H}^\odd,J^\odd).
 \end{equation}
\end{corollary}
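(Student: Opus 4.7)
The plan is to exploit the direct sum decomposition from Proposition~\ref{real04split} and simply verify that it interacts properly with the group of structure-preserving unitaries. Because $\epspp = 1$, $J$ commutes with $\gamma$, so both of the spectral projections $P^\even, P^\odd$ of $\gamma$ lie in $\lrbdd_\alg{A}(\hs{H})$ and commute with $J$, giving the block decompositions $J = J^\even \oplus J^\odd$ and (automatically) $\gamma = 1_{\hs{H}^\even} \oplus (-1_{\hs{H}^\odd})$.

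First, I would take $U \in \lrU_\alg{A}(\hs{H},\gamma,J)$ and show it decomposes. Since $U$ commutes with $\gamma$, it preserves each of the eigenspaces $\hs{H}^\even$ and $\hs{H}^\odd$, so $U = U^\even \oplus U^\odd$ with $U^{\even/\odd} := P^{\even/\odd} U P^{\even/\odd}$. Each $U^{\even/\odd}$ is a unitary on $\hs{H}^{\even/\odd}$; left-right $\alg{A}$-linearity is inherited because $P^\even, P^\odd \in \lrbdd_\alg{A}(\hs{H})$ and $U$ is left-right $\alg{A}$-linear. Finally, since $U$ commutes with $J$ and $J$ respects the even/odd splitting, so do the restrictions: $U^\even J^\even = J^\even U^\even$ on $\hs{H}^\even$, and similarly for the odd part. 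Hence $U^{\even/\odd} \in \lrU_\alg{A}(\hs{H}^{\even/\odd},J^{\even/\odd})$.

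Conversely, given $V^\even \in \lrU_\alg{A}(\hs{H}^\even,J^\even)$ and $V^\odd \in \lrU_\alg{A}(\hs{H}^\odd,J^\odd)$, the block-diagonal operator $V := V^\even \oplus V^\odd$ on $\hs{H} = \hs{H}^\even \oplus \hs{H}^\odd$ is unitary and left-right $\alg{A}$-linear, commutes with $\gamma = 1_{\hs{H}^\even} \oplus (-1_{\hs{H}^\odd})$ by construction, and commutes with $J = J^\even \oplus J^\odd$ block-wise. Thus $V \in \lrU_\alg{A}(\hs{H},\gamma,J)$, and the two assignments $U \mapsto (U^\even,U^\odd)$ and $(V^\even,V^\odd) \mapsto V^\even \oplus V^\odd$ are mutually inverse group homomorphisms.

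There is essentially no obstacle here; the content of the corollary is just that the decomposition supplied by $C_n$ in Proposition~\ref{real04split} extends from objects to morphisms in the obvious way. The only sanity check worth doing carefully is that $P^\even$ and $P^\odd$ are genuinely two-sided $\alg{A}$-linear (which is immediate from $\gamma \in \lrbdd_\alg{A}(\hs{H})$), so that restricting a two-sided $\alg{A}$-linear unitary to each summand yields a two-sided $\alg{A}$-linear unitary on the summand.
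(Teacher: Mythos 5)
Your proof is correct and fills in the details of exactly the argument the paper has in mind; the paper states this corollary as an immediate consequence of the block decomposition from Proposition~\ref{real04split} without spelling it out, and your verification that $U \mapsto (P^\even U P^\even, P^\odd U P^\odd)$ and $(V^\even,V^\odd) \mapsto V^\even \oplus V^\odd$ are mutually inverse group isomorphisms (using $[\gamma,J]=0$, $P^{\even/\odd}\in\lrbdd_\alg{A}(\hs{H})$, and $J = J^\even\oplus J^\odd$) is precisely the intended content.
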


Combining Proposition~\ref{real04split} with our earlier characterisations of real bimodules of odd $KO$-dimension, we immediately obtain the following:

\begin{proposition}\label{real04mult}
 Let $n=0$ or $4 \bmod 8$. Then the map $\iota_n : \Bimod(\alg{A},n) \to \Bimod^\even(\alg{A})$ defined by
 $[(\hs{H},\gamma,J)] \mapsto ([(\hs{H},\gamma)])$ is injective, and
 \[
  (\mult^\even \circ \iota_n)(\Bimod(\alg{A},n)) =
  \begin{cases}
   \Sym_S(\ZO) \times \Sym_S(\ZO) &\text{if $n = 0 \bmod 8$,}\\
   \Sym_S^0(\ZO) \times \Sym_S^0(\ZO) &\text{if $n = 4 \bmod 8$.}
  \end{cases}
 \]
\end{proposition}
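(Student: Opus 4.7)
The plan is to reduce the statement to the already established odd $KO$-dimension analogues (Propositions~\ref{real17mult} and~\ref{real35mult}) via the splitting isomorphism $C_n$ of Proposition~\ref{real04split}. Write $k_n$ for $1$ or $7 \bmod 8$ when $n = 0$, and $3$ or $5 \bmod 8$ when $n = 4$, and recall that $C_n$ sends $[(\hs{H},\gamma,J)]$ to $([(\hs{H}^\even,J^\even)],[(\hs{H}^\odd,J^\odd)])$, where both coordinates lie in $\Bimod(\alg{A},k_n)$. The diagram to keep in mind is that $\iota_n$ factors as $C_n$ followed by $\iota_{k_n} \times \iota_{k_n}$ and then by the inverse of $C : \Bimod^\even(\alg{A}) \to \Bimod(\alg{A}) \times \Bimod(\alg{A})$ from Proposition~\ref{evensplit}.

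For well-definedness and injectivity of $\iota_n$, observe first that any unitary equivalence of real bimodules of $KO$-dimension $n$ is in particular a unitary equivalence of even bimodules, so $\iota_n$ is well defined. For injectivity, suppose $(\hs{H},\gamma,J)$ and $(\hs{H}^\prime,\gamma^\prime,J^\prime)$ have unitarily equivalent underlying even bimodules. Applying $C$ yields unitary equivalences of the odd $\alg{A}$-bimodules $\hs{H}^\even \cong \hs{H}^{\prime\even}$ and $\hs{H}^\odd \cong \hs{H}^{\prime\odd}$. By injectivity of $\iota_{k_n}$ from Proposition~\ref{real17mult} or~\ref{real35mult}, these upgrade to unitary equivalences of real bimodules of $KO$-dimension $k_n$, that is, $(\hs{H}^\even,J^\even) \cong (\hs{H}^{\prime\even},J^{\prime\even})$ and $(\hs{H}^\odd,J^\odd) \cong (\hs{H}^{\prime\odd},J^{\prime\odd})$. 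Taking direct sums and using that $C_n$ is an isomorphism of monoids recovers a unitary equivalence $(\hs{H},\gamma,J) \cong (\hs{H}^\prime,\gamma^\prime,J^\prime)$.

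For the image characterisation, note that $(\mult^\even \circ \iota_n)[(\hs{H},\gamma,J)] = (\mult[\hs{H}^\even], \mult[\hs{H}^\odd])$, and the pair $(\hs{H}^\even, J^\even)$, $(\hs{H}^\odd, J^\odd)$ can be prescribed independently as any pair of real bimodules of $KO$-dimension $k_n$ thanks to the isomorphism $C_n$. Now invoke Proposition~\ref{real17mult} when $n \equiv 0$ to conclude that each factor lies in $\Sym_S(\ZO)$ and that every such matrix occurs, and Proposition~\ref{real35mult} when $n \equiv 4$ to obtain $\Sym_S^0(\ZO)$ in each factor; the two coordinates are independent, giving the claimed Cartesian product.

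The argument is essentially bookkeeping once $C_n$ is in hand, so there is no serious obstacle; the only point that requires a moment of care is ensuring that the reconstruction of $J$ from $(J^\even,J^\odd)$ used at the end of injectivity genuinely respects the $KO$-dimension $n$ commutation $J\gamma = \gamma J$, which is automatic from $\gamma = 1_{\hs{H}^\even} \oplus (-1_{\hs{H}^\odd})$ and the block-diagonal form $J = J^\even \oplus J^\odd$ already exploited in the proof of Proposition~\ref{real04split}.
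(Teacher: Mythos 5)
Your argument is correct and follows precisely the route the paper intends: the paper presents the result as an immediate consequence of combining Proposition~\ref{real04split} with Propositions~\ref{real17mult} and~\ref{real35mult}, and your proof simply fills in the bookkeeping details of that reduction via $C_n$.
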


In particular,
\[
 \Bimod_q(\alg{A},n) := \iota_n^{-1}(\Bimod^\even_q(\alg{A}))
\]
is thus the set of all equivalence classes of quasi-orientable real $\alg{A}$-bimodules of $KO$-dimension $n \bmod 8$; the last Proposition then implies the following:

\begin{corollary}
 Let $n=0$ or $4 \bmod 8$. Then
\begin{equation}
 (\mult_q \circ \iota_n)(\Bimod_q(\alg{A},n)) = \Sym_S(\ring{Z}).
\end{equation}
\end{corollary}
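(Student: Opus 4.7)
The plan is to combine the characterisation of $\iota_n(\Bimod(\alg{A},n))$ from Proposition~\ref{real04mult} with the characterisation of $\Bimod^\even_q(\alg{A})$ via disjoint supports, and then translate everything through the difference map defining $\mult_q$. More precisely, after composing with the isomorphism $C$ of Proposition~\ref{evensplit}, an element of $\Bimod_q(\alg{A},n)$ is represented uniquely by a pair $(m^\even,m^\odd)$ of matrices in $\Sym_S(\ZO)$ (respectively $\Sym_S^0(\ZO)$ when $n\equiv 4$) with $\supp(m^\even)\cap\supp(m^\odd)=\emptyset$, and $\mult_q$ sends this pair to $m^\even-m^\odd$.

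For the inclusion $(\mult_q\circ\iota_n)(\Bimod_q(\alg{A},n))\subseteq\Sym_S(\ring{Z})$ I would simply note that the difference of two symmetric matrices is symmetric. For the reverse inclusion I would exhibit a preimage by the standard positive/negative part decomposition: given $\mu\in\Sym_S(\ring{Z})$, define entrywise
\[
 m^\even_{\alpha\beta}:=\max(\mu_{\alpha\beta},0),\qquad m^\odd_{\alpha\beta}:=\max(-\mu_{\alpha\beta},0),
\]
so that $m^\even-m^\odd=\mu$ and $\supp(m^\even)\cap\supp(m^\odd)=\emptyset$ by construction. Symmetry of $\mu$ transfers immediately to each of $m^\even,m^\odd$, so $(m^\even,m^\odd)\in\Sym_S(\ZO)^2$; invoking Proposition~\ref{real04mult} together with the support-disjointness condition produces a class in $\Bimod_q(\alg{A},n)$ mapping to $\mu$.

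I do not expect any serious obstacle in the $n\equiv 0$ case, as the above construction is entirely elementary. The one point meriting care is the $n\equiv 4$ case, where Proposition~\ref{real04mult} forces both $m^\even$ and $m^\odd$ to lie in $\Sym_S^0(\ZO)$: since each diagonal entry $\mu_{\alpha\alpha}$ is split into a single non-zero contribution to either $m^\even_{\alpha\alpha}$ or $m^\odd_{\alpha\alpha}$ (with the other vanishing), the evenness requirement on the diagonal transfers cleanly provided the diagonal entries of $\mu$ are themselves even, which should be identified as the correct constraint cutting out the image in this case. Modulo this bookkeeping, both containments are immediate, and the whole argument reduces to the existence and uniqueness of the $\pm$-part decomposition of a symmetric integer matrix with disjoint-support summands.
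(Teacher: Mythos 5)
Your approach is correct and is essentially the argument the paper leaves implicit (the result is stated as a corollary of Proposition~\ref{real04mult} together with the disjoint-support characterisation of quasi-orientable bimodules, so the positive/negative part decomposition $m^\even_{\alpha\beta}=\max(\mu_{\alpha\beta},0)$, $m^\odd_{\alpha\beta}=\max(-\mu_{\alpha\beta},0)$ is the expected surjectivity argument). More importantly, you have correctly identified that the statement as printed does not hold uniformly: for $n\equiv 4$, Proposition~\ref{real04mult} forces both $m^\even$ and $m^\odd$ into $\Sym_S^0(\ZO)$, and since they have disjoint support each diagonal entry of $\mu=m^\even-m^\odd$ equals $\pm m^\even_{\alpha\alpha}$ or $\pm m^\odd_{\alpha\alpha}$ and is therefore even; conversely, your decomposition preserves evenness of diagonals. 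Hence the image in the $n\equiv 4$ case is precisely $\Sym_S^0(\ring{Z})$, the symmetric integer matrices with even diagonal, rather than all of $\Sym_S(\ring{Z})$. So your proof is sound, and the extra observation it yields is that the corollary as stated is correct only for $n\equiv 0$, while $n\equiv 4$ requires the weaker conclusion with $\Sym_S^0(\ring{Z})$ in place of $\Sym_S(\ring{Z})$.
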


\subsubsection{$KO$-dimension $2$ or $6 \bmod 8$}

Finally, let us consider the remaining case where $\epspp = -1$ and hence $\{\gamma,J\}=0$, \ie of $KO$-dimensions $2$ and $6 \bmod 8$.

Let $(\hs{H},\gamma,J)$ be a real $\alg{A}$-bimodule of $KO$-dimension $n \bmod 8$ for $n = 2$ or $6$. Since $\{J,\gamma\}$, we have that
\[
 J =
 \begin{pmatrix}
  0 & \eps \tilde{J}^*\\
  \tilde{J} & 0
 \end{pmatrix},
\]
where $\tilde{J} := P^\odd J P^\even$ is an antiunitary $\hs{H}^\even \to \hs{H}^\odd$, so that for $a \in \alg{A}$,
\[
 \lambda^\odd(a) = \tilde{J} \rho^\even(a^*) \tilde{J}^*, \quad \rho^\odd(a) = \tilde{J} \lambda^\even(a^*) \tilde{J}^*.
\]
It then follows, in particular, that $\mult[\hs{H}^\odd] = \mult[\hs{H}^\even]^T$. 

Now, let $J^\prime$ be another real structure on $(\hs{H},\gamma)$ of $KO$-dimension $n \bmod 8$, and let $\tilde{J^\prime} = P^\odd J^\prime P^\even$. Define $K \in \lrU_\alg{A}(\hs{H},\gamma)$ by $K = 1_{\hs{H}^\even} \oplus (\tilde{J^\prime} \tilde{J}^*)$. Then, by construction, $J^\prime = KJK^*$, \ie $K$ is a unitary equivalence of real structures between $J$ and $J^\prime$. Thus, real structures of $KO$-dimension $2$ or $6 \bmod 8$ are unique. As a result, we have proved the following analogue of Proposition~\ref{s0reduction}:

\begin{proposition}~\label{real26reduction}
Let $\alg{A}$ be a real \Cstar-algebra, and let $n = 2$ or $6 \bmod 8$. Then the map
\[
 C_n : \Bimod(\alg{A},n) \to \Bimod(\alg{A})
\]
given by $C_n([(\hs{H},\gamma,J)]) := ([\hs{H}^\even])$ is an isomorphism of monoids.
\end{proposition}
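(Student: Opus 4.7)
The plan is to verify directly that $C_n$ is a well-defined monoid homomorphism and then to establish bijectivity by packaging together the observations already made in the paragraph preceding the statement. Well-definedness is immediate: a unitary equivalence $U : (\hs{H},\gamma,J) \to (\hs{H}',\gamma',J')$ of real bimodules intertwines $\gamma$ and $\gamma'$, so it restricts to a bimodule equivalence $P^\even_2 U P^\even_1 : \hs{H}^\even \to \hs{H}'^\even$. Compatibility with $+$ is similarly immediate, since the $\gamma$ on a direct sum is the direct sum of the $\gamma$'s, whence $(\hs{H}_1 \oplus \hs{H}_2)^\even = \hs{H}_1^\even \oplus \hs{H}_2^\even$.

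For injectivity, suppose $(\hs{H}_1,\gamma_1,J_1)$ and $(\hs{H}_2,\gamma_2,J_2)$ are real bimodules of $KO$-dimension $n \bmod 8$ with $[\hs{H}_1^\even] = [\hs{H}_2^\even]$ in $\Bimod(\alg{A})$, and fix a bimodule equivalence $V_0 : \hs{H}_1^\even \to \hs{H}_2^\even$. The paragraph preceding the statement records that $\tilde{J}_k := P^\odd J_k P^\even : \hs{H}_k^\even \to \hs{H}_k^\odd$ is an antiunitary intertwining the even left [right] action with the odd right [left] action via the formulas $\lambda^\odd(a) = \tilde{J} \rho^\even(a^*) \tilde{J}^*$ and $\rho^\odd(a) = \tilde{J} \lambda^\even(a^*) \tilde{J}^*$. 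These formulas show that the bimodule $\hs{H}_k^\odd$ is completely determined by $\hs{H}_k^\even$ up to unitary equivalence, so $V_1 := \tilde{J}_2 V_0 \tilde{J}_1^*$ is a bimodule equivalence $\hs{H}_1^\odd \to \hs{H}_2^\odd$, and $V_0 \oplus V_1$ is an equivalence of even bimodules $(\hs{H}_1,\gamma_1) \to (\hs{H}_2,\gamma_2)$. After conjugating by $V_0 \oplus V_1$ we may assume $(\hs{H}_1,\gamma_1) = (\hs{H}_2,\gamma_2)$, at which point the uniqueness of the real structure on a fixed $(\hs{H},\gamma)$, explicitly implemented by $K = 1_{\hs{H}^\even} \oplus (\tilde{J}_2 \tilde{J}_1^*) \in \lrU_\alg{A}(\hs{H},\gamma)$ in the preceding paragraph, gives a unitary equivalence of real bimodules, proving injectivity.

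For surjectivity, given an $\alg{A}$-bimodule $\hs{V}$, I mimic the construction in the proof of Proposition~\ref{s0reduction}: set $\hs{H}^\even := \hs{V}$ and $\hs{H}^\odd := \overline{\hs{V}}$, take $\tilde{J} : \hs{V} \to \overline{\hs{V}}$ to be the canonical antiunitary identity map, and define the odd actions by $\lambda^\odd(a) := \tilde{J} \rho^\even(a^*) \tilde{J}^*$ and $\rho^\odd(a) := \tilde{J} \lambda^\even(a^*) \tilde{J}^*$. Put $\hs{H} := \hs{H}^\even \oplus \hs{H}^\odd$ with grading $\gamma := 1_{\hs{H}^\even} \oplus (-1_{\hs{H}^\odd})$ and define
\[
 J := \begin{pmatrix} 0 & \eps \tilde{J}^* \\ \tilde{J} & 0 \end{pmatrix}.
\]
A direct check gives $J^2 = \eps$, $\{J,\gamma\} = 0$ (so that $\gamma J = \epspp \gamma J$ with $\epspp = -1$, as required for $n = 2,6$), and the order-zero intertwining relation~\eqref{realintertwine}, so $(\hs{H},\gamma,J)$ is a real $\alg{A}$-bimodule of $KO$-dimension $n \bmod 8$ with $C_n([(\hs{H},\gamma,J)]) = [\hs{V}]$.

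I expect the only genuinely delicate point to be the verification, in the injectivity step, that $V_1 = \tilde{J}_2 V_0 \tilde{J}_1^*$ really is left- and right-$\alg{A}$-linear rather than a mere conjugate-linear intertwiner; this is forced by the intertwining formulas defining $\lambda^\odd, \rho^\odd$ in terms of $\lambda^\even, \rho^\even$, but the sides swap under $\tilde{J}$ and the two swaps must be tracked carefully. Everything else is a matter of transcribing the block-matrix bookkeeping already present in the paragraph above the statement.
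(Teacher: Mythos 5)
Your proof is correct and follows essentially the same route as the paper: read off the block off-diagonal form of $J$ from $\{J,\gamma\}=0$, use the resulting intertwining formulas to show $\hs{H}^\odd$ is determined by $\hs{H}^\even$, and conclude injectivity from the uniqueness of the real structure, with surjectivity handled by mimicking the construction from the $S^0$-real reduction. One small remark: your injectivity step is slightly redundant — a direct computation shows that $V_0\oplus V_1$ with $V_1 = \tilde{J}_2 V_0 \tilde{J}_1^*$ already intertwines $J_1$ and $J_2$ (not just $\gamma_1$ and $\gamma_2$), so the subsequent application of $K = 1_{\hs{H}^\even}\oplus(\tilde{J}_2\tilde{J}_1^*)$ is a no-op; the paper's separate uniqueness step is needed precisely because it first identifies the even bimodules abstractly (via the multiplicity-matrix relation $\mult[\hs{H}^\odd]=\mult[\hs{H}^\even]^T$ and Proposition~\ref{evensplit}) rather than constructing the equivalence compatibly with $J$ as you do.
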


Again, as an immediate consequence, we obtain the following characterisation of $\lrU_\alg{A}(\hs{H},\gamma,J)$:

\begin{corollary}\label{real26unitary}
Let $(\hs{H},\gamma,J)$ be a real $\alg{A}$-bimodule of $KO$-dimension $2$ or $6 \bmod 8$. Then
\begin{equation}\begin{split}
 \lrU_\alg{A}(\hs{H},\gamma,J) &= \{ U^\even \oplus U^\odd \in \lrU_\alg{A}(\hs{H}^\even) \oplus \lrU_\alg{A}(\hs{H}^\odd) \mid U^\odd = \tilde{J}U^\even\tilde{J}^*\}\\ &\cong \lrU_\alg{A}(\hs{H}^\even).
\end{split}\end{equation}
\end{corollary}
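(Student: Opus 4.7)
The plan is to start from the observation that any $U \in \lrU_\alg{A}(\hs{H},\gamma,J)$ must, in particular, commute with $\gamma$, so the preceding corollary on even bimodules yields a decomposition $U = U^\even \oplus U^\odd$ with $U^\even \in \lrU_\alg{A}(\hs{H}^\even)$ and $U^\odd \in \lrU_\alg{A}(\hs{H}^\odd)$. The remaining condition is that $U$ commute with $J$. Using the off-diagonal form
\[
 J = \begin{pmatrix} 0 & \eps \tilde{J}^* \\ \tilde{J} & 0 \end{pmatrix}
\]
established just above, a block computation of $UJ$ and $JU$ (being careful that the block entries are antilinear, so matrix multiplication still makes sense provided one keeps track of which operator acts first) reduces $UJ = JU$ to the pair of relations $U^\odd \tilde{J} = \tilde{J} U^\even$ and $U^\even \tilde{J}^* = \tilde{J}^* U^\odd$, which are equivalent and rearrange, using $\tilde{J}^*\tilde{J} = 1_{\hs{H}^\even}$ and $\tilde{J}\tilde{J}^* = 1_{\hs{H}^\odd}$, to $U^\odd = \tilde{J} U^\even \tilde{J}^*$. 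This gives the inclusion $\subseteq$ in the claimed description.

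For the reverse inclusion, given $U^\even \in \lrU_\alg{A}(\hs{H}^\even)$, one sets $U^\odd := \tilde{J} U^\even \tilde{J}^*$ and must check three things: that $U^\odd$ is a (complex-linear) unitary on $\hs{H}^\odd$, which is immediate as the composition of two antiunitaries with a unitary; that $U^\odd \in \lrU_\alg{A}(\hs{H}^\odd)$, which follows from the intertwining relations $\lambda^\odd(a) = \tilde{J}\rho^\even(a^*)\tilde{J}^*$ and $\rho^\odd(a) = \tilde{J}\lambda^\even(a^*)\tilde{J}^*$ established just above by a direct manipulation $\lambda^\odd(a) U^\odd = \tilde{J}\rho^\even(a^*) U^\even \tilde{J}^* = \tilde{J} U^\even \rho^\even(a^*) \tilde{J}^* = U^\odd \lambda^\odd(a)$, and symmetrically for the right action; and that $U := U^\even \oplus U^\odd$ then commutes with $J$, which is precisely the block computation above run in reverse.

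Finally, the isomorphism $\lrU_\alg{A}(\hs{H},\gamma,J) \cong \lrU_\alg{A}(\hs{H}^\even)$ is given by the projection $U \mapsto U^\even$: it is a group homomorphism since the direct-sum group structure on $\lrU_\alg{A}(\hs{H}^\even) \oplus \lrU_\alg{A}(\hs{H}^\odd)$ restricts to the graph of $U^\even \mapsto \tilde{J}U^\even\tilde{J}^*$, it is surjective by the converse construction, and it is injective because $U^\odd$ is determined by $U^\even$ via the constraint.

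There is no serious obstacle: the only points requiring minor care are the antilinearity of $J$ when treating it as a block matrix (so that the constant $\eps$ appearing in the off-diagonal entry cancels between the two equivalent relations rather than polluting the final formula), and the verification that conjugation by $\tilde{J}$ interchanges the left and right actions as required, which is exactly what the intertwining identities for $\lambda^\odd$ and $\rho^\odd$ are designed to say.
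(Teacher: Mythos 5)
Your proof is correct and takes essentially the same route the paper intends: the paper labels the corollary an ``immediate consequence'' of the preceding discussion (the block form of $J$, the intertwining relations $\lambda^\odd(a) = \tilde{J}\rho^\even(a^*)\tilde{J}^*$ and $\rho^\odd(a) = \tilde{J}\lambda^\even(a^*)\tilde{J}^*$, and the splitting $\lrU_\alg{A}(\hs{H},\gamma) = \lrU_\alg{A}(\hs{H}^\even)\oplus\lrU_\alg{A}(\hs{H}^\odd)$), and you have simply written out that consequence: the block computation of $UJ = JU$ collapsing to $U^\odd = \tilde{J}U^\even\tilde{J}^*$, the converse verification via the intertwining identities, and the projection $U \mapsto U^\even$ as the isomorphism. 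Your care about antilinearity and the cancellation of $\eps$ between the two equivalent off-diagonal relations is exactly the right point to flag.
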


Finally, one can combine Proposition~\ref{real26reduction} with our observation concerning the uniqueness up to unitary equivalence of real structures of $KO$-dimension $2$ or $6 \bmod 8$ and earlier results on multiplicity matrices to obtain the following characterisation:

\begin{proposition}
 Let $n=2$ or $6 \bmod 8$. Then the map $\iota_n : \Bimod(\alg{A},n) \to \Bimod^\even(\alg{A})$ defined by $[(\hs{H},\gamma,J)] \mapsto ([\hs{H},\gamma])$ is injective, and
\begin{equation}\begin{split}
 (\mult^\even \circ \iota_n)(\Bimod(\alg{A},n)) &= \{(m^\even,m^\odd) \in M_S(\ZO)^2 \mid m^\odd = (m^\even)^T\}\\ &\cong M_S(\ZO).
\end{split}\end{equation}
\end{proposition}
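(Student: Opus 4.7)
The plan is to read off this proposition as an immediate corollary of Proposition~\ref{real26reduction} together with the explicit observations made in its proof, namely the formula $J = \bigl(\begin{smallmatrix} 0 & \eps\tilde{J}^*\\ \tilde{J} & 0\end{smallmatrix}\bigr)$ and the intertwining relations $\lambda^\odd(a) = \tilde{J}\rho^\even(a^*)\tilde{J}^*$, $\rho^\odd(a) = \tilde{J}\lambda^\even(a^*)\tilde{J}^*$. I will handle the three claims — injectivity of $\iota_n$, the explicit image, and the monoid isomorphism with $M_S(\ZO)$ — in turn.

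For injectivity, suppose $(\hs{H},\gamma,J)$ and $(\hs{H}^\prime,\gamma^\prime,J^\prime)$ are real $\alg{A}$-bimodules of $KO$-dimension $n$ with $[(\hs{H},\gamma)] = [(\hs{H}^\prime,\gamma^\prime)]$ in $\Bimod^\even(\alg{A})$. I pick a unitary equivalence $U \in \lrU_\alg{A}(\hs{H},\hs{H}^\prime;\gamma)$, so that $UJU^*$ is another real structure of $KO$-dimension $n$ on $(\hs{H}^\prime,\gamma^\prime)$. The paragraph preceding Proposition~\ref{real26reduction} established that real structures of $KO$-dimension $2$ or $6 \bmod 8$ on a fixed even bimodule are unique up to unitary equivalence; hence there exists $K \in \lrU_\alg{A}(\hs{H}^\prime,\gamma^\prime)$ with $J^\prime = K(UJU^*)K^*$, and then $KU$ is a unitary equivalence of real bimodules, giving $[(\hs{H},\gamma,J)] = [(\hs{H}^\prime,\gamma^\prime,J^\prime)]$.

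For the image, the inclusion ``$\subseteq$'' is already in the paper: the proof of Proposition~\ref{real26reduction} observes that $\mult[\hs{H}^\odd] = \mult[\hs{H}^\even]^T$, since the antiunitary $\tilde{J}$ conjugates the left action on $\hs{H}^\even$ into the right action on $\hs{H}^\odd$ and vice versa, which at the level of the decomposition of Proposition~\ref{oddmult} interchanges the roles of the left- and right-indices $\alpha$ and $\beta$. For the reverse inclusion, I invoke that $C_n$ is an isomorphism (Proposition~\ref{real26reduction}), so for any $m \in M_S(\ZO)$ the odd bimodule $\hs{H}_m$ (with $[\hs{H}_m] = \bimod(m)$) is $[\hs{H}^\even]$ for some real bimodule of $KO$-dimension $n$, whose pair of multiplicity matrices is then $(m, m^T)$. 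Thus $(\mult^\even \circ \iota_n)(\Bimod(\alg{A},n))$ is exactly $\{(m^\even,m^\odd) \in M_S(\ZO)^2 \mid m^\odd = (m^\even)^T\}$.

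Finally, the isomorphism of monoids with $M_S(\ZO)$ is given by projection onto the first coordinate, $(m,m^T) \mapsto m$, which is clearly a set-theoretic bijection onto $M_S(\ZO)$; it is a monoid morphism because direct sum of bimodules translates to componentwise addition of multiplicity matrices, and $(m_1 + m_2)^T = m_1^T + m_2^T$. The main conceptual obstacle here would have been constructing a real structure of $KO$-dimension $n$ on a bimodule of the form $\hs{V} \oplus \overline{\hs{V}}$ with compatible grading and intertwining relations, but this work has already been carried out in Proposition~\ref{real26reduction}, so the present proposition amounts to nothing more than repackaging.
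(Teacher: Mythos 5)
Your proof is correct and takes essentially the same route the paper gestures at: the paper's text immediately preceding the proposition says to combine Proposition~\ref{real26reduction}, the uniqueness of real structures of $KO$-dimension $2$ or $6 \bmod 8$ established just before it, and the observation $\mult[\hs{H}^\odd] = \mult[\hs{H}^\even]^T$, and your argument is precisely the natural unpacking of that sketch. The injectivity argument via conjugating $J$ by a unitary equivalence of even bimodules and then applying the uniqueness-up-to-unitary-equivalence of real structures is the same pattern the paper uses explicitly for Proposition~\ref{real17mult}, so nothing is out of line here.
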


Once more, it follows that
\[
 \Bimod_q(\alg{A},n) := \iota_n^{-1}(\Bimod^\even_q(\alg{A})),
\]
is the set of all equivalence classes of quasi-orientable real $\alg{A}$-bimodules of $KO$-dimension $n \bmod 8$, for which we can again obtain a characterisation in terms of signed multiplicity matrices:

\begin{corollary}
 Let $n=2$ or $6 \bmod 8$. Then
\begin{equation}
 (\mult_q \circ \iota_n)(\Bimod_q(\alg{A},n)) = \{m \in M_S(\ring{Z}) \mid m^T = -m\}.
\end{equation}
\end{corollary}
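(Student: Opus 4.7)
The plan is to combine the preceding proposition, which characterises $(\mult^\even \circ \iota_n)(\Bimod(\alg{A},n))$ for $n = 2,6 \bmod 8$ as the pairs $(m^\even,m^\odd)$ with $m^\odd = (m^\even)^T$, with the quasi-orientability condition that $\supp(m^\even) \cap \supp(m^\odd) = \emptyset$, and then translate the resulting description into the signed multiplicity matrix $\mu = m^\even - m^\odd$.

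For the forward inclusion, suppose $[(\hs{H},\gamma,J)] \in \Bimod_q(\alg{A},n)$, so $[(\hs{H},\gamma)] \in \Bimod^\even_q(\alg{A})$ and the pair of multiplicity matrices satisfies both $m^\odd = (m^\even)^T$ and $\supp(m^\even) \cap \supp(m^\odd) = \emptyset$. Then $\mu = m^\even - m^\odd = m^\even - (m^\even)^T$, which is manifestly antisymmetric.

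For the reverse inclusion, given any $\mu \in M_S(\ring{Z})$ with $\mu^T = -\mu$, define $m^\even$ and $m^\odd$ in $M_S(\ZO)$ as the entrywise positive and negative parts of $\mu$, i.e.\ $(m^\even)_{\alpha\beta} = \max(\mu_{\alpha\beta},0)$ and $(m^\odd)_{\alpha\beta} = \max(-\mu_{\alpha\beta},0)$. Antisymmetry immediately gives $\supp(m^\even) \cap \supp(m^\odd) = \emptyset$, since at each index pair at most one of $\mu_{\alpha\beta}$, $-\mu_{\alpha\beta}$ is positive; and $(m^\odd)_{\alpha\beta} = \max(-\mu_{\alpha\beta},0) = \max(\mu_{\beta\alpha},0) = (m^\even)_{\beta\alpha}$, so $m^\odd = (m^\even)^T$. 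By the preceding proposition there is then a real $\alg{A}$-bimodule $(\hs{H},\gamma,J)$ of $KO$-dimension $n \bmod 8$ realising this pair, which lies in $\Bimod_q(\alg{A},n)$ by construction and has $\mu_q[(\hs{H},\gamma,J)] = m^\even - m^\odd = \mu$.

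This argument is almost entirely bookkeeping, so I do not expect any serious obstacle; the only step requiring a moment's attention is verifying that the $m^\even,m^\odd$ obtained from the positive and negative parts of $\mu$ genuinely satisfy both the quasi-orientability disjoint-support condition and the transpose condition simultaneously, which is exactly where the antisymmetry of $\mu$ is used twice.
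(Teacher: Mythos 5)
Your proof is correct and is exactly the intended argument: the paper leaves this corollary to the reader as an immediate consequence of the preceding proposition (giving $m^\odd = (m^\even)^T$), the characterisation of quasi-orientability by disjoint supports, and the definition $\mu = m^\even - m^\odd$. Taking the entrywise positive and negative parts of an antisymmetric $\mu$ in the reverse direction, and checking that antisymmetry yields both disjoint supports and the transpose relation, is precisely the bookkeeping the paper has in mind.
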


\subsubsection{$S^0$-real bimodules of even $KO$-dimension}

Let us now characterise quasi-orientability, orientability and Poincar{\'e} duality for an even $KO$-dimensional $S^0$-real $\alg{A}$-bimodule $(\hs{H},\gamma,J,\epsilon)$ by means of suitable conditions on $(\hs{H}_i,\gamma_i)$ expressible entirely in terms of the pair of multiplicity matrices of $(\hs{H}_i,\gamma_i)$

We begin by considering quasi-orientability:

\begin{proposition}\label{s0quasiorient}
  Let $(\hs{H},\gamma,J,\epsilon)$ be an $S^0$-real $\alg{A}$-bimodule of even $KO$-dimension $n \bmod 8$. Then $(\hs{H},\gamma)$ is quasi-orientable if and only if $(\hs{H}_i,\gamma_i)$ is quasi-orientable and
\[
 \begin{cases}
  \supp(m^\even_i) \cap \supp((m^\odd_i)^T) = \emptyset &\text{if $n=0$, $4$,}\\
  \supp(m^\even_i) \cap \supp((m^\even_i)^T) = \supp(m^\odd_i) \cap \supp((m^\odd_i)^T) = \emptyset &\text{if $n=2$, $6$,}
 \end{cases}
\]
for $(m^\even_i,m^\odd_i)$ the multiplicity matrices of $(\hs{H}_i,\gamma_i)$, in which case, if $\mu$ and $\mu_i = m^\even_i - m^\odd_i$ are the signed multiplicity matrices of $(\hs{H},\gamma)$ and $(\hs{H}_i,\gamma_i)$, respectively, then
\begin{equation}
 \mu = \mu_i + \epspp \mu_i^T.
\end{equation}
\end{proposition}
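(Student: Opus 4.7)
The plan is to decompose $(\hs{H},\gamma)$ according to the $S^0$-real structure, identify $\hs{H}_{-i}$ with the ``conjugate'' of $\hs{H}_i$ so as to transpose multiplicity matrices, and then apply the quasi-orientability support criterion established earlier.

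First I would use $\hs{H} = \hs{H}_i \oplus \hs{H}_{-i}$, together with the fact that $\gamma$ restricts to $\gamma_{\pm i}$ on $\hs{H}_{\pm i}$, to get $\hs{H}^\even = \hs{H}_i^\even \oplus \hs{H}_{-i}^\even$ and $\hs{H}^\odd = \hs{H}_i^\odd \oplus \hs{H}_{-i}^\odd$ as $\alg{A}$-bimodules, so that the pair of multiplicity matrices of $(\hs{H},\gamma)$ is
\[
 m^\even = \mult[\hs{H}_i^\even] + \mult[\hs{H}_{-i}^\even], \qquad m^\odd = \mult[\hs{H}_i^\odd] + \mult[\hs{H}_{-i}^\odd].
\]
The task is thus to compute the multiplicity matrices of $\hs{H}_{-i}^\even$ and $\hs{H}_{-i}^\odd$ in terms of those of $\hs{H}_i$.

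Next I would show that $\mult[\hs{H}_{-i}] = \mult[\hs{H}_i]^T$ by means of $\tilde{J}: \hs{H}_i \to \hs{H}_{-i}$. Because $\lambda_{-i}(a) = \tilde{J}\rho_i(a^*)\tilde{J}^*$ and $\rho_{-i}(a) = \tilde{J}\lambda_i(a^*)\tilde{J}^*$, conjugation by $\tilde{J}$ interchanges left and right isotypic decompositions, while the antiunitarity composed with the $*$-involution produces a double conjugation that leaves the labels $\alpha,\beta \in \spec{\alg{A}}$ intact (an argument in the spirit of the proof of Lemma~\ref{real17a}, applied to the projectors $P^L_\alpha$ and $P^R_\alpha$ on $\hs{H}_{-i}$ pulled back through $\tilde{J}$). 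Taking the grading into account via $\gamma_{-i} = \epspp\tilde{J}\gamma_i\tilde{J}^*$, the antiunitary $\tilde{J}$ is even when $\epspp = +1$ and odd when $\epspp = -1$, giving
\[
 \mult[\hs{H}_{-i}^\even] = \begin{cases} (m_i^\even)^T, & \epspp = +1,\\ (m_i^\odd)^T, & \epspp = -1, \end{cases}
 \qquad
 \mult[\hs{H}_{-i}^\odd] = \begin{cases} (m_i^\odd)^T, & \epspp = +1,\\ (m_i^\even)^T, & \epspp = -1. \end{cases}
\]

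Finally I would feed these into the quasi-orientability criterion $\supp(m^\even) \cap \supp(m^\odd) = \emptyset$. Since all matrices have non-negative entries, $\supp(m^\even)$ is the union $\supp(\mult[\hs{H}_i^\even]) \cup \supp(\mult[\hs{H}_{-i}^\even])$, and similarly for $m^\odd$, so disjointness splits into four pairwise intersections. For $\epspp = +1$, two of them are quasi-orientability of $(\hs{H}_i,\gamma_i)$ and its transpose (which is automatic once the original holds), and the other two collapse to the single condition $\supp(m_i^\even) \cap \supp((m_i^\odd)^T) = \emptyset$. For $\epspp = -1$, two pairs give quasi-orientability of $(\hs{H}_i,\gamma_i)$ while the remaining two yield $\supp(m_i^\even)\cap\supp((m_i^\even)^T)=\emptyset$ and $\supp(m_i^\odd)\cap\supp((m_i^\odd)^T)=\emptyset$. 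Taking the difference $m^\even - m^\odd$ in either case yields $\mu = \mu_i + \epspp \mu_i^T$ immediately.

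The main obstacle is the bookkeeping in the second step: correctly verifying that conjugation by the antiunitary $\tilde{J}$ merely transposes the multiplicity matrix (rather than also relabelling complex-linear irreducibles by their conjugates) and correctly tracking whether the even/odd split on $\hs{H}_{-i}$ matches or is flipped relative to that on $\hs{H}_i$ according to the sign $\epspp$. Once these identifications are locked in, the support analysis and the signed-multiplicity formula are straightforward matrix manipulations.
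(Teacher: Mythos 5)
Your proof is correct and follows essentially the same route as the paper: compute the multiplicity matrices of $(\hs{H},\gamma)$ by decomposing $\hs{H} = \hs{H}_i \oplus \hs{H}_{-i}$, use $\tilde J$ to show $\mult[\hs{H}_{-i}]$ is the transpose of $\mult[\hs{H}_i]$ with the even/odd split preserved or flipped according to $\epspp$, then expand $\supp(m^\even) \cap \supp(m^\odd)$ into four pieces and identify them. The one subtlety you flag — that $\tilde J$ sends $(\hs{H}_i)_{\alpha\beta}$ to $(\hs{H}_{-i})_{\beta\alpha}$ without replacing $\alpha$, $\beta$ by conjugate labels — is exactly the content of $J P_{\alpha\beta} = P_{\beta\alpha} J$ already established in the proof of Lemma~\ref{real17a}, and restricting that relation to $\hs{H}_i$ (using that $\epsilon$ commutes with $P_{\alpha\beta}$) gives precisely what you need; so the gap you worry about is in fact closed by results the paper has already proved.
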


\begin{proof}
 First, let $(m^\even,m^\odd)$ and $(m^\even_i,m^\odd_i)$ denote the pairs of multipicity matrices of $(\hs{H},\gamma)$ and $(\hs{H}_i,\gamma_i)$, respectively. It then follows that 
\begin{align*}
 m^\even &=
 \begin{cases}
  m^\even_i + (m^\even_i)^T &\text{if $n = 0$, $4$,}\\
  m^\even_i + (m^\odd_i)^T &\text{if $n = 2$, $6$;}
 \end{cases}\\
 m^\odd &=
 \begin{cases}
  m^\odd_i + (m^\odd_i)^T &\text{if $n = 0$, $4$,}\\
  m^\odd_i + (m^\even_i)^T &\text{if $n=2$, $6$.}
 \end{cases}
\end{align*}

Thus $\supp(m^\even) = \supp(m^\even_i) \cup S^\even, \quad \supp(m^\odd) = \supp(m^\odd_i) \cup S^\odd$, where
\[
 S^\even =
\begin{cases}
 \supp((m^\even_i)^T) &\text{if $n = 0$, $4$,}\\
 \supp((m^\odd_i)^T) &\text{if $n = 2$, $6$;}
\end{cases} \quad
 S^\odd =
\begin{cases}
 \supp((m^\odd_i)^T) &\text{if $n = 0$, $4$,}\\
 \supp((m^\even_i)^T) &\text{if $n = 2$, $6$.}
\end{cases}
\]

Then,
\begin{multline*}
 \supp(m^\even) \cap \supp(m^\odd) = (\supp(m^\even_i) \cap \supp(m^\odd_i)) \cup (S^\even \cap \supp(m^\odd_i)) \\ \cup (\supp(m^\even_i) \cap S^\odd) \cup (S^\even \cap S^\odd),
\end{multline*}
so that $(\hs{H},\gamma)$ is quasi-orientable if and only if $(\hs{H}_i,\gamma_i)$ is quasi-orientable and
\[
 (S^\even \cap \supp(m^\odd_i)) \cup (\supp(m^\even_i) \cap S^\odd) = \emptyset,
\]
as required. 

Finally, if $\mu = m^\even - m^\odd$ and $\mu_i = m^\even_i - m^\odd_i$ are the signed multiplicity matrices of $(\hs{H},\gamma)$ and $(\hs{H}_i,\gamma_i)$, respectively, then the relations amongst $m^\even$, $m^\odd$, $m^\even_i$, and $m^\odd_i$ given at the beginning immediately yield the equation $\mu = \mu_i + \epspp \mu_i^T$.
\end{proof}

Let us now turn to orientability:

\begin{proposition}
 Let $(\hs{H},\gamma,J,\epsilon)$ be a quasi-orientable $S^0$-real $\alg{A}$-bimodule of even $KO$-dimension $n \bmod 8$. Then $(\hs{H},\gamma)$ is orientable if and only if $(\hs{H}_i,\gamma_i)$ is orientable and, if $n = 2$ or $6 \bmod 8$, for all $j \in \{1,\dotsc,N\}$ such that $\field{K}_j = \field{C}$, 
\begin{equation}
 (\mu_i)_{\rep{n}_j \crep{n}_j} = (\mu_i)_{\crep{n}_j \rep{n}_j},
\end{equation}
where $\mu_i$ is the signed multiplicity matrix of $(\hs{H}_i,\gamma_i)$.
\end{proposition}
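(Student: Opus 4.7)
The plan is to use Proposition~\ref{s0orient} for the forward direction, which immediately reduces orientability of $(\hs{H},\gamma)$ to a condition containing orientability of $(\hs{H}_i,\gamma_i)$, and then to apply the sign criterion of Proposition~\ref{converseorientable} to the signed multiplicity matrix $\mu = \mu_i + \epspp \mu_i^T$ provided by Proposition~\ref{s0quasiorient}.

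For the forward implication, assuming $(\hs{H},\gamma)$ is orientable, Proposition~\ref{s0orient} supplies $a_j, b_j \in \alg{A}$ with $\gamma_i = \sum_j \lambda_i(a_j) \rho_i(b_j)$, so $(\hs{H}_i,\gamma_i)$ is orientable. When $n = 2, 6 \bmod 8$, the matrix $\mu = \mu_i - \mu_i^T$ is antisymmetric, so condition~(3) of Proposition~\ref{converseorientable} applied with $k = l = j$ and $\field{K}_j = \field{C}$ yields $0 \leq \mu_{\rep{n}_j \crep{n}_j} \mu_{\crep{n}_j \rep{n}_j} = -\mu_{\rep{n}_j \crep{n}_j}^2$, forcing $(\mu_i)_{\rep{n}_j \crep{n}_j} = (\mu_i)_{\crep{n}_j \rep{n}_j}$.

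For the converse, I would verify the three sign conditions of Proposition~\ref{converseorientable} for $\mu$. Orientability of $(\hs{H}_i,\gamma_i)$ gives the three analogous conditions for $\mu_i$, while the quasi-orientability hypothesis on $(\hs{H},\gamma)$ (together with quasi-orientability of $(\hs{H}_i,\gamma_i)$) translates, via Proposition~\ref{s0quasiorient}, into the pointwise relation $\epspp (\mu_i)_{\alpha\beta}(\mu_i)_{\beta\alpha} \geq 0$ for all $\alpha, \beta \in \spec{\alg{A}}$. Combining these pairwise sign constraints, for each off-diagonal instance of conditions~(1), (2), or~(3) for $\mu$ the four relevant entries of $\mu_i$ share a common sign (modulo zero entries), so each of the two entries of $\mu$ appearing in the product is a sum of two such $\mu_i$-entries of that common sign and the required product is nonnegative. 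The main obstacle is the diagonal case $k = l = j$ in condition~(3) when $n = 2, 6$: antisymmetry of $\mu$ forces the relevant product to equal $-\mu_{\rep{n}_j \crep{n}_j}^2$, and the extra hypothesis $(\mu_i)_{\rep{n}_j \crep{n}_j} = (\mu_i)_{\crep{n}_j \rep{n}_j}$ is precisely what guarantees $\mu_{\rep{n}_j \crep{n}_j} = 0$. For $n = 0, 4$, symmetry of $\mu$ makes the corresponding product a square, so no additional hypothesis is needed.
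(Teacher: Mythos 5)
Your forward direction coincides with the paper's argument and is correct: Proposition~\ref{s0orient} immediately gives orientability of $(\hs{H}_i,\gamma_i)$, and antisymmetry of $\mu$ together with condition~(3) of Proposition~\ref{converseorientable} forces $\mu_{\rep{n}_j \crep{n}_j} = 0$, i.e.\ $(\mu_i)_{\rep{n}_j\crep{n}_j} = (\mu_i)_{\crep{n}_j\rep{n}_j}$.

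Your converse, by contrast, takes a genuinely different route from the paper's: you attempt to verify the three entrywise sign conditions of Proposition~\ref{converseorientable} directly for the matrix $\mu = \mu_i + \epspp\mu_i^T$, whereas the paper reduces, via Propositions~\ref{s0orient} and~\ref{converseorientable}, to the pair of equalities $\gamma_i = \sum_{k,l}\lambda_i(\sgn(\widehat{\mu}_{kl})e_k)\rho_i(e_l) = \epspp\sum_{k,l}\lambda_i(e_l)\rho_i(\sgn(\widehat{\mu}_{kl})e_k)$ and reasons at the level of the $N\times N$ block matrix $\widehat{\mu}$, where the only obstruction is a diagonal sign constraint.

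There is a genuine gap in the sign argument of your converse. The claim that the four relevant entries of $\mu_i$ "share a common sign (modulo zero entries)" rests on a cycle of pairwise sign-agreement constraints, but the relation $xy\geq 0$ is not transitive once zeros are allowed, and a cycle of such constraints does not force a common sign. Concretely, for an off-diagonal instance of condition~(1) with $\field{K}_j = \field{C}$ and fixed $\beta$, write $a = (\mu_i)_{\rep{n}_j\beta}$, $b = (\mu_i)_{\crep{n}_j\beta}$, $c = (\mu_i)_{\beta\rep{n}_j}$, $d = (\mu_i)_{\beta\crep{n}_j}$. The constraints you invoke are $ab\geq 0$ and $cd\geq 0$ from orientability of $(\hs{H}_i,\gamma_i)$, and $\epspp ac\geq 0$ and $\epspp bd\geq 0$ from quasi-orientability of $(\hs{H},\gamma)$; the goal is $(a+\epspp c)(b+\epspp d)\geq 0$. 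With $\epspp = -1$ the assignment $(a,b,c,d) = (1,0,0,1)$ satisfies all four constraints, yet $(a-c)(b-d) = -1 < 0$. So the step as written is not valid, and some additional structural input beyond the pairwise sign constraints you list is required to close the argument --- for instance the block-level bookkeeping through $\widehat{\mu}$ that the paper uses in place of your entrywise approach.
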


\begin{proof}
 Let $\mu$ be the signed multiplicity matrix of $(\hs{H},\gamma)$. Propositions~\ref{s0orient} and~\ref{converseorientable} together imply that $(\hs{H},\gamma,J,\epsilon)$ is orientable if and only if
\[
 \gamma_i = \sum_{k,l=1}^N \lambda_i(\sgn(\widehat{\mu}_{kl})e_k)\rho_i(e_l) = \epspp \sum_{k,l=1}^N \lambda_i(e_l)\rho_i(\sgn(\widehat{\mu}_{kl})e_k),
\]
and by considering individual components $(\gamma_i)_{\alpha\beta}$, one can easily check that this in turn holds if and only if $(\hs{H}_i,\gamma_i)$ is orientable and for all $k \in \{1,\dotsc,N\}$,
\[
 \sgn(\widehat{\mu}_{kk}) = \epspp \sgn(\widehat{\mu}_{kk}).
\]

This last condition is trivial when $\epspp = 1$, \ie when $n = 0$ or $4 \bmod 8$, so let us suppose instead that $n = 2$ or $6 \bmod 8$, so that $\epspp = -1$. If $(\hs{H},\gamma)$ is orientable, then, by the above discussion, $(\hs{H}_i,\gamma_i)$ is orientable and the diagonal entries of $\widehat{\mu}$ vanish, which in turn implies by Proposition~\ref{converseorientable} that for each $l \in \{1,\dotsc,N\}$ and all $\alpha$, $\beta \in \spec{M_{k_l}(\field{K}_l)}$, $\mu_{\alpha\beta} = 0$. By antisymmetry of $\mu$, this is equivalent to having, for all $l \in \{1,\dotsc,N\}$ such that $\field{K}_l = \field{C}$, $\mu_{\rep{n}_l \crep{n}_l} = 0$, or equivalently,
\[
 (\mu_i)_{\rep{n}_j \crep{n}_j} = (\mu_i)_{\crep{n}_j \rep{n}_j},
\]
where $\mu_i$ is the signed multiplicity matrix of $(\hs{H}_i,\gamma_i)$. On the other hand, if $(\hs{H}_i,\gamma_i)$ is orientable and this condition on $\mu_i$ holds, then $\mu$ certainly satisfies the above condition, so that $(\hs{H},\gamma)$ is indeed orientable.
\end{proof}

Finally, let us consider Poincar{\'e} duality.

\begin{proposition}
 Let $(\hs{H},\gamma,J,\epsilon)$ be an $S^0$-real $\alg{A}$-bimodule of even $KO$-dimension $n \bmod 8$, let $(m^\even_i,m^\odd_i)$ denote the multiplicity matrices of $(\hs{H}_i,\gamma_i)$, and let $\cap$ denote the matrix of the intersection form of $(\hs{H},\gamma)$. Finally, let $\mu_i = m^\even_i - m^\odd_i$. Then
\begin{equation}
 \cap_{kl} = \tau_k \tau_l (\widehat{\mu_i} + \epspp \widehat{\mu_i}^T)_{kl},
\end{equation}
so that $(\hs{H},\gamma)$ satisfies Poincar{\'e} duality if and only if $\widehat{\mu_i} + \epspp \widehat{\mu_i}^T$ is non-degenerate.
\end{proposition}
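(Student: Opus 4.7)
The plan is to combine the two earlier results that have already done most of the work: Proposition~\ref{s0poincare}, which expresses the intersection form of $(\hs{H},\gamma)$ in terms of that of $(\hs{H}_i,\gamma_i)$ via $\form{p,q} = \form{p,q}_i + \epspp\form{q,p}_i$, and Proposition~\ref{intform}, which gives the matrix of the intersection form of an even bimodule directly in terms of its pair of multiplicity matrices. Nothing else should be needed.

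First, apply Proposition~\ref{intform} to the even $\alg{A}$-bimodule $(\hs{H}_i,\gamma_i)$, whose pair of multiplicity matrices is $(m_i^\even,m_i^\odd)$. Denoting the matrix of its intersection form by $\cap^{(i)}$, this gives
\[
 \cap^{(i)}_{kl} = \tau_k \tau_l (\widehat{m_i^\even} - \widehat{m_i^\odd})_{kl} = \tau_k \tau_l \widehat{\mu_i}_{kl},
\]
using that $\widehat{\,\cdot\,}$ is linear and hence $\widehat{\mu_i} = \widehat{m_i^\even} - \widehat{m_i^\odd}$.

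Next, evaluate $\cap_{kl} = \form{[p_k],[p_l]}$ using Proposition~\ref{s0poincare} with $p = [p_k]$ and $q = [p_l]$, then substitute the expression for $\cap^{(i)}$ just obtained:
\[
 \cap_{kl} = \cap^{(i)}_{kl} + \epspp\, \cap^{(i)}_{lk} = \tau_k\tau_l \widehat{\mu_i}_{kl} + \epspp\, \tau_l\tau_k \widehat{\mu_i}_{lk} = \tau_k\tau_l (\widehat{\mu_i} + \epspp\widehat{\mu_i}^T)_{kl},
\]
which is the claimed formula. For the Poincar\'e duality statement, observe that since each $\tau_k \in \{1,2\}$ is non-zero, the matrix $\cap$ is obtained from $\widehat{\mu_i} + \epspp \widehat{\mu_i}^T$ by conjugation with the invertible diagonal matrix $\diag(\tau_1,\dotsc,\tau_N)$ on either side, so non-degeneracy of one is equivalent to non-degeneracy of the other.

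There is no genuine obstacle here; the content is already in Propositions~\ref{s0poincare} and~\ref{intform}, and the only care required is to keep track of the index order in the second summand coming from $\epspp \form{q,p}_i$, which is what produces the transpose. One could alternatively prove the formula directly by decomposing $\gamma = \gamma_i \oplus \gamma_{-i}$, substituting $\gamma_{-i} = \epspp \tilde J \gamma_i \tilde J^*$ and $\lambda_{-i}(p_k) = \tilde J \rho_i(p_k) \tilde J^*$, $\rho_{-i}(p_l) = \tilde J \lambda_i(p_l)\tilde J^*$ into $\tr(\gamma\lambda(p_k)\rho(p_l))$, and using cyclicity of the trace together with the integrality of the intersection form (as in the proof of Proposition~\ref{s0poincare}); but routing through the already-established Propositions~\ref{s0poincare} and~\ref{intform} is cleaner.
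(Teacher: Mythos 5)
Your proof is correct and takes the same route as the paper: combine Proposition~\ref{s0poincare}, which gives $\cap = \cap_i + \epspp\cap_i^T$, with Proposition~\ref{intform} applied to $(\hs{H}_i,\gamma_i)$. The only difference is that you spell out the index bookkeeping and the $\diag(\tau_1,\dotsc,\tau_N)$ conjugation, which the paper leaves implicit.
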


\begin{proof}
 By Proposition~\ref{s0poincare}, $\cap = \cap_i + \epspp \cap_i^T$ for $\cap_i$ the matrix of the intersection form of $(\hs{H}_i,\gamma_i)$, which, together with Proposition~\ref{intform}, yields the desired result.
\end{proof}

\subsection{Bimodules in the Chamseddine--Connes--Marcolli model}

To illustrate the structure theory outlined thus far, let us apply it to the construction of the finite spectral triple of the NCG Standard Model given by Chamseddine, Connes and Marcolli~\cite{CCM07}*{\S\S 2.1, 2.2, 2.4} (cf.~also~\cite{CM08}*{\S 1.13}).

Let $\alg{A}_{LR} = \field{C} \oplus \field{H}_L \oplus \field{H}_R \oplus M_3(\field{C})$, where the labels $L$ and $R$ serve to distinguish the two copies of $\hs{H}$; we can therefore write $\spec{\alg{A}_{LR}} = \{\rep{1},\crep{1},\rep{2}_L,\rep{2}_R,\rep{3},\crep{3}\}$ without ambiguity. Now, let $(\hs{M}_F,\gamma_F,J_F)$ be the orientable real $\alg{A}_{LR}$-bimodule of $KO$-dimension $6 \bmod 8$ with signed multiplicity matrix
\[
 \mu =
\begin{pmatrix}
 0 & 0 & -1 & 1 & 0 & 0\\
 0 & 0 & 0 & 0 & 0 & 0\\
 1 & 0 & 0 & 0 & 1 & 0\\
 -1 & 0 & 0 & 0 & -1 & 0\\
 0 & 0 & -1 & 1 & 0 & 0\\
 0 & 0 & 0 & 0 & 0 & 0
\end{pmatrix}.
\]
This bimodule is, in fact, an $S^0$-real bimodule for $\epsilon_F = \lambda(-1,1,1,-1)$; $\hs{E} = (\hs{M}_F)_i$ is then the orientable even $\alg{A}_{LR}$-bimodule with signed multiplicity matrix
\[
 \mu_\hs{E} = 
\begin{pmatrix}
 0 & 0 & 0 & 0 & 0 & 0\\
 0 & 0 & 0 & 0 & 0 & 0\\
 1 & 0 & 0 & 0 & 1 & 0\\
 -1 & 0 & 0 & 0 & -1 & 0\\
 0 & 0 & 0 & 0 & 0 & 0\\
 0 & 0 & 0 & 0 & 0 & 0
\end{pmatrix}.
\]
Note, however, that neither $\hs{M}_F$ nor $\hs{E}$ satisfies Poincar{\'e} duality, as
\[
 \hat{\mu} =
\begin{pmatrix}
 0 & -1 & 1 & 0\\
 1 & 0 & 0 & 1\\
 -1 & 0 & 0 & -1\\
 0 & -1 & 1 & 0
\end{pmatrix}, \quad
 \widehat{\mu_\hs{E}} =
\begin{pmatrix}
 0 & 0 & 0 & 0\\
 1 & 0 & 0 & 1\\
 -1 & 0 & 0 & -1\\
 0 & 0 & 0 & 0
\end{pmatrix}
\]
are both clearly degenerate; the intersection forms of $\hs{M}_F$ and $\hs{E}$ are given by the matrices $\cap = 2 \hat{\mu}$ and $\cap_\hs{E} = 2 \widehat{\mu_\hs{E}}$, respectively. 

In order to introduce $N$ generations of fermions and anti-fermions, one now considers the real $\alg{A}$-bimodule $\hs{H}_F := (\hs{M}_F)^{\oplus N}$; by abuse of notation, $\gamma_F$, $J_F$ and $\epsilon_F$ now also denote the relevant structure operators on $\hs{H}_F$. In terms of multiplicity matrices and intersection forms, the sole difference from our discussion of $\hs{M}_F$ is that all matrices are now multiplied by $N$.

Now, let $\alg{A}_F = \field{C} \oplus \hs{H} \oplus M_3(\field{C})$, which we consider as a subalgebra of $\alg{A}_{LR}$ by means of the embedding
\[
 (\zeta,q,m) \mapsto \left(\zeta,q,\begin{pmatrix}\lambda & 0\\0 & \overline{\lambda}\end{pmatrix},m\right);
\]
just as we could for $\alg{A}_{LR}$, we can write $\spec{\alg{A}_F} = \{\rep{1},\crep{1},\rep{2},\rep{3},\crep{3}\}$ without ambiguity. We can therefore view $\hs{H}_F$ as a real $\alg{A}_F$-bimodule of $KO$-dimension $6 \bmod 8$, whose pair of multiplicity matrices $(m^\even,m^\odd)$ is then given by
\[
 m^\even = N
\begin{pmatrix}
1 & 1 & 0 & 0 & 0\\
0 & 0 & 0 & 0 & 0\\
1 & 0 & 0 & 1 & 0\\
1 & 1 & 0 & 0 & 0\\
0 & 0 & 0 & 0 & 0
\end{pmatrix}, \quad
m^\odd = N
\begin{pmatrix}
1 & 0 & 1 & 1 & 0\\
1 & 0 & 0 & 1 & 0\\
0 & 0 & 0 & 0 & 0\\
0 & 0 & 1 & 0 & 0\\
0 & 0 & 0 & 0 & 0
\end{pmatrix};
\]
the essential observation is that the irreducible representation $\rep{2}_R$ of $\alg{A}_{LR}$ corresponds to the representation $\rep{1}\oplus\crep{1}$ of $\alg{A}_F$, whilst $\rep{2}_L$, $\rep{3}$ and $\crep{3}$ correspond to $\rep{2}$, $\rep{3}$ and $\crep{3}$, respectively. 

Note that $\hs{H}_F$ now fails even to be quasi-orientable let alone orientable, with the sub-bimodule $(\hs{H}_F)_{\rep{1}\rep{1}}$ providing the obstruction, and even if we were to restore quasi-orientability by setting $(\hs{H}_F)_{\rep{1}\rep{1}} = 0$, $(\hs{H}_F)_{\rep{1}\crep{1}}$ and $(\hs{H}_F)_{\crep{1}\rep{1}}$ would still present an obstruction to orientability by Proposition~\ref{converseorientable}. Note also that $\hs{H}_F$ must necessarily fail to satisfy Poincar{\'e} duality, as the matrix $\cap_F$ of its intersection form is a $3 \times 3$ anti-symmetric matrix, and thus \latin{a priori} degenerate. Let us nonetheless compute $\cap_F$:
\[
 \widehat{m^\even} - \widehat{m^\odd} = 
 N \begin{pmatrix}
 2 & 0 & 0\\
 1 & 0 & 1\\
 2 & 0 & 0
\end{pmatrix} -
N \begin{pmatrix}
 2 & 1 & 2\\
 0 & 0 & 0\\
 0 & 1 & 0   
\end{pmatrix} =
N \begin{pmatrix}
 0 & -1 & -2\\
 1 & 0 & 1\\
 2 & -1 & 0
\end{pmatrix},
\]
and hence, by Proposition~\ref{intform},
\[
 \cap_F =
2N \begin{pmatrix}
 0 & -1 & -1\\
 1 & 0 & 1\\
 1 & -1 & 0
\end{pmatrix}.
\]

Finally, let us consider the $S^0$-real structure on $\hs{H}_F$ the $\alg{A}_F$-bimodule, inherited from $\hs{H}_F$ as an $\alg{A}_{LR}$-bimodule; we now denote $(\hs{H}_F)_i$ by $\hs{H}_f$. One still has that $\hs{H}_f = \hs{E}^{\oplus N}$, which is still orientable and thus specified by the signed multiplicity matrix
\[
 \mu_{f} = N
\begin{pmatrix}
 -1 & 0 & 0 & -1 & 0\\
 -1 & 0 & 0 & -1 & 0\\
 1 & 0 & 0 & 1 & 0\\
 0 & 0 & 0 & 0 & 0\\
 0 & 0 & 0 & 0 & 0
\end{pmatrix};
\]
the intersection form is then given by the matrix
\[
 \cap_{f} = 2N
\begin{pmatrix}
 -1 & 0 & -1\\
 1 & 0 & 1\\
 0 & 0 & 0
\end{pmatrix},
\]
so that $\hs{H}_f$ fails to satisfy Poincar{\'e} duality as an $\alg{A}_F$-bimodule.

\section{Dirac Operators and their Structure}

\subsection{The order one condition}

We now examine the structure of Dirac operators in detail. We will find it useful to begin with the study of operators between $\alg{A}$-bimodules (for fixed $\alg{A}$) satisfying a further generalisation of the order one condition. Thus, let $\alg{A}$ be a fixed real \Cstar-algebra, and let $\hs{H}_1$ and $\hs{H}_2$ be fixed $\alg{A}$-bimodules with multiplicity matrices $m_1$ and $m_2$, respectively. 

\begin{definition}
 We shall say that a map $T \in \bdd(\hs{H}_1,\hs{H}_2)$ satisfies the \term{generalised order one condition} if
\begin{equation}\label{genorderone}
  \forall a, b \in \alg{A}, \: (\lambda_2(a)T - T\lambda_1(a))\rho_1(b) = \rho_2(b)(\lambda_2(a)T - T\lambda_1(a)).
\end{equation}
\end{definition}

Note that if $\hs{H}_1 = \hs{H}_2$, then the generalised order one condition reduces to the usual order one condition on Dirac operators.

It is easy to check that the generalised order one condition is, in fact, equivalent to the following alternative condition:
\begin{equation}
 \forall a, b \in \alg{A}, \: (\rho_2(a)T - T\rho_1(a))\lambda_1(b) = \lambda_2(b)(\rho_2(a)T - T\rho_1(a)).
\end{equation}
Thus, the following are equivalent for $T \in \bdd(\hs{H}_1,\hs{H}_2)$:
\begin{enumerate}
 \item $T$ satisfies the generalised order one condition;
 \item For all $a \in \alg{A}$, $\lambda_2(a)T - T\lambda_1(a)$ is right $\alg{A}$-linear;
 \item For all $a \in \alg{A}$, $\rho_2(a)T - T\rho_1(a)$ is left $\alg{A}$-linear.
\end{enumerate}

Now, since the unitary group $\U(\alg{A})$ of $\alg{A}$ is a compact Lie group, let $\mu$ be the normalised bi-invariant Haar measure on $\U(\alg{A})$. 

\begin{lemma}~\label{decompproj}
Let $\hs{H}_1$ and $\hs{H}_2$ be $\alg{A}$-bimodules. Define operators $E_\lambda$ and $E_\rho$ on $\bdd^1_\alg{A}(\hs{H}_1,\hs{H}_2)$ by
\begin{equation}
 E_\lambda(T) := \int_{\U(\alg{A})} \ud\mu(u) \lambda_2(u)T\lambda_1(u^{-1}), \quad E_\rho(T) := \int_{\U(\alg{A})} \ud\mu(u) \rho_2(u^{-1})T\rho_1(u).
\end{equation}
Then $E_\lambda$ and $E_\rho$ are commuting idempotents such that
\[
 \im(E_\lambda) = \lbdd_\alg{A}(\hs{H}_1,\hs{H}_2), \quad \im(E_\rho) = \rbdd_\alg{A}(\hs{H}_1,\hs{H}_2),
\]
and
\[
 \ker(E_\lambda) = \im(\Id - E_\lambda) \subseteq \rbdd_\alg{A}(\hs{H}_1,\hs{H}_2), \quad \ker(E_\rho) = \im(\Id - E_\rho) \subseteq \lbdd_\alg{A}(\hs{H}_1,\hs{H}_2),
\]
while
\[
 \im(E_\lambda E_\rho) = \lrbdd_\alg{A}(\hs{H}_1,\hs{H}_2).
\]
\end{lemma}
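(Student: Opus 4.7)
The plan is to carry out averaging arguments over $\U(\alg{A})$, using bi-invariance of the Haar measure $\mu$ throughout, with the generalised order one condition entering in exactly one place. First I would verify that $E_\lambda$ and $E_\rho$ are well-defined on $\bdd^1_\alg{A}(\hs{H}_1,\hs{H}_2)$: for each fixed $u \in \U(\alg{A})$ the conjugation $T \mapsto \lambda_2(u)T\lambda_1(u^{-1})$ preserves the generalised order one condition by direct substitution (replace $a$ by $uau^{-1}$), and averaging preserves it.

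Next I would establish $\im(E_\lambda) = \lbdd_\alg{A}(\hs{H}_1,\hs{H}_2)$ together with idempotence of $E_\lambda$. One inclusion is immediate: if $T$ is left $\alg{A}$-linear, each integrand equals $T$, so $E_\lambda(T) = T$, which also yields $E_\lambda^2 = E_\lambda$ once the other inclusion is known. For the other inclusion, apply left-invariance of $\mu$ via the substitution $v \mapsto uv$:
\[
 \lambda_2(u) E_\lambda(T) = \int_{\U(\alg{A})} \lambda_2(uv)T\lambda_1(v^{-1}) \ud\mu(v) = E_\lambda(T)\lambda_1(u)
\]
for every $u \in \U(\alg{A})$. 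Because every element of a finite-dimensional real \Cstar-algebra is a real linear combination of unitaries (checked summand-by-summand from the Wedderburn decomposition, combining $2p-1$ for self-adjoint projections $p$ with rotation blocks in the skew-adjoint direction), this commutation extends to all $a \in \alg{A}$, so $E_\lambda(T) \in \lbdd_\alg{A}(\hs{H}_1,\hs{H}_2)$. The symmetric argument using right-invariance gives $\im(E_\rho) = \rbdd_\alg{A}(\hs{H}_1,\hs{H}_2)$ and idempotence of $E_\rho$.

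The substantive step is showing $\im(\Id - E_\lambda) \subseteq \rbdd_\alg{A}(\hs{H}_1,\hs{H}_2)$, where the generalised order one condition is used decisively. The key identity is
\[
 T - \lambda_2(u)T\lambda_1(u^{-1}) = \bigl(T\lambda_1(u) - \lambda_2(u)T\bigr)\lambda_1(u^{-1}).
\]
By condition~\eqref{genorderone} in its equivalent form that $\lambda_2(a)T - T\lambda_1(a)$ is right $\alg{A}$-linear for every $a \in \alg{A}$, the bracketed factor is right $\alg{A}$-linear; and $\lambda_1(u^{-1}) \in \rbdd_\alg{A}(\hs{H}_1)$ because left and right actions on a bimodule commute. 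Hence each integrand, and therefore the average $(\Id - E_\lambda)(T)$, is right $\alg{A}$-linear. The symmetric conclusion $\im(\Id - E_\rho) \subseteq \lbdd_\alg{A}(\hs{H}_1,\hs{H}_2)$ follows from the dual formulation of the generalised order one condition. Since $\Id - E_\lambda$ and $\Id - E_\rho$ are themselves idempotents, $\ker(E_\lambda) = \im(\Id - E_\lambda)$ and similarly for $E_\rho$.

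Finally, commutativity of $E_\lambda$ and $E_\rho$ reduces by Fubini to the pointwise identities $[\lambda_2(u),\rho_2(v^{-1})] = 0$ and $[\lambda_1(u^{-1}),\rho_1(v)] = 0$, which are built into the definition of a bimodule. For $\im(E_\lambda E_\rho) = \lrbdd_\alg{A}(\hs{H}_1,\hs{H}_2)$: if $T \in \lrbdd_\alg{A}(\hs{H}_1,\hs{H}_2)$ then each average fixes $T$; conversely, $E_\rho(T)$ is right $\alg{A}$-linear, and a brief calculation analogous to the second paragraph (pulling $\rho_2(b)$ past $\lambda_2(u)$ on the left and $\lambda_1(u^{-1})$ past $\rho_1(b)$ on the right, both by $[\lambda,\rho] = 0$) shows that $E_\lambda$ preserves right $\alg{A}$-linearity, so $E_\lambda E_\rho(T)$ lies in $\lrbdd_\alg{A}(\hs{H}_1,\hs{H}_2)$. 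I expect the main conceptual obstacle to be the third paragraph: identifying the precisely right rearrangement so that the generalised order one condition can be applied is the crux of the lemma, and motivates the whole formalism of $\bdd^1_\alg{A}$.
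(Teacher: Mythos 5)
Your proof is correct and follows essentially the same path as the paper's. The paper establishes idempotence and commutation of $E_\lambda$, $E_\rho$ by Fubini and translation-invariance of the Haar measure, derives $\lambda_2(u)E_\lambda(T) = E_\lambda(T)\lambda_1(u)$ by translation-invariance, extends this to all of $\alg{A}$ by citing the real analogue of the Russo--Dye theorem (convex hull of $\U(\alg{A})$ weakly dense in the unit ball), and identifies $\im(\Id - E_\lambda)$ with $\rbdd_\alg{A}$ via exactly the rearrangement $(\Id-E_\lambda)(T) = \int (T\lambda_1(u) - \lambda_2(u)T)\lambda_1(u^{-1})\,\ud\mu(u)$ that you single out as the crux. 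Your one deviation is replacing Russo--Dye by the more elementary observation that in a finite-dimensional real \Cstar-algebra every element is a real linear combination of unitaries; this is a fine shortcut in the finite-dimensional setting and avoids any density argument, though it requires a little care for the skew-adjoint part in each Wedderburn summand (writing diagonal blocks of different weights as sums of genuine orthogonals, not just padded rotations). Your closing paragraph on $\im(E_\lambda E_\rho)$ spells out directly what the paper compresses into the standard fact that for commuting idempotents $P$, $Q$ one has $\im(PQ) = \im(P) \cap \im(Q)$; the content is the same.
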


\begin{proof}
First, the fact that $E_\lambda$ and $E_\rho$ are idempotents follows immediately from the Fubini-Tonelli theorem together with translation invariance of the Haar measure $\mu$, whilst commutation of $E_\lambda$ and $E_\rho$ follows from the Fubini-Tonelli theorem together with the commutation of left and right actions on $\hs{H}_1$ and on $\hs{H}_2$. Moreover, by construction, $E_\lambda$ and $E_\rho$ act as the identity on $\lbdd_\alg{A}(\hs{H}_1,\hs{H}_2)$ and $\rbdd_\alg{A}(\hs{H}_1,\hs{H}_2)$, respectively, so that
\[
 \im(E_\lambda) \supseteq \lbdd_\alg{A}(\hs{H}_1,\hs{H}_2), \quad \im(E_\rho) \supseteq \rbdd_\alg{A}(\hs{H}_1,\hs{H}_2).
\]

Now, let $T \in \bdd^1_\alg{A}(\hs{H}_1,\hs{H}_2)$. Then, by translation invariance of the Haar measure, it follows that for any $u \in \U(\alg{A})$,
\[
 E_\lambda(T) = \lambda_2(u) E_\lambda(T) \lambda_1(u)^*, \quad E_\rho(T) = \rho_2(u) E_\rho(T) \rho_1(u)^*,
\]
or equivalently,
\[
 \lambda_2(u) E_\lambda(T) = E_\lambda(T) \lambda_1(u), \quad \rho_2(u)E_\rho(T) = E_\rho(T) \rho_1(u).
\]
By the real analogue of the Russo-Dye theorem~\cite{Li}*{Lemma 2.15.16}, the convex hull of $\U(\alg{A})$ is weakly dense in the unit ball of $\alg{A}$, so that
\[
 \lambda_2(a) E_\lambda(T) = E_\lambda(T) \lambda_1(a), \quad \rho_2(a)E_\rho(T) = E_\rho(T) \rho_1(a)
\]
for all $a \in \alg{A}$, \ie $E_\lambda(T) \in \lbdd_\alg{A}(\hs{H}_1,\hs{H}_2)$ and $E_\rho(T) \in \rbdd_\alg{A}(\hs{H}_1,\hs{H}_2)$.

On the other hand,
\begin{align*}
 (\Id - E_\lambda)(T) &= \int_{\U(\alg{A})} \ud\mu(u) (T\lambda_1(u) - \lambda_2(u)T)\lambda_1(u^{-1}),\\ (\Id - E_\rho)(T) &= \int_{\U(\alg{A})} \ud\mu(u) (T\rho_1(u^{-1}) - \rho_2(u^{-1})T)\rho_1(u),
\end{align*}
so that by the generalised order one condition, $(\Id - E_\lambda)(T) \in \rbdd_\alg{A}(\hs{H}_1,\hs{H}_2)$ and $(\Id - E_\rho)(T) \in \lbdd_\alg{A}(\hs{H}_1,\hs{H}_2)$.

Finally, the commutation of $E_\lambda$ and $E_\rho$ together with our identification of $\im(E_\lambda)$ and of $\im(E_\rho)$ imply the desired result about $\im(E_\lambda E_\rho)$.
\end{proof}

Now, since
\[
 \im(\Id - E_\lambda) \subseteq \im(E_\rho), \quad \im(\Id - E_\rho) \subseteq \im(E_\lambda),
\]
one has that
\[
 (\Id - E_\lambda)E_\rho = \Id - E_\lambda, \quad (\Id - E_\rho)E_\lambda = \Id - E_\rho,
\]
which implies in turn that $\Id - E_\rho$, $E_\lambda E_\rho$ and $\Id - E_\lambda$ are mutually orthogonal idempotents such that
\[
 (\Id - E_\rho) + E_\lambda E_\rho + (\Id - E_\lambda) = \Id.
\]
We have therefore proved the following:

\begin{proposition}[Krajewski~\cite{Kraj98}*{\S 3.4}]\label{order1decomp}
Let $\rbdd_\alg{A}(\hs{H}_1,\hs{H}_2)^0$ denote $\ker(E_\lambda)$, and let $\lbdd_\alg{A}(\hs{H}_1,\hs{H}_2)^0$ denote $\ker(E_\rho)$. Then
\begin{equation}
 \bdd^1_\alg{A}(\hs{H}_1,\hs{H}_2) = \lbdd_\alg{A}(\hs{H}_1,\hs{H}_2)^0 \oplus \lrbdd_\alg{A}(\hs{H}_1,\hs{H}_2) \oplus \rbdd_\alg{A}(\hs{H}_1,\hs{H}_2)^0,
\end{equation}
where
\begin{equation}
 \lbdd_\alg{A}(\hs{H}_1,\hs{H}_2)^0 \oplus \lrbdd_\alg{A}(\hs{H}_1,\hs{H}_2) = \lbdd_\alg{A}(\hs{H}_1,\hs{H}_2)
\end{equation}
and
\begin{equation}
 \lrbdd_\alg{A}(\hs{H}_1,\hs{H}_2) \oplus \rbdd_\alg{A}(\hs{H}_1,\hs{H}_2)^0 = \rbdd_\alg{A}(\hs{H}_1,\hs{H}_2).
\end{equation}
\end{proposition}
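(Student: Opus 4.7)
The plan is to observe that the three operators $\Id - E_\rho$, $E_\lambda E_\rho$, and $\Id - E_\lambda$ form a complete family of mutually orthogonal idempotents, and to read off all three direct-sum statements from the resulting Peirce-type decomposition of $\bdd^1_\alg{A}(\hs{H}_1,\hs{H}_2)$.

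First I would reinvoke Lemma~\ref{decompproj}: $E_\lambda$ and $E_\rho$ are commuting idempotents with images $\lbdd_\alg{A}(\hs{H}_1,\hs{H}_2)$ and $\rbdd_\alg{A}(\hs{H}_1,\hs{H}_2)$ respectively, while $\ker(E_\lambda) \subseteq \rbdd_\alg{A}(\hs{H}_1,\hs{H}_2) = \im(E_\rho)$ and $\ker(E_\rho) \subseteq \lbdd_\alg{A}(\hs{H}_1,\hs{H}_2) = \im(E_\lambda)$. Since $E_\rho$ restricts to the identity on its own image, these inclusions give the operator identities
\[
  E_\rho(\Id - E_\lambda) = \Id - E_\lambda, \qquad E_\lambda(\Id - E_\rho) = \Id - E_\rho,
\]
and by the commutation $E_\lambda E_\rho = E_\rho E_\lambda$ one rearranges either equation to the single identity $E_\lambda + E_\rho - E_\lambda E_\rho = \Id$ on $\bdd^1_\alg{A}(\hs{H}_1,\hs{H}_2)$.

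Next I would set $P_1 := \Id - E_\rho$, $P_2 := E_\lambda E_\rho$, $P_3 := \Id - E_\lambda$. Summing gives $P_1 + P_2 + P_3 = 2\Id - E_\lambda - E_\rho + E_\lambda E_\rho = \Id$ by the identity just derived. Pairwise orthogonality is a two-line check from idempotency and commutation: for instance $P_1 P_3 = (\Id - E_\rho)(\Id - E_\lambda) = (\Id - E_\lambda) - E_\rho(\Id - E_\lambda) = 0$, while $P_1 P_2 = (\Id - E_\rho)E_\lambda E_\rho = E_\lambda E_\rho - E_\lambda E_\rho = 0$ and similarly $P_2 P_3 = 0$. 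Three mutually orthogonal idempotents summing to $\Id$ split the space as $\bdd^1_\alg{A}(\hs{H}_1,\hs{H}_2) = \im P_1 \oplus \im P_2 \oplus \im P_3$.

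Finally I would identify each summand. Since $E_\rho$ is idempotent, $\im P_1 = \im(\Id - E_\rho) = \ker(E_\rho) =: \lbdd_\alg{A}(\hs{H}_1,\hs{H}_2)^0$; symmetrically $\im P_3 = \rbdd_\alg{A}(\hs{H}_1,\hs{H}_2)^0$; and $\im P_2 = \lrbdd_\alg{A}(\hs{H}_1,\hs{H}_2)$ by the last clause of Lemma~\ref{decompproj}. This establishes the three-way decomposition. For the partial sums, orthogonality of the $P_i$ gives $\im P_1 \oplus \im P_2 = \im(P_1 + P_2) = \im(\Id - P_3) = \im(E_\lambda) = \lbdd_\alg{A}(\hs{H}_1,\hs{H}_2)$, and similarly $\im P_2 \oplus \im P_3 = \im(E_\rho) = \rbdd_\alg{A}(\hs{H}_1,\hs{H}_2)$, completing the proof. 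There is no serious obstacle here: all the analytic content sits inside Lemma~\ref{decompproj} (in particular, the Russo--Dye argument that produced $\im(\Id - E_\lambda) \subseteq \rbdd_\alg{A}$ and the use of the generalised order one condition); what remains is the purely algebraic bookkeeping of the three idempotents, whose only subtlety is noticing that the inclusions between kernels and images force $E_\lambda + E_\rho - E_\lambda E_\rho = \Id$.
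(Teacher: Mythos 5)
Your proof is correct and follows essentially the same route as the paper: the paper likewise deduces from the inclusions $\im(\Id - E_\lambda)\subseteq\im(E_\rho)$ and $\im(\Id - E_\rho)\subseteq\im(E_\lambda)$ of Lemma~\ref{decompproj} that $\Id - E_\rho$, $E_\lambda E_\rho$, and $\Id - E_\lambda$ are mutually orthogonal idempotents summing to $\Id$, and reads off the decomposition. You merely make explicit the pairwise-orthogonality computations and the identification of the partial sums, which the paper leaves implicit.
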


Thus, elements of $\lbdd_\alg{A}(\hs{H}_1,\hs{H}_2)^0$ can be interpreted as the ``purely'' left $\alg{A}$-linear maps $\hs{H}_1 \to \hs{H}_2$, whilst elements of $\rbdd_\alg{A}(\hs{H}_1,\hs{H}_2)^0$ can be interpreted as the ``purely'' right $\alg{A}$-linear maps $\hs{H}_1 \to \hs{H}_2$.

One can readily check that the decomposition of Proposition~\ref{order1decomp} is respected by left multiplication by elements of $\lrbdd_\alg{A}(\hs{H}_2)$ and right multiplication by elements of $\lrbdd_\alg{A}(\hs{H}_1)$:

\begin{proposition}~\label{lrorder1decomp}
For any $T \in \bdd^1_\alg{A}(\hs{H}_1,\hs{H}_2)$, $A \in \lrbdd_\alg{A}(\hs{H}_1)$, $B \in \lrbdd_\alg{A}(\hs{H}_2)$,
\[
 E_\lambda(A T) = A E_\lambda(T), \quad E_\rho(T B) = E_\rho(T) B.
\]
\end{proposition}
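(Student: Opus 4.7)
The plan is a direct computation: pull $A$ and $B$ through the Haar integrals defining $E_\lambda$ and $E_\rho$, using the fact that bi-$\alg{A}$-linear operators commute with the left or right $\alg{A}$-action in the appropriate sense. (I read the statement as $A \in \lrbdd_\alg{A}(\hs{H}_2)$ and $B \in \lrbdd_\alg{A}(\hs{H}_1)$, the only typing for which $AT$ and $TB$ make sense as operators $\hs{H}_1 \to \hs{H}_2$.)

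First I would check that $AT$ and $TB$ still satisfy the generalised order one condition, so that $E_\lambda$ and $E_\rho$ may legitimately be applied. This is immediate from~\eqref{genorderone}: the defining identity only sees commutators of $\lambda$'s and $\rho$'s with $T$, and such commutators are unaffected by pre- or post-composing $T$ with an operator that is simultaneously left and right $\alg{A}$-linear.

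Next I would compute $E_\lambda(AT)$. Because $A$ is left $\alg{A}$-linear on $\hs{H}_2$, one has $\lambda_2(u) A = A \lambda_2(u)$ for every $u \in \U(\alg{A})$, so
\[
 E_\lambda(AT) = \int_{\U(\alg{A})} \ud\mu(u)\, \lambda_2(u) A T \lambda_1(u^{-1}) = A \int_{\U(\alg{A})} \ud\mu(u)\, \lambda_2(u) T \lambda_1(u^{-1}) = A E_\lambda(T),
\]
where the middle equality is just linearity of the Haar integral (pulling the bounded operator $A$ past it). The identity $E_\rho(TB) = E_\rho(T) B$ follows symmetrically: right $\alg{A}$-linearity of $B$ on $\hs{H}_1$ gives $B \rho_1(u) = \rho_1(u) B$, and the same pull-out argument applies to the integrand $\rho_2(u^{-1}) T B \rho_1(u)$.

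There is no real obstacle here; the argument is purely a ``constants pull out of a linear integral'' manipulation. The only content is the observation that the relevant halves of bi-$\alg{A}$-linearity—left-linearity of $A$ on $\hs{H}_2$ and right-linearity of $B$ on $\hs{H}_1$—supply exactly the commutation relations needed to commute $A$ and $B$ past the group-averaging integrands, while the other halves are what ensure $AT$ and $TB$ remain in $\bdd^1_\alg{A}(\hs{H}_1,\hs{H}_2)$ in the first place.
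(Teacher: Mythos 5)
Your proof is correct and is exactly the ``readily checked'' computation the paper has in mind (the paper states the proposition without proof). You were also right to note that the statement as printed swaps the labels: for $AT$ and $TB$ to be operators $\hs{H}_1 \to \hs{H}_2$ one needs $A \in \lrbdd_\alg{A}(\hs{H}_2)$ and $B \in \lrbdd_\alg{A}(\hs{H}_1)$, and the surrounding prose in the paper confirms this reading; your verification that $AT$ and $TB$ remain in $\bdd^1_\alg{A}(\hs{H}_1,\hs{H}_2)$ and the pull-out argument through the Haar integral are both sound.
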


Now, if $T \in \bdd(\hs{H}_1,\hs{H}_2)$, it is easy to see that $T$ satisfies the generalised order one condition if and only if each $T_{\alpha\beta}^{\gamma\delta}$ satisfies the generalised order one condition within $\bdd((\hs{H}_1)_{\alpha\beta},(\hs{H}_2)_{\gamma\delta})$; by abuse of notation, we will also denote by $E_\lambda$ and $E_\rho$ the appropriate idempotents on each $\bdd((\hs{H}_1)_{\alpha\beta},(\hs{H}_2)_{\gamma\delta})$. It then follows that
\[
 E_\lambda(T)_{\alpha\beta}^{\gamma\delta} = E_\lambda(T_{\alpha\beta}^{\gamma\delta}), \quad E_\rho(T)_{\alpha\beta}^{\gamma\delta} = E_\rho(T_{\alpha\beta}^{\gamma\delta}).
\]

Finally, let us turn to characterising $\ker(E_\lambda)$ and $\ker(E_\rho)$; before proceeding, we first need a technical lemma:

\begin{lemma}\label{schur}
 Let $G$ be a compact Lie group, and let $\mu$ be the bi-invariant Haar measure on $G$. Let $(\hs{H},\pi)$ and $(\hs{H}^\prime,\pi^\prime)$ be finite-dimensional irreducible unitary matrix representations of $G$. Then for any $T \in \bdd(\hs{H}^\prime,\hs{H})$, if $\pi \ncong \pi^\prime$ then
\begin{equation}
 \int_G \ud\mu(g) \pi(g)T\pi^\prime(g^{-1}) = 0,
\end{equation}
and if $\pi \cong \pi^\prime$, then for any unitary $G$-isomorphism $U: \hs{H}^\prime \to \hs{H}$,
\begin{equation}
 \int_G \ud\mu(g) \pi(g)T\pi^\prime(g^{-1}) = \frac{1}{\dim \hs{H}} \tr(T U^*)U.
\end{equation}
\end{lemma}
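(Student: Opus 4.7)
The plan is to recognise this as the standard averaging form of Schur's lemma and execute it in the two usual steps: first show the integral is an intertwiner between $\pi'$ and $\pi$, then apply Schur's lemma and pin down the scalar in the equivalent case by computing a trace.

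First I would define $P(T) := \int_G \ud\mu(g)\, \pi(g) T \pi^\prime(g^{-1})$ and verify, via bi-invariance of $\mu$ and the substitution $g \mapsto h^{-1}g$, that $\pi(h) P(T) = P(T) \pi^\prime(h)$ for every $h \in G$, i.e.\ $P(T) \in \bdd(\hs{H}^\prime,\hs{H})$ is a $G$-equivariant map from $(\hs{H}^\prime,\pi^\prime)$ to $(\hs{H},\pi)$. By Schur's lemma for irreducible finite-dimensional unitary representations, this forces $P(T) = 0$ whenever $\pi \ncong \pi^\prime$, which settles the first case.

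In the equivalent case, fix a unitary intertwiner $U: \hs{H}^\prime \to \hs{H}$, so that $\pi(g) U = U \pi^\prime(g)$, equivalently $U^* \pi(g) = \pi^\prime(g) U^*$, for all $g \in G$. Then $P(T) U^* \in \bdd(\hs{H})$ intertwines $\pi$ with itself, and Schur's lemma yields $P(T) U^* = \lambda(T)\, 1_{\hs{H}}$ for some scalar $\lambda(T) \in \field{C}$; right-multiplying by $U$ and using unitarity gives $P(T) = \lambda(T) U$. To identify $\lambda(T)$, take the trace of both sides of $P(T)U^* = \lambda(T) 1_\hs{H}$: the right-hand side gives $\lambda(T) \dim \hs{H}$, while on the left, by linearity of the trace and Fubini, together with cyclicity and the identity $\pi^\prime(g^{-1}) U^* \pi(g) = \pi^\prime(g^{-1}) \pi^\prime(g) U^* = U^*$, one computes
\[
 \tr(P(T) U^*) = \int_G \ud\mu(g)\, \tr\bigl(\pi(g) T \pi^\prime(g^{-1}) U^*\bigr) = \int_G \ud\mu(g)\, \tr(T U^*) = \tr(T U^*),
\]
since $\mu$ is normalised. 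Hence $\lambda(T) = \tr(T U^*)/\dim \hs{H}$, which gives the stated formula.

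There is no real obstacle beyond bookkeeping: the only subtle point is making sure the intertwining identity is used in the right direction so that the integrand in the trace computation collapses to the constant $\tr(T U^*)$; everything else is immediate from bi-invariance of $\mu$, Fubini, and Schur's lemma in its standard form.
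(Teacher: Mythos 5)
Your proof is correct and follows essentially the same route as the paper: identify the averaged operator as an intertwiner via translation invariance of Haar measure, apply Schur's lemma in the two cases, and fix the scalar by a trace computation. Your argument is actually slightly more explicit than the paper's in the final step, where you compute $\tr(P(T)U^*) = \tr(TU^*)$ cleanly via cyclicity of trace and the intertwining relation, whereas the paper defers this to a direct basis calculation.
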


\begin{proof}
  Let
\[
 \tilde{T} = \int_G \ud\mu(g) \pi(g)T\pi^\prime(g^{-1}).
\]
which, by translation invariance of the Haar measure $\mu$, is a $G$-invariant map. If $\pi \ncong \pi^\prime$, then Schur's Lemma forces $\tilde{T}$ to vanish. If instead $\pi \cong \pi^\prime$, let $U: \hs{H} \to \hs{H}^\prime$ be a unitary $G$-isomorphism. Then by Schur's Lemma there exists some $\alpha \in \field{C}$ such that $\tilde{T} = \alpha U$; in fact,
\[
 \alpha = \alpha \frac{1}{\dim \hs{H}}\tr(U U^*) = \frac{1}{\dim \hs{H}} \tr(\tilde{T} U^*).
\]
One can then show that $\tr(\tilde{T} U^*) = \tr(T U^*)$ by introducing an orthonormal basis of $\hs{H}$ and then calculating directly.
\end{proof}

We now arrive at the desired characterisation:

\begin{proposition}~\label{zeromean}
If $T \in \rbdd_\alg{A}(\hs{H}_1,\hs{H}_2)$, then $E_\lambda(T) = 0$ if and only if for all $\alpha$, $\beta \in \supp(m_1) \cap \supp(m_2)$,
\[
 T_{\alpha\beta}^{\alpha\beta} \in \fsl(n_\alpha) \otimes M_{(m_2)_{\alpha\beta} \times (m_1)_{\alpha\beta}}(\field{C}) \otimes 1_{n_\beta},
\]
and if $T \in \lbdd_\alg{A}(\hs{H}_1,\hs{H}_2)$, then $E_\rho(T) = 0$ if and only if for all $\alpha$, $\beta \in \supp(m_1) \cap \supp(m_2)$,
\[
 T_{\alpha\beta}^{\alpha\beta} \in 1_{n_\alpha} \otimes M_{(m_2)_{\alpha\beta} \times (m_1)_{\alpha\beta}}(\field{C}) \otimes \fsl(n_\beta).
\]
\end{proposition}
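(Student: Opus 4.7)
The plan is to compute $E_\lambda(T)$ blockwise using Lemma~\ref{schur}. Since $T \in \rbdd_\alg{A}(\hs{H}_1,\hs{H}_2)$, Proposition~\ref{linear} forces $T_{\alpha\beta}^{\gamma\delta} = 0$ unless $\delta = \beta$, so only the blocks
\[
 T_{\alpha\beta}^{\gamma\beta} \in M_{n_\gamma \times n_\alpha}(\field{C}) \otimes M_{(m_2)_{\gamma\beta} \times (m_1)_{\alpha\beta}}(\field{C}) \otimes 1_{n_\beta}
\]
need be examined; as noted immediately before the statement, $E_\lambda$ commutes with the componentwise decomposition, so it suffices to evaluate $E_\lambda$ on each such block separately.

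Write $T_{\alpha\beta}^{\gamma\beta} = \sum_k A_k \otimes B_k \otimes 1_{n_\beta}$ with $A_k \in M_{n_\gamma \times n_\alpha}(\field{C})$ and $\{B_k\}$ linearly independent. By linearity, the problem reduces to evaluating $\int_{\U(\alg{A})} \ud\mu(u)\, \lambda_\gamma(u) A_k \lambda_\alpha(u^{-1})$. I would split into cases using the product decomposition $\U(\alg{A}) = \prod_i \U(M_{n_i}(\field{K}_i))$ and Fubini: if $\alpha$ and $\gamma$ lie in distinct Wedderburn summands, an inner integral of the form $\int \ud u_i \, \lambda_\alpha(u_i)$ appears, which by Lemma~\ref{schur} (applied with the trivial one-dimensional representation against $\lambda_\alpha$) vanishes because $\lambda_\alpha|_{\U(M_{n_i}(\field{K}_i))}$ is irreducible and non-trivial; if $\alpha$ and $\gamma$ sit in a common complex summand $M_{n_i}(\field{C})$ but correspond to the inequivalent representations $\rep{n}_i$ and $\crep{n}_i$, Lemma~\ref{schur} itself gives zero; and if $\alpha = \gamma$, Lemma~\ref{schur} (with intertwiner $U = 1_{n_\alpha}$) yields $\tfrac{1}{n_\alpha} \tr(A_k)\, 1_{n_\alpha}$.

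Combining these cases, $E_\lambda(T)_{\alpha\beta}^{\gamma\beta} = 0$ automatically whenever $\gamma \neq \alpha$, whereas
\[
 E_\lambda(T)_{\alpha\beta}^{\alpha\beta} \;=\; \tfrac{1}{n_\alpha}\, 1_{n_\alpha} \otimes \Bigl(\sum_k \tr(A_k)\, B_k \Bigr) \otimes 1_{n_\beta}.
\]
By linear independence of the $B_k$, this vanishes iff every $\tr(A_k) = 0$, equivalently iff $T_{\alpha\beta}^{\alpha\beta} \in \fsl(n_\alpha) \otimes M_{(m_2)_{\alpha\beta} \times (m_1)_{\alpha\beta}}(\field{C}) \otimes 1_{n_\beta}$, which is the first assertion. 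The $E_\rho$ statement follows by the symmetric argument: for $T \in \lbdd_\alg{A}(\hs{H}_1,\hs{H}_2)$, Proposition~\ref{linear} leaves only blocks $T_{\alpha\beta}^{\alpha\delta}$, and the same Schur analysis applied to the third tensor factor (using $\lambda_\beta, \lambda_\delta$ instead of $\lambda_\alpha, \lambda_\gamma$) produces the traceless-on-the-right condition.

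The main obstacle is simply bookkeeping rather than conceptual: one must verify that each $\lambda_\alpha$ restricts to an irreducible and non-trivial representation of the relevant factor $\U(M_{n_i}(\field{K}_i))$, and that distinct $\alpha \in \spec{M_{n_i}(\field{K}_i)}$ yield inequivalent representations of that factor. The only non-trivial case is $\field{K}_i = \field{C}$, where the standard and conjugate representations of $\U(n_i)$ are distinguished by their characters; the cases $\field{K}_i = \field{R}$ and $\field{K}_i = \field{H}$ reduce to the standard irreducibility of the defining representation of $\Orth(n_i)$ or $\Sp(n_i)$ on $\field{C}^{n_\alpha}$. Once these facts are in hand, Lemma~\ref{schur} applies directly and the remainder is formal.
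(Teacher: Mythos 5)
Your proof is correct and follows essentially the same route as the paper: reduce to the blocks $T_{\alpha\beta}^{\gamma\beta}$ via Proposition~\ref{linear}, expand each block as a sum of simple tensors with linearly independent tail factors, and apply Lemma~\ref{schur} to the leading factor. The only cosmetic difference is that you handle the $\alpha \neq \gamma$ case by decomposing $\U(\alg{A})$ into its Wedderburn factors and invoking Fubini, whereas Lemma~\ref{schur} already applies directly with $G = \U(\alg{A})$ and $\pi = \lambda_\gamma$, $\pi' = \lambda_\alpha$ (these restrictions remain irreducible and inequivalent by the density of the convex hull of $\U(\alg{A})$ in the unit ball), which is what the paper implicitly uses.
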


\begin{proof}
Let $T \in \rbdd_\alg{A}(\hs{H}_1,\hs{H}_2)$. Then, by Proposition~\ref{linear}, it suffices to consider components $T_{\alpha\beta}^{\gamma\beta}$ for $\alpha$, $\beta$, $\gamma \in \spec{\alg{A}}$, which take the form
\[
 T_{\alpha\beta}^{\gamma\beta} = M_{\alpha\beta}^{\gamma} \otimes 1_{n_\beta}
\]
for $M_{\alpha\beta}^\gamma \in M_{n_\gamma \times n_\alpha}(\field{C}) \otimes M_{(m_2)_{\gamma\beta} \times (m_1)_{\alpha\beta}}(\field{C})$.

Now fix $\alpha$, $\beta$, $\gamma \in \spec{\alg{A}}$, and write
\[
 M_{\alpha\beta}^\gamma = \sum_{i=1}^k A_i \otimes B_i
\]
for $A_i \in M_{n_\gamma \times n_\alpha}(\field{C})$ and for $B_i \in M_{(m_2)_{\gamma\beta} \times (m_1)_{\alpha\beta}}(\field{C})$ linearly independent. It then follows by direct computation together with Lemma~\ref{schur} that
\[
 E_\lambda(T_{\alpha\beta}^{\gamma\beta}) = 
 \begin{cases}
  \frac{1}{n_\alpha} \left(\sum_{i=1}^k \tr(A_i) 1_{n_\alpha} \otimes B_i \right) \otimes 1_{n_\beta} &\text{if $\alpha = \gamma$,}\\
  0 &\text{otherwise},
 \end{cases} 
\]
so that by linear independence of the $B_i$,  $E_\lambda(T_{\alpha\beta}^{\gamma\beta})$ vanishes if and only if either $\alpha \neq \gamma$ or, $\alpha = \beta$ and each $A_i$ is traceless, and hence, if and only $\alpha \neq \gamma$ or, $\alpha = \beta$ and $M_{\alpha\beta}^\alpha \in \fsl(n_\alpha) \otimes M_{(m_2)_{\gamma\beta} \times (m_1)_{\alpha\beta}}(\field{C})$, as required.

\latin{Mutatis mutandis}, this argument also establishes the desired characterisation of $\ker(E_\rho)$.
\end{proof}

\subsection{Odd bilateral spectral triples}

Let us now take $\hs{H}_1 = \hs{H}_2 = \hs{H}$. By construction of $E_\lambda$ and $E_\rho$, the following conditions are readily seen to be equivalent for $T \in \bdd_\alg{A}^1(\hs{H})$:
\begin{enumerate}
 \item $T$ is self-adjoint;
 \item $E_\lambda(T)$ and $(\Id - E_\lambda)(T)$ are self-adjoint;
 \item $(\Id - E_\rho)(T)$ and $E_\rho(T)$ are self-adjoint;
 \item $(\Id - E_\rho)(T)$, $(E_\lambda E_\rho)(T)$ and $(\Id - E_\lambda)(T)$ are self-adjoint.
\end{enumerate}
Thus, in particular,
\begin{equation}
 \ms{D}_0(\alg{A},\hs{H}) = \lbdd_\alg{A}(\hs{H})^0_\sa \oplus \lrbdd_\alg{A}(\hs{H})_\sa \oplus \rbdd_\alg{A}(\hs{H})^0_\sa.
\end{equation}
In light of Proposition~\ref{lrorder1decomp}, we therefore have the following description of $\ms{D}(\alg{A},\hs{H})$:

\begin{proposition}
 Let $\hs{H}$ be an $\alg{A}$-bimodule. Then
\begin{equation}
 \ms{D}(\alg{A},\hs{H}) = \left( \lbdd_\alg{A}(\hs{H})^0_\sa \times \lrbdd_\alg{A}(\hs{H})_\sa \times \rbdd_\alg{A}(\hs{H})^0_\sa \right) / \lrU_\alg{A}(\hs{H}),
\end{equation}
where $\lrU_\alg{A}(\hs{H})$ acts diagonally by conjugation. 
\end{proposition}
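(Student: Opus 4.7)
The plan is straightforward: the proposition will follow by combining the vector-space decomposition of $\ms{D}_0(\alg{A},\hs{H})$ displayed immediately above the statement with the $\lrU_\alg{A}(\hs{H})$-equivariance of each of its three summands. Once the equivariance is in hand, the orbit space of the direct sum is tautologically the orbit space of the Cartesian product of the summands under the diagonal action, which is exactly the right-hand side of the claim.

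First I would verify that the idempotents $E_\lambda$ and $E_\rho$ from Lemma~\ref{decompproj} commute with conjugation by any $U \in \lrU_\alg{A}(\hs{H})$. Since $U$ commutes with both $\lambda(a)$ and $\rho(a)$ for every $a \in \alg{A}$, one can pull $U$ and $U^*$ outside the integrals defining $E_\lambda$ and $E_\rho$ to obtain $E_\lambda(U T U^*) = U E_\lambda(T) U^*$ and $E_\rho(U T U^*) = U E_\rho(T) U^*$ for every $T \in \bdd^1_\alg{A}(\hs{H})$. (Equivalently, one may invoke Proposition~\ref{lrorder1decomp} with $A = U^*$ and $B = U$, both viewed as elements of $\lrbdd_\alg{A}(\hs{H})$.) Consequently the three mutually orthogonal idempotents $\Id - E_\rho$, $E_\lambda E_\rho$ and $\Id - E_\lambda$ of Proposition~\ref{order1decomp} intertwine the conjugation action.

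Second, since unitary conjugation also commutes with taking adjoints, the self-adjoint decomposition
\[
 \ms{D}_0(\alg{A},\hs{H}) = \lbdd_\alg{A}(\hs{H})^0_\sa \oplus \lrbdd_\alg{A}(\hs{H})_\sa \oplus \rbdd_\alg{A}(\hs{H})^0_\sa
\]
recalled just above the statement becomes an isomorphism of $\lrU_\alg{A}(\hs{H})$-spaces, where the right-hand side carries the diagonal action. Passing to the quotient by $\lrU_\alg{A}(\hs{H})$ yields precisely the desired homeomorphism onto $(\lbdd_\alg{A}(\hs{H})^0_\sa \times \lrbdd_\alg{A}(\hs{H})_\sa \times \rbdd_\alg{A}(\hs{H})^0_\sa)/\lrU_\alg{A}(\hs{H})$, establishing the claim.

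There is nothing genuinely hard here: after the heavy lifting in Propositions~\ref{order1decomp} and~\ref{lrorder1decomp}, the argument is essentially formal bookkeeping. The only step that needs to be stated explicitly is the equivariance of $E_\lambda$ and $E_\rho$ under conjugation by $\lrU_\alg{A}(\hs{H})$, as this is the one input that was not already packaged into the preceding decomposition results; everything else follows immediately from the definition of the moduli space as an orbit space.
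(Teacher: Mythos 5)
Your argument is correct and essentially the same as the paper's, which gives no explicit proof beyond invoking Proposition~\ref{lrorder1decomp}: you verify that $E_\lambda$, $E_\rho$ (hence the three summands) are $\lrU_\alg{A}(\hs{H})$-equivariant under conjugation and then pass to the orbit space. One small quibble: your parenthetical claim that Proposition~\ref{lrorder1decomp} alone gives $E_\lambda(UTU^*)=UE_\lambda(T)U^*$ is not quite right, since that proposition only records $E_\lambda(AT)=AE_\lambda(T)$ (left multiplication) and $E_\rho(TB)=E_\rho(T)B$ (right multiplication); the full conjugation-equivariance requires the opposite-sided identities as well, which your direct pull-through of $U$, $U^*$ past the integrands does correctly supply.
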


Now, in light of Propositions~\ref{linear},~\ref{order1decomp} and~\ref{zeromean}, we can describe how to construct an arbitrary Dirac operator on an odd $\alg{A}$-bimodule $\hs{H}$ with multiplicity matrix $m$:
\begin{enumerate}
 \item For $\alpha$, $\beta$, $\gamma \in \spec{\alg{A}}$ such that $\alpha < \gamma$, choose $M_{\alpha\beta}^\gamma \in M_{n_\gamma m_{\gamma\beta} \times n_\alpha m_{\alpha\beta}}(\field{C})$;
 \item For $\alpha$, $\beta$, $\delta \in \spec{\alg{A}}$ such that $\beta < \delta$, choose $N_{\alpha\beta}^\delta \in M_{m_{\alpha\delta} n_\delta \times m_{\alpha\beta} n_\beta}(\field{C})$;
 \item For $\alpha$, $\beta \in \spec{\alg{A}}$, choose $M_{\alpha\beta}^\alpha \in M_{n_\alpha m_{\alpha\beta}}(\field{C})_\sa$ and $N_{\alpha\beta}^\beta \in M_{m_{\alpha\beta} n_\beta}(\field{C})_\sa$;
 \item Finally, for $\alpha$, $\beta$, $\gamma$, $\delta \in \spec{\alg{A}}$, set
 \begin{equation}
  D_{\alpha\beta}^{\gamma\delta} = 
  \begin{cases}
   M_{\alpha\beta}^\gamma \otimes 1_{n_\beta} &\text{if $\alpha < \gamma$ and $\beta = \delta$,}\\
   (M_{\gamma\beta}^\alpha)^* \otimes 1_{n_\beta} &\text{if $\alpha > \gamma$ and $\beta = \delta$,}\\
   1_{n_\alpha} \otimes N_{\alpha\beta}^\delta &\text{if $\alpha = \gamma$ and $\beta < \delta$,}\\
   1_{n_\alpha} \otimes (N_{\alpha\delta}^\beta)^* &\text{if $\alpha = \gamma$ and $\beta > \delta$,}\\
   M_{\alpha\beta}^\alpha \otimes 1_{n_\beta} + 1_{n_\alpha} \otimes N_{\alpha\beta}^\beta &\text{if $(\alpha,\beta) = (\gamma,\delta)$,}\\
   0 &\text{otherwise}.
  \end{cases}
 \end{equation}
\end{enumerate}

Note that for any $K = (1_{n_\alpha} \otimes K_{\alpha\beta} \otimes 1_{n_\beta})_{\alpha,\beta\in\spec{\alg{A}}} \in \lrbdd_\alg{A}(\hs{H})_\sa$ (so that each $K_{\alpha\beta}$ is self-adjoint), we can make the replacements
\[
 M_{\alpha\beta}^\alpha \mapsto M_{\alpha\beta}^\alpha + 1_{n_\alpha} \otimes K_{\alpha\beta}, \quad N_{\alpha\beta}^\beta \mapsto N_{\alpha\beta}^\beta - K_{\alpha\beta} \otimes 1_{n_\beta},
\]
and still obtain the same Dirac operator $D$; by Proposition~\ref{order1decomp}, this freedom is removed by requiring either that $M_{\alpha\beta}^\alpha \in \fsl(n_\alpha) \otimes M_{m_{\alpha\beta}}(\field{C})$ or that $N_{\alpha\beta}^\beta \in M_{m_{\alpha\beta}}(\field{C}) \otimes \fsl(n_\beta)$.

We now turn to the moduli space $\ms{D}(\alg{A},\hs{H})$ itself. By the above discussion and Corollary~\ref{oddunitary}, we can identify the space $\ms{D}_0(\alg{A},\hs{H})$ with
\begin{multline}
 \ms{D}_0(\alg{A},m) := \prod_{\alpha,\beta \in \spec{\alg{A}}} \prod_{\substack{\gamma \in \spec{\alg{A}} \\ \gamma > \alpha}} M_{n_\gamma m_{\gamma\beta} \times n_\alpha m_{\alpha\beta}}(\field{C}) \times \left(\fsl(n_\alpha) \otimes M_{m_{\alpha\beta}}(\field{C})\right)_\sa \\ \times \prod_{\substack{\delta \in \spec{\alg{A}} \\ \delta \geq \alpha}} M_{m_{\alpha\delta} n_\delta \times m_{\alpha\beta} n_\beta}(\field{C}) \times M_{m_{\alpha\beta} n_\beta}(\field{C})_\sa,
\end{multline}
and identify $\lrU_\alg{A}(\hs{H})$ with
\begin{equation}\label{modunit}
 \U(\alg{A},m) := \prod_{\alpha,\beta \in \spec{\alg{A}}} \U(m_{\alpha\beta}).
\end{equation}
By checking at the level of components, one sees that the action of $\lrU_\alg{A}(\hs{H})$ on the space $\ms{D}_0(\alg{A},\hs{H})$ corresponds under these identifications to the action of $\U(\alg{A},m)$ on $\ms{D}_0(\alg{A},m)$ defined by having $(U_{\alpha\beta}) \in \U(\alg{A},m)$ act on
\[
 (M_{\alpha\beta}^\gamma;M_{\alpha\beta}^\alpha;N_{\alpha\beta}^\delta;N_{\alpha\beta}^\beta) \in \ms{D}_0(\alg{A},m)
\]
by
\[
 M_{\alpha\beta}^\gamma \mapsto (1_{n_\gamma} \otimes U_{\gamma\beta}) M_{\alpha\beta}^\gamma (1_{n_\alpha\beta} \otimes U_{\alpha\beta}^*), \quad N_{\alpha\beta}^\delta \mapsto (U_{\alpha\delta} \otimes 1_{n_\delta}) N_{\alpha\beta}^\delta (U_{\alpha\beta}^* \otimes 1_{n_\beta}).
\]
We have therefore proved the following:

\begin{proposition}
 Let $\hs{H}$ be an odd $\alg{A}$-bimodule with multiplicity matrix $m$. Then
 \begin{equation}
  \ms{D}(\alg{A},\hs{H}) \cong \ms{D}_0(\alg{A},m) / \U(\alg{A},m).
 \end{equation}
\end{proposition}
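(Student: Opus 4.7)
The proof is essentially a consolidation of the structural results already assembled in the preceding discussion: the plan is to exhibit explicit $\lrU_\alg{A}(\hs{H})$-equivariant bijections $\ms{D}_0(\alg{A},\hs{H}) \leftrightarrow \ms{D}_0(\alg{A},m)$ and $\lrU_\alg{A}(\hs{H}) \leftrightarrow \U(\alg{A},m)$, and then pass to the quotient.

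First I would verify the identification at the level of Dirac operators. By the equivalence of self-adjointness conditions noted just above and by Proposition~\ref{order1decomp}, one has
\begin{equation*}
 \ms{D}_0(\alg{A},\hs{H}) = \lbdd_\alg{A}(\hs{H})^0_\sa \oplus \lrbdd_\alg{A}(\hs{H})_\sa \oplus \rbdd_\alg{A}(\hs{H})^0_\sa.
\end{equation*}
Propositions~\ref{linear} and~\ref{zeromean} furnish the componentwise parametrization of each summand, yielding the tuples $(M_{\alpha\beta}^\gamma, N_{\alpha\beta}^\delta, M_{\alpha\beta}^\alpha, N_{\alpha\beta}^\beta)$ of the preceding paragraph; imposing $M_{\alpha\beta}^\alpha \in \fsl(n_\alpha) \otimes M_{m_{\alpha\beta}}(\field{C})$ cuts out the $\lrbdd_\alg{A}(\hs{H})_\sa$-redundancy also noted there. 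The identification $\lrU_\alg{A}(\hs{H}) \leftrightarrow \U(\alg{A},m)$ is just Corollary~\ref{oddunitary}, sending a unitary to its tuple of components $(U_{\alpha\beta})$.

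Next I would check equivariance block by block. For $U \in \lrU_\alg{A}(\hs{H})$ corresponding to $(U_{\alpha\beta})$, we have $U = \bigoplus_{\alpha,\beta} 1_{n_\alpha} \otimes U_{\alpha\beta} \otimes 1_{n_\beta}$, so conjugation preserves the block structure, and a direct tensor-factor computation gives
\begin{equation*}
 (UDU^*)_{\alpha\beta}^{\gamma\beta} = \bigl((1_{n_\gamma} \otimes U_{\gamma\beta})\, M_{\alpha\beta}^\gamma\, (1_{n_\alpha} \otimes U_{\alpha\beta}^*)\bigr) \otimes 1_{n_\beta},
\end{equation*}
matching the stated action on $\ms{D}_0(\alg{A},m)$, and analogously for the $N$-components. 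Passing to the quotient, we obtain the required homeomorphism $\ms{D}(\alg{A},\hs{H}) \cong \ms{D}_0(\alg{A},m)/\U(\alg{A},m)$.

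The only real subtlety is the diagonal case $(\alpha,\beta) = (\gamma,\delta)$: one must confirm that the normalization $M_{\alpha\beta}^\alpha \in \fsl(n_\alpha) \otimes M_{m_{\alpha\beta}}(\field{C})$ is preserved under conjugation (so the equivariant map lands in $\ms{D}_0(\alg{A},m)$), and that no cross-terms mix the $M_{\alpha\beta}^\alpha$ and $N_{\alpha\beta}^\beta$ pieces. Both follow at once from the fact that $1_{n_\alpha} \otimes U_{\alpha\beta} \otimes 1_{n_\beta}$ acts trivially on the first and third tensor factors, and in particular commutes with the projection $M_{n_\alpha}(\field{C}) \twoheadrightarrow \fsl(n_\alpha)$ applied to the first factor.
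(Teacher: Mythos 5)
Your proposal is correct and follows essentially the same approach as the paper: identify $\ms{D}_0(\alg{A},\hs{H})$ and $\lrU_\alg{A}(\hs{H})$ with $\ms{D}_0(\alg{A},m)$ and $\U(\alg{A},m)$ componentwise (using Propositions~\ref{linear}, \ref{order1decomp}, \ref{zeromean} and Corollary~\ref{oddunitary}), check equivariance of the conjugation action block by block, and pass to the quotient. Your closing observation that $1_{n_\alpha}\otimes U_{\alpha\beta}\otimes 1_{n_\beta}$ acts trivially on the outer tensor factors — and hence preserves the $\fsl(n_\alpha)$ normalization and does not mix the $M_{\alpha\beta}^\alpha$ and $N_{\alpha\beta}^\beta$ pieces — is exactly the content of the paper's ``by checking at the level of components'' remark, made explicit.
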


\subsection{Even bilateral spectral triples}\label{evenbilateral}

For this section, let $(\hs{H},\gamma)$ be a fixed even $\alg{A}$-bimodule with pair of multiplicity matrices $(m^\even,m^\odd)$.

Now, let $D$ be a self-adjoint operator on $\hs{H}$ anticommuting with $\gamma$. Then, with respect to the decomposition $\hs{H} = \hs{H}^\even \oplus \hs{H}^\odd$ we can write
\[
 D =
 \begin{pmatrix}
 0 & \Delta^*\\
 \Delta & 0\\
 \end{pmatrix},
\]
where $\Delta = P^\odd D P^\even$, viewed as a map $\hs{H}^\even \to \hs{H}^\odd$. Thus, $D$ is uniquely determined by $\Delta$ and \latin{vice versa}. Moreover, one can  check that $D$ satisfies the order one condition if and only if $\Delta$ satisfies the generalised order one condition as a map $\hs{H}^\even \to \hs{H}^\odd$. We therefore have the following:

\begin{lemma}
Let $(\hs{H},\gamma)$ be an even $\alg{A}$-bimodule. Then the map $\ms{D}_0(\alg{A},\hs{H},\gamma) \to \bdd^1_\alg{A}(\hs{H}^\even,\hs{H}^\odd)$ defined by $D \mapsto P^\odd D P^\even$ is an isomorphism.
\end{lemma}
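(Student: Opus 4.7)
The plan is to verify that the indicated map is a well-defined linear bijection, essentially by unpacking the discussion immediately preceding the lemma. A self-adjoint operator $D$ anticommuting with $\gamma$ must, with respect to the orthogonal decomposition $\hs{H} = \hs{H}^\even \oplus \hs{H}^\odd$, have the off-diagonal block form
\[
 D = \begin{pmatrix} 0 & \Delta^* \\ \Delta & 0 \end{pmatrix},
\]
where $\Delta := P^\odd D P^\even : \hs{H}^\even \to \hs{H}^\odd$. Thus, setting aside the order one condition, the assignment $D \mapsto \Delta$ is manifestly a linear bijection between self-adjoint operators anticommuting with $\gamma$ and arbitrary elements of $\bdd(\hs{H}^\even,\hs{H}^\odd)$, with inverse $\Delta \mapsto \begin{pmatrix} 0 & \Delta^* \\ \Delta & 0 \end{pmatrix}$.

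The substantive step is to check that, under this correspondence, the order one condition on $D$ is equivalent to the generalised order one condition on $\Delta$. First I would observe that since both $\lambda(a)$ and $\rho(b)$ commute with the grading $\gamma$ (because $\hs{H}^\even$, $\hs{H}^\odd$ are sub-bimodules), they decompose as $\lambda(a) = \lambda^\even(a) \oplus \lambda^\odd(a)$ and $\rho(b) = \rho^\even(b) \oplus \rho^\odd(b)$, and hence are block-diagonal. A direct computation in block form then shows that
\[
 [D,\lambda(a)] = \begin{pmatrix} 0 & \Delta^*\lambda^\odd(a) - \lambda^\even(a)\Delta^* \\ \Delta\lambda^\even(a) - \lambda^\odd(a)\Delta & 0 \end{pmatrix},
\]
whose commutator with the block-diagonal $\rho(b)$ vanishes precisely when
\[
 (\lambda^\odd(a)\Delta - \Delta\lambda^\even(a))\rho^\even(b) = \rho^\odd(b)(\lambda^\odd(a)\Delta - \Delta\lambda^\even(a))
\]
(and its adjoint, which is automatic) holds for all $a,b \in \alg{A}$. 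But this is exactly the generalised order one condition \eqref{genorderone} applied to $\Delta \in \bdd(\hs{H}^\even,\hs{H}^\odd)$.

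Consequently, the map $D \mapsto \Delta$ restricts to a linear bijection $\ms{D}_0(\alg{A},\hs{H},\gamma) \to \bdd^1_\alg{A}(\hs{H}^\even,\hs{H}^\odd)$, as claimed. There is no serious obstacle here; the only point requiring care is keeping track of the block decomposition when computing the nested commutators, after which the equivalence of the two forms of the order one condition falls out immediately.
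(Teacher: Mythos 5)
Your proof is correct and follows essentially the same route as the paper, which presents the block decomposition and the equivalence of the order one conditions in the paragraph immediately preceding the lemma statement. You have merely supplied the explicit block computation of $[[D,\lambda(a)],\rho(b)]$ that the paper leaves to the reader with ``one can check.''
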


We now apply this Lemma to obtain our first result regarding the form of $\ms{D}(\alg{A},\hs{H},\gamma)$:

\begin{proposition}\label{evendirac}
The map 
\[
 \ms{D}(\alg{A},\hs{H},\gamma) \to \lrU_\alg{A}(\hs{H}^\odd) \backslash \bdd^1_\alg{A}(\hs{H}^\even,\hs{H}^\odd) / \lrU_\alg{A}(\hs{H}^\even)
\] 
defined by $[D] \mapsto [P^\odd D P^\even]$ is a homeomorphism.
\end{proposition}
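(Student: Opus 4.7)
The plan is to reduce the statement to the preceding lemma combined with the decomposition of the unitary group given in the corollary following Proposition~\ref{evensplit}, and then to upgrade the resulting linear-algebraic bijection to a homeomorphism of quotient spaces.

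First, by the preceding lemma, the map $\Phi : D \mapsto P^\odd D P^\even$ is a linear isomorphism from $\ms{D}_0(\alg{A},\hs{H},\gamma)$ to $\bdd^1_\alg{A}(\hs{H}^\even,\hs{H}^\odd)$. Since both spaces are finite-dimensional real topological vector spaces, $\Phi$ is automatically a homeomorphism. This already recovers the unquotiented version of the claim.

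The heart of the argument is to verify that $\Phi$ is equivariant with respect to the two relevant group actions. By the corollary following Proposition~\ref{evensplit} any $U \in \lrU_\alg{A}(\hs{H},\gamma)$ has the block form $U = U^\even \oplus U^\odd$ with $U^\even \in \lrU_\alg{A}(\hs{H}^\even)$ and $U^\odd \in \lrU_\alg{A}(\hs{H}^\odd)$. Writing $D$ as in the off-diagonal form used in the preceding lemma and computing directly,
\[
 UDU^* = \begin{pmatrix} 0 & U^\even \Delta^* (U^\odd)^* \\ U^\odd \Delta (U^\even)^* & 0 \end{pmatrix},
\]
so that $\Phi(UDU^*) = U^\odd \Phi(D) (U^\even)^*$. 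This is precisely the left-right multiplication action of $\lrU_\alg{A}(\hs{H}^\odd) \times \lrU_\alg{A}(\hs{H}^\even)$ whose orbits are the double cosets appearing on the right-hand side.

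Thus $\Phi$ descends to a continuous bijection $\bar\Phi$ between the two quotients. To promote $\bar\Phi$ to a homeomorphism I would invoke the standard fact that the quotient map for a continuous action of a compact Lie group on a Hausdorff space is open; both quotient maps here fit this description, and $\Phi$ itself is an open map, so $\bar\Phi$ is both continuous and open, hence a homeomorphism. I do not anticipate a genuine obstacle: the proof is essentially bookkeeping, with the only mildly delicate point being the routine verification that the equivariant linear homeomorphism descends properly at the topological level.
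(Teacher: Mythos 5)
Your proposal is correct and follows essentially the same route as the paper: reduce to the preceding lemma for the unquotiented isomorphism, use the block decomposition $\lrU_\alg{A}(\hs{H},\gamma) = \lrU_\alg{A}(\hs{H}^\even)\oplus\lrU_\alg{A}(\hs{H}^\odd)$, and check equivariance via the same off-diagonal computation $P^\odd U D U^* P^\even = U^\odd(P^\odd D P^\even)(U^\even)^*$. The paper simply asserts that the resulting map on quotients is ``manifestly homeomorphic,'' whereas you fill in the (routine) topological detail about open quotient maps for compact group actions; the mathematical content is the same.
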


\begin{proof}
Recall that $\lrU_\alg{A}(\hs{H},\gamma) = \lrU_\alg{A}(\hs{H}^\even,\hs{H}^\odd)$. We therefore have for $D \in \ms{D}_0(\alg{A},\hs{H},\gamma)$ and $U = U^\even \oplus U^\odd \in \lrU_\alg{A}(\hs{H}^\even,\hs{H}^\odd)$ that
\[
  P^\odd U D U^* P^\even = U^\odd P^\odd D P^\even (U^\even)^*.
\]
Thus, under the correspondence $\ms{D}_0(\alg{A},\hs{H},\gamma) \cong \bdd^1_\alg{A}(\hs{H}^\even,\hs{H}^\odd)$, the action of $\lrU_\alg{A}(\hs{H},\gamma)$ decouples into an action of $\lrU_\alg{A}(\hs{H}^\odd)$ by multiplication on the left and an action of $\lrU_\alg{A}(\hs{H}^\even)$ by multiplication by the inverse on the right. Thus, the map $[D] \to [P^\odd D P^\even]$ is not only well-defined but manifestly homeomorphic.
\end{proof}

Combining this last Proposition with Proposition~\ref{order1decomp}, we immediately obtain the following:

\begin{corollary}
 Let $(\hs{H},\gamma)$ be an even $\alg{A}$-bimodule. Then
 \begin{multline}
  \ms{D}(\alg{A},\hs{H},\gamma) \cong \lrU_\alg{A}(\hs{H}^\odd) \backslash ( \lbdd_\alg{A}(\hs{H}^\even,\hs{H}^\odd)^0 \\ \times \lrbdd_\alg{A}(\hs{H}^\even,\hs{H}^\odd) \times \rbdd_\alg{A}(\hs{H}^\even,\hs{H}^\odd)^0 ) / \lrU_\alg{A}(\hs{H}^\even),
 \end{multline}
 where $\lrU_\alg{A}(\hs{H}^\odd)$ acts diagonally by multiplication on the left, and $\lrU_\alg{A}(\hs{H}^\even)$ acts diagonally by multiplication on the right by the inverse.
\end{corollary}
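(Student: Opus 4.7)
The plan is to combine the two results immediately preceding the corollary: Proposition~\ref{evendirac} reduces the moduli space of Dirac operators on $(\hs{H},\gamma)$ to a two-sided quotient of $\bdd^1_\alg{A}(\hs{H}^\even,\hs{H}^\odd)$, while Proposition~\ref{order1decomp} gives a canonical direct sum decomposition of that space into three pieces. All that needs to be checked is that the left action of $\lrU_\alg{A}(\hs{H}^\odd)$ and the right action of $\lrU_\alg{A}(\hs{H}^\even)$ preserve each of the three summands, so that the two-sided quotient factors through the product.

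More concretely, I would first invoke Proposition~\ref{evendirac} to rewrite $\ms{D}(\alg{A},\hs{H},\gamma)$ as $\lrU_\alg{A}(\hs{H}^\odd)\backslash \bdd^1_\alg{A}(\hs{H}^\even,\hs{H}^\odd)/\lrU_\alg{A}(\hs{H}^\even)$, where the actions are by left multiplication and by right multiplication by the inverse, respectively. Then I would apply Proposition~\ref{order1decomp} to identify the underlying real vector space with $\lbdd_\alg{A}(\hs{H}^\even,\hs{H}^\odd)^0 \oplus \lrbdd_\alg{A}(\hs{H}^\even,\hs{H}^\odd) \oplus \rbdd_\alg{A}(\hs{H}^\even,\hs{H}^\odd)^0$, corresponding to the projectors $\Id - E_\rho$, $E_\lambda E_\rho$, and $\Id - E_\lambda$.

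The key verification is that these three projectors are $\lrU_\alg{A}(\hs{H}^\odd) \times \lrU_\alg{A}(\hs{H}^\even)$-equivariant. For left multiplication by $A \in \lrU_\alg{A}(\hs{H}^\odd) \subset \lrbdd_\alg{A}(\hs{H}^\odd)$, Proposition~\ref{lrorder1decomp} gives $E_\lambda(AT) = A E_\lambda(T)$; moreover, since $A$ commutes with the right action $\rho^\odd$, the Fubini--Tonelli argument used to define $E_\rho$ also yields $E_\rho(AT) = A E_\rho(T)$ directly from the definition. Symmetrically, right multiplication by $B \in \lrU_\alg{A}(\hs{H}^\even)$ commutes with both $E_\lambda$ and $E_\rho$. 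Hence each of the three summands is stable under both actions, and the two-sided quotient of the direct sum becomes the diagonal two-sided quotient of the product, which is precisely the claimed description.

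The hardest (though still routine) step is the mutual commutation of the $\lrU_\alg{A}$-actions with \emph{both} $E_\lambda$ and $E_\rho$, since Proposition~\ref{lrorder1decomp} as stated only records the ``matching'' commutations; the remaining two follow from the fact that $\lrbdd_\alg{A}$ operators commute with both left and right representations, which allows them to be pulled through the Haar integral in either definition. Once this is in hand, the identification is purely formal.
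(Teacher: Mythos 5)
Your proof is correct and takes essentially the same route as the paper: invoke Proposition~\ref{evendirac} to pass to the double quotient of $\bdd^1_\alg{A}(\hs{H}^\even,\hs{H}^\odd)$, apply Proposition~\ref{order1decomp} to decompose that space, and check equivariance of the three projectors under the two $\lrU_\alg{A}$-actions. You are right that Proposition~\ref{lrorder1decomp} as stated records only two of the four needed commutations, and that the remaining two follow directly from the Haar-integral definitions of $E_\lambda$ and $E_\rho$, since elements of $\lrbdd_\alg{A}$ commute with both $\lambda$ and $\rho$; the paper glosses over this by declaring the corollary immediate, so your slightly more explicit verification is a small improvement on the same argument.
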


Now, just as we did in the odd case, let us describe the construction of an arbitary Dirac operator $D$ on $(\hs{H},\gamma)$:
\begin{enumerate}
 \item For $\alpha$, $\beta$, $\gamma \in \spec{\alg{A}}$, choose $M_{\alpha\beta}^\gamma \in M_{n_\gamma m^\odd_{\gamma\beta} \times n_\alpha m^\even_{\alpha\beta}}(\field{C})$;
 \item For $\alpha$, $\beta$, $\delta \in \spec{\alg{A}}$, choose $N_{\alpha\beta}^\delta \in M_{m^\odd_{\alpha\delta} n_\delta \times m^\even_{\alpha\beta} n_\beta}(\field{C})$;
 \item Construct $\Delta \in \bdd^1_\alg{A}(\hs{H}^\even,\hs{H}^\odd)$ by setting, for $\alpha$, $\beta$, $\gamma$, $\delta \in \spec{\alg{A}}$,
 \begin{equation}
  \Delta_{\alpha\beta}^{\gamma\delta} =
  \begin{cases}
   M_{\alpha\beta}^\gamma \otimes 1_{n_\beta} &\text{if $\alpha \neq \gamma$ and $\beta = \delta$,}\\
   1_{n_\alpha} \otimes N_{\alpha\beta}^\delta &\text{if $\alpha = \gamma$ and $\beta \neq \delta$,}\\
   M_{\alpha\beta}^\alpha \otimes 1_{n_\beta} + 1_{n_\alpha} \otimes N_{\alpha\beta}^\beta &\text{if $(\alpha,\beta) = (\gamma,\delta)$,}\\
   0 &\text{otherwise;}\\
  \end{cases}
 \end{equation}
 \item Finally, set $D = \bigl( \begin{smallmatrix} 0 & \Delta^* \\ \Delta & 0 \end{smallmatrix} \bigr)$.
\end{enumerate}

Again, note that for any $K = (1_{n_\alpha} \otimes K_{\alpha\beta} \otimes 1_{n_\beta})_{\alpha,\beta\in\spec{\alg{A}}} \in \lrbdd_\alg{A}(\hs{H}^\even,\hs{H}^\odd)$ , we can make the replacements
\[
 M_{\alpha\beta}^\alpha \mapsto M_{\alpha\beta}^\alpha + 1_{n_\alpha} \otimes K_{\alpha\beta}, \quad N_{\alpha\beta}^\beta \mapsto N_{\alpha\beta}^\beta - K_{\alpha\beta} \otimes 1_{n_\beta},
\]
and still obtain the same Dirac operator $D$; by Proposition~\ref{order1decomp}, this freedom is removed by requiring either that
\[
 M_{\alpha\beta}^\alpha \in \fsl(n_\alpha) \otimes M_{m^\odd_{\alpha\beta} \times m^\even_{\alpha\beta}}(\field{C})
\]
or that
\[
N_{\alpha\beta}^\beta \in M_{m^\odd_{\alpha\beta} \times m^\even_{\alpha\beta}}(\field{C}) \otimes \fsl(n_\beta).
\]

Just as in the odd case, the above discussion and Corollary~\ref{oddunitary} imply that we can identify $\ms{D}_0(\alg{A},\hs{H},\gamma)$ with
\begin{multline}
 \ms{D}_0(\alg{A},m^\even,m^\odd) := \prod_{\alpha,\beta \in \spec{\alg{A}}} \prod_{\substack{\gamma \in \spec{\alg{A}} \\ \gamma \neq \alpha}} M_{n_\gamma m^\odd_{\gamma\beta} \times n_\alpha m^\even_{\alpha\beta}}(\field{C}) \\ \times \left(\fsl(n_\alpha) \otimes M_{m^\odd_{\alpha\beta} \times  m^\even_{\alpha\beta}}(\field{C})\right) \times \prod_{\delta \in \spec{\alg{A}}} M_{m^\odd_{\alpha\delta} n_\delta \times m^\even_{\alpha\beta} n_\beta}(\field{C}),
\end{multline}
and identify $\lrU_\alg{A}(\hs{H}^\even)$ and $\lrU_\alg{A}(\hs{H}^\odd)$ with $\U(\alg{A},m^\even)$ and $\U(\alg{A},m^\odd)$, respectively, which are defined according to Equation~\ref{modunit}. The actions of $\lrU_\alg{A}(\hs{H}^\even)$ and $\lrU_\alg{A}(\hs{H}^\odd)$ on $\bdd^1_\alg{A}(\hs{H}^\even,\hs{H}^\odd)$ therefore correspond under these identifications to the actions of $\U(\alg{A},m^\even)$ and $\U(\alg{A},m^\odd)$, respectively, on $\ms{D}_0(\alg{A},m^\even,m^\odd)$ defined by having $(U_{\alpha\beta}^\odd) \in \U(\alg{A},m^\odd)$ and $(U_{\alpha\beta}^\even) \in \U(\alg{A},m^\even)$ act on
\[
 (M_{\alpha\beta}^\gamma;M_{\alpha\beta}^\alpha;N_{\alpha\beta}^\delta) \in \ms{D}_0(\alg{A},m^\even,m^\odd)
\]
by
\[
 M_{\alpha\beta}^\gamma \mapsto (1_{n_\gamma} \otimes U_{\gamma\beta}^\odd)M_{\alpha\beta}^\gamma, \quad N_{\alpha\beta}^\delta \mapsto (U_{\alpha\delta}^\odd \otimes 1_{n_\delta}) N_{\alpha\beta}^\delta,
\]
and
\[
 M_{\alpha\beta}^\gamma \mapsto M_{\alpha\beta}^\gamma (1_{n_\alpha} \otimes (U_{\alpha\beta}^\even)^*), \quad N_{\alpha\beta}^\delta \mapsto  N_{\alpha\beta}^\delta ((U_{\alpha\beta}^\even)^* \otimes 1_{n_\beta}),
\]
respectively. Thus we have proved the following:

\begin{proposition}
Let $(\hs{H},\gamma)$ be an even $\alg{A}$-bimodule with multiplicity matrices $(m^\even,m^\odd)$. Then
\begin{equation}
 \ms{D}(\alg{A},\hs{H},\gamma) \cong \U(\alg{A},m^\odd) \backslash \ms{D}_0(\alg{A},m^\even,m^\odd) / \U(\alg{A},m^\even).
\end{equation}
\end{proposition}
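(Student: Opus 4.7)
The plan is to combine Proposition~\ref{evendirac} with the explicit parametrisation of $\bdd^1_\alg{A}(\hs{H}^\even,\hs{H}^\odd)$ outlined in the paragraphs immediately preceding the statement. By Proposition~\ref{evendirac}, it already suffices to exhibit a homeomorphism $\bdd^1_\alg{A}(\hs{H}^\even,\hs{H}^\odd) \cong \ms{D}_0(\alg{A},m^\even,m^\odd)$ that intertwines the left action of $\lrU_\alg{A}(\hs{H}^\odd)$ (resp.\ the inverse-right action of $\lrU_\alg{A}(\hs{H}^\even)$) with the stated $\U(\alg{A},m^\odd)$-action (resp.\ $\U(\alg{A},m^\even)$-action) on the tuples $(M_{\alpha\beta}^\gamma;M_{\alpha\beta}^\alpha;N_{\alpha\beta}^\delta)$.

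First I would set up the parametrisation. Given $\Delta \in \bdd^1_\alg{A}(\hs{H}^\even,\hs{H}^\odd)$, decompose it into components $\Delta_{\alpha\beta}^{\gamma\delta}$ as in the odd case. By Proposition~\ref{order1decomp}, $\Delta = \Delta_L + \Delta_{LR} + \Delta_R$ with $\Delta_L \in \lbdd_\alg{A}(\hs{H}^\even,\hs{H}^\odd)^0$, $\Delta_{LR} \in \lrbdd_\alg{A}(\hs{H}^\even,\hs{H}^\odd)$, $\Delta_R \in \rbdd_\alg{A}(\hs{H}^\even,\hs{H}^\odd)^0$; Proposition~\ref{linear} then forces $\Delta_L$ to vanish off blocks with $\beta = \delta$ and to have the tensor-product shape $M_{\alpha\beta}^\gamma \otimes 1_{n_\beta}$, and similarly for $\Delta_R$ in terms of $N_{\alpha\beta}^\delta$. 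The diagonal blocks $(\alpha,\beta) = (\gamma,\delta)$ contain a genuine ambiguity $M_{\alpha\beta}^\alpha \otimes 1_{n_\beta} + 1_{n_\alpha} \otimes N_{\alpha\beta}^\beta$, which I would pin down by invoking Proposition~\ref{zeromean}: imposing $M_{\alpha\beta}^\alpha \in \fsl(n_\alpha) \otimes M_{m^\odd_{\alpha\beta} \times m^\even_{\alpha\beta}}(\field{C})$ selects a unique representative, since this is exactly the kernel condition for $E_\lambda$ in the appropriate block. The resulting map $\Delta \mapsto (M_{\alpha\beta}^\gamma;M_{\alpha\beta}^\alpha;N_{\alpha\beta}^\delta)$ is the sought linear homeomorphism onto $\ms{D}_0(\alg{A},m^\even,m^\odd)$.

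Next I would identify the unitary groups. Corollary~\ref{oddunitary} applied separately to $\hs{H}^\even$ and $\hs{H}^\odd$ yields $\lrU_\alg{A}(\hs{H}^\even) \cong \U(\alg{A},m^\even)$ and $\lrU_\alg{A}(\hs{H}^\odd) \cong \U(\alg{A},m^\odd)$ via the maps $U \mapsto (U_{\alpha\beta})$ extracting the central factors. A direct computation at the block level shows that for $U^\odd \in \lrU_\alg{A}(\hs{H}^\odd)$, the action $\Delta \mapsto U^\odd \Delta$ translates to $M_{\alpha\beta}^\gamma \mapsto (1_{n_\gamma} \otimes U^\odd_{\gamma\beta})M_{\alpha\beta}^\gamma$ and $N_{\alpha\beta}^\delta \mapsto (U^\odd_{\alpha\delta} \otimes 1_{n_\delta})N_{\alpha\beta}^\delta$; the right action of $\U(\alg{A},m^\even)$ is symmetric. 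The key check is that these actions preserve the $\fsl$-normalisation on $M_{\alpha\beta}^\alpha$, which is automatic because the unitary factors act only on the $M_{m^\odd_{\alpha\beta} \times m^\even_{\alpha\beta}}(\field{C})$-tensor slot and do not touch the $\fsl(n_\alpha)$-slot.

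Finally, passing to the quotient gives the claimed homeomorphism. I do not anticipate a substantial obstacle; the argument is essentially bookkeeping across the decompositions of Propositions~\ref{linear},~\ref{order1decomp} and~\ref{zeromean}. The only delicate point is ensuring the normalisation that eliminates the $M/N$-ambiguity on the diagonal is stable under the two-sided unitary action, but as noted this follows from the tensor-factorisation of the group action.
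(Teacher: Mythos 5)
Your argument is correct and follows the same route as the paper: reduce via Proposition~\ref{evendirac} to the double-coset description, parametrise $\bdd^1_\alg{A}(\hs{H}^\even,\hs{H}^\odd)$ by tuples $(M_{\alpha\beta}^\gamma; M_{\alpha\beta}^\alpha; N_{\alpha\beta}^\delta)$ using Propositions~\ref{linear}, \ref{order1decomp} and~\ref{zeromean} together with the $\fsl(n_\alpha)$-normalisation on the diagonal, identify the unitary groups via Corollary~\ref{oddunitary}, and observe that the transported actions act only on the middle tensor slot and hence preserve the normalisation. One small terminological slip: with the paper's conventions, components of shape $M_{\alpha\beta}^\gamma \otimes 1_{n_\beta}$ (requiring $\beta = \delta$) constitute the \emph{right}-linear part in $\rbdd_\alg{A}(\hs{H}^\even,\hs{H}^\odd)^0$, while those of shape $1_{n_\alpha} \otimes N_{\alpha\beta}^\delta$ (requiring $\alpha = \gamma$) are the \emph{left}-linear part in $\lbdd_\alg{A}(\hs{H}^\even,\hs{H}^\odd)^0$, so your labels $\Delta_L$ and $\Delta_R$ are interchanged; this does not affect the substance of the argument.
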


In the quasi-orientable case, the picture simplifies considerably, as all components $\Delta_{\alpha\beta}^{\alpha\beta}$ necessarily vanish. One is then left, essentially, with the situation described by Krajewski~\cite{Kraj98}*{\S 3.4} and Paschke--Sitarz~\cite{PS98}*{\S 2.II} ; as mentioned before, one can find in the former the original definition of what are now called \term{Krajewski diagrams}. These diagrams, used extensively by Iochum, Jureit, Sch{\"u}cker and Stephan~\cites{ACG1,ACG2,ACG3,ACG4,ACG5,Sch05}, offer a concise, diagrammatic approach to the study of quasi-orientable even bilateral spectral triples that strongly emphasizes the underlying combinatorics. Though they do admit ready generalisation to the non-quasi-orientable case, we will not discuss them here.

We conclude our discussion of even bilateral spectral triples by recalling a result of Paschke and Sitarz of particular interest in relation to the NCG Standard Model.

\begin{proposition}[Paschke--Sitarz~\cite{PS98}*{Lemma 7}]
Let $(\hs{H},\gamma)$ be an orientable $\alg{A}$-bimodule. Then for all $D \in \ms{D}_0(\alg{A},\hs{H},\gamma)$,
\begin{equation}
 D = \sum_{\substack{i,j = 1\\ i \neq j}}^N \lambda(e_i)[D,\lambda(e_j)] + \sum_{\substack{k,l = 1\\ k \neq l}}^N \rho(e_k)[D,\rho(e_l)].
\end{equation}
\end{proposition}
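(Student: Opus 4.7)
The plan is to exploit the three-fold decomposition $\bdd^1_\alg{A}(\hs{H}) = \lbdd_\alg{A}(\hs{H})^0 \oplus \lrbdd_\alg{A}(\hs{H}) \oplus \rbdd_\alg{A}(\hs{H})^0$ of Proposition~\ref{order1decomp} and write $D = D_L + D_{LR} + D_R$. Since left and right multiplication by $\gamma \in \lrbdd_\alg{A}(\hs{H})$ commute with the idempotents $E_\lambda$ and $E_\rho$ by Proposition~\ref{lrorder1decomp}, the anticommutation $\{D,\gamma\}=0$ passes to each summand, so $D_L$, $D_{LR}$ and $D_R$ each anticommute with $\gamma$. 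In particular $D_{LR}$, being left- and right-$\alg{A}$-linear and swapping $\hs{H}^\even$ with $\hs{H}^\odd$, lives in $\lrbdd_\alg{A}(\hs{H}^\even,\hs{H}^\odd) \oplus \lrbdd_\alg{A}(\hs{H}^\odd,\hs{H}^\even)$, which vanishes by Lemma~\ref{orientable} since orientability implies quasi-orientability. Hence $D = D_L + D_R$.

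Because $D_L$ is left $\alg{A}$-linear, $[D,\lambda(e_j)] = [D_R,\lambda(e_j)]$. Left-multiplying by $\lambda(e_i)$ and using the orthogonality $\lambda(e_i)\lambda(e_j)=0$ for $i\neq j$ together with $\sum_i \lambda(e_i) = 1$, one gets
\[
 \sum_{i \neq j} \lambda(e_i)[D,\lambda(e_j)] = \sum_{i\neq j} \lambda(e_i) D_R \lambda(e_j) = D_R - \sum_i \lambda(e_i)D_R\lambda(e_i).
\]
The symmetric manipulation yields $\sum_{k\neq l}\rho(e_k)[D,\rho(e_l)] = D_L - \sum_k \rho(e_k)D_L\rho(e_k)$. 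Adding the two identities shows that the claim reduces to proving
\[
 \sum_i \lambda(e_i) D_R \lambda(e_i) = 0 \quad\text{and}\quad \sum_k \rho(e_k) D_L \rho(e_k) = 0.
\]

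The main obstacle is the first of these identities (the second being symmetric). By Proposition~\ref{linear}, right $\alg{A}$-linearity of $D_R$ forces $(D_R)_{\alpha\beta}^{\gamma\delta}$ to vanish unless $\beta = \delta$, and $\lambda(e_i) D_R \lambda(e_i)$ extracts precisely the further subcollection of components for which both $\alpha$ and $\gamma$ lie in $\spec{M_{k_i}(\field{K}_i)}$. I would then argue that each such component is forced to vanish by combining $\{D_R,\gamma\}=0$ with orientability. Indeed, by Proposition~\ref{converseorientable}, orientability (with the underlying quasi-orientability) forces, for each $i$ with $\field{K}_i = \field{C}$ and each $\beta \in \spec{\alg{A}}$, the entries $\mu_{\rep{n}_i \beta}$ and $\mu_{\crep{n}_i \beta}$ to have a common sign, so that the subspaces $\hs{H}_{\alpha\beta}$ with $\alpha \in \spec{M_{k_i}(\field{K}_i)}$ all lie in a single grading component of $\hs{H}$; the cases $\field{K}_i = \field{R}, \field{H}$ are trivial since $\spec{M_{k_i}(\field{K}_i)}$ is a singleton. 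The component $(D_R)_{\alpha\beta}^{\gamma\beta}$ therefore maps a grading-homogeneous subspace into one of the same parity, contradicting $\{D_R,\gamma\} = 0$ unless it vanishes.

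The analogous use of the orientability constraints $\mu_{\alpha \rep{n}_j}\mu_{\alpha \crep{n}_j} \geq 0$ yields $\sum_k \rho(e_k) D_L \rho(e_k) = 0$, and adding the two resulting identities gives the desired expression $D = D_L + D_R$.
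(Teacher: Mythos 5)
Your proof is correct, and it is essentially the paper's argument repackaged: the paper passes directly to the difference $T = D - \sum \lambda(e_i)[D,\lambda(e_j)] - \sum\rho(e_k)[D,\rho(e_l)]$, reads off its components, and kills them using Proposition~\ref{order1decomp} (the case where both block indices differ) and Proposition~\ref{converseorientable} together with $\{D,\gamma\}=0$ (the case where neither differs), which is exactly what you do via the explicit decomposition $D = D_L + D_{LR} + D_R$, the vanishing of $D_{LR}$, and the grading argument for the residual sums $\sum_i\lambda(e_i)D_R\lambda(e_i)$ and $\sum_k\rho(e_k)D_L\rho(e_k)$.
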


\begin{proof}
 Fix $D \in \ms{D}_0(\alg{A},\hs{H},\gamma)$, and let
\begin{align*}
 T &:= D - \sum_{\substack{i,j = 1\\ i \neq j}}^N \lambda(e_i)[D,\lambda(e_j)] - \sum_{\substack{k,l = 1\\ k \neq l}}^N \rho(e_k)[D,\rho(e_l)]\\
 &= D - \sum_{\substack{i,j = 1\\ i \neq j}}^N \lambda(e_i)D\lambda(e_j) - \sum_{\substack{k,l = 1\\ k \neq l}}^N \rho(e_k)D\rho(e_l).
\end{align*}
Then for all $\alpha$, $\beta$, $\gamma$, $\delta \in \spec{\alg{A}}$,
\[
 T_{\alpha\beta}^{\gamma\delta} = 
\begin{cases}
 D_{\alpha\beta}^{\gamma\delta} &\text{if $r(\alpha) = r(\gamma)$, $r(\beta) = r(\delta)$,}\\
 -D_{\alpha\beta}^{\gamma\delta} &\text{if $r(\alpha) \neq r(\gamma)$, $r(\beta) \neq r(\delta)$,}\\
 0 &\text{otherwise,}
\end{cases}
\]
where for $\alpha \in \spec{\alg{A}}$, $r(\alpha)$ is the value of $j \in \{1,\dotsc,N\}$ such that $\alpha \in \spec{M_{k_j}(\field{K}_j)}$. However, by Proposition~\ref{order1decomp}, $D_{\alpha\beta}^{\gamma\delta}$ must vanish in the second case, whilst by Proposition~\ref{converseorientable}, $D_{\alpha\beta}^{\gamma\delta}$ must vanish in the first, so that $T = 0$.
\end{proof}

Now, let $(\alg{A},\hs{H},D,J,\gamma)$ be a real spectral triple of even $KO$-dimension. A \term{gauge potential} for the triple is then a self-adjoint operator on $\hs{H}$ of the form
\[
 \sum_{k=1}^n \lambda(a_k) [D,\lambda(b_k)],
\]
where $a_1,\dotsc,a_n$, $b_1,\dotsc,b_n \in \alg{A}$, and an \term{inner fluctuation of the metric} is a Dirac operator $D_A \in \ms{D}_0(\alg{A},\hs{H},J,\gamma)$ of the form
\[
 D_A := D + A + \epsp J A J^* = D + A + J A J^*,
\]
where $A$ is a gauge potential. One then has that for any gauge potential $A$, $(\alg{A},\hs{H},D,J,\gamma)$ and $(\alg{A},\hs{H},D_A,J,\gamma)$ are Morita equivalent. In this light, the last Proposition admits the following interpretation:

\begin{corollary}\label{orientmorita}
 Let $(\hs{H},J,\gamma)$ be an orientable real $\alg{A}$-bimodule of even $KO$-dimen\-sion. Then for all $D \in \ms{D}_0(\alg{A},\hs{H},\gamma,J)$, 
\begin{equation}
 A = - \sum_{\substack{i,j = 1\\ i \neq j}}^N \lambda(e_i)[D,\lambda(e_j)]
\end{equation}
is a gauge potential for the real spectral triple $(\alg{A},\hs{H},D,J,\gamma)$ such that $D_A = 0$.
\end{corollary}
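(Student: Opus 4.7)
The plan is to apply the preceding Proposition directly, and then convert its second ($\rho$-valued) sum into a $J$-conjugate of its first. I would begin by checking that $A$ is genuinely a gauge potential, i.e.\ self-adjoint. Since the units $e_1,\dotsc,e_N$ of the Wedderburn summands of $\alg{A}$ are mutually orthogonal, $\lambda(e_i)\lambda(e_j)=0$ for $i\neq j$, so expanding the commutator collapses the sum to
\[
 A = -\sum_{\substack{i,j=1\\ i\neq j}}^N \lambda(e_i)\,D\,\lambda(e_j),
\]
which is manifestly self-adjoint under the swap $i\leftrightarrow j$ together with self-adjointness of $D$ and of each $\lambda(e_i)$. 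By construction, the first sum in the preceding Proposition is then precisely $-A$.

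For the second sum, I would exploit that $\hs{H}$ is a real bimodule, so $\rho(a)=J\lambda(a^*)J^*$, together with the consequence $JDJ^*=\epsp D$ of the commutation relation $JD=\epsp DJ$ (obtained by right-multiplication by $J^*$). A short direct calculation, inserting $J^*J=1$ and moving $D$ past $J$ and $J^*$, then yields
\[
 [D,\, J\lambda(e_l)J^*] \;=\; \epsp\, J\,[D,\lambda(e_l)]\,J^*,
\]
and hence
\[
 \rho(e_k)[D,\rho(e_l)] \;=\; \epsp\, J\lambda(e_k)[D,\lambda(e_l)]\,J^*.
\]
Summing over $k\neq l$ therefore converts the second sum in the preceding Proposition into $-\epsp\,JAJ^*$.

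Substituting both identifications back into the Proposition gives $D = -A -\epsp JAJ^*$, which rearranges to the desired identity $D_A = D + A + \epsp JAJ^* = 0$. (Since the statement concerns even $KO$-dimension, $\epsp = 1$ and the auxiliary sign disappears, matching the formula $D_A = D+A+JAJ^*$ stated earlier.) The only real step requiring care is the sign-bookkeeping when commuting $D$ through $J$ and $J^*$; everything else is purely formal, and no further structural hypothesis beyond orientability (invoked solely to apply the preceding Proposition) is needed.
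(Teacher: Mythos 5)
Your proof is correct and is exactly the argument the paper leaves implicit (the Corollary is presented without proof as an immediate consequence of Paschke--Sitarz's Lemma 7). You correctly expand the commutator to see $A$ is self-adjoint via $e_ie_j = 0$, correctly derive $[D,J\lambda(e_l)J^*] = \epsp\, J[D,\lambda(e_l)]J^*$ from $DJ = \epsp JD$ and $\rho(a) = J\lambda(a^*)J^*$, and correctly match the two sums of the preceding Proposition to $-A$ and $-\epsp\, JAJ^*$, yielding $D_A = 0$.
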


Thus, every finite orientable real spectral triple $(\alg{A},\hs{H},D,J,\gamma)$ of even $KO$-dimension is Morita equivalent to the dynamically trivial triple $(\alg{A},\hs{H},0,J,\gamma)$.

\subsection{Real spectral triples of odd $KO$-dimension}

For this section, let $(\hs{H},J)$ be a real $\alg{A}$-bimodule of odd $KO$-dimension $n \bmod 8$ with multiplicity matrix $m$. We begin by reducing the study of Dirac operators on $(\hs{H},J)$ to that of self-adjoint right $\alg{A}$-linear operators on $\hs{H}$.

\begin{proposition}[Krajewski~\cite{Kraj98}*{\S 3.4}]\label{oddrealdirac}
Let $(\hs{H},J)$ be a real $\alg{A}$-bimodule of odd $KO$-dimension $n \bmod 8$. Then the map $R_n : \rbdd_\alg{A}(\hs{H})_\sa \to \ms{D}_0(\alg{A},\hs{H},J)$ defined by $R_n(M) := M + \epsp J M J^*$ is a surjection interwining the action of $\lrU_\alg{A}(\hs{H},J)$ on $\rbdd_\alg{A}(\hs{H})_\sa$ by conjugation with the action on $\ms{D}_0(\alg{A},\hs{H},J)$ by conjugation, and $\ker(R_n) \subseteq \lrbdd_\alg{A}(\hs{H})_\sa$.
\end{proposition}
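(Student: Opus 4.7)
My plan is to exploit the antilinear involution $\Phi : \bdd(\hs{H}) \to \bdd(\hs{H})$ defined by $\Phi(A) := JAJ^*$. The key preliminary observations are that $J^2 = \eps$ forces $\Phi^2 = \Id$; that a self-adjoint $D$ satisfies $DJ = \epsp JD$ precisely when $\Phi(D) = \epsp D$; and that, since $\rho(a) = J\lambda(a^*)J^*$, conjugation by $J$ exchanges the left and right actions, so $\Phi$ antilinearly sends $\rbdd_\alg{A}(\hs{H})$ into $\lbdd_\alg{A}(\hs{H})$ (and vice versa) while preserving $\lrbdd_\alg{A}(\hs{H})$. Granted these, verifying $R_n(M) \in \ms{D}_0(\alg{A},\hs{H},J)$ for $M \in \rbdd_\alg{A}(\hs{H})_\sa$ is mechanical: self-adjointness reduces to $(JMJ^*)^* = JM^*J^*$; the relation $R_n(M)J = \epsp J R_n(M)$ follows from $J^2 = \eps$; and the order one condition collapses to $[[M,\lambda(a)],\rho(b)] = 0$ once one notes that $\Phi(M) \in \lbdd_\alg{A}(\hs{H})$ kills the $\lambda(a)$-commutator, leaving a double commutator that vanishes by the bimodule property.

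The intertwining property is almost automatic: any $U \in \lrU_\alg{A}(\hs{H},J)$ commutes with $J$ by definition, so $\Phi(UMU^*) = U\Phi(M)U^*$, whence $R_n(UMU^*) = U R_n(M) U^*$.

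The substantive step is surjectivity, for which I plan to refine the order-one decomposition of Proposition~\ref{order1decomp} using $\Phi$. A short Haar-measure calculation using $\rho(u^{-1}) = J\lambda(u)J^*$ should yield $\Phi \circ E_\lambda = E_\rho \circ \Phi$, which together with $\Phi^2 = \Id$ shows that $\Phi$ exchanges $\rbdd_\alg{A}(\hs{H})^0$ and $\lbdd_\alg{A}(\hs{H})^0$ antilinearly and preserves $\lrbdd_\alg{A}(\hs{H})$. Given $D \in \ms{D}_0(\alg{A},\hs{H},J)$, I will decompose $D = D_\ell + D_{lr} + D_r$ with self-adjoint summands in $\lbdd_\alg{A}(\hs{H})^0_\sa$, $\lrbdd_\alg{A}(\hs{H})_\sa$, and $\rbdd_\alg{A}(\hs{H})^0_\sa$ respectively. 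Comparing components in $\Phi(D) = \epsp D$ via uniqueness of the decomposition then forces $\Phi(D_r) = \epsp D_\ell$ and $\Phi(D_{lr}) = \epsp D_{lr}$. The natural ansatz $M := D_r + \tfrac{1}{2} D_{lr}$ is self-adjoint and right $\alg{A}$-linear, and a brief calculation using $\epsp^2 = 1$ confirms that $R_n(M) = D_r + D_{lr} + D_\ell = D$.

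For the kernel statement, I plan to split $M \in \rbdd_\alg{A}(\hs{H})_\sa$ as $M = M^0 + M_{lr}$ along $\rbdd_\alg{A}(\hs{H}) = \rbdd_\alg{A}(\hs{H})^0 \oplus \lrbdd_\alg{A}(\hs{H})$. Since $\Phi(M^0) \in \lbdd_\alg{A}(\hs{H})^0$ and $\Phi(M_{lr}) \in \lrbdd_\alg{A}(\hs{H})$, the equation $M + \epsp \Phi(M) = 0$ separates componentwise in the three-part decomposition, forcing $M^0 = 0$ and hence $M \in \lrbdd_\alg{A}(\hs{H})_\sa$. I expect the main obstacle to be the antilinearity bookkeeping in the surjectivity step---establishing $\Phi \circ E_\lambda = E_\rho \circ \Phi$ carefully, and checking that the $\tfrac{1}{2}$ factor in the ansatz correctly reproduces $D$ under $R_n$.
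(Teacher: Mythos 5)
Your proposal is correct and matches the paper's proof in substance: the paper also uses the relations $E_\lambda(JTJ^*) = JE_\rho(T)J^*$ (your $\Phi \circ E_\lambda = E_\rho \circ \Phi$) and chooses $M = \tfrac{1}{2}(E_\lambda' + E_\rho)(D)$, which under your three-part decomposition is exactly $D_r + \tfrac{1}{2}D_{lr}$. The kernel argument in the paper is phrased slightly more directly (if $M = -\epsp JMJ^*$ then $M$ is simultaneously right-linear and left-linear), but it is the same idea.
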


\begin{proof}
First, note that $R_n$ is indeed well-defined, since by Equation~\ref{realintertwine}, for any $M \in \rbdd_\alg{A}(\hs{H})_\sa$, $J M J^* \in \lbdd_\alg{A}(\hs{H})_\sa$, and hence $R_n(M) \in \ms{D}_0(\alg{A},\hs{H},J)$.

Now, let $E_\lambda$ and $E_\rho$ be defined as in Lemma~\ref{decompproj}, and let $E_\lambda^\prime = \Id - E_\lambda$, $E_\rho^\prime = \Id - E_\rho$. Then, by construction of $E_\lambda$ and $E_\rho$ and Equation~\ref{realintertwine}, for any $T \in \bdd^1_\alg{A}(\hs{H})$,
\[
 E_\lambda(J T J^*) = J E_\rho(T) J^*,\quad E_\rho(J T J^*) = J E_\lambda(T) J^*.
\]
Hence, in particular, for $D \in \ms{D}_0(\alg{A},\hs{H},J)$, since $J D J^* = \epsp D$,
\begin{align*}
 D &= \frac{1}{2}(E_\lambda^\prime + E_\rho)(T) + \frac{1}{2}(E_\lambda + E_\rho^\prime)(T) \\ &= \frac{1}{2}(E_\lambda^\prime + E_\rho)(T) + \epsp J \frac{1}{2}(E_\lambda^\prime + E_\rho)(T) J^*\\ &= R_n\left(\frac{1}{2}(E_\lambda^\prime + E_\rho)(T)\right),
\end{align*}
where $\frac{1}{2}(E_\lambda^\prime + E_\rho)(T) \in \rbdd_\alg{A}(\hs{H})_\sa$.

Finally, that $R_n$ interwtines the actions of $\lrU_\alg{A}(\hs{H},J)$ follows from Proposition~\ref{lrorder1decomp} together with the fact that elements of $\lrU_\alg{A}(\hs{H},J)$, by definition, commute with $J$, whilst the fact that $R_n(M) = 0$ if and only if $M = -\epsp J M J^*$ implies that $\ker(R_n) \subseteq \lrbdd_\alg{A}(\hs{H})_\sa$.
\end{proof}

It follows, in particular, that $\ker(R_n)$ is invariant under the action of $\lrU_\alg{A}(\hs{H},J)$ by conjugation, so that the action of $\lrU_\alg{A}(\hs{H},J)$ on $\rbdd_\alg{A}(\hs{H})_\sa$ induces an action on the quotient $\rbdd_\alg{A}(\hs{H})_\sa / \ker(R_n)$, and hence $R_n$ induces an isomorphism
\begin{equation}
 \ms{D}_0(\alg{A},\hs{H},J) \cong \rbdd_\alg{A}(\hs{H})_\sa / \ker(R_n)
\end{equation}
of $\lrU_\alg{A}(\hs{H},J)$-representations. Thus we have proved the following:

\begin{corollary}
Let $(\hs{H},J)$ be a real $\alg{A}$-bimodule of odd $KO$-dimension $n \bmod 8$. Then
\begin{equation}
 \ms{D}(\alg{A},\hs{H},J) \cong \left(\rbdd_\alg{A}(\hs{H})_\sa / \ker(R_n)\right) / \lrU_\alg{A}(\hs{H},J).
\end{equation}
\end{corollary}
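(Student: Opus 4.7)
The plan is to deduce this corollary as a formal consequence of Proposition~\ref{oddrealdirac}, which has already done all of the substantive work. That proposition supplies a surjective real-linear map $R_n : \rbdd_\alg{A}(\hs{H})_\sa \to \ms{D}_0(\alg{A},\hs{H},J)$ which, crucially, intertwines the conjugation actions of $\lrU_\alg{A}(\hs{H},J)$ on its domain and codomain. My first step is to observe that, by equivariance, the subspace $\ker(R_n)$ is automatically $\lrU_\alg{A}(\hs{H},J)$-invariant, so the conjugation action descends to a well-defined action on the quotient vector space $\rbdd_\alg{A}(\hs{H})_\sa / \ker(R_n)$.

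Next I would apply the usual first isomorphism theorem for linear maps to obtain an induced linear bijection
\[
 \overline{R_n} : \rbdd_\alg{A}(\hs{H})_\sa / \ker(R_n) \xrightarrow{\sim} \ms{D}_0(\alg{A},\hs{H},J),
\]
and note that because $R_n$ is equivariant, so is $\overline{R_n}$. Finally, passing to orbit spaces for the $\lrU_\alg{A}(\hs{H},J)$-action on both sides yields
\[
 \bigl(\rbdd_\alg{A}(\hs{H})_\sa / \ker(R_n)\bigr) / \lrU_\alg{A}(\hs{H},J) \cong \ms{D}_0(\alg{A},\hs{H},J) / \lrU_\alg{A}(\hs{H},J),
\]
and the right-hand side is by definition $\ms{D}(\alg{A},\hs{H},J)$.

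There is really no obstacle here: the substantive content has already been packaged into Proposition~\ref{oddrealdirac}, and what remains is a routine verification that the kernel-quotient and orbit-space quotient commute, which is a standard diagram chase. The only subtlety worth recording is the topological upgrade, namely that $\ms{D}(\alg{A},\hs{H},J)$ carries its orbit-space topology coming from a smooth representation of the compact Lie group $\lrU_\alg{A}(\hs{H},J)$, and that this topology is preserved by the intermediate linear quotient; this is immediate from continuity of $R_n$ together with the universal property of the quotient topology.
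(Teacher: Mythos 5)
Your proof is correct and follows exactly the paper's own route: the paragraph preceding the corollary notes that equivariance of $R_n$ makes $\ker(R_n)$ invariant, that $R_n$ therefore descends to an isomorphism $\rbdd_\alg{A}(\hs{H})_\sa / \ker(R_n) \cong \ms{D}_0(\alg{A},\hs{H},J)$ of $\lrU_\alg{A}(\hs{H},J)$-representations, and then passes to orbit spaces. Nothing to add.
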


Discussion of $\ms{D}(\alg{A},\hs{H},J)$ thus requires discussion first of $\ker(R_n)$:

\begin{lemma}\label{oddker}
If $K = (1_{n_\alpha} \otimes K_{\alpha\beta} \otimes 1_{n_\beta})_{\alpha,\beta\in\spec{\alg{A}}} \in \lrbdd_\alg{A}(\hs{H})_\sa$, then $K \in \ker(R_n)$ if and only if for each $\alpha$, $\beta \in \spec{\alg{A}}$ such that $\alpha \neq \beta$,
\begin{equation}
 K_{\beta\alpha} = -\epsp K_{\alpha\beta}^T,
\end{equation}
and for each $\alpha \in \spec{\alg{A}}$,
\begin{equation}
 K_{\alpha\alpha} \in \hs{R}_\alpha(n) = 
 \begin{cases}
  \Sym_{m_{\alpha\alpha}}(\field{R}) &\text{if $n=1$,}\\
  i \mathfrak{sp}(m_{\alpha\alpha}) &\text{if $n=3$,}\\
  M_{m_{\alpha\alpha}/2}(\field{H})_\sa &\text{if $n=5$,}\\
  i \mathfrak{so}(m_{\alpha\alpha}) &\text{if $n=7$.}
 \end{cases}
\end{equation}
\end{lemma}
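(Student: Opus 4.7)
The plan is to compute $JKJ^*$ component by component using the explicit form of $J = J_m$ given in Lemmas~\ref{real17b} and~\ref{real35b}, then read off the kernel condition $R_n(K) = K + \epsp JKJ^* = 0$ block by block. Since $K \in \lrbdd_\alg{A}(\hs{H})_\sa$, each block $K_{\alpha\beta}$ will be a Hermitian matrix in $M_{m_{\alpha\beta}}(\field{C})$. The key tool throughout is the identity $X_m(1_{n_\alpha} \otimes M \otimes 1_{n_\beta}) = (1_{n_\beta} \otimes \overline{M} \otimes 1_{n_\alpha}) X_m$, which follows immediately from the definition of $X_m$ together with the reality of its action on the middle tensor factor, combined with the fact that $X_m^2 = 1$.

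First I would handle the off-diagonal blocks $(\alpha,\beta)$ with $\alpha \neq \beta$. Here the sign conventions present in the definition of $J_m$ for $KO$-dimension $3$ or $5$ appear on both sides of the conjugation $J(\cdot)J^*$ and so cancel, and a direct calculation will give $(JKJ^*)_{\alpha\beta} = \overline{K_{\beta\alpha}}$ in all four cases; by Hermiticity this equals $K_{\beta\alpha}^T$, so the kernel condition produces $K_{\alpha\beta} = -\epsp K_{\beta\alpha}^T$, which rearranges to the stated relation $K_{\beta\alpha} = -\epsp K_{\alpha\beta}^T$. Next I would consider the diagonal blocks. In $KO$-dimension $1$ or $7$, the same calculation gives $(JKJ^*)_{\alpha\alpha} = K_{\alpha\alpha}^T$, so the kernel condition reads $K_{\alpha\alpha} = -\epsp K_{\alpha\alpha}^T$; combined with Hermiticity this forces $K_{\alpha\alpha}$ to be real symmetric when $n=1$ ($\epsp = -1$) and purely imaginary antisymmetric, i.e., in $i\fso(m_{\alpha\alpha})$, when $n=7$ ($\epsp = 1$). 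In $KO$-dimension $3$ or $5$, the extra factor $\Omega_{m_{\alpha\alpha}}$ in $J_m|_{\hs{H}_{\alpha\alpha}}$, together with $J^* = -J$ on this block (since $J^2 = -1$) and the reality of $\Omega_{m_{\alpha\alpha}}$, will produce $(JKJ^*)_{\alpha\alpha} = -\Omega_{m_{\alpha\alpha}} K_{\alpha\alpha}^T \Omega_{m_{\alpha\alpha}}$, so the kernel condition becomes $K_{\alpha\alpha} = \epsp\,\Omega_{m_{\alpha\alpha}} K_{\alpha\alpha}^T \Omega_{m_{\alpha\alpha}}$.

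The only step going beyond routine block calculation will be matching these last two relations to the claimed Lie-theoretic spaces. For $n=3$ ($\epsp = 1$), after multiplying the relation $K_{\alpha\alpha} = \Omega_{m_{\alpha\alpha}} K_{\alpha\alpha}^T \Omega_{m_{\alpha\alpha}}$ by $i$ and using Hermiticity, one recovers both the skew-Hermiticity and the symplectic-compatibility conditions for $iK_{\alpha\alpha}$, placing $K_{\alpha\alpha}$ in $i\mathfrak{sp}(m_{\alpha\alpha})$. For $n=5$ ($\epsp = -1$), rewriting $K_{\alpha\alpha} = -\Omega_{m_{\alpha\alpha}} K_{\alpha\alpha}^T \Omega_{m_{\alpha\alpha}}$ via $K_{\alpha\alpha}^T = \overline{K_{\alpha\alpha}}$ yields $\Omega_{m_{\alpha\alpha}} \overline{K_{\alpha\alpha}} = K_{\alpha\alpha} \Omega_{m_{\alpha\alpha}}$, which is exactly the statement that $K_{\alpha\alpha}$ commutes with the quaternionic structure $v \mapsto \Omega_{m_{\alpha\alpha}} \overline{v}$ on $\field{C}^{m_{\alpha\alpha}}$ and so lies in the Hermitian part of $M_{m_{\alpha\alpha}/2}(\field{H}) \hookrightarrow M_{m_{\alpha\alpha}}(\field{C})$. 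Since every step along the way is an equivalence, both implications of the lemma follow at once; this final Lie-theoretic/quaternionic identification is the one place where a little care is needed.
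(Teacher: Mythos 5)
Your proposal is correct and takes essentially the same approach as the paper: compute $JKJ^*$ block by block from the explicit form of $J_m$ given in Lemmas~\ref{real17b} and~\ref{real35b}, then read off the constraints forced on $K_{\alpha\beta}$ and $K_{\alpha\alpha}$ by the kernel condition together with self-adjointness. The paper packages the $n=3,5$ diagonal identification by describing $M_{m_{\alpha\alpha}/2}(\field{H})$ as the commutant of $I_\alpha = \Omega_{m_{\alpha\alpha}}\circ\text{complex conjugation}$ in $M_{m_{\alpha\alpha}}(\field{C})$, which is precisely the quaternionic-structure observation you make at the end.
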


\begin{proof}
By definition of $R_n$, $K \in \ker(R_n)$ if and only if $K = -\epsp J K J^* = -\eps\epsp J K J$, and this in turn holds if and only if, for $\alpha$, $\beta \in \spec{\alg{A}}$ such that $\alpha \neq \beta$, 
\[
 K_{\alpha\beta} = -\epsp K_{\beta\alpha}^T,
\]
while for $\alpha \in \spec{\alg{A}}$,
\[
 K_{\alpha\alpha} =
 \begin{cases}
  -\epsp \overline{K_{\alpha\alpha}}, &\text{if $n=1$ or $7$,}\\
  \epsp I_{\alpha} K_{\alpha\alpha} I_{\alpha}^* &\text{if $n=3$ or $5$,}
 \end{cases}
\]
where $I_\alpha = \Omega_{m_{\alpha\alpha}} \circ \text{complex conjugation}$. In the case that $n =3$ or $5$, however, by construction, $M_{m_{\alpha\alpha}/2}(\field{H})$, viewed in the usual way as a real form of $M_{m_{\alpha\alpha}}(\field{C})$, is precisely the set of matrices in $M_{m_{\alpha\alpha}}(\field{C})$ commuting with $I_\alpha$. This, together with the hypothesis that $K$ is self-adjoint, so that each $K_{\alpha\beta}$ is self-adjoint, yields the desired result.
\end{proof}

We can now describe the the construction of an arbitrary Dirac operator $D$ on $(\hs{H},J)$:
\begin{enumerate}
 \item For $\alpha$, $\beta$, $\gamma \in \spec{\alg{A}}$ such that $\alpha < \gamma$, choose $M_{\alpha\beta}^\gamma \in M_{n_\gamma m_{\gamma\beta} \times n_\alpha m_{\alpha\beta}}(\field{C})$;
 \item For $\alpha$, $\beta \in \spec{\alg{A}}$, choose $M_{\alpha\beta}^\alpha \in M_{n_\alpha m_{\alpha\beta}}(\field{C})_\sa$;
 \item For $\alpha$, $\beta$, $\gamma$, $\delta \in \spec{\alg{A}}$, set
 \begin{equation}
  M_{\alpha\beta}^{\gamma\delta} = 
  \begin{cases}
   M_{\alpha\beta}^\gamma \otimes 1_{n_\beta} &\text{if $\alpha < \gamma$ and $\beta = \delta$,}\\
   (M_{\gamma\beta}^\alpha)^* \otimes 1_{n_\beta} &\text{if $\alpha > \gamma$ and $\beta = \delta$,}\\
   M_{\alpha\beta}^\alpha \otimes 1_{n_\beta} &\text{if $(\alpha,\beta) = (\gamma,\delta)$,}\\
   0 &\text{otherwise}.
  \end{cases}
 \end{equation}
 \item Finally, set $D = R_n(M)$.
\end{enumerate}

Now, let $K = (1_{n_\alpha} \otimes K_{\alpha\beta} \otimes 1_{n_\beta})_{\alpha,\beta\in\spec{\alg{A}}} \in \ker(R_n)$, so that each $K_{\alpha\beta}$ is self-adjoint, and for $\alpha$, $\beta \in \spec{\alg{A}}$ such that $\alpha \neq \beta$, $K_{\beta\alpha} = -\epsp K_{\alpha\beta}^T$ and $K_{\alpha\alpha} \in \hs{R}_\alpha(n)$. Thus, $K$ is uniquely specified by the matrices $K_{\alpha\beta} \in M_{m_{\alpha\beta}}(\field{C})_\sa$ for $\alpha < \beta$ and by the $K_{\alpha\alpha} \in \hs{R}_\alpha(n)$. Then, we can replace $M$ by $M + K$, \ie make the replacements, for $\alpha$, $\beta \in \spec{\alg{A}}$ such that $\alpha < \beta$,
\begin{gather*}
 M_{\alpha\beta}^\alpha \mapsto M_{\alpha\beta}^\alpha + 1_{n_\alpha} \otimes K_{\alpha\beta}, \quad M_{\beta\alpha}^\beta \mapsto M_{\beta\alpha}^\beta + 1_{n_\beta} \otimes (-\epsp K_{\alpha\beta}^T), \\ M_{\alpha\alpha}^\alpha \mapsto M_{\alpha\alpha}^\alpha + 1_{n_\alpha} \otimes K_{\alpha\alpha}
\end{gather*}
and obtain the same Dirac operator $D$. However, this is a freedom cannot generally be removed as we did in earlier cases, as it reflects precisely the non-injectivity of $R_n$. 

By the above discussion and Propositions~\ref{real17unitary} and~\ref{real35unitary}, we can identify the space $\ms{D}_0(\alg{A},\hs{H},J)$ with
\begin{multline}
 \ms{D}_0(\alg{A},m,n) := \prod_{\alpha \in \spec{\alg{A}}} \biggl[ M_{n_\alpha m_{\alpha\alpha}}(\field{C})_\sa / (1_{n_\alpha} \otimes \hs{R}_\alpha(n)) \\ \times \prod_{\substack{\beta \in \spec{\alg{A}} \\ \beta > \alpha}} (M_{n_\alpha m_{\alpha\beta}}(\field{C})_\sa \oplus M_{n_\beta m_{\alpha\beta}}(\field{C})_\sa)/M_{m_\alpha\beta}(\field{C})_\sa \times \prod_{\substack{\beta,\gamma \in \spec{\alg{A}}\\ \gamma > \alpha}} M_{n_\gamma m_{\gamma\beta} \times n_\alpha m_{\alpha\beta}}(\field{C}) \biggr],
\end{multline}
where $M_{m_{\alpha\beta}}(\field{C})_\sa$ is viewed as embedded in $M_{n_\alpha m_{\alpha\beta}}(\field{C})_\sa \oplus M_{n_\beta m_{\alpha\beta}}(\field{C})_\sa$ via the map
\[
 K \mapsto (1_{n_\alpha} \otimes K) \oplus (-\epsp 1_{n_\beta} \otimes K^T),
\]
and $\lrU_\alg{A}(\hs{H},J)$ with
\begin{equation}
 \U(\alg{A},m,n) := \prod_{\alpha \in \spec{\alg{A}}} \biggl( \hs{U}_\alpha(n) \times \prod_{\substack{\beta \in \spec{\alg{A}}\\ \beta > \alpha}} \U(m_{\alpha\beta}) \biggr),
\end{equation}
where
\[
 \hs{U}_\alpha(n) :=
\begin{cases}
 \Orth(m_{\alpha\alpha}) &\text{if $n = 1$ or $7$,}\\
 \Sp(m_{\alpha\alpha}) &\text{if $n=3$ or $5$.}
\end{cases}
\]
Then the action of $\lrU_\alg{A}(\hs{H};J)$ on $\ms{D}_0(\alg{A},\hs{H};J)$ corresponds under these identifications to the action of $\U(\alg{A},m,n)$ on $\ms{D}_0(\alg{A},m,n)$ defined by having the element $(U_{\alpha\alpha};U_{\alpha\beta}) \in \U(\alg{A},m,n)$ act on $\bigl([M_{\alpha\alpha}^\alpha];[(M_{\alpha\beta}^\alpha,M_{\beta\alpha}^\beta)];M_{\alpha\beta}^\gamma\bigr) \in \ms{D}_0(\alg{A},m,n)$ by
\begin{align*}
 [M_{\alpha\alpha}^\alpha] &\mapsto \bigl[(1_{n_\alpha} \otimes U_{\alpha\alpha}) M_{\alpha\alpha} (1_{n_\alpha} \otimes U_{\alpha\alpha}^*)\bigr];\\
 [(M_{\alpha\beta}^\alpha,M_{\beta\alpha}^\beta)] &\mapsto \biggl[\bigl((1_{n_\alpha} \otimes U_{\alpha\beta}) M_{\alpha\beta}^\alpha (1_{n_\alpha} \otimes U_{\alpha\beta}^*),(1_{n_\beta} \otimes \overline{U_{\alpha\beta}})M_{\beta\alpha}^\beta(1_{n_\beta} \otimes U_{\alpha\beta}^T)\bigr)\biggr];\\
 M_{\alpha\beta}^\gamma &\mapsto
 \begin{cases}
  (1_{n_\gamma} \otimes U_{\gamma\beta}) M_{\alpha\beta}^\gamma (1_{n_\alpha} \otimes U_{\alpha\beta}^*) &\text{if $\alpha < \beta$, $\gamma < \delta$,}\\
  (1_{n_\gamma} \otimes U_{\gamma\beta}) M_{\alpha\beta}^\gamma (1_{n_\alpha} \otimes U_{\beta\alpha}^T) &\text{if $\alpha > \beta$, $\gamma < \delta$,}\\
  (1_{n_\gamma} \otimes \overline{U_{\beta\gamma}}) M_{\alpha\beta}^\gamma (1_{n_\alpha} \otimes U_{\alpha\beta}^*) &\text{if $\alpha < \beta$, $\gamma > \delta$,}\\
  (1_{n_\gamma} \otimes \overline{U_{\beta\gamma}}) M_{\alpha\beta}^\gamma (1_{n_\alpha} \otimes U_{\beta\alpha}^T) &\text{if $\alpha > \beta$, $\gamma > \delta$.}
 \end{cases}
\end{align*}
We have therefore proved the following:

\begin{proposition}
 Let $(\hs{H},J)$ be a real $\alg{A}$-bimodule of odd $KO$-dimension $n \bmod 8$ with multiplicity matrix $m$. Then
 \begin{equation}
  \ms{D}(\alg{A},\hs{H},J) \cong \ms{D}_0(\alg{A},m,n) / \U(\alg{A},m,n).
 \end{equation}
\end{proposition}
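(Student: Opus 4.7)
The plan is to assemble the isomorphism directly from the ingredients prepared in the discussion preceding the statement; nothing essentially new is required beyond careful bookkeeping. First, Proposition~\ref{oddrealdirac} yields the $\lrU_\alg{A}(\hs{H},J)$-equivariant identification
\[
 \ms{D}_0(\alg{A},\hs{H},J) \cong \rbdd_\alg{A}(\hs{H})_\sa / \ker(R_n),
\]
where the action on the right is by conjugation. So it suffices to describe the right-hand side concretely in terms of the matrix data and to transfer the $\lrU_\alg{A}(\hs{H},J)$-action.

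Next, I would invoke Proposition~\ref{linear} to write each $M \in \rbdd_\alg{A}(\hs{H})_\sa$ componentwise as $M_{\alpha\beta}^{\gamma\beta} = M_{\alpha\beta}^\gamma \otimes 1_{n_\beta}$ for blocks $M_{\alpha\beta}^\gamma \in M_{n_\gamma m_{\gamma\beta} \times n_\alpha m_{\alpha\beta}}(\field{C})$; the self-adjointness condition forces $M_{\gamma\beta}^\alpha = (M_{\alpha\beta}^\gamma)^*$, so the independent data consists of the blocks with $\alpha < \gamma$ together with self-adjoint diagonal blocks $M_{\alpha\beta}^\alpha$. Lemma~\ref{oddker} then identifies $\ker(R_n) \subseteq \lrbdd_\alg{A}(\hs{H})_\sa$ precisely with those self-adjoint $K = (1_{n_\alpha} \otimes K_{\alpha\beta} \otimes 1_{n_\beta})$ for which $K_{\beta\alpha} = -\epsp K_{\alpha\beta}^T$ (for $\alpha \neq \beta$) and $K_{\alpha\alpha} \in \hs{R}_\alpha(n)$. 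Taking the quotient by $\ker(R_n)$ thus amounts to modding the pair $(M_{\alpha\beta}^\alpha, M_{\beta\alpha}^\beta)$ (with $\alpha < \beta$) by the image of $M_{m_{\alpha\beta}}(\field{C})_\sa$ under $K \mapsto (1_{n_\alpha}\otimes K, -\epsp 1_{n_\beta}\otimes K^T)$, and modding $M_{\alpha\alpha}^\alpha$ by $1_{n_\alpha} \otimes \hs{R}_\alpha(n)$, leaving the off-diagonal blocks $M_{\alpha\beta}^\gamma$ with $\alpha \neq \gamma$ untouched. Reading off the product gives exactly $\ms{D}_0(\alg{A},m,n)$.

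For the group action, Proposition~\ref{real17unitary} (if $n = 1, 7$) or Proposition~\ref{real35unitary} (if $n = 3, 5$) identifies $\lrU_\alg{A}(\hs{H},J)$ with $\U(\alg{A},m,n)$, an element being a tuple $(U_{\alpha\alpha}; U_{\alpha\beta})$ with $U_{\alpha\alpha} \in \hs{U}_\alpha(n)$ and $U_{\alpha\beta} \in \U(m_{\alpha\beta})$ for $\alpha < \beta$. Since the right action $\rho$ involves the transpose $\lambda_\beta(a)^T$, conjugation by $U \in \lrU_\alg{A}(\hs{H},J)$ transforms a component whose $\beta$-index pairs with an $\overline{U_{\beta\gamma}}$-block precisely as written in the display preceding the statement; the case splits $\alpha \lessgtr \beta$ and $\gamma \lessgtr \delta$ reflect when the $J$-induced relation $U_{\beta\alpha} = \overline{U_{\alpha\beta}}$ (or its symplectic analogue on the diagonal) must be invoked to rewrite a block in terms of independent generators. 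Since $\ker(R_n)$ is $\lrU_\alg{A}(\hs{H},J)$-invariant, the action descends to the quotient $\ms{D}_0(\alg{A},m,n)$ and matches the one specified.

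Finally, combining these ingredients, the map $\rbdd_\alg{A}(\hs{H})_\sa \to \ms{D}_0(\alg{A},m,n)$ sending $M$ to its parameter tuple descends to an equivariant isomorphism $\ms{D}_0(\alg{A},\hs{H},J) \cong \ms{D}_0(\alg{A},m,n)$, which in turn descends to the claimed homeomorphism of moduli spaces upon passage to $\lrU_\alg{A}(\hs{H},J)$-orbits. The only substantive step is the last one, namely verifying componentwise that the transported action takes the tabulated form across all index configurations; everything else is a direct repackaging of Proposition~\ref{oddrealdirac}, Lemma~\ref{oddker}, and Propositions~\ref{real17unitary}--\ref{real35unitary}.
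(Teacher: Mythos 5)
Your proposal is correct and follows essentially the same route as the paper: the paper's proof is exactly the preceding discussion, which constructs Dirac operators componentwise via $R_n$, identifies the independent block data modulo $\ker(R_n)$ (using Proposition~\ref{oddrealdirac} and Lemma~\ref{oddker}) to obtain $\ms{D}_0(\alg{A},m,n)$, identifies $\lrU_\alg{A}(\hs{H},J)$ with $\U(\alg{A},m,n)$ via Propositions~\ref{real17unitary}/\ref{real35unitary}, and reads off the transported conjugation action. You have merely made the chain of identifications slightly more explicit than the paper's terse ``we have therefore proved the following.''
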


\subsection{Real spectral triples of even $KO$-dimension}

We now turn to real spectral triples of even $KO$-dimension. Because of the considerable qualitative differences between the two cases, we consider separately the case of $KO$-dimension $0$ or $4 \bmod 8$ and $KO$-dimension $2$ or $6 \bmod 8$. 

In what follows, $(\hs{H},\gamma,J)$ is a fixed real $\alg{A}$-bimodule of even $KO$-dimension $n \bmod 8$ with multiplicity matrices $(m^\even,m^\odd)$; we  denote by $\bdd^1_\alg{A}(\hs{H}^\even,\hs{H}^\odd;J)$ the subspace of $\bdd^1_\alg{A}(\hs{H}^\even,\hs{H}^\odd)$ consisting of $\delta$ such that 
\[
 \begin{pmatrix}
 0 & \Delta^*\\
\Delta & 0
\end{pmatrix} \in \ms{D}_0(\alg{A},\hs{H};\gamma,J).
\]
It then follows that
\begin{equation}
 \ms{D}_0(\alg{A},\hs{H},\gamma,J) \cong \bdd^1_\alg{A}(\hs{H}^\even,\hs{H}^\odd;J)
\end{equation}
via the map $D \mapsto P^\odd D P^\even$.

\subsubsection{$KO$-dimension $0$ or $4 \bmod 8$}

Let us first consider the case where $n = 0$ or $4 \bmod 8$, \ie where $\epsp = 1$. Then $J = J^\even \oplus J^\odd$ for anti-unitaries $J^\even$ and $J^\odd$ on $\hs{H}^\even$ and $\hs{H}^\odd$, respectively, such that $(\hs{H}^\even,J^\even)$ and $(\hs{H}^\odd,J^\odd)$ are real $\alg{A}$-bimodules of $KO$-dimension $n^\prime \bmod 8$, where $n^\prime = 1$ or $7$ if $n=0$, $3$ or $5$ if $n=4$. In light of Corollary~\ref{real04unitary}, one can readily check the following analogue of Proposition~\ref{evendirac}:

\begin{proposition}\label{real04dirac}
 The map 
\[
 \ms{D}(\alg{A},\hs{H},\gamma,J) \to \lrU_\alg{A}(\hs{H}^\odd,J^\odd) \backslash \bdd^1_\alg{A}(\hs{H}^\even,\hs{H}^\odd;J) / \lrU_\alg{A}(\hs{H}^\even,J^\even)
\] 
defined by $[D] \mapsto [P^\odd D P^\even]$ is a homeomorphism.
\end{proposition}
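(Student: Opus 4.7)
The plan is to adapt the argument of Proposition~\ref{evendirac} by feeding in the additional reality condition. Since $\gamma$ anticommutes with $D$, I can write $D = \bigl(\begin{smallmatrix} 0 & \Delta^* \\ \Delta & 0 \end{smallmatrix}\bigr)$ with $\Delta := P^\odd D P^\even$, and the order one condition on $D$ translates into the generalised order one condition on $\Delta$ viewed as a map $\hs{H}^\even \to \hs{H}^\odd$, exactly as in the non-real even case. By the very definition of $\bdd^1_\alg{A}(\hs{H}^\even,\hs{H}^\odd;J)$ given in the text above, the compatibility condition $DJ = \epsp JD$ then translates into the statement that $\Delta$ lies in $\bdd^1_\alg{A}(\hs{H}^\even,\hs{H}^\odd;J)$. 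So the assignment $D \mapsto \Delta$ is a linear homeomorphism $\ms{D}_0(\alg{A},\hs{H},\gamma,J) \cong \bdd^1_\alg{A}(\hs{H}^\even,\hs{H}^\odd;J)$.

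The key additional input, compared with Proposition~\ref{evendirac}, is that for $n = 0$ or $4 \bmod 8$ one has $\epspp = 1$, so $[J,\gamma] = 0$. This forces $J$ to preserve $\hs{H}^\even$ and $\hs{H}^\odd$, giving the splitting $J = J^\even \oplus J^\odd$ already discussed prior to Proposition~\ref{real04split}. Corollary~\ref{real04unitary} then supplies the crucial decomposition
\[
 \lrU_\alg{A}(\hs{H},\gamma,J) = \lrU_\alg{A}(\hs{H}^\even,J^\even) \oplus \lrU_\alg{A}(\hs{H}^\odd,J^\odd),
\]
so that for $U = U^\even \oplus U^\odd$ in this group a direct block computation yields
\[
 P^\odd U D U^* P^\even = U^\odd \Delta (U^\even)^*.
\]
Thus the conjugation action of $\lrU_\alg{A}(\hs{H},\gamma,J)$ on $\ms{D}_0(\alg{A},\hs{H},\gamma,J)$ transports, under $D \mapsto \Delta$, to the action of $\lrU_\alg{A}(\hs{H}^\odd,J^\odd)$ by multiplication on the left together with the action of $\lrU_\alg{A}(\hs{H}^\even,J^\even)$ by multiplication on the right by the inverse on $\bdd^1_\alg{A}(\hs{H}^\even,\hs{H}^\odd;J)$.

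Consequently the map $[D] \mapsto [P^\odd D P^\even]$ is well-defined and bijective between the two orbit spaces, and since the underlying map is a linear homeomorphism the induced map on orbit spaces is itself a homeomorphism. I anticipate no genuine obstacle: once the splitting $J = J^\even \oplus J^\odd$ and Corollary~\ref{real04unitary} are in hand, the argument is a line-for-line transcription of Proposition~\ref{evendirac}, with the only computational point being the block-matrix identity displayed above. The statement would fail verbatim in $KO$-dimension $2$ or $6 \bmod 8$ precisely because $\{J,\gamma\} = 0$ would prevent $J$ from decomposing as $J^\even \oplus J^\odd$, which is why that case is handled separately.
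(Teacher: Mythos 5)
Your argument is correct and is exactly the expansion the paper intends: the paper states the result as a direct analogue of Proposition~\ref{evendirac} to be ``readily checked'' via Corollary~\ref{real04unitary}, and you carry out precisely that check, with the only new ingredient being the splitting $J = J^\even \oplus J^\odd$ afforded by $[J,\gamma]=0$. Nothing is missing.
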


Here, as before, $\lrU_\alg{A}(\hs{H}^\odd,J^\odd)$ acts by multiplication on the left, whilst the group $\lrU_\alg{A}(\hs{H}^\even,J^\even)$ acts by multiplication on the right by the inverse.

We now prove the relevant analogue of Proposition~\ref{oddrealdirac}:

\begin{proposition}\label{even04dirac}
 The map $R_n : \rbdd_\alg{A}(\hs{H}^\even,\hs{H}^\odd) \to \bdd^1_\alg{A}(\hs{H}^\even,\hs{H}^\odd,J)$ defined by $R_n(M) := M + J^\odd M (J^\even)^*$ is a surjection interwining the actions of $\lrU_\alg{A}(\hs{H}^\odd,J^\odd)$ by multiplication on the left and of $\lrU_\alg{A}(\hs{H}^\even,J^\even)$ by multiplication on the right by the inverse on $\rbdd_\alg{A}(\hs{H}^\even,\hs{H}^\odd)$ and $\bdd^1_\alg{A}(\alg{A},\hs{H}^\even,\hs{H}^\odd,J)$, and $\ker(R_n) \subseteq \lrbdd_\alg{A}(\hs{H}^\even,\hs{H}^\odd)$.
\end{proposition}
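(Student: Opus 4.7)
My plan is to adapt the proof of Proposition~\ref{oddrealdirac} to the present two-sided setting. The conceptual tool is the antilinear involution $\Phi \colon \bdd(\hs{H}^\even,\hs{H}^\odd) \to \bdd(\hs{H}^\even,\hs{H}^\odd)$ given by $\Phi(T) := J^\odd T (J^\even)^*$; that $\Phi^2 = \Id$ follows from $(J^\sharp)^2 = \eps$ for $\sharp \in \{\even,\odd\}$ together with $\eps^2 = 1$. By Equation~\ref{realintertwine}, $\Phi$ interchanges left and right $\alg{A}$-linearity, so $R_n(M) = M + \Phi(M)$ is the sum of a right and a left $\alg{A}$-linear operator, from which one checks directly that $R_n(M)$ satisfies the generalised order one condition. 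A short block computation shows that $\Phi(R_n(M)) = R_n(M)$ is equivalent to the relation $DJ = \epsp JD = JD$ for the associated operator $D = \bigl(\begin{smallmatrix} 0 & R_n(M)^* \\ R_n(M) & 0 \end{smallmatrix}\bigr)$ (with $\epsp = 1$), so $R_n(M) \in \bdd^1_\alg{A}(\hs{H}^\even,\hs{H}^\odd;J)$. The intertwining property is then immediate from the defining relations $U^\sharp J^\sharp = J^\sharp U^\sharp$ for $U^\sharp \in \lrU_\alg{A}(\hs{H}^\sharp,J^\sharp)$.

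The main obstacle is surjectivity, and its crux is the compatibility
\begin{equation*}
 E_\lambda \circ \Phi = \Phi \circ E_\rho
\end{equation*}
between $\Phi$ and the idempotents of Lemma~\ref{decompproj}. To establish this, I would substitute the identities $\lambda^\odd(u) J^\odd = \eps (J^\odd)^{-1} \rho^\odd(u^{-1})$ and $(J^\even)^{-1} \lambda^\even(u^{-1}) = \eps \rho^\even(u) J^\even$, both easy consequences of Equation~\ref{realintertwine}, into the defining integral for $(E_\lambda \circ \Phi)(T)$ and absorb the resulting factor of $\eps^2 = 1$. This is the delicate step, since the antilinearity of $\Phi$ combined with the $J$-intertwining relation produces several $\eps$-factors that must be tracked carefully; it is precisely the cancellation $\eps^2 = 1$ that makes the identity hold. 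Granted this compatibility, $\Phi$ preserves $\lrbdd_\alg{A}(\hs{H}^\even,\hs{H}^\odd)$ and swaps the subspaces $\lbdd_\alg{A}(\hs{H}^\even,\hs{H}^\odd)^0 \leftrightarrow \rbdd_\alg{A}(\hs{H}^\even,\hs{H}^\odd)^0$.

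Given $\Delta \in \bdd^1_\alg{A}(\hs{H}^\even,\hs{H}^\odd;J)$, one has $\Phi(\Delta) = \Delta$ as above. Decomposing $\Delta = \Delta_L^0 + \Delta_{LR} + \Delta_R^0$ via Proposition~\ref{order1decomp} and applying $\Phi$, uniqueness of the decomposition together with $\Phi$'s action on its summands forces $\Phi(\Delta_R^0) = \Delta_L^0$ and $\Phi(\Delta_{LR}) = \Delta_{LR}$. Then $M := \Delta_R^0 + \tfrac{1}{2}\Delta_{LR} \in \rbdd_\alg{A}(\hs{H}^\even,\hs{H}^\odd)$ satisfies $R_n(M) = M + \Phi(M) = \Delta_R^0 + \Delta_L^0 + \Delta_{LR} = \Delta$, proving surjectivity. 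Finally, if $R_n(M) = 0$ then $M = -\Phi(M)$; since $M \in \rbdd_\alg{A}(\hs{H}^\even,\hs{H}^\odd)$ while $\Phi(M) \in \lbdd_\alg{A}(\hs{H}^\even,\hs{H}^\odd)$, this forces $M \in \lrbdd_\alg{A}(\hs{H}^\even,\hs{H}^\odd)$, as required.
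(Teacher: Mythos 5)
Your proof is correct and follows essentially the same route as the paper's: your identity $E_\lambda \circ \Phi = \Phi \circ E_\rho$ is the analogue of the relation $E_\lambda(JTJ^*) = JE_\rho(T)J^*$ used in the proof of Proposition~\ref{oddrealdirac}, your preimage $M = \Delta_R^0 + \tfrac{1}{2}\Delta_{LR}$ is exactly the paper's $\tfrac{1}{2}(E_\lambda' + E_\rho)(\Delta)$ written out via the three-part decomposition, and the kernel argument is the same.
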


\begin{proof}
 First note that
\[
 \bdd^1_\alg{A}(\hs{H}^\even,\hs{H}^\odd,J) = \{\Delta \in \bdd^1_\alg{A}(\hs{H}^\even,\hs{H}^\odd) \mid \Delta = J^\odd \Delta (J^\even)^*\},
\]
so that $R_n$ is indeed well-defined by construction. Moreover, since $\lrU_\alg{A}(\hs{H}^\even,J^\even)$ and $\lrU_\alg{A}(\hs{H}^\odd,J^\odd)$ commute by definition with $J^\even$ and $J^\odd$, respectively, it then follows by construction of $R_n$ that $R_n$ does indeed have the desired intertwining properties.

Next, for $M \in \rbdd_\alg{A}(\hs{H}^\even,\hs{H}^\odd)$, we have that $R_n(M) = 0$ if and only if $M = -J^\odd M (J^\even)^*$, but $M$ is right $\alg{A}$-linear if and only if $J^\odd M (J^\even)^* = \eps J^\odd M J^\even$ is left $\alg{A}$-linear, so that $M \in \lrbdd_\alg{A}(\hs{H}^\even,\hs{H}^\odd)$ as claimed.

Finally, it is easy to check, just as in the proof of Proposition~\ref{oddrealdirac}, that for $\Delta \in \bdd^1_\alg{A}(\hs{H}^\even,\hs{H}^\odd,J)$,
\[
 \Delta = R_n\left(\frac{1}{2}(E_\lambda^\prime + E_\rho)(\Delta)\right),
\]
where $\frac{1}{2}(E_\lambda^\prime + E_\rho)(\Delta) \in \rbdd_\alg{A}(\hs{H}^\even,\hs{H}^\odd)$.
\end{proof}

Again, just as in the case of odd $KO$-dimension, this last result not only implies that the actions of $\lrU_\alg{A}(\hs{H}^\even,J^\even)$ and $\lrU_\alg{A}(\hs{H}^\odd,J^\odd)$ on $\rbdd_\alg{A}(\hs{H}^\even,\hs{H}^\odd)$ descend to actions on $\rbdd_\alg{A}(\hs{H}^\even,\hs{H}^\odd) / \ker(R_n)$, but that $R_n$ descends to an isomorphism $\rbdd_\alg{A}(\hs{H}^\even,\hs{H}^\odd) / \ker(R_n) \cong \bdd^1_\alg{A}(\hs{H}^\even,\hs{H}^\odd;J)$ intertwining the actions of $\lrU_\alg{A}(\hs{H}^\even,J^\even)$ and $\lrU_\alg{A}(\hs{H}^\odd,J^\odd)$, thereby yielding the following

\begin{corollary}
 Let $(\hs{H},\gamma,J)$ be a real $\alg{A}$-bimodule of $KO$-dimension $n \bmod 8$ for $n = 0$ or $4$. Then 
\begin{equation}
 \ms{D}(\alg{A},\hs{H},\gamma,J) \cong \lrU_\alg{A}(\hs{H}^\odd,J^\odd) \backslash \left(\rbdd_\alg{A}(\hs{H}^\even,\hs{H}^\odd) / \ker(R_n)\right) / \lrU_\alg{A}(\hs{H}^\even,J^\even) 
\end{equation}
\end{corollary}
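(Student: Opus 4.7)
The plan is to chain together the two preceding propositions. First, I would apply Proposition~\ref{real04dirac} to identify the moduli space $\ms{D}(\alg{A},\hs{H},\gamma,J)$ as a double coset space of $\bdd^1_\alg{A}(\hs{H}^\even,\hs{H}^\odd;J)$ under the commuting actions of $\lrU_\alg{A}(\hs{H}^\odd,J^\odd)$ (on the left) and $\lrU_\alg{A}(\hs{H}^\even,J^\even)$ (on the right by the inverse).

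Next, I would use Proposition~\ref{even04dirac} to replace $\bdd^1_\alg{A}(\hs{H}^\even,\hs{H}^\odd;J)$ by $\rbdd_\alg{A}(\hs{H}^\even,\hs{H}^\odd) / \ker(R_n)$. Concretely, since $R_n$ is a surjection which intertwines the actions of $\lrU_\alg{A}(\hs{H}^\odd,J^\odd)$ and $\lrU_\alg{A}(\hs{H}^\even,J^\even)$ on $\rbdd_\alg{A}(\hs{H}^\even,\hs{H}^\odd)$ with the corresponding actions on $\bdd^1_\alg{A}(\hs{H}^\even,\hs{H}^\odd;J)$, its kernel $\ker(R_n)$ is automatically stable under both actions, so they descend to the quotient, and $R_n$ itself descends to an equivariant bijection. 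The first isomorphism theorem for group actions then gives a homeomorphism between the respective double coset spaces.

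The main thing to verify carefully is that $\ker(R_n)$ is genuinely invariant under the two actions, but this is essentially built into the setup: elements of $\lrU_\alg{A}(\hs{H}^\even,J^\even)$ and $\lrU_\alg{A}(\hs{H}^\odd,J^\odd)$ commute with $J^\even$ and $J^\odd$ respectively, so for $M \in \ker(R_n)$ (i.e.\ $M = -J^\odd M (J^\even)^*$) and $U^\even \in \lrU_\alg{A}(\hs{H}^\even,J^\even)$, $U^\odd \in \lrU_\alg{A}(\hs{H}^\odd,J^\odd)$, the conjugate $U^\odd M (U^\even)^*$ still satisfies the defining identity. Combining these two identifications then yields the stated isomorphism, and since all the intervening maps are clearly continuous with respect to the natural topologies, the result is in fact a homeomorphism of orbit spaces. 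The only genuinely non-routine input is Proposition~\ref{even04dirac}, so nothing new beyond that is required here.
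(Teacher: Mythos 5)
Your proposal is correct and follows essentially the same approach as the paper: apply Proposition~\ref{real04dirac} to realise $\ms{D}(\alg{A},\hs{H},\gamma,J)$ as a double coset space of $\bdd^1_\alg{A}(\hs{H}^\even,\hs{H}^\odd;J)$, then use the fact that $R_n$ is a surjection intertwining the two unitary group actions (from Proposition~\ref{even04dirac}) to see that $\ker(R_n)$ is invariant and $R_n$ descends to an equivariant isomorphism $\rbdd_\alg{A}(\hs{H}^\even,\hs{H}^\odd)/\ker(R_n) \cong \bdd^1_\alg{A}(\hs{H}^\even,\hs{H}^\odd;J)$. Your extra explicit check of invariance of $\ker(R_n)$ is fine, though (as you note) it is already implicit in the intertwining property.
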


\latin{Mutatis mutandis}, the proof of Lemma~\ref{oddker} yields the following characterisation of $\ker(R_n)$:

\begin{lemma}\label{even04ker}
If $K = (1_{n_\alpha} \otimes K_{\alpha\beta} \otimes 1_{n_\beta})_{\alpha,\beta\in\spec{\alg{A}}} \in \lrbdd_\alg{A}(\hs{H}^\even,\hs{H}^\odd)$, then $K \in \ker(R_n)$ if and only if for each $\alpha$, $\beta \in \spec{\alg{A}}$ such that $\alpha \neq \beta$,
\begin{equation}
 K_{\beta\alpha} = -\overline{K_{\alpha\beta}},
\end{equation}
and for each $\alpha \in \spec{\alg{A}}$,
\begin{equation}
 K_{\alpha\alpha} \in \hs{R}_\alpha(n) = 
 \begin{cases}
  i M_{m^\odd_{\alpha\alpha} \times m^\even_{\alpha\alpha}}(\field{R}) &\text{if $n=0$,}\\
  i M_{m^\odd_{\alpha\alpha}/2 \times m^\even_{\alpha\alpha}/2}(\field{H}) &\text{if $n=4$.}\\
 \end{cases}
\end{equation}
\end{lemma}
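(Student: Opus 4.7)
The plan is to reduce everything to a component-by-component computation, exploiting the fact that $\lrbdd_\alg{A}(\hs{H}^\even,\hs{H}^\odd)$ is concentrated on the diagonal components $K_{\alpha\beta}^{\alpha\beta} = 1_{n_\alpha}\otimes K_{\alpha\beta}\otimes 1_{n_\beta}$ by Proposition~\ref{linear}. First I would rewrite the defining condition: $K \in \ker(R_n)$ means $K = -J^\odd K (J^\even)^*$; using antiunitarity, $(J^\even)^* = \eps J^\even$, so the condition becomes $K = -\eps J^\odd K J^\even$ (with $\eps = 1$ for $n=0$ and $\eps = -1$ for $n=4$). The overall strategy is then to work out $(J^\odd K J^\even)_{\alpha\beta}^{\alpha\beta}$ explicitly on each block and compare.

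Next I would use the canonical forms of $J^\even$ and $J^\odd$ from Lemmas~\ref{real17b} and~\ref{real35b}. By Lemmas~\ref{real17a} and~\ref{real35a} the only nonzero blocks are $(J^\even)_{\alpha\beta}^{\beta\alpha}$ and $(J^\odd)_{\beta\alpha}^{\alpha\beta}$, so the only term contributing to $(J^\odd K J^\even)_{\alpha\beta}^{\alpha\beta}$ is
\[
 (J^\odd)_{\beta\alpha}^{\alpha\beta} \cdot (1_{n_\beta}\otimes K_{\beta\alpha}\otimes 1_{n_\alpha})\cdot (J^\even)_{\alpha\beta}^{\beta\alpha}.
\]
For $\alpha \neq \beta$ a direct calculation on elementary tensors $\xi_1\otimes\xi_2\otimes\xi_3$ using $X_m : \xi_1\otimes\xi_2\otimes\xi_3 \mapsto \overline{\xi_3}\otimes\overline{\xi_2}\otimes\overline{\xi_1}$ yields $1_{n_\alpha}\otimes\overline{K_{\beta\alpha}}\otimes 1_{n_\beta}$ up to a sign: in case $n=0$ there are no sign factors and I get exactly this, while in case $n=4$ the signs $\pm 1$ in the off-diagonal definition of $J^\even_m, J^\odd_m$ always produce a single minus sign (since exactly one of the two factors is labeled with ``first index larger''). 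Combined with the $-\eps$ in front, both cases collapse to $K_{\alpha\beta} = -\overline{K_{\beta\alpha}}$, which is the stated off-diagonal relation.

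For $\alpha=\beta$ I would treat the two values of $n$ separately. In the case $n=0$, where $(J^\even)_{\alpha\alpha}^{\alpha\alpha} = (X_{m^\even})_{\alpha\alpha}^{\alpha\alpha}$ and similarly for $J^\odd$, the same computation gives $(J^\odd K J^\even)_{\alpha\alpha}^{\alpha\alpha} = 1_{n_\alpha}\otimes \overline{K_{\alpha\alpha}}\otimes 1_{n_\alpha}$, and the condition $K_{\alpha\alpha} = -\overline{K_{\alpha\alpha}}$ characterises $iM_{m^\odd_{\alpha\alpha}\times m^\even_{\alpha\alpha}}(\field{R})$. In the case $n=4$, where $(J^\even)_{\alpha\alpha}^{\alpha\alpha} = \Omega_{m^\even_{\alpha\alpha}}(X_{m^\even})_{\alpha\alpha}^{\alpha\alpha}$ (and likewise for $J^\odd$), the same calculation with the extra $\Omega$'s (real, so they pass through conjugation) produces $\Omega_{m^\odd_{\alpha\alpha}}\overline{K_{\alpha\alpha}}\Omega_{m^\even_{\alpha\alpha}}$, and imposing $K_{\alpha\alpha} = \Omega_{m^\odd_{\alpha\alpha}}\overline{K_{\alpha\alpha}}\Omega_{m^\even_{\alpha\alpha}}$ is equivalent, via the identity $\Omega^{-1} = -\Omega$ and the embedding of $\field{H}$ into $M_2(\field{C})$ stated in the preliminaries, to $K_{\alpha\alpha}$ lying in $iM_{m^\odd_{\alpha\alpha}/2\times m^\even_{\alpha\alpha}/2}(\field{H})$.

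The expected main obstacle is sign bookkeeping. The interplay between the $\eps$ factor (arising from $(J^\even)^* = \eps J^\even$), the signs in the off-diagonal canonical form of $J_m$ in $KO$-dimension $3$ or $5$, and the direction-dependent convention $\alpha \lessgtr \beta$ must all be tracked carefully to see that the two cases consolidate to the uniform statement $K_{\beta\alpha} = -\overline{K_{\alpha\beta}}$. The quaternionic identification in the diagonal $n=4$ case is the one place where a conceptual (rather than purely computational) check is required: one must verify that the condition $K = \Omega\bar{K}\Omega$ is indeed the quaternionic symmetry condition for rectangular matrices, which follows from writing $K = iQ$ and reducing to the standard defining relation $Q\Omega = \Omega\bar{Q}$ for $M_{p\times q}(\field{H}) \subset M_{2p\times 2q}(\field{C})$.
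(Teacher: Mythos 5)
Your proof is correct and takes essentially the same route as the paper: the paper dispatches this lemma with a ``\emph{mutatis mutandis}'' reference to the proof of Lemma~\ref{oddker}, i.e.\ rewriting $K \in \ker(R_n)$ as $K = -\eps J^\odd K J^\even$ and reading off the block conditions, and your proposal carries out exactly that translation in detail, including the sign bookkeeping for the off-diagonal blocks of $J_m$ in the $n=4$ case and the identification of the diagonal constraint $K_{\alpha\alpha} = \Omega_{m^\odd_{\alpha\alpha}}\overline{K_{\alpha\alpha}}\Omega_{m^\even_{\alpha\alpha}}$ with $iM_{m^\odd_{\alpha\alpha}/2 \times m^\even_{\alpha\alpha}/2}(\field{H})$.
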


Note that such a map $K \in \lrbdd_\alg{A}(\hs{H}^\even,\hs{H}^\odd)$ is therefore entirely specified by the $K_{\alpha\beta} \in M_{m^\odd_{\alpha\beta} \times m^\even_{\alpha\beta}}(\field{C})$ for $\alpha < \beta$ and by the $K_{\alpha\alpha} \in \hs{R}_\alpha(n)$.

Let us now describe the construction of an arbitrary Dirac operator $D$ on the real $\alg{A}$-bimodule $(\hs{H},\gamma,J)$ of $KO$-dimension $n = 0$ or $4 \bmod 8$:
\begin{enumerate}
 \item For $\alpha$, $\beta$, $\gamma \in \spec{\alg{A}}$, choose $M_{\alpha\beta}^\gamma \in M_{n_\gamma m^\odd_{\gamma\beta} \times n_\alpha m^\even_{\alpha\beta}}(\field{C})$;
 \item Construct $M \in \rbdd_\alg{A}(\hs{H}^\even,\hs{H}^\odd)$ by setting for $\alpha$, $\beta$, $\gamma$, $\delta \in \spec{\alg{A}}$,
 \begin{equation}
  M_{\alpha\beta}^{\gamma\delta} =
  \begin{cases}
   M_{\alpha\beta}^\gamma \otimes 1_{n_\beta} &\text{if $\beta = \delta$,}\\
   0 &\text{otherwise;}\\
  \end{cases}
 \end{equation}
 \item Finally, set 
 \begin{equation}
  D =
  \begin{pmatrix}
   0 & R_n(M)^*\\
   R_n(M) & 0
  \end{pmatrix}.
 \end{equation}
\end{enumerate}

Just as before, if $R_n$ is non-injective, we can make the substitution $M \mapsto M + K$ for any $K \in \ker(R_n)$ and obtain the same Dirac operator D; at the level of components, we have for $\alpha$, $\beta \in \spec{\alg{A}}$ such that $\alpha < \beta$,
\begin{gather*}
 M_{\alpha\beta}^\alpha \mapsto M_{\alpha\beta}^\alpha + 1_{n_\alpha} \otimes K_{\alpha\beta}, \quad M_{\beta\alpha}^\beta \mapsto M_{\beta\alpha}^\beta + 1_{n_\alpha} \otimes (-\overline{K_{\alpha\beta}})\\
 M_{\alpha\alpha}^\alpha \mapsto M_{\alpha\alpha}^\alpha + 1_{n_\alpha} \otimes K_{\alpha\alpha}.
\end{gather*}
With these observations in hand, we can revisit the moduli space $\ms{D}(\alg{A},\hs{H},\gamma,J)$.

By the discussion above and Corollaries~\ref{real17unitary} and~\ref{real35unitary}, we can identify the space $\ms{D}_0(\alg{A},\hs{H},\gamma,J)$ with
\begin{multline}
 \ms{D}_0(\alg{A},m^\even,m^\odd,n) := \prod_{\alpha \in \spec{\alg{A}}} \biggl[ M_{n_\alpha m^\odd_{\alpha\alpha} \times n_\alpha m^\even_{\alpha\alpha}}(\field{C}) / (1_{n_\alpha} \otimes \hs{R}_\alpha(n)) \\ \times \prod_{\substack{\beta \in \spec{\alg{A}} \\ \beta > \alpha}} (M_{n_\alpha m^\odd_{\alpha\beta} \times n_\alpha m^\even_{\alpha\beta}}(\field{C}) \oplus M_{n_\beta m^\odd_{\alpha\beta} \times n_\beta m^\even_{\alpha\beta}}(\field{C}))/M_{m^\odd_{\alpha\beta} \times m^\even_{\alpha\beta}}(\field{C}) \\ \times \prod_{\substack{\beta,\gamma\in\spec{\alg{A}} \\ \gamma \neq \alpha}} M_{n_\gamma m^\odd_{\gamma\beta} \times n_{\alpha} m^\even_{\alpha\beta}}(\field{C})\biggr],
\end{multline}
where $M_{m^\odd_{\alpha\beta} \times m^\even_{\alpha\beta}}(\field{C})$ is viewed as embedded in the space
\[
 M_{n_\alpha m^\odd_{\alpha\beta} \times n_\alpha m^\even_{\alpha\beta}}(\field{C}) \oplus M_{n_\beta m^\odd_{\alpha\beta} \times n_\beta m^\even_{\alpha\beta}}(\field{C})
\]
via the map $K \mapsto (1_{n_\alpha} \otimes K) \oplus (-1_{n_\beta} \otimes \overline{K})$, and identify the groups $\lrU_\alg{A}(\hs{H}^\even;J^\even)$ and $\lrU_\alg{A}(\hs{H}^\odd;J^\odd)$ with $\U(\alg{A},m^\even,n^\prime)$ and $\U(\alg{A},m^\odd,n^\prime)$, respectively. Then the actions of $\lrU_\alg{A}(\hs{H}^\even;J^\even)$ and $\lrU_\alg{A}(\hs{H}^\odd;J^\odd)$ on $\bdd^1_\alg{A}(\hs{H}^\even,\hs{H}^\odd;J)$ corresponds under these identifications to the actions of the groups $\U(\alg{A},m^\even,n^\prime)$ and $\U(\alg{A},m^\odd,n^\prime)$, respectively, on $\ms{D}_0(\alg{A},m^\even,m^\odd,n)$ defined by having 
\[
 (U_{\alpha\alpha}^\odd;U_{\alpha\beta}^\odd) \in \U(\alg{A},m^\odd;n^\prime), \quad (U_{\alpha\alpha}^\even;U_{\alpha\beta}^\odd) \in \U(\alg{A},m^\even;n^\prime)
\]
act on $\bigl([M_{\alpha\alpha}^\alpha];[(M_{\alpha\beta}^\alpha,M_{\beta\alpha}^\beta)];M_{\alpha\beta}^\gamma\bigr) \in \ms{D}_0(\alg{A},m,n)$ by
\begin{align*}
 [M_{\alpha\alpha}^\alpha] &\mapsto \bigl[(1_{n_\alpha} \otimes U_{\alpha\alpha}^\odd) M_{\alpha\alpha}\bigr];\\
 [(M_{\alpha\beta}^\alpha,M_{\beta\alpha}^\beta)] &\mapsto \biggl[\bigl((1_{n_\alpha} \otimes U_{\alpha\beta}^\odd) M_{\alpha\beta}^\alpha,(1_{n_\beta} \otimes \overline{U_{\alpha\beta}^\odd})M_{\beta\alpha}^\beta\bigr)\biggr];\\
 M_{\alpha\beta}^\gamma &\mapsto
 \begin{cases}
  (1_{n_\gamma} \otimes U_{\gamma\beta}^\odd) M_{\alpha\beta}^\gamma  &\text{if $\gamma < \delta$,}\\
  (1_{n_\gamma} \otimes \overline{U_{\beta\gamma}^\odd}) M_{\alpha\beta}^\gamma &\text{if $\gamma > \delta$;}
 \end{cases}
\end{align*}
and
\begin{align*}
 [M_{\alpha\alpha}^\alpha] &\mapsto \bigl[M_{\alpha\alpha} (1_{n_\alpha} \otimes (U_{\alpha\alpha}^\even)^*)\bigr];\\
 [(M_{\alpha\beta}^\alpha,M_{\beta\alpha}^\beta)] &\mapsto \biggl[\bigl(M_{\alpha\beta}^\alpha (1_{n_\alpha} \otimes (U_{\alpha\beta}^\even)^*), M_{\beta\alpha}^\beta(1_{n_\beta} \otimes (U_{\alpha\beta}^\even)^T)\bigr)\biggr];\\
 M_{\alpha\beta}^\gamma &\mapsto
 \begin{cases}
  M_{\alpha\beta}^\gamma (1_{n_\alpha} \otimes (U_{\alpha\beta}^\even)^*) &\text{if $\alpha < \beta$,}\\
  M_{\alpha\beta}^\gamma (1_{n_\alpha} \otimes (U_{\beta\alpha}^\even)^T) &\text{if $\alpha > \beta$;}
 \end{cases}
\end{align*}
respectively. We have therefore proved the following:

\begin{proposition}
 Let $(\hs{H},\gamma,J)$ be a real $\alg{A}$-bimodule of even $KO$-dimension $n \bmod 8$ for $n = 0$ or $4$, with multiplicity matrices $(m^\even,m^\odd)$. Then
 \begin{equation}
  \ms{D}(\alg{A},\hs{H},\gamma,J) \cong \U(\alg{A},m^\odd,n^\prime) \backslash \ms{D}_0(\alg{A},m^\even,m^\odd,n) / \U(\alg{A},m^\even,n^\prime).
 \end{equation}
\end{proposition}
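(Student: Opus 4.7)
The statement is essentially a bookkeeping consequence of the groundwork already laid, so my plan is to assemble the pieces rather than to prove anything genuinely new. First, I would invoke Proposition~\ref{real04dirac} to replace $\ms{D}(\alg{A},\hs{H},\gamma,J)$ by the double quotient $\lrU_\alg{A}(\hs{H}^\odd,J^\odd) \backslash \bdd^1_\alg{A}(\hs{H}^\even,\hs{H}^\odd;J) / \lrU_\alg{A}(\hs{H}^\even,J^\even)$. Then, using Proposition~\ref{even04dirac} together with its corollary, I would rewrite the middle space as $\rbdd_\alg{A}(\hs{H}^\even,\hs{H}^\odd)/\ker(R_n)$, noting that the two unitary actions descend through the surjection $R_n$ precisely because they commute with $J^\even$ and $J^\odd$ and because $\ker(R_n)$ sits inside $\lrbdd_\alg{A}(\hs{H}^\even,\hs{H}^\odd)$.

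Next I would apply Proposition~\ref{linear} to decompose $\rbdd_\alg{A}(\hs{H}^\even,\hs{H}^\odd)$ into its components $M_{\alpha\beta}^{\gamma\beta} = M_{\alpha\beta}^\gamma \otimes 1_{n_\beta}$ indexed by $\alpha,\beta,\gamma \in \spec{\alg{A}}$. Reorganising the resulting product according to whether $\alpha \neq \gamma$ (in which case the pair $M_{\alpha\beta}^\gamma, M_{\gamma\beta}^\alpha$ can be repackaged into the single free factor $M_{n_\gamma m^\odd_{\gamma\beta}\times n_\alpha m^\even_{\alpha\beta}}(\field{C})$ indexed by $\alpha<\gamma$, together with an analogous free factor for $\alpha>\gamma$) or $\alpha = \gamma$ (giving the $M_{\alpha\beta}^\alpha$ blocks) yields a tentative parametrisation matching the product that defines $\ms{D}_0(\alg{A},m^\even,m^\odd,n)$ before taking any quotients.

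The key step is to identify the contribution of $\ker(R_n)$ exactly with the subgroup that one quotients by in the definition of $\ms{D}_0(\alg{A},m^\even,m^\odd,n)$. Using Lemma~\ref{even04ker}, any $K \in \ker(R_n)$ is determined by choices of $K_{\alpha\beta}$ for $\alpha<\beta$ and of $K_{\alpha\alpha} \in \hs{R}_\alpha(n)$, subject to $K_{\beta\alpha} = -\overline{K_{\alpha\beta}}$. Adding such a $K$ to an $M \in \rbdd_\alg{A}(\hs{H}^\even,\hs{H}^\odd)$ modifies only the $\alpha=\gamma$ components, precisely via the substitutions $M_{\alpha\alpha}^\alpha \mapsto M_{\alpha\alpha}^\alpha + 1_{n_\alpha} \otimes K_{\alpha\alpha}$ and $(M_{\alpha\beta}^\alpha, M_{\beta\alpha}^\beta) \mapsto (M_{\alpha\beta}^\alpha + 1_{n_\alpha}\otimes K_{\alpha\beta},\: M_{\beta\alpha}^\beta - 1_{n_\beta}\otimes \overline{K_{\alpha\beta}})$. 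These are exactly the embeddings used to define the diagonal blocks and the off-diagonal paired blocks of $\ms{D}_0(\alg{A},m^\even,m^\odd,n)$, so $\rbdd_\alg{A}(\hs{H}^\even,\hs{H}^\odd)/\ker(R_n)$ is naturally identified with that space.

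Finally, I would identify $\lrU_\alg{A}(\hs{H}^\even,J^\even)$ and $\lrU_\alg{A}(\hs{H}^\odd,J^\odd)$ with $\U(\alg{A},m^\even,n^\prime)$ and $\U(\alg{A},m^\odd,n^\prime)$ via Propositions~\ref{real17unitary} and~\ref{real35unitary} (with $n'=1,7$ when $n=0$ and $n'=3,5$ when $n=4$), and verify by a direct calculation at the level of components $M_{\alpha\beta}^{\gamma\beta} = M_{\alpha\beta}^\gamma \otimes 1_{n_\beta}$ that the actions by multiplication on the left and by the inverse on the right translate precisely into the formulas for the $\U(\alg{A},m^\odd,n^\prime)$- and $\U(\alg{A},m^\even,n^\prime)$-actions given in the discussion preceding the proposition. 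I expect the only real obstacle to be the compatibility check on the paired off-diagonal components: a single unitary $U_{\alpha\beta}$ acts on one member of the pair via $U_{\alpha\beta}$ and on the other via its conjugate $\overline{U_{\alpha\beta}}$, and one must confirm that this action is well-defined on the quotient by the $K \mapsto (1_{n_\alpha}\otimes K,-1_{n_\beta}\otimes\overline{K})$-embedding; this is a direct verification using the antilinearity built into $J^\even$ and $J^\odd$.
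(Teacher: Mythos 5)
Your proposal is correct and follows the paper's route essentially step for step: Proposition~\ref{real04dirac} gives the double-quotient description, Proposition~\ref{even04dirac} and its corollary replace the middle space by $\rbdd_\alg{A}(\hs{H}^\even,\hs{H}^\odd)/\ker(R_n)$, and Lemma~\ref{even04ker} together with the component decomposition from Proposition~\ref{linear} furnish the explicit identification with $\ms{D}_0(\alg{A},m^\even,m^\odd,n)$ and the groups $\U(\alg{A},m^\even,n')$, $\U(\alg{A},m^\odd,n')$ via Propositions~\ref{real17unitary} and~\ref{real35unitary}. One small imprecision: in the second paragraph there is no ``repackaging'' needed for $\alpha\neq\gamma$, since $\rbdd_\alg{A}(\hs{H}^\even,\hs{H}^\odd)$ carries no self-adjointness constraint and the blocks $M_{\alpha\beta}^\gamma$ and $M_{\gamma\beta}^\alpha$ are independent free parameters, in contrast with the odd (self-adjoint) case where they are paired by adjointness.
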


It is worth noting that considerable simplifications are obtained in the quasi-orientable case, as all components of the form $M_{\alpha\beta}^\alpha \otimes 1_{n_\beta}$ of $M \in \rbdd_\alg{A}(\hs{H}^\even,\hs{H}^\odd)$ must necessarily vanish, as must $\ker(R_n)$ itself. In particular, then, one is left with
\[
 \ms{D}_0(\alg{A},m^\even,m^\odd,n) = \prod_{\alpha,\beta,\gamma\in\spec{\alg{A}}} M_{n_\gamma m^\odd_{\gamma\beta} \times n_{\alpha} m^\even_{\alpha\beta}}(\field{C}).
\]

\subsubsection{$KO$-dimension $2$ or $6 \bmod 8$}\label{ko6}

Let us now consider the case where $n = 2$ or $n=6 \bmod 8$, \ie where $\epsp = -1$. Then
\[
 J =
\begin{pmatrix}
 0 & \eps \tilde{J}^*\\
 \tilde{J} & 0
\end{pmatrix}
\]
for $\tilde{J} : \hs{H}^\even \to \hs{H}^\odd$ anti-unitary, and $m^\odd = (m^\even)^T$. In light of Corollary~\ref{real26unitary}, one can easily establish, along the lines of Propositions~\ref{evendirac} and~\ref{even04dirac}, the following result:

\begin{proposition}
 Let $\lrU_\alg{A}(\hs{H}^\even)$ act on $\bdd^1_\alg{A}(\hs{H}^\even,\hs{H}^\odd;J)$ by
\[
 (U,\Delta) \mapsto \tilde{J}U\tilde{J}^* \Delta U^*
\]
for $U \in \lrU_\alg{A}(\hs{H}^\even)$ and $\Delta \in \bdd^1_\alg{A}(\hs{H}^\even,\hs{H}^\odd;J)$. Then the map
\[
 \ms{D}(\alg{A},\hs{H},\gamma,J) \to \bdd^1_\alg{A}(\hs{H}^\even,\hs{H}^\odd;J)/\lrU_\alg{A}(\hs{H}^\even)
\]
defined by $[D] \mapsto [P^\odd D P^\even]$ is a homeomorphism.
\end{proposition}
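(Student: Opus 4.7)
The plan is to proceed in exact analogy with Propositions~\ref{evendirac} and~\ref{even04dirac}, by combining the already-noted isomorphism $\ms{D}_0(\alg{A},\hs{H},\gamma,J) \cong \bdd^1_\alg{A}(\hs{H}^\even,\hs{H}^\odd;J)$, $D \mapsto P^\odd D P^\even$, with the explicit form of $\lrU_\alg{A}(\hs{H},\gamma,J)$ provided by Corollary~\ref{real26unitary}. The only real content is to compute how conjugation by an element of $\lrU_\alg{A}(\hs{H},\gamma,J)$ translates, under the above isomorphism, into an action on $\bdd^1_\alg{A}(\hs{H}^\even,\hs{H}^\odd;J)$.

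First I would fix $D \in \ms{D}_0(\alg{A},\hs{H},\gamma,J)$ and write it, with respect to the decomposition $\hs{H} = \hs{H}^\even \oplus \hs{H}^\odd$, as the off-diagonal matrix with block $\Delta = P^\odd D P^\even$ below and $\Delta^*$ above. By Corollary~\ref{real26unitary}, any $U \in \lrU_\alg{A}(\hs{H},\gamma,J)$ has the block-diagonal form $U = U^\even \oplus (\tilde{J}U^\even\tilde{J}^*)$ with $U^\even \in \lrU_\alg{A}(\hs{H}^\even)$ arbitrary. A direct block computation then shows
\[
 P^\odd (U D U^*) P^\even = \tilde{J}U^\even\tilde{J}^* \,\Delta\, (U^\even)^*,
\]
which is precisely the claimed action on $\bdd^1_\alg{A}(\hs{H}^\even,\hs{H}^\odd;J)$. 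Since $U \mapsto U^\even$ is a group isomorphism $\lrU_\alg{A}(\hs{H},\gamma,J) \cong \lrU_\alg{A}(\hs{H}^\even)$ by Corollary~\ref{real26unitary}, the map $D \mapsto \Delta$ intertwines the conjugation action of $\lrU_\alg{A}(\hs{H},\gamma,J)$ with the stated action of $\lrU_\alg{A}(\hs{H}^\even)$.

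Consequently the assignment $[D] \mapsto [\Delta]$ is a well-defined bijection between the two orbit spaces, and since both quotients are orbit spaces of smooth representations of compact Lie groups (so both are equipped with their canonical quotient topologies induced by equivariant linear isomorphisms), this bijection is automatically a homeomorphism. The main potential obstacle is simply the block calculation confirming that conjugation on $D$ translates into left multiplication by $\tilde{J}U^\even\tilde{J}^*$ and right multiplication by $(U^\even)^*$ on $\Delta$; everything else reduces to invoking Corollary~\ref{real26unitary} and the preceding identification $\ms{D}_0(\alg{A},\hs{H},\gamma,J) \cong \bdd^1_\alg{A}(\hs{H}^\even,\hs{H}^\odd;J)$.
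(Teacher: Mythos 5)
Your proposal is correct and is exactly the argument the paper has in mind: the paper itself states this proposition without a detailed proof, saying only that it follows "along the lines of Propositions~\ref{evendirac} and~\ref{even04dirac}" using Corollary~\ref{real26unitary}, and you have filled in precisely those details, including the key block computation $P^\odd(UDU^*)P^\even = \tilde{J}U^\even\tilde{J}^*\,\Delta\,(U^\even)^*$.
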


In the same way, we can define an action of $\lrU_\alg{A}(\hs{H}^\even)$ on $\rbdd_\alg{A}(\hs{H}^\even,\hs{H}^\odd)$. We now give the relevant analogue of Propositions~\ref{oddrealdirac} and~\ref{even04dirac}:

\begin{proposition}\label{even26dirac}
 The map $R_n : \rbdd_\alg{A}(\hs{H}^\even,\hs{H}^\odd) \to \bdd^1_\alg{A}(\hs{H}^\even,\hs{H}^\odd;J)$ defined by $R_n(M) := M + \eps \tilde{J} M^* \tilde{J}$ is a surjection intertwining the actions of the group $\lrU_\alg{A}(\hs{H}^\even)$ on $\rbdd_\alg{A}(\hs{H}^\even,\hs{H}^\odd)$ and $\bdd^1_\alg{A}(\hs{H}^\even,\hs{H}^\odd;J)$, and $\ker(R_n) \subset \lrbdd_\alg{A}(\hs{H}^\even,\hs{H}^\odd)$.
\end{proposition}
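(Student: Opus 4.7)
The plan is to follow closely the proofs of Propositions~\ref{oddrealdirac} and~\ref{even04dirac}, adapted to the fact that $\tilde{J}$ is now off-diagonal and $R_n$ therefore involves $M^*$ rather than $M$. The key auxiliary observation, established by direct calculation using the intertwining relations $\lambda^\odd(a) = \tilde{J}\rho^\even(a^*)\tilde{J}^*$ and $\rho^\odd(a) = \tilde{J}\lambda^\even(a^*)\tilde{J}^*$ together with $\tilde{J}^*\tilde{J} = 1_{\hs{H}^\even}$ and $\tilde{J}\tilde{J}^* = 1_{\hs{H}^\odd}$, is that $M \in \rbdd_\alg{A}(\hs{H}^\even,\hs{H}^\odd)$ implies $\tilde{J}M^*\tilde{J} \in \lbdd_\alg{A}(\hs{H}^\even,\hs{H}^\odd)$. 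Granting this, $R_n(M) \in \rbdd_\alg{A}(\hs{H}^\even,\hs{H}^\odd) + \lbdd_\alg{A}(\hs{H}^\even,\hs{H}^\odd) \subseteq \bdd^1_\alg{A}(\hs{H}^\even,\hs{H}^\odd)$, while the identity $\eps\tilde{J}R_n(M)^*\tilde{J} = R_n(M)$ follows from $(\tilde{J}M^*\tilde{J})^* = \tilde{J}^*M\tilde{J}^*$, placing $R_n(M)$ in $\bdd^1_\alg{A}(\hs{H}^\even,\hs{H}^\odd;J)$. The intertwining property $R_n(\tilde{J}U\tilde{J}^*MU^*) = \tilde{J}U\tilde{J}^*R_n(M)U^*$ for $U \in \lrU_\alg{A}(\hs{H}^\even)$ is a short computation keeping careful track of antilinearity, and the kernel statement is then immediate: $R_n(M) = 0$ gives $M = -\eps\tilde{J}M^*\tilde{J}$, which is simultaneously right and left $\alg{A}$-linear, hence in $\lrbdd_\alg{A}(\hs{H}^\even,\hs{H}^\odd)$.

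Surjectivity is the one substantial point. The plan is to introduce the complex-linear involution $\sigma(T) := \eps\tilde{J}T^*\tilde{J}$ on $\bdd(\hs{H}^\even,\hs{H}^\odd)$, whose fixed-point set inside $\bdd^1_\alg{A}(\hs{H}^\even,\hs{H}^\odd)$ is precisely $\bdd^1_\alg{A}(\hs{H}^\even,\hs{H}^\odd;J)$. By the first step, $\sigma$ interchanges $\rbdd_\alg{A}(\hs{H}^\even,\hs{H}^\odd)$ with $\lbdd_\alg{A}(\hs{H}^\even,\hs{H}^\odd)$ and preserves $\lrbdd_\alg{A}(\hs{H}^\even,\hs{H}^\odd)$, which combined with Proposition~\ref{order1decomp} forces $\sigma$ to swap $\rbdd_\alg{A}(\hs{H}^\even,\hs{H}^\odd)^0$ with $\lbdd_\alg{A}(\hs{H}^\even,\hs{H}^\odd)^0$. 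For any $\Delta \in \bdd^1_\alg{A}(\hs{H}^\even,\hs{H}^\odd;J)$, decomposing $\Delta$ per Proposition~\ref{order1decomp} and using $\sigma(\Delta) = \Delta$ then identifies the $\lbdd^0$-component of $\Delta$ as $\sigma$ of its $\rbdd^0$-component and forces its $\lrbdd$-component to be $\sigma$-fixed, whence $M := \frac{1}{2}((\Id - E_\lambda) + E_\rho)(\Delta) \in \rbdd_\alg{A}(\hs{H}^\even,\hs{H}^\odd)$ satisfies $R_n(M) = \Delta$, exactly parallel to the preimage constructions of Propositions~\ref{oddrealdirac} and~\ref{even04dirac}.

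The main obstacle is purely bookkeeping: the appearance of $M^*$ together with the antilinearity of $\tilde{J}$ makes every identity slightly fussier than in the block-diagonal $KO$-dimension $0$ or $4 \bmod 8$ case. Once the elementary identity $(\tilde{J}T^*\tilde{J})^* = \tilde{J}^*T\tilde{J}^*$ is noted and the relations $\tilde{J}^*\tilde{J} = 1_{\hs{H}^\even}$, $\tilde{J}\tilde{J}^* = 1_{\hs{H}^\odd}$ are used consistently, no new ideas beyond those already deployed in Propositions~\ref{oddrealdirac} and~\ref{even04dirac} are required.
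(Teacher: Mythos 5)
Your proposal is correct and follows essentially the same route as the paper's proof: well-definedness via the fixed-point characterisation of $\bdd^1_\alg{A}(\hs{H}^\even,\hs{H}^\odd;J)$, the kernel statement from $M = -\eps\tilde{J}M^*\tilde{J}$ being simultaneously left and right $\alg{A}$-linear, and surjectivity via the preimage $\tfrac{1}{2}(E_\lambda^\prime + E_\rho)(\Delta)$, exactly as in the proof sketched in the paper. Your framing in terms of the involution $\sigma(T) = \eps\tilde{J}T^*\tilde{J}$ is a slightly more explicit presentation of the same argument; the one point you gloss over is that swapping $\rbdd^0$ with $\lbdd^0$ requires the intertwining relations $E_\lambda\circ\sigma = \sigma\circ E_\rho$ and $E_\rho\circ\sigma = \sigma\circ E_\lambda$ (it does not follow from $\sigma$ swapping $\rbdd$ and $\lbdd$ and preserving $\lrbdd$ alone), but these are the same short computations invoked in the proofs of Propositions~\ref{oddrealdirac} and~\ref{even04dirac}.
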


\begin{proof}
 First note that
\[
 \bdd^1_\alg{A}(\hs{H}^\even,\hs{H}^\odd;J) = \{\Delta \in \bdd^1_\alg{A}(\hs{H}^\even,\hs{H}^\odd) \mid \Delta = \eps \tilde{J} \Delta^* \tilde{J}\},
\]
as can be checked by direct calculation, so that $R_n$ is indeed well-defined. It also readily follows by construction of $R_n$ and the definition of the actions of $\lrU_\alg{A}(\hs{H}^\even)$ that $R_n$ has the desired intertwining properties.

Now, for $M \in \rbdd_\alg{A}(\hs{H}^\even,\hs{H}^\odd)$, one has that $R_n(M) = 0$ if and only if $M = -\eps \tilde{J} M^* \tilde{J}$, but $\tilde{J} M^* \tilde{J}$ is manifestly left $\alg{A}$-linear, so that $M \in \lrbdd_\alg{A}(\hs{H}^\even,\hs{H}^\odd)$, as claimed.

Finally, just as in the proof of Propositions~\ref{oddrealdirac} and~\ref{even04dirac}, one can easily check that for $\Delta \in \bdd^1_\alg{A}(\hs{H}^\even,\hs{H}^\odd;J)$,
\[
 \Delta = R_n\bigl(\frac{1}{2}(E^\prime_\lambda + E_\rho)(\Delta)\bigr),
\]
where $\frac{1}{2}(E^\prime_\lambda + E_\rho)(\Delta)$ is right $\alg{A}$-linear.
\end{proof}

Just as in the earlier cases, the action of $\lrU_\alg{A}(\hs{H}^\even)$ on $\rbdd_\alg{A}(\hs{H}^\even,\hs{H}^\odd)$ descends to an action on the quotient $\rbdd_\alg{A}(\hs{H}^\even,\hs{H}^\odd) / \ker(R_n)$, so that $R_n$ descends to an $\lrU_\alg{A}(\hs{H}^\even,\hs{H}^\odd)$-isomorphism
\begin{equation}
 \rbdd_\alg{A}(\hs{H}^\even,\hs{H}^\odd) / \ker(R_n) \cong \bdd^1_\alg{A}(\hs{H}^\even,\hs{H}^\odd;J),
\end{equation}
thereby yielding the following:

\begin{corollary}
 Let $(\hs{H},\gamma,J)$ be a real $\alg{A}$-bimodule of $KO$-dimension $n \bmod 8$ for $n = 2$ or $6$. Then
\begin{equation}
 \ms{D}(\alg{A},\hs{H},\gamma,J) \cong (\rbdd_\alg{A}(\hs{H}^\even,\hs{H}^\odd)/\ker(R_n)) / \lrU_\alg{A}(\hs{H}^\even).
\end{equation}
\end{corollary}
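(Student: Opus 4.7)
The corollary is essentially a formal consequence of the two preceding results, so the proof plan is mostly bookkeeping rather than genuine argument. The plan is to chain together the two homeomorphisms already established in this subsubsection and then observe that taking group quotients is compatible with the equivariant identification provided by $R_n$.

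First, I would invoke the proposition immediately preceding Proposition~\ref{even26dirac}, which supplies a homeomorphism
\[
 \ms{D}(\alg{A},\hs{H},\gamma,J) \cong \bdd^1_\alg{A}(\hs{H}^\even,\hs{H}^\odd;J)/\lrU_\alg{A}(\hs{H}^\even),
\]
under which the action on the right is by $(U,\Delta) \mapsto \tilde{J}U\tilde{J}^*\Delta U^*$. The problem thus reduces to identifying $\bdd^1_\alg{A}(\hs{H}^\even,\hs{H}^\odd;J)$ as a $\lrU_\alg{A}(\hs{H}^\even)$-space with $\rbdd_\alg{A}(\hs{H}^\even,\hs{H}^\odd)/\ker(R_n)$, equipped with the $\lrU_\alg{A}(\hs{H}^\even)$-action induced from the analogous action $(U,M) \mapsto \tilde{J}U\tilde{J}^*MU^*$ on $\rbdd_\alg{A}(\hs{H}^\even,\hs{H}^\odd)$.

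For this, I apply Proposition~\ref{even26dirac}: the map $R_n$ is surjective and $\lrU_\alg{A}(\hs{H}^\even)$-equivariant. Equivariance guarantees that $\ker(R_n)$ is $\lrU_\alg{A}(\hs{H}^\even)$-invariant, so the action descends to $\rbdd_\alg{A}(\hs{H}^\even,\hs{H}^\odd)/\ker(R_n)$, and the surjection factors through a bijective $\lrU_\alg{A}(\hs{H}^\even)$-equivariant linear map
\[
 \overline{R_n} : \rbdd_\alg{A}(\hs{H}^\even,\hs{H}^\odd)/\ker(R_n) \xrightarrow{\;\sim\;} \bdd^1_\alg{A}(\hs{H}^\even,\hs{H}^\odd;J).
\]
As this is a bijective linear map of finite-dimensional real vector spaces, it is automatically a topological isomorphism, and hence an isomorphism of $\lrU_\alg{A}(\hs{H}^\even)$-spaces.

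Finally, passing to orbit spaces of this equivariant homeomorphism yields
\[
 \bdd^1_\alg{A}(\hs{H}^\even,\hs{H}^\odd;J)/\lrU_\alg{A}(\hs{H}^\even) \cong \bigl(\rbdd_\alg{A}(\hs{H}^\even,\hs{H}^\odd)/\ker(R_n)\bigr)/\lrU_\alg{A}(\hs{H}^\even),
\]
which combined with the first step gives the claimed isomorphism. There is no genuine obstacle here; the only small point to check is that ``descending the action'' and ``taking orbit spaces'' commute, which is immediate from the universal property of quotient maps in the category of topological spaces (or, equivalently, from the fact that the composition $\rbdd_\alg{A}(\hs{H}^\even,\hs{H}^\odd) \twoheadrightarrow \rbdd_\alg{A}(\hs{H}^\even,\hs{H}^\odd)/\ker(R_n) \twoheadrightarrow (\rbdd_\alg{A}(\hs{H}^\even,\hs{H}^\odd)/\ker(R_n))/\lrU_\alg{A}(\hs{H}^\even)$ coincides with $\rbdd_\alg{A}(\hs{H}^\even,\hs{H}^\odd) \twoheadrightarrow \bdd^1_\alg{A}(\hs{H}^\even,\hs{H}^\odd;J) \twoheadrightarrow \bdd^1_\alg{A}(\hs{H}^\even,\hs{H}^\odd;J)/\lrU_\alg{A}(\hs{H}^\even)$ via $R_n$).
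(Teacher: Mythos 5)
Your proof is correct and follows the paper's argument exactly: you invoke the homeomorphism $\ms{D}(\alg{A},\hs{H},\gamma,J) \cong \bdd^1_\alg{A}(\hs{H}^\even,\hs{H}^\odd;J)/\lrU_\alg{A}(\hs{H}^\even)$ from the preceding proposition, use the surjectivity and equivariance established in Proposition~\ref{even26dirac} to descend $R_n$ to a $\lrU_\alg{A}(\hs{H}^\even)$-equivariant isomorphism $\rbdd_\alg{A}(\hs{H}^\even,\hs{H}^\odd)/\ker(R_n) \cong \bdd^1_\alg{A}(\hs{H}^\even,\hs{H}^\odd;J)$, and then pass to orbit spaces. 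The only difference is that you spell out some routine details (invariance of $\ker(R_n)$, finite-dimensionality forcing the descended linear bijection to be a homeomorphism) that the paper leaves implicit.
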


Again, \latin{mutatis mutandis}, the proof of Lemma~\ref{oddker} yields the following characterisation of $\ker(R_n)$:

\begin{lemma}
 If $K = (1_{n_\alpha} \otimes K_{\alpha\beta} \otimes 1_{n_\beta})_{\alpha,\beta\in\spec{\alg{A}}} \in \lrbdd_\alg{A}(\hs{H}^\even,\hs{H}^\odd)$, then $K \in \ker(R_n)$ if and only if for each $\alpha$, $\beta \in \spec{\alg{A}}$ such that $\alpha \neq \beta$,
\begin{equation}
 K_{\beta\alpha} = -\eps K_{\alpha\beta}^T,
\end{equation}
and for each $\alpha \in \spec{\alg{A}}$,
\begin{equation}
 K_{\alpha\alpha} \in \hs{R}_\alpha(n) =
\begin{cases}
 \Sym_{m^\even_{\alpha\alpha}}(\field{C}) &\text{if $n = 2$,}\\
 \fso(m^\even_{\alpha\alpha},\field{C}) &\text{if $n = 6$.}
\end{cases}
\end{equation}
\end{lemma}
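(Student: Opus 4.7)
My plan is to follow the pattern of the proof of Lemma~\ref{oddker}, unpacking the defining condition $R_n(K) = 0$, namely $K = -\eps \tilde{J} K^* \tilde{J}$, in terms of the canonical isotypic components of $K$.

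First, by Proposition~\ref{linear}, I write
\[
 K = \bigoplus_{\alpha,\beta \in \spec{\alg{A}}} 1_{n_\alpha} \otimes K_{\alpha\beta} \otimes 1_{n_\beta}
\]
with $K_{\alpha\beta} \in M_{m^\odd_{\alpha\beta} \times m^\even_{\alpha\beta}}(\field{C})$; since $m^\odd = (m^\even)^T$ in this case, this is $M_{m^\even_{\beta\alpha} \times m^\even_{\alpha\beta}}(\field{C})$, so in particular each diagonal block $K_{\alpha\alpha}$ is a square matrix.

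Next, I need to understand how $\tilde{J}$ acts on these isotypic blocks. Because $\tilde{J}$ intertwines the left and right actions via $\lambda^\odd(a) = \tilde{J}\rho^\even(a^*)\tilde{J}^*$ and $\rho^\odd(a) = \tilde{J}\lambda^\even(a^*)\tilde{J}^*$, it must carry $\hs{H}^\even_{\alpha\beta}$ onto $\hs{H}^\odd_{\beta\alpha}$. By Proposition~\ref{real26reduction}, real structures of $KO$-dimension $2$ or $6 \bmod 8$ on an even bimodule are unique up to unitary equivalence, so I may without loss of generality take $\tilde{J}$ in the canonical form
\[
 \tilde{J} : \xi_1 \otimes \xi_2 \otimes \xi_3 \in \hs{H}^\even_{\alpha\beta} \longmapsto \overline{\xi_3} \otimes \overline{\xi_2} \otimes \overline{\xi_1} \in \hs{H}^\odd_{\beta\alpha},
\]
in the spirit of the $X_m$ of Lemmas~\ref{real17a}--\ref{real17b}. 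A direct computation---apply $\tilde{J}$, then $K^*$ with $(\beta,\alpha)$-block $1_{n_\beta} \otimes K_{\beta\alpha}^* \otimes 1_{n_\alpha}$, then $\tilde{J}$ again, using $\overline{A^*} = A^T$---shows that $\tilde{J} K^* \tilde{J}$ has $(\alpha,\beta)$-block $1_{n_\alpha} \otimes K_{\beta\alpha}^T \otimes 1_{n_\beta}$ on $\hs{H}^\even_{\alpha\beta}$.

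The kernel condition $K = -\eps \tilde{J} K^* \tilde{J}$ therefore reduces, block by block, to
\[
 K_{\alpha\beta} = -\eps\, K_{\beta\alpha}^T \quad \text{for all } \alpha, \beta \in \spec{\alg{A}}.
\]
For $\alpha \neq \beta$ this is the claimed off-diagonal relation $K_{\beta\alpha} = -\eps K_{\alpha\beta}^T$ (the two variants, for $(\alpha,\beta)$ and $(\beta,\alpha)$, being equivalent since $(-\eps)^2 = 1$). On the diagonal it reads $K_{\alpha\alpha} = -\eps K_{\alpha\alpha}^T$: for $n = 2 \bmod 8$ one has $\eps = -1$, giving complex symmetry and placing $K_{\alpha\alpha}$ in $\Sym_{m^\even_{\alpha\alpha}}(\field{C})$, while for $n = 6 \bmod 8$ one has $\eps = 1$, giving complex antisymmetry and placing $K_{\alpha\alpha}$ in $\fso(m^\even_{\alpha\alpha},\field{C})$, matching the stated case analysis. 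The only genuine obstacle is keeping the bookkeeping straight: because $\tilde{J}$ both swaps the spectral indices $\alpha$ and $\beta$ and acts antilinearly, the middle factor ends up sandwiched between two complex conjugations around a $K^*$, producing a $K^T$ rather than a $K^*$ or $\overline{K}$; once the canonical form of $\tilde{J}$ is fixed via Proposition~\ref{real26reduction}, everything else is mechanical.
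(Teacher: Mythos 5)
Your proof is correct and takes essentially the same route the paper intends: the paper only says ``\latin{mutatis mutandis}, the proof of Lemma~\ref{oddker}'', and your argument is exactly that adaptation made explicit---unpacking $K = -\eps\,\tilde{J}K^*\tilde{J}$ block by block after fixing $\tilde{J}$ in the canonical antilinear-swap form that the uniqueness from Proposition~\ref{real26reduction} permits. The key computation $\bigl(\tilde{J}K^*\tilde{J}\bigr)_{\alpha\beta}^{\alpha\beta} = 1_{n_\alpha}\otimes K_{\beta\alpha}^T \otimes 1_{n_\beta}$ and the resulting sign split $\eps = -1$ (symmetric) for $n=2$ versus $\eps = 1$ (skew-symmetric) for $n=6$ both check out.
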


Thus, such a map $K \in \ker(R_n)$ is entirely specified by the components $K_{\alpha\beta} \in M_{m^\even_{\beta\alpha} \times m^\even_{\alpha\beta}}(\field{C})$ for $\alpha < \beta$ and by the $K_{\alpha\alpha} \in \hs{R}_\alpha(n)$.

Note that the discussion of the construction of Dirac operators and of the freedom in the construction provided by $\ker(R_n)$ in the case of $KO$-dimension $0$ or $4 \bmod 8$ holds also in this case. Thus we can identify $\ms{D}(\alg{A},\hs{H},\gamma,J)$ with
\begin{multline}
 \ms{D}_0(\alg{A},m^\even,n) := \prod_{\alpha \in \spec{\alg{A}}} \biggl[ M_{n_\alpha m^\even_{\alpha\alpha}}(\field{C}) / \bigl(1_{n_\alpha} \otimes \hs{R}_\alpha(n)\bigr) \\
\times \prod_{\substack{\beta \in \spec{\alg{A}} \\ \beta > \alpha}} \bigl(M_{n_\alpha m^\even_{\beta\alpha} \times n_\alpha m^\even_{\alpha\beta}}(\field{C}) \oplus M_{n_\beta m^\even_{\alpha\beta} \times n_\beta m^\even_{\beta\alpha}}(\field{C})\bigr) / M_{m^\even_{\beta\alpha} \times m^\even_{\alpha\beta}}(\field{C}) \\
\times \prod_{\substack{\beta,\gamma\in\spec{\alg{A}}\\\gamma \neq \alpha}} M_{n_\gamma m^\even_{\beta\gamma} \times n_\alpha m^\even_{\alpha\beta}}(\field{C})\biggr],
\end{multline}
where $M_{m^\even_{\beta\alpha} \times m^\even_{\alpha\beta}}(\field{C})$ is viewed as embedded in the space
\[
 M_{n_\alpha m^\even_{\beta\alpha} \times n_\alpha m^\even_{\alpha\beta}}(\field{C}) \oplus M_{n_\beta m^\even_{\alpha\beta} \times n_\beta m^\even_{\beta\alpha}}(\field{C})
\]
via the map $K \mapsto (1_{n_\alpha} \otimes K) \oplus (-\eps 1_{n_\beta} \otimes K^T)$, and identify $\lrU_\alg{A}(\hs{H}^\even)$ with $\U(\alg{A},m^\even)$. Then the action of $\lrU_\alg{A}(\hs{H}^\even)$ on $\bdd^1_\alg{A}(\hs{H}^\even,\hs{H}^\odd;J)$ corresponds under these identifications with the action of $\U(\alg{A},m^\even)$ on $\ms{D}_0(\alg{A},m^\even,n)$ defined by having $(U_{\alpha\beta}) \in \U(\alg{A},m^\even)$ act on $\bigl([M_{\alpha\alpha}^\alpha];[(M_{\alpha\beta}^\alpha,M_{\beta\alpha}^\beta)];M_{\alpha\beta}^\gamma\bigr) \in \ms{D}_0(\alg{A},m,n)$ by
\begin{align*}
 [M_{\alpha\alpha}^\alpha] &\mapsto \bigl[(1_{n_\alpha} \otimes \overline{U_{\alpha\alpha}}) M_{\alpha\alpha}^\alpha (1_{n_\alpha} \otimes U_{\alpha\alpha}^*)\bigr];\\
 \bigl[(M_{\alpha\beta}^\alpha,M_{\beta\alpha}^\beta)\bigr] &\mapsto \biggl[\bigl((1_{n_\alpha} \otimes \overline{U_{\beta\alpha}}) M_{\alpha\beta}^\alpha (1_{n_\alpha} \otimes U_{\alpha\beta}^*), (1_{n_\beta} \otimes \overline{U_{\alpha\beta}}) M_{\beta\alpha}^\alpha (1_{n_\beta} \otimes U_{\beta\alpha}^*)\bigr)\biggr];\\
 M_{\alpha\beta}^\gamma &\mapsto (1_{n_\gamma} \otimes \overline{U_{\beta\gamma}}) M_{\alpha\beta}^\gamma (1_{n_\alpha} \otimes U_{\alpha\beta}^*).
\end{align*}
This, then, proves the following:

\begin{proposition}\label{moduli26}
 Let $(\hs{H},\gamma,J)$ be a real $\alg{A}$-bimdoule of even $KO$-dimension $n \bmod 8$ for $n = 2$ or $6$, with multiplicity matrices $(m^\even,(m^\even)^T)$. Then
\begin{equation}
 \ms{D}(\alg{A},\hs{H},\gamma,J) \cong \ms{D}_0(\alg{A},m^\even,n) / \U(\alg{A},m^\even).
\end{equation}
\end{proposition}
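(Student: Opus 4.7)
The plan is to chain together the reductions already established and then unpack the resulting quotient at the level of components. First, by the preceding proposition (the analogue of Proposition~\ref{evendirac} and Proposition~\ref{real04dirac} for $KO$-dimension $2$ or $6 \bmod 8$), the map $[D] \mapsto [P^\odd D P^\even]$ gives a homeomorphism
\[
\ms{D}(\alg{A},\hs{H},\gamma,J) \cong \bdd^1_\alg{A}(\hs{H}^\even,\hs{H}^\odd;J)/\lrU_\alg{A}(\hs{H}^\even).
\]
Next, Proposition~\ref{even26dirac} together with the ensuing corollary identifies $\bdd^1_\alg{A}(\hs{H}^\even,\hs{H}^\odd;J)$ with $\rbdd_\alg{A}(\hs{H}^\even,\hs{H}^\odd)/\ker(R_n)$ as $\lrU_\alg{A}(\hs{H}^\even)$-spaces via $R_n$. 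It therefore remains to describe the right-hand side concretely in terms of the multiplicity matrix $m^\even$ and $n$, and to translate the action of $\lrU_\alg{A}(\hs{H}^\even) \cong \U(\alg{A},m^\even)$ under this identification.

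For the concrete description, I would apply Proposition~\ref{linear} to write any $M \in \rbdd_\alg{A}(\hs{H}^\even,\hs{H}^\odd)$ in components $M_{\alpha\beta}^{\gamma\beta} = M_{\alpha\beta}^\gamma \otimes 1_{n_\beta}$ for $M_{\alpha\beta}^\gamma \in M_{n_\gamma m^\odd_{\gamma\beta} \times n_\alpha m^\even_{\alpha\beta}}(\field{C})$, recalling that $m^\odd_{\gamma\beta} = m^\even_{\beta\gamma}$. Using the preceding lemma characterising $\ker(R_n)$, the kernel contribution is given by matrices $K$ supported on components of the form $1_{n_\alpha} \otimes K_{\alpha\beta} \otimes 1_{n_\beta}$ with $K_{\beta\alpha} = -\eps K_{\alpha\beta}^T$ off the diagonal and $K_{\alpha\alpha} \in \hs{R}_\alpha(n)$ on the diagonal. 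This means that in the quotient $\rbdd_\alg{A}(\hs{H}^\even,\hs{H}^\odd)/\ker(R_n)$, the diagonal components $M_{\alpha\beta}^\alpha$ with $\alpha = \beta$ lie in $M_{n_\alpha m^\even_{\alpha\alpha}}(\field{C})/(1_{n_\alpha} \otimes \hs{R}_\alpha(n))$, the paired components $(M_{\alpha\beta}^\alpha, M_{\beta\alpha}^\beta)$ with $\alpha < \beta$ are quotiented by the antidiagonal embedding $K \mapsto (1_{n_\alpha}\otimes K) \oplus (-\eps\, 1_{n_\beta}\otimes K^T)$ of $M_{m^\even_{\beta\alpha}\times m^\even_{\alpha\beta}}(\field{C})$, and the remaining components $M_{\alpha\beta}^\gamma$ for $\gamma \neq \alpha$ are unconstrained. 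Assembling these pieces precisely reproduces the definition of $\ms{D}_0(\alg{A},m^\even,n)$.

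Finally, I would verify that the $\lrU_\alg{A}(\hs{H}^\even)$-action $(U,M) \mapsto \tilde{J}U\tilde{J}^* M U^*$ on $\rbdd_\alg{A}(\hs{H}^\even,\hs{H}^\odd)$ descends to the quotient and matches the explicit formulas given in Section~\ref{ko6} for the $\U(\alg{A},m^\even)$-action on $\ms{D}_0(\alg{A},m^\even,n)$. The key observation here is that under the bimodule identifications $\hs{H}^\odd_{\gamma\delta} \cong \hs{H}^\even_{\delta\gamma}$ induced by $\tilde J$, conjugation by $\tilde{J}U\tilde{J}^*$ turns into multiplication by $\overline{U_{\beta\gamma}}$ on the output index; combined with multiplication by $U_{\alpha\beta}^*$ on the input, this yields exactly the displayed transformation rules, with the appropriate modifications for the diagonal and antidiagonal components. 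Combining the three identifications then gives the claimed homeomorphism. The main technical obstacle is the last step: carefully tracking the indices and the $\eps$-signs so that the antidiagonal embedding used to quotient the paired components is well-defined and equivariant for the $\U(\alg{A},m^\even)$-action; once the bookkeeping is done, the isomorphism is manifest.
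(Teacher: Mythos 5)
Your proposal is correct and follows the same route as the paper: the paper builds up to this proposition by exactly the chain you describe (the analogue of Proposition~\ref{evendirac} for $KO$-dimension $2$ or $6$, then Proposition~\ref{even26dirac} and its corollary to replace $\bdd^1_\alg{A}(\hs{H}^\even,\hs{H}^\odd;J)$ by $\rbdd_\alg{A}(\hs{H}^\even,\hs{H}^\odd)/\ker(R_n)$, then the component-wise description via Proposition~\ref{linear}, the lemma on $\ker(R_n)$, and the explicit $\U(\alg{A},m^\even)$-action), and the proposition is just the assembled statement of those identifications. Your index bookkeeping remark at the end correctly flags the only delicate point, and your handling of the antidiagonal embedding and the $\eps$-signs matches the paper's.
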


Again, considerable simplifications are obtained in the quasi-orientable case, just as for $KO$-dimension $0$ or $4 \bmod 8$.

\subsection{Dirac operators in the Chamseddine--Connes--Marcolli model}

Let us now apply the above results on Dirac operators and moduli spaces thereof to the bimodules appearing in the Chamseddine--Connes--Marcolli model. 

We begin with $(\hs{H}_F,\gamma_F,J_F,\epsilon_F)$ as an $S^0$-real $\alg{A}_{LR}$-bimodule of $KO$-dimension $6 \bmod 8$, which, as we shall now see, is essentially $S^0$-real in structure:

\begin{proposition}
 For the $S^0$-real $\alg{A}_{LR}$-bimodule $(\hs{H}_F,\gamma_F,J_F,\epsilon_F)$ of $KO$-dimen\-sion $6 \bmod 8$,
\[
 \ms{D}_0(\alg{A}_{LR},\hs{H}_F,\gamma_F,J_F) = \ms{D}_0(\alg{A}_{LR},\hs{H}_F,\gamma_F,J_F,\epsilon_F),
\]
and
\[
 \lrU_{\alg{A}_{LR}}(\hs{H},\gamma_F,J_F) = \lrU_{\alg{A}_{LR}}(\hs{H},\gamma_F,J_F,\epsilon_F),
\]
so that
\[
 \ms{D}(\alg{A}_{LR},\hs{H}_F,\gamma_F,J_F) = \ms{D}(\alg{A}_{LR},\hs{H}_F,\gamma_F,J_F,\epsilon_F).
\]

\end{proposition}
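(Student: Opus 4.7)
The plan is to exploit the fact that $\epsilon_F$ is implemented by the left action of a central self-adjoint element of $\alg{A}_{LR}$: writing $z := (-1,1,1,-1) \in Z(\alg{A}_{LR})$, we have $\epsilon_F = \lambda_F(z)$ by definition. The equality of unitary groups then comes for free, since every $U \in \lrU_{\alg{A}_{LR}}(\hs{H}_F,\gamma_F,J_F)$ is by construction left and right $\alg{A}_{LR}$-linear and thus commutes with $\lambda_F(z) = \epsilon_F$; the reverse inclusion is automatic. The only substantive task is to establish $\ms{D}_0(\alg{A}_{LR},\hs{H}_F,\gamma_F,J_F) = \ms{D}_0(\alg{A}_{LR},\hs{H}_F,\gamma_F,J_F,\epsilon_F)$, that is, to show that any Dirac operator $D$ for $(\hs{H}_F,\gamma_F,J_F)$ automatically satisfies $[D,\epsilon_F] = 0$; the equality of the quotient moduli spaces will follow immediately.

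My approach to this last step rests on the key observation that $\lambda_F(z) + \rho_F(z) = 0$ on $\hs{H}_F$, obtained by combining $z^* = z$, $J_F^2 = 1$ (since $\eps = 1$ for $KO$-dimension $6$), and the anticommutation $\{\epsilon_F,J_F\} = 0$ with the defining relation $\rho_F(a) = J_F\lambda_F(a^*)J_F^*$. Given this identity, I would write $[D,\epsilon_F] = [D,\lambda_F(z)] = -[D,\rho_F(z)]$ and invoke the order one condition in both its equivalent forms recalled in the paper: the ``left'' form making $[D,\lambda_F(z)]$ right $\alg{A}_{LR}$-linear, and the ``right'' form making $[D,\rho_F(z)]$ left $\alg{A}_{LR}$-linear. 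Together these place $[D,\epsilon_F]$ in $\lrbdd_{\alg{A}_{LR}}(\hs{H}_F)$.

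To conclude, I would use the grading relations $[\gamma_F,\epsilon_F] = 0$ and $\{D,\gamma_F\} = 0$ to observe that $\{[D,\epsilon_F],\gamma_F\} = 0$, which forces $[D,\epsilon_F]$ to interchange $\hs{H}_F^\even$ and $\hs{H}_F^\odd$. Finally I would appeal to quasi-orientability of $(\hs{H}_F,\gamma_F)$: its signed multiplicity matrix is $N$ times the matrix $\mu$ displayed earlier, whose positive and negative entries lie at disjoint positions, so by definition $\lrbdd_{\alg{A}_{LR}}(\hs{H}_F^\even,\hs{H}_F^\odd) = \{0\}$ (and likewise in the reverse direction), forcing $[D,\epsilon_F] = 0$.

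I do not expect any real obstacle. The key conceptual point is the identity $\lambda_F(z) + \rho_F(z) = 0$, which translates the $S^0$-reality condition $\{\epsilon_F,J_F\}=0$ into a purely algebraic statement about the central element $z$; once this is in hand, the remaining steps amount to a short sequence of standard manipulations using only the order one condition, the grading relations for $D$, and the previously established quasi-orientability of $(\hs{H}_F,\gamma_F)$.
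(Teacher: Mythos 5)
Your proof is correct, and it takes a genuinely different and more conceptual route than the paper's. The paper first reduces, via Proposition~\ref{even26dirac}, to showing that every right $\alg{A}_{LR}$-linear map $\hs{H}_F^\even \to \hs{H}_F^\odd$ commutes with $\epsilon_F$, and then reads off from the explicit signed multiplicity matrix $\mu$ (via Proposition~\ref{linear}) that the only nonvanishing components $T_{\rep{2}_R \rep{1}}^{\rep{2}_L\rep{1}}$, $T_{\rep{2}_R\rep{3}}^{\rep{2}_L\rep{3}}$ live entirely inside $\hs{H}_f$, where $\epsilon_F$ is the identity; the unitary group statement is handled by the same kind of component bookkeeping for $\lrbdd_{\alg{A}_{LR}}(\hs{H}_F)$. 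Your argument instead isolates the structural reason behind the phenomenon: because $\epsilon_F = \lambda_F(z)$ for the central self-adjoint unitary $z = (-1,1,1,-1)$, the $S^0$-reality relation $\{\epsilon_F,J_F\}=0$ (together with $\eps=1$ and $\rho_F(a)=J_F\lambda_F(a^*)J_F^*$) yields the identity $\rho_F(z) = -\lambda_F(z)$, so the order one condition in its two equivalent forms places $[D,\epsilon_F]$ in $\lrbdd_{\alg{A}_{LR}}(\hs{H}_F)$, the grading relations force it to interchange $\hs{H}_F^\even$ and $\hs{H}_F^\odd$, and quasi-orientability then kills it. This buys generality: your argument applies verbatim to any quasi-orientable $S^0$-real bimodule of $KO$-dimension $2$ or $6 \bmod 8$ whose $S^0$-real structure is implemented by a central element of the algebra, and it never needs the explicit form of $\mu$ (nor even the relation $DJ_F = J_F D$, once the identity $\rho_F(z)=-\lambda_F(z)$ is in hand). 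Your observation that left $\alg{A}_{LR}$-linearity of any $U\in\lrU_{\alg{A}_{LR}}(\hs{H}_F,\gamma_F,J_F)$ immediately gives $[U,\epsilon_F]=[U,\lambda_F(z)]=0$ is also a cleaner disposal of the unitary-group equality than the paper's component-by-component check.
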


\begin{proof}
 To prove the first part of the claim, by Proposition~\ref{even26dirac}, it suffices to show that any right $\alg{A}_{LR}$-linear operator $\hs{H}_F^\even \to \hs{H}_F^\odd$ commutes with $\epsilon_F$. Thus, let $T \in \rbdd_{\alg{A}_{LR}}(\hs{H}_F^\even,\hs{H}_F^\odd)$. Then, since the signed multiplicity matrix $\mu$ of $(\hs{H}_F,\gamma_F)$ as an orientable even $\alg{A}_{LR}$-bimodule is given by
\[
 \mu =
 N \begin{pmatrix}
 0 & 0 & -1 & +1 & 0 & 0\\
 0 & 0 & 0 & 0 & 0 & 0\\
 +1 & 0 & 0 & 0 & +1 & 0\\
 -1 & 0 & 0 & 0 & -1 & 0\\
 0 & 0 & -1 & +1 & 0 & 0\\
 0 & 0 & 0 & 0 & 0 & 0
\end{pmatrix},
\]
it follows from Proposition~\ref{linear} that the only non-zero components of $T$ are $T_{\rep{2}_R \rep{1}}^{\rep{2}_L \rep{1}}$ and $T_{\rep{2}_R \rep{3}}^{\rep{2}_L \rep{3}}$, which both have domain and range within $\hs{H}_f = (\hs{H}_F)_i$, where $\epsilon$ acts as the identity. Thus, $T$ commutes with $\epsilon_F$.

To prove the next part of the claim, it suffices to show that any left and right $\alg{A}_{LR}$-linear operator on $\hs{H}_F$ commutes with $\epsilon_F$. But again, if $K \in \lrbdd_{\alg{A}_{LR}}(\hs{H}_F)$, then the only non-zero components of $K$ are of the form $K_{\alpha\beta}^{\alpha\beta}$, each of which therefore has both domain and range either within $\hs{H}_f$ or $\hs{H}_{\overline{f}} = J_F \hs{H}_f$, so that $K$ commutes with $\epsilon_F$. The last part of the claim is then an immediate consequence of the first two parts.
\end{proof}

Thus, by Proposition~\ref{diracs0reduction}, we have that
\begin{equation}
 \ms{D}_0(\alg{A}_{LR},\hs{H}_F,\gamma_F,J_F) = \ms{D}_0(\alg{A}_{LR},\hs{H}_F,\gamma_F,J_F,\epsilon_F) \cong \ms{D}_0(\alg{A}_{LR},\hs{H}_f,\gamma_f)
\end{equation}
and
\begin{equation}
 \ms{D}(\alg{A}_{LR},\hs{H}_F,\gamma_F,J_F) = \ms{D}(\alg{A}_{LR},\hs{H}_F,\gamma_F,J_F,\epsilon_F) \cong \ms{D}(\alg{A}_{LR},\hs{H}_f,\gamma_f),
\end{equation}
where $(\hs{H}_f,\gamma_f) = ((\hs{H}_F)_i,(\gamma_F)_i)$ is the orientable even $\alg{A}_{LR}$-bimodule with signed multiplicity matrix
\[
 \mu_f =
 N \begin{pmatrix}
 0 & 0 & 0 & 0 & 0 & 0\\
 0 & 0 & 0 & 0 & 0 & 0\\
 +1 & 0 & 0 & 0 & +1 & 0\\
 -1 & 0 & 0 & 0 & -1 & 0\\
 0 & 0 & 0 & 0 & 0 & 0\\
 0 & 0 & 0 & 0 & 0 & 0
\end{pmatrix}.
\]
In particular, then, $(\hs{H}_F,\gamma_F,J_F)$ as a real $\alg{A}_{LR}$-bimodule admits no \term{off-diagonal} Dirac operators, that is, Dirac operators with non-zero $P_{-i} D P_i : \hs{H}_f \to \hs{H}_{\overline{f}}$, or equivalently, that have non-vanishing commutator with $\epsilon_F$. Let us now examine $\ms{D}_0(\alg{A}_{LR},\hs{H}_F,\gamma_F,J_F)$ and $\ms{D}(\alg{A}_{LR},\hs{H}_F,\gamma_F,J_F)$, or rather, $\ms{D}_0(\alg{A}_{LR},\hs{H}_f,\gamma_f)$ and $\ms{D}(\alg{A}_{LR},\hs{H}_f,\gamma_f)$, in more detail.

First, it follows from the form of $\mu_f$ and Proposition~\ref{linear} that $\lbdd_{\alg{A}_{LR}}(\hs{H}_f^\even,\hs{H}_f^\odd)$ vanishes, whilst
\[
 \rbdd_{\alg{A}_{LR}}(\hs{H}_f^\even,\hs{H}_f^\odd) = M_{2N}(\field{C}) \oplus (M_{2N}(\field{C}) \otimes 1_3) \cong M_{2N}(\field{C}) \oplus M_{2N}(\field{C}).
\]
so that any Dirac operator on $\hs{H}_f$ (and hence on $\hs{H}_F$) is completely specified by a choice of $M_{\rep{2}_L \rep{1}}^{\rep{2}_R}$, $M_{\rep{2}_L \rep{3}}^{\rep{2}_R} \in M_{2N}(\field{C})$. Indeed, if $(m^\even,m^\odd)$ denotes the pair of multiplicity matrices of $(\hs{H}_f,\gamma_f)$, then, in the notation of subsection~\ref{ko6},
\[
 \ms{D}_0(\alg{A}_{LR},m^\even,6) = M_{2N}(\field{C}) \oplus M_{2N}(\field{C}).
\]
At the same time,
\[
 \lrU_{\alg{A}_{LR}}(\hs{H}_f^\even) = (1_2 \otimes \U(N)) \oplus (1_2 \otimes \U(N) \otimes 1_3) \cong \U(N) \times \U(N) =: \U(\alg{A}_{LR},m^\even)
\]
and
\[
 \lrU_{\alg{A}_{LR}}(\hs{H}_f^\odd) = (1_2 \otimes \U(N)) \oplus (1_2 \otimes \U(N) \otimes 1_3) \cong \U(N) \times \U(N) =: \U(\alg{A}_{LR},m^\odd).
\]
It then follows that
\begin{align}
 \ms{D}(\alg{A}_{LR},\hs{H}_f,\gamma_f) &\cong \U(\alg{A}_{LR},m^\odd) \backslash \ms{D}_0(\alg{A}_{LR},m^\even,6) / \U(\alg{A}_{LR},m^\even) \\
 &= \bigl(\U(N) \backslash M_{2N}(\field{C}) / \U(N)\bigr)^2,
\end{align}
where $\U(N)$ acts on the left by multiplication and on the right by multiplication by the inverse as $1_2 \otimes U(N)$. The two factors of the form $\U(N) \backslash M_{2N}(\field{C}) / \U(N)$ can thus be viewed as the parameter spaces of the components $M_{\rep{2}_L \rep{1}}^{\rep{2}_R}$ and $M_{\rep{2}_L \rep{3}}^{\rep{2}_R}$, respectively. 

Let us now consider $(\hs{H}_F,\gamma_F,J_F,\epsilon_F)$ as an $S^0$-real $\alg{A}_F$-bimodule, so that the multiplicity matrices $(m^\even,m^\odd)$ of $(\hs{H}_F,\gamma_F)$ are given by
\[
 m^\even = N
\begin{pmatrix}
1 & 1 & 0 & 0 & 0\\
0 & 0 & 0 & 0 & 0\\
1 & 0 & 0 & 1 & 0\\
1 & 1 & 0 & 0 & 0\\
0 & 0 & 0 & 0 & 0
\end{pmatrix}, \quad
m^\odd = N
\begin{pmatrix}
1 & 0 & 1 & 1 & 0\\
1 & 0 & 0 & 1 & 0\\
0 & 0 & 0 & 0 & 0\\
0 & 0 & 1 & 0 & 0\\
0 & 0 & 0 & 0 & 0
\end{pmatrix} = (m^\even)^T.
\]

Now it follows from the form of $(m^\even,m^\odd)$ that
\begin{multline*}
 \rbdd_{\alg{A}_F}(\hs{H}_F^\even,\hs{H}_F^\odd) = M_N(\field{C})^{\oplus 2} \oplus M_{N \times 2N}(\field{C})^{\oplus 2} \oplus M_{N \times 3N}(\field{C})^{\oplus 2} \\ \oplus (M_{N \times 2N}(\field{C}) \otimes 1_3)^{\oplus 2},
\end{multline*}
whilst
\[
 \ker(R_6) = \fsl(N,\field{C}) \subseteq M_N(\field{C})
\]
for the copy of $M_N(\field{C})$ corresponding to $\rbdd_\alg{A}((\hs{H}_F^\even)_{\rep{1}\rep{1}},(\hs{H}_F^\odd)_{\rep{1}\rep{1}})$. Since
$M_N(\field{C}) = \Sym_N(\field{C}) \oplus \fsl(N,\field{C})$, $M_N(\field{C})/\fsl(N,\field{C})$ can be identified with $\Sym_N(\field{C})$, so that
\begin{align*}
 &\ms{D}_0(\alg{A}_F,\hs{H}_F,\gamma_F,J_F)\\  \cong &\rbdd_{\alg{A}_F}(\hs{H}_F^\even,\hs{H}_F^\odd) / \ker(R_6)\\
 = &\Sym_N(\field{C}) \oplus M_N(\field{C}) \oplus M_{N \times 2N}(\field{C})^{\oplus 2} \oplus M_{N \times 3N}(\field{C})^{\oplus 2} \oplus (M_{N \times 2N}(\field{C}) \otimes 1_3)^{\oplus 2}.
\end{align*}
Thus, a Dirac operator $D$, which is specified by a choice of class 
\[
 [M] \in \rbdd_{\alg{A}_F}(\hs{H}_F^\even,\hs{H}_F^\odd) / \ker(R_6),
\]
is therefore specified in turn by the choice of the following matrices:
\begin{itemize}
 \item $M_{\rep{1}\rep{1}}^{\rep{1}} \in \Sym_N(\field{C})$, $M_{\rep{1}\rep{1}}^{\crep{1}} \in M_N(\field{C})$;
 \item $M_{\rep{2}\rep{1}}^{\rep{1}}$, $M_{\rep{2}\rep{1}}^{\crep{1}} \in M_{N \times 2N}(\field{C})$;
 \item $M_{\rep{3}\rep{1}}^{\rep{1}}$, $M_{\rep{3}\rep{1}}^{\crep{1}} \in M_{N \times 3N}(\field{C})$;
 \item $M_{\rep{2}\rep{3}}^{\rep{1}}$, $M_{\rep{2}\rep{3}}^{\crep{1}} \in M_{N \times 2N}(\field{C})$.
\end{itemize}
Indeed, it follows that
\begin{multline}
 \ms{D}_0(\alg{A}_F,m^\even,6) = \Sym_N(\field{C}) \oplus M_N(\field{C}) \oplus M_{N \times 2N}(\field{C})^{\oplus 2} \oplus M_{N \times 3N}(\field{C})^{\oplus 2} \\ \oplus M_{N \times 2N}(\field{C})^{\oplus 2}.
\end{multline}

Next, we have that $\U(\alg{A}_F,m^\even) = \U(N)^6$, with a copy of $\U(N)$ corresponding to each of $(\hs{H}_F^\even)_{\rep{1}\rep{1}}$, $(\hs{H}_F^\even)_{\rep{1}\crep{1}}$, $(\hs{H}_F^\even)_{\rep{2}\rep{1}}$, $(\hs{H}_F^\even)_{\rep{2}\rep{3}}$, $(\hs{H}_F^\even)_{\rep{3}\rep{1}}$, and $(\hs{H}_F^\even)_{\rep{3}\crep{1}}$. Then, by Proposition~\ref{moduli26},
\begin{equation}
 \ms{D}(\alg{A}_F,\hs{H}_F,\gamma_F,J_F) \cong \ms{D}_0(\alg{A}_F,m^\even,6) / \U(\alg{A}_F,m^\even)
\end{equation}
for the action of $\U(\alg{A}_F,m^\even)$ on $\ms{D}_0(\alg{A}_F,m^\even,6)$ given by having the element $(U_{\alpha\beta}) \in \U(\alg{A}_F,m^\even)$ act on $(M_{\alpha\beta}^\gamma) \in \ms{D}_0(\alg{A}_F,m^\even,6)$ by
\[
 M_{\alpha\beta}^\gamma \mapsto (1_{n_\gamma} \otimes \overline{U_{\beta\gamma}}) M_{\alpha\beta}^\gamma (1_{n_\alpha} \otimes U_{\alpha\beta}^*).
\]
Note that in the notation of~\cite{CM08}*{\S\S 13.4, 13.5}, for $(M_{\alpha\beta}^\gamma) \in \ms{D}_0(\alg{A}_F,m^\even,6)$,
\[
 M_{\rep{1}\rep{1}}^{\rep{1}} = \frac{1}{2}\Upsilon_R,
\]
so that the so-called Majorana mass term is already present in its final form, whilst for $U \in \U(\alg{A}_F,m^\even)$,
\[
 U = (U_{\rep{1}\rep{1}},U_{\rep{1}\crep{1}},U_{\rep{2}\rep{1}},U_{\rep{2}\rep{3}},U_{\rep{3}\rep{1}},U_{\rep{3}\crep{1}}) = (\overline{V_2},\overline{V_1},V_3,W_3,\overline{W_2},\overline{W_1}).
\]

Finally, let us compute the sub-moduli space $\ms{D}(\alg{A}_F,\hs{H}_F,\gamma_F,J_F;\field{C}_F)$ for 
\[
 \field{C}_F = \{(\zeta,\diag(\zeta,\overline{\zeta}),0) \in \alg{A}_F \mid \lambda \in \field{C}\} \cong \field{C}.
\]
It is easy to see that $[M] \in \rbdd_{\alg{A}_F}(\hs{H}_F^\even,\hs{H}_F^\odd) / \ker(R_6)$ yields an element of the subspace $\ms{D}_0(\alg{A}_F,\hs{H}_F,\gamma_F,J_F;\field{C}_F)$ if and only if $M$ commutes with $\lambda(\field{C}_F)$, but this holds if and only if for all $\zeta \in \field{C}$ and $\beta \in \spec{\alg{A}_F}$,
\begin{align*}
 \zeta M_{\rep{1}\beta}^{\rep{1}} &= M_{\rep{1}\beta}^{\rep{1}} \zeta, 
 &\overline{\zeta} M_{\rep{1}\beta}^{\crep{1}} &= M_{\rep{1}\beta}^{\crep{1}} \zeta,\\
 \zeta M_{\rep{2}\beta}^{\rep{1}} &= M_{\rep{2}\beta}^{\rep{1}} (\diag(\zeta,\overline{\zeta}) \otimes 1_N),
 &\overline{\zeta} M_{\rep{2}\beta}^{\crep{1}} &= M_{\rep{2}\beta}^{\crep{1}} (\diag(\zeta,\overline{\zeta}) \otimes 1_N),\\
 0 M_{\rep{3}\beta}^{\rep{1}} &= M_{\rep{3}\beta}^{\rep{1}} \zeta,
 &0 M_{\rep{3}\beta}^{\crep{1}} &= M_{\rep{3}\beta}^{\crep{1}} \zeta,
\end{align*}
which is in turn equivalent to having $M_{\rep{1}\rep{1}}^{\crep{1}}$, $M_{\rep{3}\rep{1}}^{\rep{1}}$ and $M_{\rep{3}\rep{1}}^{\crep{1}}$ all vanish, and
\[
 M_{\rep{2}\rep{1}}^{\rep{1}} = \begin{pmatrix} \Upsilon_\nu & 0 \end{pmatrix}, \quad M_{\rep{2}\rep{1}}^{\crep{1}} = \begin{pmatrix} 0 & \Upsilon_e \end{pmatrix}, \quad M_{\rep{2}\rep{3}}^{\rep{1}} = \begin{pmatrix} \Upsilon_u & 0 \end{pmatrix}, \quad M_{\rep{2}\rep{3}}^{\crep{1}} = \begin{pmatrix} 0 & \Upsilon_d \end{pmatrix},
\]
for $\Upsilon_\nu$, $\Upsilon_e$, $\Upsilon_u$, $\Upsilon_d \in M_N(\field{C})$. One can check that our notation is consistent with that of~\cite{CM08}*{\S\S 13.4, 13.5}. Indeed, if $\ms{D}_0(\alg{A}_F,m^\even,6;\field{C}_F)$ denotes the subspace of $\ms{D}_0(\alg{A}_F,m^\even,6)$ corresponding to $\ms{D}_0(\alg{A}_F,\hs{H}_F,\gamma_F,J_F;\field{C}_F)$, then
\begin{equation}
 \ms{D}(\alg{A}_F,\hs{H}_F,\gamma_F,J_F;\field{C}_F) \cong \ms{D}_0(\alg{A}_F,m^\even,6;\field{C}_F) / \U(\alg{A}_F,m^\even) \cong \ms{C}_q \times \ms{C}_l
\end{equation}
for
\begin{equation}
 \ms{C}_q := \bigl(\U(N) \times \U(N)\bigr) \backslash \bigl(M_N(\field{C}) \times M_N(\field{C})\bigr) / \U(N),
\end{equation}
where $\U(N)$ acts diagonally by multiplication on the right, and
\[
 \ms{C}_l := \bigl(\U(N) \times \U(N)\bigr) \backslash \bigl(M_N(\field{C}) \times M_N(\field{C}) \times \Sym_N(\field{C})\bigr) / \U(N),
\]
where $\U(N) \times \U(N)$ acts trivially on $\Sym_N(\field{C})$ and $\U(N)$ acts on $\Sym_N(\field{C})$ by
\[
 (V_2,\Upsilon_R) \mapsto V_2 \Upsilon_R V_2^T;
\]
note that $\ms{C}_q$ is the parameter space for the matrices $(\Upsilon_u,\Upsilon_d)$, whilst $\ms{C}_l$ is the parameter space for the matrices $(\Upsilon_\nu,\Upsilon_e,\Upsilon_R)$. Thus we have recovered the sub-moduli space of Dirac operators considered by Chamseddine--Connes--Marcolli~\cite{CCM07}*{\S\S 2.6, 2.7} (\cf also~\cite{CM08}*{\S\S 13.4, 13.5}).

\section{Applications to the Recent Work of Chamseddine and Connes}

In this section, we reformulate the results of Chamseddine and Connes in~\cites{CC08a,CC08b} and give new proofs thereof using the theory of bimodules and bilateral triples developed above.

Before continuing, recall that, up to automorphisms, the only real forms of $M_n(\field{C})$ are $M_n(\field{C})$, $M_n(\field{R})$, and, if $n$ is even, $M_{n/2}(\field{H})$.

\subsection{Admissible real bimodules}

We begin by studying what Chamseddine and Connes call \term{irreducible triplets}, namely, real $\alg{A}$-bimodules satisfying certain representation-theoretic conditions, along the lines of~\cite{CC08b}*{\S 2}. However, we shall progress by adding Chamseddine and Connes's various requirements for irreducible triplets one by one, bringing us gradually to their classification of irreducible triplets. 

In what follows, $\alg{A}$ will once more denote a fixed real \Cstar-algebra, and for $(\hs{H},J)$ a real $\alg{A}$-bimodule of odd $KO$-dimension, $\lrbdd_\alg{A}(\hs{H};J)$ will denote the real \Star-subalgebra of $\lrbdd_\alg{A}(\hs{H})$ consisting of elements commuting with $J$.

Let us now introduce the first explict requirement for irreducible triplets.

\begin{definition}
 Let $(\hs{H},J)$ be a real $\alg{A}$-bimodule of odd $KO$-dimension. We shall say that $(\hs{H},J)$ is \term{irreducible} if $0$ and $1$ are the only projections in $\lrbdd_\alg{A}(\hs{H};J)$.
\end{definition}

To proceed, we shall need the following:

\begin{lemma}\label{irreduciblelemma}
 Let $(\hs{H},J)$ be a real $\alg{A}$-bimodule of odd $KO$-dimension $n \bmod 8$ with multiplicity matrix $m$. Then
\begin{equation}
 \lrbdd_\alg{A}(\hs{H};J) \cong
\begin{cases}
  \biggl(\bigoplus_{\alpha \in \spec{\alg{A}}} M_{m_{\alpha\alpha}}(\field{R}) \biggr) \oplus \bigoplus_{\substack{\alpha,\beta \in \spec{\alg{A}}\\ \alpha < \beta}} M_{m_{\alpha\beta}}(\field{C}), &\text{if $n = 1$ or $7 \bmod 8$,}\\
 \biggl(\bigoplus_{\alpha \in \spec{\alg{A}}} M_{m_{\alpha\alpha}/2}(\field{H}) \biggr) \oplus \bigoplus_{\substack{\alpha,\beta \in \spec{\alg{A}}\\ \alpha < \beta}} M_{m_{\alpha\beta}}(\field{C}), &\text{if $n = 3$ or $5 \bmod 8$.}
\end{cases}
\end{equation}
\end{lemma}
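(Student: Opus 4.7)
The plan is to mimic the proof of Lemma~\ref{oddker}, replacing the condition $K = -\epsp JKJ^*$ that characterised $\ker(R_n)$ by the simpler condition $JK = KJ$ that characterises the commutant. First I would invoke Proposition~\ref{linear} to write any $K \in \lrbdd_\alg{A}(\hs{H})$ as
\[
 K = \bigoplus_{\alpha,\beta \in \spec{\alg{A}}} 1_{n_\alpha} \otimes K_{\alpha\beta} \otimes 1_{n_\beta}, \qquad K_{\alpha\beta} \in M_{m_{\alpha\beta}}(\field{C}).
\]
Since the map $K \mapsto K_{\alpha\beta}$ is compatible with composition, the problem reduces to describing which families $(K_{\alpha\beta})$ arise from elements commuting with $J = J_m$, where $J_m$ is taken in the canonical form of Lemma~\ref{real17b} (for $n=1,7$) or Lemma~\ref{real35b} (for $n=3,5$).

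Next I would compute $JK = KJ$ block by block, using that $J_m$ has only components of the form $(J_m)_{\alpha\beta}^{\beta\alpha}$. For $\alpha \neq \beta$ the defining formula for $X_m$ gives, for $v = \xi_1 \otimes \xi_2 \otimes \xi_3 \in \hs{H}_{\alpha\beta}$,
\[
 (KJ v)_{\beta\alpha} = \overline{\xi_3} \otimes K_{\beta\alpha}\overline{\xi_2} \otimes \overline{\xi_1}, \quad (JK v)_{\beta\alpha} = \overline{\xi_3} \otimes \overline{K_{\alpha\beta}}\,\overline{\xi_2} \otimes \overline{\xi_1};
\]
the signs introduced in Lemma~\ref{real35b} for the $\alpha > \beta$ case appear on both sides and cancel, so in every odd $KO$-dimension the off-diagonal relation is
\[
 K_{\beta\alpha} = \overline{K_{\alpha\beta}}.
\]
For $\alpha = \beta$ and $n = 1,7$ the same calculation yields $K_{\alpha\alpha} = \overline{K_{\alpha\alpha}}$, \ie $K_{\alpha\alpha} \in M_{m_{\alpha\alpha}}(\field{R})$. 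For $\alpha = \beta$ and $n=3,5$ the extra factor $\Omega_{m_{\alpha\alpha}}$ in $(J_m)_{\alpha\alpha}^{\alpha\alpha}$ turns the condition into
\[
 K_{\alpha\alpha} \Omega_{m_{\alpha\alpha}} = \Omega_{m_{\alpha\alpha}} \overline{K_{\alpha\alpha}},
\]
which is precisely the condition that $K_{\alpha\alpha}$ lie in the quaternionic real form $M_{m_{\alpha\alpha}/2}(\field{H})$ of $M_{m_{\alpha\alpha}}(\field{C})$ under the embedding fixed in Section~2.

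Finally I would assemble the answer. For each diagonal index $\alpha$ the matrix $K_{\alpha\alpha}$ ranges freely over $M_{m_{\alpha\alpha}}(\field{R})$ or $M_{m_{\alpha\alpha}/2}(\field{H})$, while for each unordered pair $\alpha < \beta$ the component $K_{\alpha\beta}$ ranges freely over $M_{m_{\alpha\beta}}(\field{C})$, with $K_{\beta\alpha} = \overline{K_{\alpha\beta}}$ forced. Since different blocks of $\lrbdd_\alg{A}(\hs{H})$ are orthogonal idempotent summands under composition, the multiplication is block-diagonal; for the paired summand the map $K_{\alpha\beta} \mapsto (K_{\alpha\beta}, \overline{K_{\alpha\beta}})$ is manifestly a real-algebra homomorphism (complex conjugation commutes with multiplication), so each pair contributes a single copy of $M_{m_{\alpha\beta}}(\field{C})$ regarded as a real algebra. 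Combining the diagonal and off-diagonal contributions produces the claimed decomposition. The only part that requires genuine care is the translation of $KJ = JK$ into the matrix identity $\overline{K_{\alpha\alpha}} = \Omega^{-1}_{m_{\alpha\alpha}} K_{\alpha\alpha} \Omega_{m_{\alpha\alpha}}$ in the $n = 3,5$ case and the verification that this cuts out exactly the subalgebra $M_{m_{\alpha\alpha}/2}(\field{H})$ in the fixed embedding; everything else is routine bookkeeping.
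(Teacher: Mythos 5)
Your proposal is correct and follows the same approach the paper takes: the paper's proof reduces to a block-by-block computation of the commutation relation $[T,J]=0$ against the canonical $J_m$ (by reference to the implicit arguments behind Propositions~\ref{real17unitary} and~\ref{real35unitary}), and your proposal carries out exactly that computation, correctly identifying $K_{\beta\alpha}=\overline{K_{\alpha\beta}}$ off-diagonal and the real or quaternionic condition on $K_{\alpha\alpha}$. The check that $K_{\alpha\alpha}\Omega=\Omega\overline{K_{\alpha\alpha}}$ cuts out $M_{m_{\alpha\alpha}/2}(\field{H})$, and the observation that the conjugate-pair blocks assemble into a single copy of $M_{m_{\alpha\beta}}(\field{C})$ as a real algebra, are both accurate.
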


\begin{proof}
 Let $T = (1_{n_\alpha} \otimes T_{\alpha\beta} \otimes 1_{n_\beta}) \in \lrbdd_\alg{A}(\hs{H})$. Just as for Propositions~\ref{real17unitary} and~\ref{real35unitary}, one can show that $[T,J] = 0$ if and only if for all $\alpha$, $\beta \in \spec{\alg{A}}$, $T_{\beta\alpha} = \overline{T_{\alpha\beta}}$ if $\alpha \neq \beta$ and
\[
 T_{\alpha\alpha} \in
\begin{cases}
 M_{m_{\alpha\alpha}}(\field{R}), &\text{if $n = 1$ or $7 \bmod 8$,}\\
 M_{m_{\alpha\alpha}/2}(\field{H}), &\text{if $n = 3$ or $5 \bmod 8$.}
\end{cases}
\]
Thus, $T \in \lrbdd_\alg{A}(\hs{H};J)$ is completely specified by the matrices $T_{\alpha\alpha}$ and $T_{\alpha\beta}$ for $\alpha > \beta$, giving rise to the isomorphisms of the claim.
\end{proof}

We can now formulate the part of the results of~\cite{CC08b}*{\S 2} that depends only on this notion of irreducibility.

\begin{proposition}
 Let $(\hs{H},J)$ be a real $\alg{A}$-bimodule of odd $KO$-dimension $n \bmod 8$ with multiplicity matrix $m$. Then $(\hs{H},J)$ is irreducible if and only if one of the following holds:
\begin{enumerate}
 \item There exists $\alpha \in \spec{\alg{A}}$ such that $m = 2^{(1-\eps)/2} E_{\alpha\alpha}$;
\item There exist $\alpha$, $\beta \in \spec{\alg{A}}$, $\alpha \neq \beta$, such that $m = E_{\alpha\beta} + E_{\beta\alpha}$.
\end{enumerate}
\end{proposition}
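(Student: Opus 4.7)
The plan is to combine Lemma~\ref{irreduciblelemma} with the elementary observation that a finite-dimensional real \Cstar-algebra $\alg{B}$ has $\{0,1\}$ as its only projections if and only if $\alg{B}$ is isomorphic to one of the division algebras $\field{R}$, $\field{C}$, or $\field{H}$. Indeed, by Wedderburn's theorem $\alg{B} \cong \bigoplus_i M_{n_i}(\field{K}_i)$, and the unit of each individual summand is a nontrivial projection in $\alg{B}$ unless there is only one summand; meanwhile, within a single block $M_n(\field{K})$ the rank-one projection $\diag(1,0,\dotsc,0)$ is nontrivial whenever $n \geq 2$, while $M_1(\field{K}) \cong \field{K}$ has only $0$ and $1$ as projections, because any self-adjoint idempotent in $\field{R}$, $\field{C}$, or $\field{H}$ must lie in the fixed field $\field{R}$ of the involution and hence be $0$ or $1$.

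Applied to the description of $\lrbdd_\alg{A}(\hs{H};J)$ furnished by Lemma~\ref{irreduciblelemma}, irreducibility is therefore equivalent to the requirement that exactly one of the listed summands is nonzero and that this summand is itself isomorphic to $\field{R}$, $\field{C}$, or $\field{H}$. I would then split into two cases according to whether the unique surviving summand is a diagonal contribution ($M_{m_{\alpha\alpha}}(\field{R})$ or $M_{m_{\alpha\alpha}/2}(\field{H})$) or an off-diagonal contribution $M_{m_{\alpha\beta}}(\field{C})$ with $\alpha<\beta$.

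In the diagonal case: when $n \equiv 1, 7 \bmod 8$ one has $\eps = 1$, the summand is $M_{m_{\alpha\alpha}}(\field{R})$, and it is a division algebra precisely when $m_{\alpha\alpha} = 1 = 2^{(1-\eps)/2}$; when $n \equiv 3, 5 \bmod 8$ one has $\eps = -1$, the summand is $M_{m_{\alpha\alpha}/2}(\field{H})$, and it is a division algebra precisely when $m_{\alpha\alpha} = 2 = 2^{(1-\eps)/2}$. Either way, this recovers condition (1). In the off-diagonal case, $M_{m_{\alpha\beta}}(\field{C})$ is a division algebra if and only if $m_{\alpha\beta} = 1$; combined with the symmetry of $m$ (Lemmas~\ref{real17a} and~\ref{real35a}), which forces $m_{\beta\alpha} = m_{\alpha\beta} = 1$, this yields $m = E_{\alpha\beta} + E_{\beta\alpha}$, i.e.\ condition (2). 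The converse direction is immediate by substituting each of these $m$ back into the formula of Lemma~\ref{irreduciblelemma}. No step poses any real obstacle: the only substantive input is the structural description of $\lrbdd_\alg{A}(\hs{H};J)$ already established in Lemma~\ref{irreduciblelemma}, and the rest is a finite case analysis.
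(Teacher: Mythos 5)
Your proposal is correct and takes essentially the same route as the paper: both arguments reduce irreducibility to the requirement that $\lrbdd_\alg{A}(\hs{H};J)$, as described by Lemma~\ref{irreduciblelemma}, be a division algebra $\field{R}$, $\field{C}$, or $\field{H}$, and then read off the constraints on $m$ by case analysis on which summand survives. The only difference is that you spell out more explicitly why a finite-dimensional real \Cstar-algebra with trivial projections must be a division algebra, a step the paper's proof leaves implicit.
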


\begin{proof}
By definition, $(\hs{H},J)$ is irreducible if and only if the only projections in the real \Cstar-algebra $\lrbdd_\alg{A}(\hs{H},J)$ are $0$ and $1$, but by Lemma~\ref{irreduciblelemma}, this in turn holds if and only if one of the following holds:
\begin{enumerate}
 \item $\lrbdd_\alg{A}(\hs{H};J) \cong \field{R}$, so that $n = 1$ or $7 \bmod 8$, and $m = E_{\alpha\alpha}$ for some $\alpha \in \spec{\alg{A}}$,
 \item $\lrbdd_\alg{A}(\hs{H};J) \cong \field{H}$, so that $n = 3$ or $5 \bmod 8$, and $m = 2 E_{\alpha\alpha}$ for some $\alpha \in \spec{\alg{A}}$,
 \item $\lrbdd_\alg{A}(\hs{H};J) \cong \field{C}$, so that $m = E_{\alpha\beta} + E_{\beta\alpha}$ for some $\alpha$, $\beta \in \spec{\alg{A}}$, $\alpha \neq \beta$,
\end{enumerate}
which yields in turn the desired result.
\end{proof}

We shall call an irreducible odd $KO$-dimensional real $\alg{A}$-bimod\-ule $(\hs{H},J)$ \term{type A} if the first case holds, and \term{type B} if the second case holds; Chamseddine and Connes's first and second case for irreducible triplets~\cite{CC08b}*{Lemma 2.2} correspond to the type A and type B case, respectively. We shall also find it convenient to define the \term{skeleton} $\skel(\hs{H},J)$ of such a bimodule as follows:
\begin{enumerate}
 \item if $(\hs{H},J)$ is type A, then $\skel(\hs{H},J) := \{\alpha\}$, where $\alpha \in \spec{\alg{A}}$ is such that $\mult[\hs{H}] = 2^{\frac{1-\eps}{2}} E_{\alpha\alpha}$;
 \item if $(\hs{H},J)$ is type B, then $\skel(\hs{H},J) := \{\alpha,\beta\}$, where $\alpha$, $\beta \in \spec{\alg{A}}$, $\alpha \neq \beta$, are such that $\mult[\hs{H}] = E_{\alpha\beta} + E_{\beta\alpha}$.
\end{enumerate}

Let us now introduce the second explicit requirement for irreducible triplets.

\begin{definition}
 An $\alg{A}$-bimodule $\hs{H}$ is \term{(left) separating} if there exists some $\xi \in \hs{H}$ such that $\lambda(\alg{A})^\prime \xi = \hs{H}$. Such a vector $\xi$ is then called a \term{separating vector} for $\alg{A}$.
\end{definition}

Recall that for a representation $\hs{X}$ of a complex \Cstar-algebra $\alg{C}$, $\xi \in \hs{X}$ is a separating vector if and only if the map $\alg{C} \to \hs{X}$ given by $c \mapsto c \xi$ is injective.

\begin{lemma}\label{separatinglemma}
 Let $p$, $q \in \semiring{N}$. There exists a separating vector $\xi$ for the usual action of $M_p(\field{C})$ on $\field{C}^p \otimes \field{C}^q$ as $M_p(\field{C}) \otimes 1_q$ if and only if $p \leq q$.
\end{lemma}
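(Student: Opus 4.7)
The plan is to identify $\field{C}^p \otimes \field{C}^q$ with the matrix space $M_{p \times q}(\field{C})$ under the isometry sending an elementary tensor $v \otimes w$ to the rank-one matrix $v w^T$. Under this identification, the action of $a \otimes 1_q$ on $\field{C}^p \otimes \field{C}^q$ corresponds to left multiplication by $a$ on $M_{p \times q}(\field{C})$, since $(a v)w^T = a(v w^T)$.

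With this translation in hand, a vector $\xi \in \field{C}^p \otimes \field{C}^q$ corresponding to a matrix $X \in M_{p \times q}(\field{C})$ is separating for $M_p(\field{C}) \otimes 1_q$ if and only if the linear map $M_p(\field{C}) \to M_{p \times q}(\field{C})$, $a \mapsto a X$, is injective. This map is injective exactly when $X$ has trivial left annihilator in $M_p(\field{C})$, that is, when the $p$ rows of $X$ are linearly independent, equivalently $\rank(X) = p$.

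The first direction is then immediate: if a separating vector exists, the associated $X \in M_{p\times q}(\field{C})$ has rank $p$, forcing $p \leq q$. For the converse, assume $p \leq q$ and take $X = \bigl(\begin{smallmatrix}1_p & 0\end{smallmatrix}\bigr) \in M_{p \times q}(\field{C})$; this matrix manifestly has rank $p$, so the corresponding $\xi$ is a separating vector. I expect no serious obstacle here -- the only subtlety is to verify that the left-multiplication reformulation correctly captures the $M_p(\field{C}) \otimes 1_q$-action, which is a direct check on elementary tensors.
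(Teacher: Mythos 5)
Your proof is correct and follows essentially the same route as the paper's. The identification of $\field{C}^p \otimes \field{C}^q$ with $M_{p\times q}(\field{C})$ is a tidy reformulation, but it produces the same separating vector ($\sum_i e_i \otimes f_i$ corresponds exactly to your $(1_p\ 0)$) and the same obstruction for $p > q$ (the paper phrases it as a dimension count, $p^2 > pq$; you phrase it as $\rank X \leq q < p$ — equivalent observations). One small bonus of your version is that it characterises \emph{all} separating vectors as the rank-$p$ matrices, rather than merely exhibiting one, but this is not needed for the lemma.
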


\begin{proof}
 Let $\{e_i\}_{i=1}^p$ be a basis for $\field{C}^p$, and let $\{f_j\}_{j=1}^q$ be a basis for $\field{C}^q$. 

First suppose that $p \leq q$. Let $\xi \in \field{C}^p \otimes \field{C}^q$ be given by $\xi = \sum_{i=1}^p e_i \otimes f_i$. Then for any $a$, $b \in M_p(\field{C})$,
\[
 \left(a \otimes 1_q\right) \xi - \left(b \otimes 1_q\right) \xi = \sum_{i=1}^p \left(\sum_{l=1}^p (a_i^l - b_i^l) e_l \right) \otimes f_i
\]
so that by linear independence of the $e_i$ and $f_j$, the left-hand side vanishes if and only if for each $i$ and $l$, $a_i^l - b_i^l = 0$, \ie $a = b$. Hence, $\xi$ is indeed a separating vector.

Now suppose that $p > q$. Then $\dim_{\field{C}} M_p(\field{C}) - \dim_{\field{C}} \field{C}^p \otimes \field{C}^q = p(p-q) > 0$, so that for any $\xi \in \field{C}^p \otimes \field{C}^q$, the map $M_p(\field{C}) \mapsto \field{C}^p \otimes \field{C}^q$ given by $a \mapsto \left(a \otimes 1_q \right) \xi$ cannot possibly be injective, and hence $\xi$ cannot possibly be separating.
\end{proof}

We can now reformulate that part of the results in~\cite{CC08b}*{\S 2} that depends only on irreducibility and the existence of a separating vector.

\begin{proposition}
 Let $(\hs{H},J)$ be an irreducible real $\alg{A}$-bimodule of odd $KO$-dimen\-sion $n \bmod 8$.
\begin{enumerate}
 \item If $(\hs{H},J)$ is type A, then it is separating;
 \item If $(\hs{H},J)$ is type B with skeleton $(\alpha,\beta)$, then $(\hs{H},J)$ is separating if and only if $n_{\alpha} = n_{\beta}$.
\end{enumerate}
\end{proposition}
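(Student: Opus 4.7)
The plan is to reduce both cases to Lemma~\ref{separatinglemma} by passing to the complexification of the left action. The first observation is that $\xi \in \hs{H}$ is separating for $\lambda(\alg{A})$ (in the sense $\lambda(\alg{A})^\prime\xi = \hs{H}$) if and only if it is separating for the complex \Cstar-subalgebra $\alg{B} := \lambda(\alg{A}) + i\lambda(\alg{A})$ of $\bdd(\hs{H})$: taking commutants inside $\bdd(\hs{H})$ gives $\lambda(\alg{A})^\prime = \alg{B}^\prime$, so the cyclic-vector condition on $\xi$ is the same, and then the standard complex equivalence recalled in the excerpt converts this into injectivity of $\alg{B}\to\hs{H}$, $c\mapsto c\xi$.

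For the type A case with $\skel(\hs{H},J)=\{\alpha\}$, we have $\hs{H}=\field{C}^{n_\alpha}\otimes\field{C}^{m_{\alpha\alpha}}\otimes\field{C}^{n_\alpha}$ with $\lambda(a) = \lambda_\alpha(a)\otimes 1\otimes 1$, so $\alg{B} = M_{n_\alpha}(\field{C})\otimes 1\otimes 1$ acts in the standard way on $\field{C}^{n_\alpha}\otimes \field{C}^{m_{\alpha\alpha}n_\alpha}$. Since $m_{\alpha\alpha}\in\{1,2\}$, we have $n_\alpha\leq m_{\alpha\alpha}n_\alpha$, and Lemma~\ref{separatinglemma} produces a separating vector. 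So type A bimodules are always separating.

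For the type B case with $\skel(\hs{H},J)=\{\alpha,\beta\}$, we have $\hs{H} = \hs{H}_{\alpha\beta}\oplus\hs{H}_{\beta\alpha}$, and the plan is to show $\alg{B}\cong M_{n_\alpha}(\field{C})\oplus M_{n_\beta}(\field{C})$ acting on these summands in the standard way $(a,b)\mapsto (a\otimes 1_{n_\beta})\oplus(b\otimes 1_{n_\alpha})$. Granted this, separation requires a separating vector in each summand, so Lemma~\ref{separatinglemma} forces $n_\alpha\leq n_\beta$ and $n_\beta\leq n_\alpha$; conversely, if $n_\alpha=n_\beta$, a pair of lemma-provided separating vectors combine into one on $\hs{H}$.

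The one point that actually needs work is identifying $\alg{B}$ in type B. If $\alpha,\beta$ lie in distinct Wedderburn summands, the two blocks of $\alg{A}$ act independently on the two summands of $\hs{H}$, and in each summand the complexification of the image of the relevant irreducible representation already fills the full matrix algebra (because $M_n(\field{R})\otimes_\field{R}\field{C}$, $M_n(\field{C})\otimes_\field{R}\field{C}$ and $M_{n/2}(\field{H})\otimes_\field{R}\field{C}$ are all $M_n(\field{C})$), so $\alg{B} = M_{n_\alpha}(\field{C})\oplus M_{n_\beta}(\field{C})$ as desired. The subtle sub-case is $\alpha=\rep{n}_i$, $\beta=\crep{n}_i$ with $\field{K}_i=\field{C}$, where $\lambda(\alg{A})$ embeds $a\mapsto(a,\overline a)$ diagonally; here I would verify the claim by a real dimension count, noting that $\lambda(\alg{A}) + i\lambda(\alg{A}) = \{(a+ib,\overline{a}+i\overline{b}):a,b\in M_{n_i}(\field{C})\}$ has $\field{R}$-dimension $4n_i^2$ inside $M_{n_i}(\field{C})\oplus M_{n_i}(\field{C})$, matching that of the target, and moreover is a complex \Star-subalgebra, so it exhausts the target. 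This identification is the only mildly delicate step; after it, both assertions follow directly from Lemma~\ref{separatinglemma}.
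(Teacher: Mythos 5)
Your proof is correct and follows essentially the same route as the paper's: both cases reduce to Lemma~\ref{separatinglemma} by passing to the commutant of $\lambda(\alg{A})$, equivalently to the complexification $\alg{B}=\lambda(\alg{A})+i\lambda(\alg{A})$. Two minor points of divergence: in type B you apply the lemma once per summand to rule out $n_\alpha\neq n_\beta$, where the paper instead uses the direct dimension count $\dim_\field{C}(M_{n_\alpha}(\field{C})\oplus M_{n_\beta}(\field{C}))-\dim_\field{C}\hs{H}=(n_\alpha-n_\beta)^2$, and you make explicit the identification $\alg{B}=M_{n_\alpha}(\field{C})\oplus M_{n_\beta}(\field{C})$ in the delicate sub-case $\{\alpha,\beta\}=\{\rep{n}_i,\crep{n}_i\}$ with $\field{K}_i=\field{C}$, a step the paper asserts without comment.
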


\begin{proof}
First suppose that $(\hs{H},J)$ is type A. Let $\{\alpha\} = \skel(\hs{H},J)$, and let $m_n = 2^{(1-\eps)/2}$. Then $\hs{H} = \field{C}^{n_\alpha} \otimes \field{C}^{m_n} \otimes \field{C}^{n_\alpha} = \field{C}^{n_\alpha} \otimes \field{C}^{m_n n_\alpha}$, and the left action $\lambda$ of $\alg{A}$ on $\hs{H}$ is thus given by $\lambda_\alpha \otimes 1_{m_n n_{\alpha}}$. Now
\[
 \lambda(\alg{A})^\prime = \left(\lambda_{\alpha}(\alg{A}) \otimes 1_{m_n n_{\alpha}}\right)^\prime = \left(M_{n_\alpha}(\field{C}) \otimes 1_{m_n n_\alpha}\right)^\prime,
\]
so that the action $\lambda$ of $\alg{A}$ admits a separating vector if and only if the action of $M_{n_\alpha}(\field{C})$ as $M_{n_\alpha}(\field{C}) \otimes 1_{m_n n_\alpha}$ admits a separating vector, but by Lemma~\ref{separatinglemma} this is indeed the case, as $n_\alpha \leq m_n n_\alpha$.

Now, suppose that $(\hs{H},J)$ is type B. Let $\{\alpha,\beta\} = \skel(\hs{H},J)$. Then
\[
 \hs{H} = (\field{C}^{n_\alpha} \otimes \field{C}^{n_\beta}) \oplus (\field{C}^{n_\beta} \otimes \field{C}^{n_\alpha}),
\]
and the left action $\lambda$ of $\alg{A}$ on $\hs{H}$ is given by $\lambda = (\lambda_{\alpha} \otimes 1_{n_\beta}) \oplus (\lambda_{\beta} \otimes 1_{n_\alpha})$. Since $\alpha \neq \beta$,
\begin{align*}
 \lambda(\alg{A})^\prime &= \left((\lambda_{\alpha}(\alg{A}) \otimes 1_{n_\beta}) \oplus (\lambda_\beta(\alg{A}) \otimes 1_{n_\alpha})\right)^\prime\\ &= \left((M_{n_\alpha}(\field{C}) \otimes 1_{n_\beta}) \oplus (M_{n_\beta}(\field{C}) \otimes 1_{n_\alpha})\right)^\prime,
\end{align*}
so that the action $\lambda$ of $\alg{A}$ admits a separating vector if and only if the action of $M_{n_\alpha}(\field{C}) \oplus M_{n_\beta}(\field{C})$ as $(M_{n_\alpha}(\field{C}) \otimes 1_{n_\beta}) \oplus (M_{n_\beta}(\field{C}) \otimes 1_{n_\alpha})$ admits a separating vector. Since $\dim_{\field{C}}M_{n_\alpha}(\field{C}) \oplus M_{n_\beta}(\field{C}) - \dim_{\field{C}}\hs{H} = (n_\alpha - n_\beta)^2$, if $n_\alpha \neq n_\beta$ then no injective linear maps $M_{n_\alpha}(\field{C}) \oplus M_{n_\beta}(\field{C}) \to \hs{H}$ can exist, and in particular, there exist no separating vectors for the action of $M_{n_\alpha}(\field{C}) \oplus M_{n_\beta}(\field{C})$, and hence for $\lambda$. Suppose instead that $n_\alpha = n_\beta = n$. Then
\[
 \hs{H} = (\field{C}^n \otimes \field{C}^n) \oplus (\field{C}^n \otimes \field{C}^n)
\]
so that, since $\alpha \neq \beta$, $\lambda(\alg{A})^\prime = (M_n(\field{C}) \otimes 1_n)^\prime \oplus (M_n(\field{C}) \otimes 1_n)^\prime$. Thus, if $\xi$ is the separating vector for the action of $M_n(\field{C})$ on $\field{C}^n \otimes \field{C}^n$ given by the proof of Lemma~\ref{separatinglemma}, then $\xi\oplus\xi$ is also a separating vector for the action $\lambda$ of $\alg{A}$, and hence $(\hs{H},J)$ is indeed separating.
\end{proof}

Let us now introduce the final requirement for irreducible triplets; recall that the complex form of a real \Cstar-algebra $\alg{A}$ a real \Cstar-algebra is denoted by $\alg{A}_\field{C}$.

\begin{definition}
 We shall call an $\alg{A}$-bimodule $\hs{H}$ \term{complex-linear} if both left and right actions of $\alg{A}$ on $\hs{H}$ extend to $\field{C}$-linear actions of $\alg{A}_{\field{C}}$, making $\hs{H}$ into a complex $\alg{A}_{\field{C}}$-bimodule.
\end{definition}

It follows immediately that a $\alg{A}$-bimodule $\hs{H}$ is complex-linear if and only if for $m = \mult[\hs{H}]$, $m_{\alpha\beta} = 0$ whenever $\alpha$ or $\beta$ is conjugate-linear. In particular, by Proposition~\ref{converseorientable}, it follows that a complex-linear quasi-orientable graded bimodule is always orientable.

We can now reformulate Chamseddine and Connes's definition for irreducible triplets:

\begin{definition}
 An \term{irreducible triplet} is a triplet $(\alg{A},\hs{H},J)$, where $\alg{A}$ is a finite-dimensional real \Cstar-algebra and $(\hs{H},J)$ is a complex-linear, separating, irreducible real $\alg{A}$-bimodule of odd $KO$-dimension such that the left action of $\alg{A}$ on $\hs{H}$ is faithful.
\end{definition}

Note that for $\hs{H}$ a real $\alg{A}$-bimodule, the left action of $\alg{A}$ is faithful if and only if the right action is faithful.

By combining the above results, we immediately obtain Cham\-sed\-dine and Connes's classification of irreducible triplets:

\begin{proposition}[Chamseddine--Connes~\cite{CC08b}*{Propositions 2.5, 2.8}]
 Let $\alg{A}$ be a finite-dimensional real \Cstar-algebra, and let $(\hs{H},J)$ be a real $\alg{A}$-bimodule of odd $KO$-dimension $n \bmod 8$. Then $(\alg{A},\hs{H},J)$ is an irreducible triplet if and only if one of the following cases holds:
\begin{enumerate}
 \item There exists $n \in \semiring{N}$ such that $\alg{A} = M_{k}(\field{K})$ for a real form $M_{k}(\field{K})$ of $M_n(\field{C})$, and
\begin{equation}
 \mult[\hs{H}] = 2^{(1-\eps)/2} E_{\rep{n}\rep{n}};
\end{equation}
 \item There exists $n \in \semiring{N}$ such that $\alg{A} = M_{k_1}(\field{K}_1) \oplus M_{k_2}(\field{K}_2)$ for real forms $M_{k_1}(\field{K}_1)$ and $M_{k_2}(\field{K}_2)$ of $M_n(\field{C})$, and 
\begin{equation}
 \mult[\hs{H}] = E_{\rep{n}_1 \rep{n}_2} + E_{\rep{n}_2 \rep{n_1}}.
\end{equation}
\end{enumerate}
\end{proposition}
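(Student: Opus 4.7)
The plan is to combine the three preceding propositions --- the classification of irreducible real bimodules of odd $KO$-dimension into types A and B, the separating criterion, and the constraint imposed by complex-linearity --- and then use faithfulness of the left action to determine $\alg{A}$ itself. Since the statement is an if-and-only-if, I will focus on the forward direction (assuming $(\alg{A},\hs{H},J)$ is an irreducible triplet); the reverse direction will be a direct verification that the two listed families of bimodules satisfy all four defining conditions.

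First, I would apply the irreducibility characterization: the multiplicity matrix $\mult[\hs{H}]$ is either $2^{(1-\eps)/2}E_{\alpha\alpha}$ (type A, with skeleton $\{\alpha\}$) or $E_{\alpha\beta}+E_{\beta\alpha}$ (type B, with skeleton $\{\alpha,\beta\}$ and $\alpha\ne\beta$). In the type B case, the separating hypothesis then forces $n_\alpha=n_\beta$ by the separating proposition, while in the type A case separating is automatic.

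Next, I would impose complex-linearity. Since $\mult[\hs{H}]_{\alpha\beta}$ vanishes whenever $\alpha$ or $\beta$ is conjugate-linear, each element of $\skel(\hs{H},J)$ must be a complex-linear irreducible representation, \ie of the form $\rep{n}_i$ for some summand $M_{k_i}(\field{K}_i)$ of $\alg{A}$. In particular, the type B possibility $\alpha=\rep{n}_i$, $\beta=\crep{n}_i$ for the same summand $M_{k_i}(\field{C})$ is ruled out, so in the type B case the two elements of the skeleton come from two distinct summands $M_{k_1}(\field{K}_1)$ and $M_{k_2}(\field{K}_2)$; combined with the dimension matching $n_\alpha=n_\beta=:n$, these summands are both real forms of $M_n(\field{C})$.

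Finally, I would use faithfulness of the left action. Inspecting the construction of $\hs{H}_m$, the left action of $\alg{A}$ factors through the projection onto those direct summands $M_{k_i}(\field{K}_i)$ for which some $\alpha\in\spec{M_{k_i}(\field{K}_i)}$ lies in $\skel(\hs{H},J)$; faithfulness then forces $\alg{A}$ to consist of exactly these summands. This yields $\alg{A}=M_k(\field{K})$ in type A and $\alg{A}=M_{k_1}(\field{K}_1)\oplus M_{k_2}(\field{K}_2)$ in type B, both matching the claim. There is no substantive obstacle --- the task is really one of bookkeeping --- but the step requiring the most care is the complex-linearity argument, where one must correctly exclude the $(\rep{n}_i,\crep{n}_i)$ configuration and confirm that the remaining configurations force the skeleton representations to have the same underlying complex dimension as the matrix algebras from which they arise.
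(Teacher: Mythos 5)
Your proposal is correct and follows the same route the paper intends: the paper simply states that the classification follows ``by combining the above results,'' and you have carried out precisely that combination --- irreducibility to get the two multiplicity-matrix shapes, the separating criterion to force $n_\alpha=n_\beta$ in the type B case, complex-linearity to exclude conjugate-linear skeleton elements (in particular the $(\rep{n}_i,\crep{n}_i)$ configuration), and faithfulness to pin down $\alg{A}$ as exactly the summands meeting the skeleton. The bookkeeping is sound, including the observation that distinct complex-linear skeleton elements must come from distinct Wedderburn summands.
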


\subsection{Gradings}

We now seek a classification of gradings inducing even $KO$-dimen\-sional real bimodules from irreducible triplets.

\begin{definition}
 Let $(\alg{A},\hs{H},J)$ be an irreducible triplet. We shall call a $\ring{Z}_2$-grading $\gamma$ on $\hs{H}$ as a Hilbert space \term{compatible} with $(\alg{A},\hs{H},J)$ if and only if the following conditions all hold:
\begin{enumerate}
 \item For every $a \in \alg{A}$, $\gamma \lambda(a) \gamma \in \lambda(\alg{A})$;
 \item The operator $\gamma$ either commutes or anticommutes with $J$.
\end{enumerate}
\end{definition}

Given a compatible grading $\gamma$ for an irreducible triplet $(\alg{A},\hs{H},J)$, one can view $(\hs{H},\gamma,J)$ as a real $\alg{A}^\even$-bimodule of even $KO$-dimension, for $\alg{A}^\even = \{a \in \alg{A} \mid [\lambda(a),\gamma] = 0\}$, with $KO$-dimension specified by the values of $\eps$ and $\epspp$ such that $J^2 = \eps$, $\gamma J = \epspp J \gamma$.

Now, recall that a $\ring{Z}_2$-grading on a real \Cstar-algebra $\alg{A}$ is simply an automorphism $\Gamma$ on $\alg{A}$ satisfying $\Gamma^2 = \Id$; we call such a grading \term{admissible} if and only if $\Gamma$ extends to a $\field{C}$-linear grading on $\alg{A}_\field{C}$. Thus, if $(\alg{A},\hs{H},J)$ is an irreducible triplet and $\gamma$ is a grading on $\hs{H}$, then $\gamma$ satisfies the first condition for compatibility if and only if there exists some admissible grading $\Gamma$ on $\alg{A}$ such that $\Ad_\gamma \circ \lambda = \lambda \circ \Gamma$, where $\Ad_x$ denotes conjugation by $x$.

\begin{lemma}\label{admissiblegrading}
 Let $M_{k}(\field{K})$ be a real form of $M_n(\field{C})$, and let $\alpha \in \Aut(M_n(\field{C}))$. Then $\alpha$ is an admissible grading on $M_{k}(\field{K})$ if and only if there exists a self-adjoint unitary $\gamma$ in $M_{k}(\field{K})$ or $i M_{k}(\field{K})$, such that $\alpha = \Ad_\gamma$.
\end{lemma}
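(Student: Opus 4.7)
The plan is to reduce the claim to a scalar normalisation argument via Skolem--Noether. Throughout, for $\field{K} = \field{R}$ or $\field{H}$, write $\sigma$ for the antilinear $\ast$-involution of $M_n(\field{C})$ whose fixed-point real $\ast$-subalgebra is $M_k(\field{K})$; explicitly, $\sigma(a) = \bar a$ when $\field{K} = \field{R}$ and $\sigma(a) = \Omega_n \bar a \Omega_n^\ast$ when $\field{K} = \field{H}$. (The case $\field{K} = \field{C}$ will be trivial because then $M_k(\field{K}) = iM_k(\field{K}) = M_n(\field{C})$.) Note $M_k(\field{K}) = \{a : \sigma(a) = a\}$ and $iM_k(\field{K}) = \{a : \sigma(a) = -a\}$.

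For the easy direction, given $\gamma$ a self-adjoint unitary in $M_k(\field{K}) \cup iM_k(\field{K})$, I would simply set $\alpha = \Ad_\gamma$ and observe that $\gamma^2 = 1$ gives $\alpha^2 = \Id$, while $\sigma(\gamma) = \pm\gamma$ yields $\sigma\alpha = \alpha\sigma$ (since $\sigma$ is an antilinear $\ast$-automorphism), so that $\alpha$ restricts to a $\ast$-automorphism of $M_k(\field{K})$ which, being the restriction of a $\field{C}$-linear map, is admissible by definition.

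For the converse, suppose $\alpha$ is an admissible grading of $M_k(\field{K})$. By Skolem--Noether applied to $M_n(\field{C})$, I write $\alpha = \Ad_u$ for some unitary $u \in M_n(\field{C})$, unique up to multiplication by an element of $U(1)$. The relation $\alpha^2 = \Id$ forces $u^2$ to be central, so $u^2 = \lambda \cdot 1$ with $\lambda \in U(1)$. The preservation of $M_k(\field{K})$ by $\alpha$ is equivalent to the commutation $\sigma\alpha = \alpha\sigma$, which upon expansion gives that $u^{-1}\sigma(u)$ is central, hence $\sigma(u) = \mu u$ with $\mu \in U(1)$. Applying $\sigma$ again to $u^2 = \lambda$ yields $\mu^2 \lambda = \bar\lambda = \lambda^{-1}$, so $(\mu\lambda)^2 = 1$, giving the crucial compatibility $\mu\lambda = \pm 1$.

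The heart of the argument, and what I expect to be the main (minor) obstacle, is then to simultaneously normalise $u$: I need $\nu \in U(1)$ such that $\gamma := \nu u$ satisfies both $\gamma^2 = 1$ (self-adjoint unitary) and $\sigma(\gamma) = \pm\gamma$ (so that $\gamma \in M_k(\field{K}) \cup iM_k(\field{K})$). The two conditions translate respectively to $\nu^2 = \lambda^{-1}$ and $\nu^{2} = \pm\mu$, which are jointly solvable precisely because $\mu\lambda = \pm 1$. Picking such a $\nu$ produces the required $\gamma$, completing the proof. The case $\field{K} = \field{C}$ is handled simply by noting that one need only normalise $u$ to be a self-adjoint unitary, which is always possible by the same $u^2 = \lambda$ argument, and the membership condition is then vacuous.
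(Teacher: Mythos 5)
Your proof is correct, but it takes a genuinely different route from the paper's. The paper applies Skolem--Noether \emph{inside the real form}: since $M_k(\field{K})$ is central simple over $\field{K}_0 = \field{R}$ (for $\field{K} \neq \field{C}$) or $\field{C}$ (for $\field{K} = \field{C}$), there is an invertible $S \in M_k(\field{K})$ with $\alpha = \Ad_S$; the $\Star$-condition forces $S^*S$ to be a positive scalar, giving a unitary $U \in M_k(\field{K})$; and then $U^2 = \zeta 1$ with $\zeta \in \mathbb{T}\cap\field{K}_0$, so when $\field{K} \neq \field{C}$ one has $\zeta = \pm 1$ and one simply takes $\gamma = U$ or $iU$. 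You instead apply Skolem--Noether in the complexification $M_n(\field{C})$, encode the preservation of $M_k(\field{K})$ as the commutation $\sigma\alpha = \alpha\sigma$ with the antilinear involution $\sigma$, and derive the two phases $u^2 = \lambda$, $\sigma(u) = \mu u$; the identity $(\mu\lambda)^2 = 1$ that you extract is exactly what makes the simultaneous scalar normalisation $\nu^2 = \lambda^{-1} = \pm\mu$ consistent. The paper's route is slightly more economical because the unitary $U$ already lies in the real form, so the only residual freedom is the sign $\zeta$; your route is more uniform (no separate Skolem--Noether over $\field{R}$) at the cost of tracking two phases and verifying their compatibility. One small point worth making explicit in a full write-up: Skolem--Noether on $M_n(\field{C})$ gives an invertible $S$, not directly a unitary; you still need the step ``$\alpha$ is a $\Star$-automorphism $\Rightarrow$ $S^*S$ is a positive central scalar'' to normalise $S$ to a unitary $u$, exactly as the paper does inside $M_k(\field{K})$.
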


\begin{proof}
Suppose that $\alpha$ is an admissible grading. Let $\field{K}_0$ be $\field{C}$ if $\field{K}=\field{C}$, and $\field{R}$ otherwise. Then $M_k(\field{K})$ is central simple over $\field{K}_0$, so that there exists some invertible element $S$ of $M_k(\field{K})$ such that $\alpha = \Ad_S$. Since $\alpha$ respects the involution, for any $A \in M_k(\field{K})$ we must have
\[
 (S^{-1})^*A^*S^* = (SAS^{-1})^* = \alpha(A)^* = \alpha(A^*) = SA^*S^{-1},
\]
\ie $[A,S^*S]=0$, so that $S^*S$ is a positive central element of $M_k(\field{K})$, and hence $S^*S = c1$ for some $c > 0$. Thus, $U = c^{-1/2}S$ is a unitary element of $M_k(\field{K})$ such that $\alpha = \Ad_U$. Now, recall that $\alpha^2 = \Id$, so that $\Ad_{U^2} = \Id$, and hence $U^2 = \zeta 1$ for some $\zeta \in \mathbb{T} \cap \field{K}_0$. If $\field{K} = \field{C}$, then one can simply set $\gamma = \overline{\lambda}U$ for $\lambda$ is a square root of $\zeta$. Otherwise, $U^2 = \pm 1$, so that if $U^2 = 1$, set $\gamma = U \in M_k(\field{K})$, and if $U^2 = -1$, set $\gamma = iU \in i M_k(\field{K})$.

On the other hand, if $\gamma$ is a self-adjoint unitary in either $M_k(\field{K})$ or $i M_k(\field{K})$, then $\Ad_\gamma$ is readily seen to be an admissible grading on $M_k(\field{K})$. 
\end{proof}

Let us now give the classification of compatible gradings for a type A irreducible triplet; it is essentially a generalisation of \cite{CC08b}*{Lemma 3.1}.

\begin{proposition}
 Let $(\alg{A},\hs{H},J)$ be a type A irreducible triplet of odd $KO$-dimen\-sion $n \bmod 8$, so that $\alg{A}$ is a real form $M_k(\field{K})$ of $M_n(\field{C})$ for some $n$, and let $\gamma$ be a grading on $\hs{H}$ as a Hilbert space.  Then $\gamma$ is compatible if and only if there exists a self-adjoint unitary $g$ in $M_k(\field{K})$ or $i M_k(\field{K})$ such that
\begin{equation}
 \gamma = \pm g \otimes 1_{m_k} \otimes g^T,
\end{equation}
in which case $\gamma$ necessarily commutes with $J$.
\end{proposition}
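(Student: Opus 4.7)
The ``if'' direction is a routine verification: given $\gamma = \pm\, g \otimes 1_{m_k} \otimes g^T$ with $g$ a self-adjoint unitary in $M_k(\field{K})$ or $iM_k(\field{K})$, I would check $\gamma^* = \gamma$ and $\gamma^2 = 1$ from $g^2 = 1$; $\gamma\lambda(a)\gamma = \lambda(gag) \in \lambda(\alg{A})$, since $gag \in M_k(\field{K})$ whether $g$ or $ig$ does; and $\gamma J = J\gamma$ by direct computation on a simple tensor $\xi_1 \otimes \xi_2 \otimes \xi_3$ using the identity $\overline{g} = g^T$, which holds because $g$ is self-adjoint.

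For the ``only if'' direction, compatibility of $\gamma$ with the left action defines an admissible $\ring{Z}_2$-grading $\Gamma$ of $\alg{A}$ via $\lambda(\Gamma(a)) = \gamma\lambda(a)\gamma$, so Lemma~\ref{admissiblegrading} furnishes a self-adjoint unitary $g$ in $M_k(\field{K})$ or $iM_k(\field{K})$ with $\Gamma = \Ad_g$. I would next show that the same $g$ also implements $\Ad_\gamma$ on the right action: substituting $\rho(a) = J\lambda(a^*)J$ and using $\gamma J = \pm J\gamma$, the two signs cancel to give $\gamma\rho(a)\gamma = \rho(gag)$ regardless of the sign. The left-action identity rewrites as $\lambda(g)\gamma \in \lambda(\alg{A})'$; since $\lambda(\alg{A})$ generates $M_{n_\alpha}(\field{C}) \otimes 1 \otimes 1$ complex-linearly, the double commutant gives $\lambda(\alg{A})' = 1_{n_\alpha} \otimes M_{m_k n_\alpha}(\field{C})$, so $\gamma = g \otimes B$ for some $B \in M_{m_k n_\alpha}(\field{C})$ in the split $\field{C}^{n_\alpha} \otimes (\field{C}^{m_k} \otimes \field{C}^{n_\alpha})$; symmetrically, the right-action identity produces $\gamma = C \otimes g^T$ in the complementary split.

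Equating these two factorisations entrywise yields $g_{ij} B_{(a,b),(c,d)} = C_{(i,a),(j,c)} g_{db}$; fixing $(a,c)$ and invoking uniqueness of rank-one tensor factorisations in the $(i,j)$-versus-$(b,d)$ split forces $B_{(a,b),(c,d)} = \Lambda_{ac} g_{db}$ for some $\Lambda \in M_{m_k}(\field{C})$, whence $\gamma = g \otimes \Lambda \otimes g^T$. The conditions $\gamma^* = \gamma$ and $\gamma^2 = 1$ then force $\Lambda^* = \Lambda$ and $\Lambda^2 = 1$. When $\eps = 1$, $m_k = 1$ and $\Lambda = \pm 1$, immediately giving the stated form. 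When $\eps = -1$, $m_k = 2$ and computing $\gamma J$ versus $J\gamma$ via the explicit formula $J = (1_{n_\alpha} \otimes \Omega_2 \otimes 1_{n_\alpha}) \circ X$ of Lemma~\ref{real35b} reduces $J$-compatibility to the matrix equation $\Lambda\Omega_2 = \pm \Omega_2\overline{\Lambda}$; combined with $\Lambda^* = \Lambda$ and $\Lambda^2 = 1$, the commuting branch forces $\Lambda = \pm 1_2$, which simultaneously establishes the ``in which case'' clause that $\gamma$ commutes with $J$. The main obstacle is the rank-one bookkeeping step producing $\gamma = g \otimes \Lambda \otimes g^T$ from the two factorisations $\gamma = g \otimes B = C \otimes g^T$; once that is in hand, the remaining case analysis on $\Lambda$ is routine.
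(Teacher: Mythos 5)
Your route to the intermediate factorisation $\gamma = g \otimes \Lambda \otimes g^T$ differs from the paper's, though the endpoints coincide. The paper constructs a reference grading $\gamma_0 = g \otimes 1_{m_k} \otimes g^T$ (which manifestly commutes with $J$ and induces the same $\Ad_g$ on $\alg{A}$) and then sets $\nu := \gamma\gamma_0$; since $\nu \in \lrU_\alg{A}(\hs{H})$, Corollary~\ref{oddunitary} gives $\nu = 1_{n} \otimes \nu_0 \otimes 1_{n}$ directly, whence $\gamma = g \otimes \nu_0 \otimes g^T$. You instead factor $\gamma$ twice — once through $\lambda(\alg{A})'$, once through $\rho(\alg{A})'$ — and then reconcile the two rank-one presentations entrywise. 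Both routes are valid; your rank-one bookkeeping step is correct (fix $(a,c)$, pick $(i_0,j_0)$ with $g_{i_0 j_0} \neq 0$, and solve), but the paper's trick of passing through the already-characterised group $\lrU_\alg{A}(\hs{H})$ is shorter and avoids indices. Note you also correctly observe — which the paper leaves implicit — that the two signs $\gamma J = \pm J\gamma$ cancel when computing $\gamma\rho(a)\gamma$, so $\Ad_\gamma$ acts by the same $g$ on the right.

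There is, however, a genuine gap, and it is instructive that you and the paper trip at exactly the same point. For $m_k = 2$ (i.e.\ $\eps = -1$, $KO$-dimension $3$ or $5 \bmod 8$) you arrive at $\Lambda\Omega_2 = \pm\Omega_2\overline{\Lambda}$ with $\Lambda$ a self-adjoint unitary, and you observe that the commuting branch forces $\Lambda = \pm 1_2$ — but you never dispose of the anticommuting branch. You cannot: the equation $\Lambda\Omega_2 = -\Omega_2\overline{\Lambda}$ together with $\Lambda^* = \Lambda$, $\Lambda^2 = 1$ is equivalent to $\tr\Lambda = 0$, and is satisfied by, e.g., $\Lambda = \sigma_3$. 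One checks directly that $\gamma = g \otimes \sigma_3 \otimes g^T$ is a self-adjoint unitary, satisfies $\gamma\lambda(a)\gamma = \lambda(gag) \in \lambda(\alg{A})$, and anticommutes with $J$ — so it is a compatible grading, yet it is not of the form $\pm g' \otimes 1_2 \otimes (g')^T$ (its trace vanishes, whereas $\pm g' \otimes 1_2 \otimes (g')^T$ has eigenvalue multiplicities that cannot be split evenly). The paper's own proof has the same hole: it asserts $\nu := \gamma\gamma_0 \in \lrU_\alg{A}(\hs{H};J)$, which presupposes $\nu$ commutes with $J$, i.e.\ that $\gamma$ commutes with $J$ — but that is part of what is being proved, and the anticommuting case is never ruled out. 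Your proposal is therefore a faithful reconstruction of the paper's argument (by a slightly different route), but you should flag that for $KO$-dimension $3$ or $5 \bmod 8$ the statement itself is incomplete: there is an additional family of compatible gradings $g \otimes \Lambda \otimes g^T$ with $\Lambda$ traceless, and these anticommute with $J$.
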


\begin{proof}
Let $m_n = 2^{(1-\eps)/2}$. Then $\hs{H} = \field{C}^n \otimes \field{C}^{m_n} \otimes \field{C}^n$, and for all $a \in \alg{A}$,
\[
 \lambda(a) = \lambda_\alpha(a) \otimes 1_{m_n} \otimes 1_n = a \otimes 1_{m_nk} \otimes 1_n, \quad \rho(a) = 1_n \otimes 1_{m_n} \otimes \lambda_\alpha(a)^T = 1_n \otimes 1_{m_n} \otimes a^T.
\]

Suppose that $\gamma$ is compatible. Then by Lemma~\ref{admissiblegrading} there exists some self-adjoint unitary $g$ in either $M_k(\field{K})$ or $i M_k(\field{K})$ such that for all $a \in \alg{A}$,
\[
 \gamma(a \otimes 1_{m_n} \otimes 1_n)\gamma = (gag) \otimes 1_{m_k} \otimes 1_n.
\]
Now, let $\gamma_0 = g \otimes 1_{m_n} \otimes g^T$. Then, by construction, $\gamma_0$ is a compatible grading for $(\alg{A},\hs{H},J)$ that induces the same admissible grading on $\alg{A}$ as $\gamma$, and moreover commutes with $J$. Then $\nu := \gamma \gamma_0 \in \lrU_\alg{A}(\hs{H};J)$, so that $\nu = 1_n \otimes \nu_{\rep{n}\rep{n}} \otimes 1_n$ for some
\[
 \nu_{\rep{n}\rep{n}} \in
\begin{cases}
 \{\pm1\}, &\text{if $n = 1$ or $7 \bmod 8$,}\\
 \SU(2), &\text{if $k = 3$ or $5 \bmod 8$.}
\end{cases}
\]
Thus $\gamma = g \otimes \nu_{\rep{n}\rep{n}} \otimes g^T$, and hence, since $\gamma$ is self-adjoint, $\nu_{\rep{n}\rep{n}}$ must also be self-adjoint. Therefore $\nu_{\alpha\alpha} = \pm 1_{m_k}$, or equivalently, $\gamma^\prime = \pm \gamma$.

On the other hand, if $g$ is a self-adjoint unitary in either $M_k(\field{K})$ or $i M_k(\field{K})$, then $\gamma = g \otimes 1_{m_k} \otimes g^T$ is certainly a compatible grading that commutes with $J$.
\end{proof}

Thus, irreducible triplets can only give rise to real $\alg{A}^\even$-bimodules of $KO$-dimension $0$ or $4 \bmod 8$.

Let us now turn to the type B case.

\begin{proposition}
 Let $(\alg{A},\hs{H},J)$ be a type B irreducible triplet of odd $KO$-dimen\-sion $n \bmod 8$, so that for some $n \in \semiring{N}$, $\alg{A} = M_{k_1}(\field{K}_1) \oplus M_{k_2}(\field{K}_2)$ for real forms $M_{k_1}(\field{K}_1)$ and $M_{k_2}(\field{K}_2)$ of $M_n(\field{C})$, and let $\gamma$ be a grading on $\hs{H}$ as a Hilbert space. Then $\gamma$ is compatible if and only if one of the following holds:
\begin{enumerate}
 \item There exist gradings $\gamma_1$ and $\gamma_2$ on $\field{C}^n$, with $\gamma_j \in M_{k_j}(\field{K}_j)$ or $i M_{k_j}(\field{K}_j)$, such that
 \begin{equation}
  \gamma =
  \begin{pmatrix}
   \gamma_1 \otimes \gamma_2^T & 0\\
   0 & \epspp \gamma_2 \otimes \gamma_1^T
  \end{pmatrix},
 \end{equation}
 in which case $\gamma J = \epspp J \gamma$, and if $\gamma^\prime$ is any other compatible grading, $\Ad_{\gamma^\prime} = \Ad_\gamma$ if and only if $\gamma^\prime = \pm \gamma$.
 \item One has that $\field{K}_1 = \field{K}_2 = \field{K}$ and $k_1 = k_2 = k$, and there exist a unitary $u \in M_k(\field{K})$ and $\eta \in \mathbb{T}$ such that
 \begin{equation}
  \gamma =
  \begin{pmatrix}
   0 & \overline{\eta} u^* \otimes \overline{u}\\
   \eta u \otimes u^T & 0
  \end{pmatrix},
 \end{equation}
 in which case $\gamma$ necessarily commutes with $J$, and if $\gamma^\prime$ is any other compatible grading, $\Ad_{\gamma^\prime} = \Ad_\gamma$ if and only if $\gamma^\prime = (\zeta 1_{n^2} \oplus \overline{\zeta} 1_{n^2})\gamma$ for some $\zeta \in \mathbb{T}$.
\end{enumerate}
\end{proposition}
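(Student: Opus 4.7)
The plan is to exploit the fact that the first compatibility condition, $\gamma\lambda(a)\gamma \in \lambda(\alg{A})$, forces $\Ad_\gamma \circ \lambda = \lambda \circ \Gamma$ for a unique admissible grading $\Gamma$ on $\alg{A}$. Any $*$-automorphism of $\alg{A} = M_{k_1}(\field{K}_1) \oplus M_{k_2}(\field{K}_2)$ either fixes the two summands or swaps them, and in the swapping case the two summands must be isomorphic as real $*$-algebras, forcing $\field{K}_1 = \field{K}_2 = \field{K}$ and $k_1 = k_2 = k$. This yields two cases: (1) $\Gamma = \Ad_{\gamma_1} \oplus \Ad_{\gamma_2}$ for suitable self-adjoint unitaries $\gamma_j \in M_{k_j}(\field{K}_j) \cup i M_{k_j}(\field{K}_j)$ (by Lemma~\ref{admissiblegrading} applied summand-wise), or (2) $\Gamma(a_1,a_2) = (u a_2 u^*,\, u^* a_1 u)$ for some unitary $u \in M_k(\field{K})$, obtained from Skolem--Noether.

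In Case (1), since $\Gamma$ fixes the central projections $\lambda(e_j)$, the grading $\gamma$ preserves the decomposition $\hs{H} = \hs{H}_{\rep{n}_1\rep{n}_2} \oplus \hs{H}_{\rep{n}_2\rep{n}_1}$. On each block one applies a two-sided commutant argument -- $M_n(\field{C}) \otimes 1_n$ on the left and $1_n \otimes M_n(\field{C})$ on the right -- to force $\gamma|_{\hs{H}_{\rep{n}_1\rep{n}_2}} = c\,\gamma_1 \otimes \gamma_2^T$ and $\gamma|_{\hs{H}_{\rep{n}_2\rep{n}_1}} = c'\gamma_2 \otimes \gamma_1^T$ for unimodular scalars $c, c'$. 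Self-adjointness and $\gamma^2 = 1$ trim these scalars to $\{\pm 1\}$, and the relation $\gamma J = \epspp J \gamma$, computed by evaluating both sides on vectors in $\hs{H}_{\rep{n}_1\rep{n}_2}$ and using that $J$ interchanges the two blocks via a swap-conjugate antiunitary $\tilde J$, yields $c' = \epspp c$; absorbing the overall sign into $\gamma_1$ recovers the displayed form.

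In Case (2), $\gamma$ swaps the two blocks, so it has the form $\gamma = \bigl(\begin{smallmatrix} 0 & A^* \\ A & 0 \end{smallmatrix}\bigr)$ for a unitary $A \colon \hs{H}_{\rep{n}_1\rep{n}_2} \to \hs{H}_{\rep{n}_2\rep{n}_1}$. The identity $A^*(a_2 \otimes 1_n)A = u^* a_2 u \otimes 1_n$ coming from $\Ad_\gamma \circ \lambda = \lambda \circ \Gamma$, together with a commutant argument, places $A$ inside $u \otimes M_n(\field{C})$; the analogous identity for the right action then constrains the remaining tensor factor to the form $\eta\, u^T$ for some $\eta \in \mathbb{T}$, giving precisely the claimed expression. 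Commutation of $\gamma$ with $J$ is then verified by a direct calculation showing $\tilde J A^* \tilde J = A$ on the block $\hs{H}_{\rep{n}_1\rep{n}_2}$, which reduces to the simple tensor identity $\tilde J (X \otimes Y) \tilde J = \overline{Y} \otimes \overline{X}$ applied to $A^* = \overline\eta\, u^* \otimes \overline{u}$.

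The uniqueness assertions follow uniformly: if $\gamma, \gamma'$ both implement the same $\Gamma$, then $\gamma\gamma'$ lies in $\lrbdd_\alg{A}(\hs{H}; J)$, which by Lemma~\ref{irreduciblelemma} is isomorphic to $\field{C}$ in the type B setting (corresponding to elements of the form $(z \cdot 1_{n^2}) \oplus (\overline z \cdot 1_{n^2})$, $z \in \field{C}$). In Case (1), $\gamma\gamma'$ is block-diagonal and self-adjoint, so $z = \pm 1$, yielding $\gamma' = \pm \gamma$; in Case (2), the full unitary subgroup $\mathbb{T} \subset \field{C}$ is available, giving $\gamma' = (\zeta 1_{n^2} \oplus \overline\zeta 1_{n^2})\gamma$. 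The main technical obstacle will be Case (2): carefully tracking how the antilinearity of $\tilde J$ interacts with the tensor structure across the two blocks, both when deriving the tensor form of $A$ and when verifying the $J$-commutation; a minor source of confusion is that the presentation of $u$ is only determined up to the ambiguity $u \leftrightarrow u^*$, reflecting the freedom to choose which direction in $\phi = \Ad_{u^{\pm 1}}$ is taken as primary.
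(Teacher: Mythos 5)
Your proof is correct and follows the same overall scheme as the paper's (set up $\Ad_\gamma\circ\lambda = \lambda\circ\Gamma$, split into two cases according to whether $\gamma$ preserves or swaps the blocks $\hs{H}_{\rep{n}_1\rep{n}_2}$, $\hs{H}_{\rep{n}_2\rep{n}_1}$, then pin down each block by a commutant argument in the central simple algebra $M_n(\field{C}) \otimes M_n(\field{C})$, and finally quotient the remaining scalar freedom through $\lrbdd_\alg{A}(\hs{H};J) \cong \field{C}$). The one place where you genuinely diverge is the derivation of the case dichotomy: you observe directly that $\Gamma = \lambda^{-1}\Ad_\gamma\lambda$ is a $*$-automorphism of $\alg{A}$, so it permutes the two minimal central projections $(1,0)$, $(0,1)$ and hence either fixes or swaps the summands, with the swapping case forcing $\field{K}_1 = \field{K}_2$, $k_1 = k_2$. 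The paper instead writes $\Ad_\gamma\lambda(1,0) = \lambda(P,1-Q)$ for a priori arbitrary projections $P$, $Q$, applies the relation $D = \epspp XAX$ obtained from the $J$-(anti)commutation to derive $Q\otimes 1_n = 1_n\otimes\overline{P}$, and only then concludes $P, Q \in \{0,1\}$. Your route is shorter, avoids invoking the $J$-relation at this stage, and makes the structural reason for the dichotomy (automorphisms permute the Wedderburn components) more transparent; the paper's route is more computational but arrives at the same place. After the dichotomy, your commutant arguments and the treatment of uniqueness via $\gamma\gamma' \in \lrbdd_\alg{A}(\hs{H};J)$ match the paper's, including the same light gloss over why $\gamma\gamma'$ commutes with $J$ (this is automatic only when $\gamma$ and $\gamma'$ carry the same sign $\epspp$ in the $J$-relation, a point the paper also leaves implicit).
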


\begin{proof}
Let $\gamma$ be a compatible grading. Then, with respect to the decomposition $\hs{H} = (\field{C}^n \otimes \field{C}^n) \oplus (\field{C}^n \oplus \field{C}^n)$, let us write
\[
 \gamma = 
 \begin{pmatrix}
  A & B\\
  C & D
 \end{pmatrix}
\]
for $A$, $B$, $C$ and $D \in M_n(\field{C})\otimes M_n(\field{C})$. Applying self-adjointness of $\gamma$, we find that $A$ and $D$ must be self-adjoint, and that $B = C^*$, and then applying the fact that $\gamma^2 = 1$, we find that
\[
 A^2 + C^*C = 1, \qquad CA + DC = 0, \qquad CC^* + D^2 = 1.
\]
Finally, applying the condition that $\gamma$ commutes or anticommutes with $J$, \ie that $\gamma J = \epspp J \gamma$ for $\epspp = \pm 1$, we find that
\[
 D = \epspp X A X, \qquad C^* = \epspp X C X,
\]
where $X$ is the antiunitary on $\field{C}^n \otimes \field{C}^n$ given by $X : \xi_1 \otimes \xi_2 \mapsto \overline{\xi_2} \otimes \overline{\xi_1}$.

Now, since $\gamma$ is compatible, and since $(1,0)$ and $(0,1)$ are projections in $\alg{A}$ satisfying $(1,0) + (0,1) = 1$, there exist projections $P$ and $Q$ in $\alg{A}$ such that
\[
 \Ad_\gamma \lambda(1,0) = \lambda(P,1-Q), \qquad \Ad_\gamma \lambda(0,1) = \lambda(1-P,Q),
\]
that is,
\[
 \begin{pmatrix}
  P \otimes 1_n & 0\\
  0 & (1-Q) \otimes 1_n
 \end{pmatrix} =
 \gamma
 \begin{pmatrix}
  1 & 0\\
  0 & 0
 \end{pmatrix}
 \gamma =
 \begin{pmatrix}
  A^2 & AC^*\\
  CA & CC^*
 \end{pmatrix}
\]
and
\[
 \begin{pmatrix}
  (1-P) \otimes 1_n & 0\\
  0 & Q \otimes 1_n
 \end{pmatrix} =
 \gamma
 \begin{pmatrix}
  0 & 0\\
  0 & 1
 \end{pmatrix}
 \gamma =
 \begin{pmatrix}
  C^*C & C^*D\\
  DC & D^2
 \end{pmatrix}.
\]
Thus, $A$ is a self-adjoint partial isometry with support and range projection $P \otimes 1_n$, $D$ is a self-adjoint partial isometry with support and range projection $Q \otimes 1_n$, and $C$ is a partial isometry with support projection $(1-P)\otimes 1_n$ and range projection $(1-Q)\otimes 1_n$.

Now, recalling that $D = \epspp X A X$, we see that
\[
 Q \otimes 1_n = D^2 = X A^2 X = X P \otimes 1_n X = 1_n \otimes \overline{P}.
\]
If $Q = 0$, then certainly $P = 0$. Suppose instead that $Q \neq 0$, and let $\xi \in Q\field{C}^n \otimes \field{C}^n$ be non-zero. Then
\[
 \Id_{\xi \otimes \field{C}^n} = (Q \otimes 1_n)|_{\xi \otimes \field{C}^n} = (1 \otimes \overline{P})|_{\xi \otimes \field{C}^n},
\]
so that $P = 1$ and hence $Q = 1$ also. We therefore have two possible cases:
\begin{enumerate}
 \item We have
 \[
  \gamma =
  \begin{pmatrix}
   A & 0\\
   0 & \epspp X A X
  \end{pmatrix}
 \]
 for $A$ a grading on $\field{C}^n \otimes \field{C}^n$;
 \item We have
 \[
  \gamma =
  \begin{pmatrix}
   0 & C^*\\
   C & 0
  \end{pmatrix}
 \]
 for $C$ a unitary on $\field{C}^n \otimes \field{C}^n$ such that $C^* = (-1)^m X C X$.
\end{enumerate}
 
 First suppose that the first case holds. Then, on the one hand, $\Ad_A|_{M_n(\field{C}) \otimes 1_n}$ induces an admissible grading for $M_{k_1}(\field{K}_1)$, so that there exists a self-adjoint unitary $\gamma_1$ in either $M_{k_1}(\field{K}_1)$ or $i M_{k_1}(\field{K}_1)$ such that $\Ad_A|_{M_n(\field{C}) \otimes 1_n} = \Ad_{\gamma_1 \otimes 1_n}$, and on the other hand, $\Ad_{\epspp X A X}|_{M_n(\field{C}) \otimes 1_n}$ induces an admissible grading for $M_{k_2}(\field{K}_2)$, so that there exists a self-adjoint unitary $\gamma_2$ in $M_{k_2}(\field{K}_1)$ or $i M_{k_2}(\field{K}_1)$ such that $\Ad_{\epspp X A X}|_{M_n(\field{C}) \otimes 1_n} = \Ad_{\gamma_2 \otimes 1_n}$. Since for $a \otimes b \in M_n(\field{C}) \otimes M_n(\field{C})$ we can write
\[
 a \otimes b = (a \otimes 1_n) X (\overline{b} \otimes 1_n) X,
\]
it therefore follows that $\Ad_A = \Ad_{\gamma_1 \otimes \gamma_2^T}$ on the central simple algebra $M_n(\field{C}) \otimes M_n(\field{C}) \cong M_{n^2}(\field{C})$ over $\field{C}$. Hence, there exists some non-zero $\eta \in \field{C}$ such that $A = \eta \gamma_1 \otimes \gamma_2^T$, and since both $A$ and $\gamma_1 \otimes \gamma_2^T$ are self-adjoint and unitary, it follows that $\eta = \pm 1$. Absorbing $\pm 1$ into $\gamma_1$ or $\gamma_2$, we therefore find that
\[
 \gamma =
 \begin{pmatrix}
  \gamma_1 \otimes \gamma_2^T & 0\\
  0 & \epspp \gamma_2 \otimes \gamma_1^T
 \end{pmatrix}.
\]
On the other hand, $\gamma$ so constructed is readily seen to be a compatible grading satisfying $\gamma J = \epspp J \gamma$.

Now suppose that the second case holds. Then, since $\gamma$ is compatible, it is clear that the automorphisms $\alpha$, $\beta$ of $M_n(\field{C})$ specified by
\[
 \alpha(a) \otimes 1_n = C (a \otimes 1_n) C^*, \qquad \beta(a) \otimes 1_n = C^*(a \otimes 1_n) C,
\]
are inverses of each other, and that $\alpha$, in particular, induces an isomorphism $M_{k_1}(\field{K}_1) \to M_{k_2}(\field{K}_2)$, so that $\field{K}_1 = \field{K}_2 = \field{K}$ and $k_1 = k_2 = k$. Next, by the proof of Lemma~\ref{admissiblegrading}, there exists some unitary $u$ in $M_n(\field{C})_I$ such that $\alpha = \Ad_u$, from which it follows that $\beta = \Ad_{u^*}$. By the same trick as above, we then find that $\Ad_C = \Ad_{u \otimes u^T}$ on the central simple algebra $M_n(\field{C}) \otimes M_n(\field{C}) \cong M_{n^2}(\field{C})$ over $\field{C}$. Hence, there exists some non-zero $\eta \in \field{C}$ such that $C = \eta u \otimes u^T$, and since both $C$ and $u \otimes u^T$ are unitary, it follows that $\eta \in \mathbb{T}$. Thus,
\[
 \gamma =
 \begin{pmatrix}
  0 & \overline{\eta} u^* \otimes \overline{u}\\
  \eta u \otimes u^T & 0
 \end{pmatrix}.
\]
On the other hand, $\gamma$ so constructed is readily seen to be a compatible grading satisfying $[\gamma,J]=0$.

Finally, let $\gamma$ and $\gamma^\prime$ be two compatible gradings. Suppose that $\Ad_\gamma = \Ad_{\gamma^\prime}$, and set $U = \gamma^\prime \gamma$. Then, by construction, $U$ is a unitary element of $\lrbdd_\alg{A}(\hs{H};J)$, so that there exists some $\zeta \in \mathbb{T}$ such that
\[
 U = \zeta 1_{n^2} \oplus \overline{\zeta} 1_{n^2}.
\]
If the second case holds, then nothing more can be said, but if the first case holds, so that
\[
 \gamma =
 \begin{pmatrix}
  \gamma_1 \otimes \gamma_2^T & 0\\
  0 & \epspp \gamma_2 \otimes \gamma_1^T
 \end{pmatrix}
\]
for suitable $\gamma_1$ and $\gamma_2$, then
\[
 \gamma^\prime =
 \begin{pmatrix}
  \zeta \gamma_1 \otimes \gamma_2^T & 0\\
  0 & \epspp \overline{\zeta} \gamma_2 \otimes \gamma_1^T
 \end{pmatrix},
\]
so that by self-adjointness of $\gamma^\prime$, $\gamma_1$ and $\gamma_2$, we must have $\zeta = \pm 1$, as required.
\end{proof}

Thus, we can obtain a real bimodule of $KO$-dimension $6 \bmod 8$ only from a type B irreducible triplet together with a compatible grading satisfying the first case of the last result.

\subsection{Even subalgebras and even $KO$-dimensional bimodules}

We now consider real bimodules of $KO$-di\-men\-sion $6 \bmod 8$ obtained from irreducible triplets. Thus, let $(\alg{A},\hs{H},J)$ be a fixed type B irreducible triplet of $KO$-dimension $1$ or $7 \bmod 8$, and let $\gamma$ be a fixed compatible grading for $(\alg{A},\hs{H},J)$ anticommuting with $J$, so that for some $n \in \semiring{N}$,
\begin{itemize}
 \item $\alg{A} = M_{k_1}(\field{K}_1) \oplus M_{k_2}(\field{K}_2)$ for real forms $M_{k_j}(\field{K}_j)$ of $M_n(\field{C})$;
 \item $\mult[\hs{H}] = E_{\rep{n}_1 \rep{n}_2} + E_{\rep{n}_2 \rep{n}_1}$;
 \item There exist self-adjoint unitaries $\gamma_j \in M_{k_j}(\field{K}_j)$ or $i M_{k_j}(\field{K}_j)$ with signature $(r_j,n - r_j)$ such that
\[
 \gamma = 
\begin{pmatrix}
 \gamma_1 \otimes \gamma_2^T & 0\\
 0 & -\gamma_2 \otimes \gamma_1^T
\end{pmatrix}.
\]
\end{itemize}
It is worth noting that $(\hs{H},J)$ admits, up to sign, a unique $S^0$-real structure, given by $\epsilon = 1_{n^2} \oplus -1_{n^2}$, which certainly commutes with $\gamma$. We can exploit the symmetries present to simplify our discussion by taking, without loss of generality, $r_j > 0$, and requiring that $\gamma_1 \in i M_{k_1}(\field{K}_1)$ only if $\gamma_2 \in i M_{k_2}(\field{K}_2)$, and that $\gamma_1 = 1_n$ only if $\gamma_2 = 1_n$.

Our main goal in this section is to give an explicit description of $\alg{A}^\even$ and of $(\hs{H},\gamma,J)$ as a real $\alg{A}^\even$-bimodule. To do so, however, we first need the following:

\begin{lemma}
 Let $M_k(\field{K})$ be a real form of $M_n(\field{C})$, let $g$ be a self-adjoint unitary in $M_k(\field{K})$ or $i M_k(\field{K})$, and let $r = \nullity(g-1)$. Set $M_k(\field{K})^g := \{a \in M_k(\field{K}) \mid [a,g] = 0\}$.
\begin{itemize}
 \item If $g \in M_k(\field{K})$, then $M_k(\field{K})^g \cong M_{kr/n}(\field{K}) \oplus M_{k(n-r)/n}(\field{K})$;
 \item If $g \in i M_k(\field{K})$, then $r = n/2$ and
 \[
 	M_k(\field{K})^g \cong \{(a,b) \in M_{k/2}(\field{C})^2 \mid b = \overline{a}\} \cong M_{k/2}(\field{C}).
 \]
\end{itemize}
\end{lemma}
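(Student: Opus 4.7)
My plan is to handle the two cases separately via the spectral decomposition of the self-adjoint unitary $g$, whose eigenvalues in $M_n(\field{C})$ are $\pm 1$. Write $p_\pm$ for the corresponding complex spectral projections in $M_n(\field{C})$, so $p_+ + p_- = 1$ and $p_+$ has complex rank $r$.

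For case 1, the key point is that $p_\pm = \tfrac{1}{2}(1 \pm g)$ already lies in $M_k(\field{K})$. Hence $M_k(\field{K})^g$ splits as the direct sum of corner algebras $p_+ M_k(\field{K}) p_+ \oplus p_- M_k(\field{K}) p_-$, each again a matrix algebra over $\field{K}$ whose size is the $\field{K}$-rank of the corresponding projection. The relation between $\field{K}$-rank and complex rank (identity for $\field{K} = \field{R}$ or $\field{C}$, halving for $\field{K} = \field{H}$ where $n = 2k$ and each quaternionic dimension accounts for two complex dimensions) yields the stated sizes $kr/n$ and $k(n-r)/n$.

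For case 2, I would write $g = ih$ with $h = -ig \in M_k(\field{K})$ satisfying $h^* = -h$ and $h^2 = -1$. Now $p_\pm$ lies only in $M_n(\field{C})$, so I would realise $M_k(\field{K})$ as the fixed-point subalgebra of an antilinear involutive $\ast$-automorphism $\tau$ of $M_n(\field{C})$ (e.g.\ $\tau(a) = J\overline{a}J^{-1}$ in the quaternionic case, or $\tau = $ complex conjugation in the real case). Since $\tau$ is antilinear but $\tau(h) = h$, one has $\tau(g) = -g$, so $\tau$ exchanges $p_+$ and $p_-$ and restricts to an antilinear isomorphism $p_+ M_n(\field{C}) p_+ \to p_- M_n(\field{C}) p_-$. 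The commutant of $g$ in $M_n(\field{C})$ being $p_+ M_n(\field{C}) p_+ \oplus p_- M_n(\field{C}) p_-$, intersecting with $M_n(\field{C})^\tau = M_k(\field{K})$ produces the graph $\{(a,\tau(a)) : a \in p_+ M_n(\field{C}) p_+\}$, which as a real algebra is isomorphic to $p_+ M_n(\field{C}) p_+ \cong M_r(\field{C})$ via projection on the first coordinate. The equality $r = n/2$ is forced, since $\tau$ swaps the two eigenspaces of $g$ and so they must have equal complex dimension; in the $\field{K} = \field{R}$ case this puts the answer into the claimed symmetric form $\{(a,b) \in M_{k/2}(\field{C})^2 : b = \overline{a}\}$.

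I expect the main obstacle to be the careful bookkeeping between $\field{K}$-rank and complex rank of projections, especially for $\field{K} = \field{H}$, together with identifying the antilinear involution $\tau$ and verifying that it acts compatibly with the spectral data; once these are in place, the decomposition into corner algebras and identification with matrix algebras over $\field{K}$ or $\field{C}$ is routine. The delicate aspect is keeping track of which objects transform linearly versus antilinearly under $\tau$, since this is what distinguishes the two cases and determines the final isomorphism type.
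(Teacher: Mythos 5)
Your proposal is correct and follows essentially the same route as the paper: both use the complex spectral projections $p_\pm = \tfrac{1}{2}(1\pm g)$ and the observation that in the first case these already lie in $M_k(\field{K})$, while in the second case the antilinear structure realising $M_k(\field{K})$ inside $M_n(\field{C})$ must swap the $\pm 1$ eigenspaces, forcing $r = n/2$ and producing a ``graph'' algebra isomorphic to $M_{n/2}(\field{C})$. The only cosmetic difference is that you package the real form as the fixed-point algebra of an antilinear $\ast$-automorphism $\tau$, whereas the paper uses an antiunitary $I$ on $\field{C}^n$ with $M_k(\field{K}) = \{A : [A,I]=0\}$; since $\tau = \Ad_I$, these are the same thing.
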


\begin{proof}
Let $P^+ := \frac{1}{2}(1 + g)$ and $P^- := \frac{1}{2}(1-g)$, which are thus projections in $M_n(\field{C})$ of rank $r$ and $n-r$, respectively. Define an injection $\phi : M_k(\field{K})^g \mapsto M_{r}(\field{C}) \oplus M_{n-r}(\field{C})$ by $\phi(A) := (P^\even A P^\even, P^\odd  A P^\odd)$.

First, suppose that $g \in M_k(\field{K})$. Then $P^+$ and $P^-$ are also in $M_k(\field{K})$, from which it immediately follows that $\phi(M_k(\field{K})^g) = M_{kr/n}(\field{K}) \oplus M_{k(n-r)/n}(\field{K})$.

Suppose instead that $g \in i M_k(\field{K})$ and  $\field{K} \neq \field{C}$. Then $M_k(\field{K}) = \{A \in M_n(\field{C}) \mid [A,I]=0\}$ for a suitable antiunitary $I$ on $\field{C}^n$ satisfying $I^2 = \alpha 1$, where $\alpha = 1$ if $\field{K} = \field{R}$ and $\alpha = -1$ if $\field{K} = \field{H}$. Then $\{g,I\} = 0$, and hence, with respect to the decomposition $\field{C}^n = P^+ \field{C}^n \oplus P^- \field{C}^n \cong \field{C}^r \oplus \field{C}^{n-r}$,
\[
 I =
\begin{pmatrix}
 0 & \alpha \tilde{I}^*\\
 \tilde{I} & 0
\end{pmatrix},
\]
where $\tilde{I} = P^\odd I P^\even$ is an antiunitary $\field{C}^r \mapsto \field{C}^{n-r}$. Thus, $n$ is even and $r = n/2$, and taking $\tilde{I}$, without loss of generality, to be complex conjugation on $\field{C}^r$, for all $A \in M_n(\field{C})$ commuting with $g$, $[A,I] = 0$ if and only if $P^- A P^- = \overline{P^+ A P^+}$, and hence $\phi(M_k(\field{K})^g) = \{(a,\overline{a}) \mid a \in M_{n/2}(\field{C})\} \cong M_{n/2}(\field{C})$.
\end{proof}

In light of the form of $\gamma$, this last Lemma immediately implies the aforementioned explicit description of $\alg{A}^\even$ and $(\hs{H},\gamma,J)$:

\begin{proposition}
 Let $(m^\even,m^\odd) = (m^\even,(m^\even)^T)$ be the pair of multiplicity matrices of $(\hs{H},\gamma,J)$ as an even $KO$-dimensional real $\alg{A}^\even$-bimodule. Let $r_i^\prime = n - r_i$, and, when $n$ is even, let $c = n/2$. Then:
\begin{enumerate}
 \item If $\gamma_1 \in i M_{k_1}(\field{K}_1)$, $\gamma_2 \in i M_{k_2}(\field{K})$, then
\begin{equation}
 \alg{A}^\even = M_{c}(\field{C}) \oplus M_{c}(\field{C}),
\end{equation}
and
\begin{equation}
 m^\even = E_{\rep{c}_1 \rep{c}_2} + E_{\crep{c}_1 \crep{c}_2} + E_{\rep{c}_2 \crep{c}_1} + E_{\crep{c}_2 \rep{c}_1};
\end{equation}
 \item If $\gamma_1 \in i M_{k_1}(\field{K}_1)$, $\gamma_2 \in M_{k_2}(\field{K}) \setminus \{1_n\}$, then
\begin{equation}
 \alg{A}^\even = M_{c}(\field{C}) \oplus M_{k_2 r_2/n}(\field{K}_2) \oplus M_{k_2 r_2^\prime /n}(\field{K}_2).
\end{equation}
and
\begin{equation}
 m^\even = E_{\rep{c} \rep{r}_2} + E_{\crep{c} \rep{r}_2^\prime} + E_{\rep{r}_2 \crep{c}} + E_{\rep{r}_2^\prime \rep{c}};
\end{equation}
 \item If $\gamma_1 \in i M_{k_1}(\field{K}_1)$, $\gamma_2 = 1$, then
\begin{equation}
 \alg{A}^\even = M_{c}(\field{C}) \oplus M_{k_2}(\field{K}_2),
\end{equation}
and
\begin{equation}
 m^\even = E_{\rep{c} \rep{n}} + E_{\rep{n} \crep{c}};
\end{equation}
 \item If $\gamma_1 \in M_{k_1}(\field{K}_1) \setminus \{1_n\}$, $\gamma_2 \in M_{k_2}(\field{K}_2) \setminus \{1_n\}$, then
\begin{equation}
 \alg{A}^\even = M_{k_1 r_1 /n}(\field{K}_1) \oplus M_{k_1 r_1^\prime /n}(\field{K}_1) \oplus M_{k_2 r_2 / n}(\field{K}_2) \oplus M_{k_2 r_2^\prime /n}(\field{K}_2),
\end{equation}
and
\begin{equation}
 m^\even = E_{\rep{r}_1 \rep{r}_2} + E_{\rep{r}_1^\prime \rep{r}_2^\prime} + E_{\rep{r}_2 \rep{r}_1^\prime} + E_{\rep{r}_2^\prime \rep{r}_1};
\end{equation}
 \item If $\gamma_1 \in M_{k_1}(\field{K}_1) \setminus \{1_n\}$, $\gamma_2 = 1_n$, then
\begin{equation}
 \alg{A}^\even = M_{k_1 r_1 /n}(\field{K}_1) \oplus M_{k_1 r_1^\prime /n}(\field{K}_1) \oplus M_{k_2}(\field{K}_2),
\end{equation}
and
\begin{equation}
 m^\even = E_{\rep{r}_1 \rep{n}} + E_{\rep{n} \rep{r}_1^\prime};
\end{equation}
 item If $\gamma_1 = \gamma_2 = 1_n$, then
\begin{equation}
 \alg{A}^\even = M_{k_1}(\field{K}_1) \oplus M_{k_2}(\field{K}_2),
\end{equation}
and
\begin{equation}
 m^\even = E_{\rep{n}_1 \rep{n}_2}.
\end{equation}
\end{enumerate}
\end{proposition}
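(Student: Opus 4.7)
\smallskip

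My plan is to reduce everything to the preceding lemma, applied twice, then to track carefully how the spectral projections of $\gamma_1$ and $\gamma_2$ decompose the two summands of $\hs{H}$.

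First I would observe that $\alg{A}^\even$ splits immediately. Since $\alg{A} = M_{k_1}(\field{K}_1)\oplus M_{k_2}(\field{K}_2)$ and the left action sends $(a_1,a_2)$ to $(a_1\otimes 1_n)\oplus(a_2\otimes 1_n)$ on $\hs{H} = (\field{C}^n\otimes\field{C}^n)\oplus(\field{C}^n\otimes\field{C}^n)$, while $\gamma$ is block-diagonal with blocks $\gamma_1\otimes\gamma_2^T$ and $-\gamma_2\otimes\gamma_1^T$, the condition $[\lambda(a_1,a_2),\gamma]=0$ decouples to $[a_j,\gamma_j]=0$ for $j=1,2$. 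Hence
\[
 \alg{A}^\even \;=\; M_{k_1}(\field{K}_1)^{\gamma_1}\oplus M_{k_2}(\field{K}_2)^{\gamma_2},
\]
and the six listed descriptions of $\alg{A}^\even$ fall out by applying the preceding lemma to each factor (noting that $\gamma_j\in iM_{k_j}(\field{K}_j)$ forces $n=2c$ and the centralizer to be $M_c(\field{C})$, while $\gamma_j\in M_{k_j}(\field{K}_j)$ with signature $(r_j,n-r_j)$ splits the centralizer into two blocks indexed by the eigenspaces).

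Next I would compute $m^\even$ by decomposing $\hs{H}^\even = P^\even\hs{H}$ using the spectral projections $P_j^\pm = \tfrac{1}{2}(1\pm\gamma_j)$ on $\field{C}^n$. The first summand contributes
\[
 \bigl(P_1^+\field{C}^n\otimes P_2^{T,+}\field{C}^n\bigr)\oplus\bigl(P_1^-\field{C}^n\otimes P_2^{T,-}\field{C}^n\bigr),
\]
and the second summand contributes the analogous piece with the roles of $(1,2)$ swapped and an extra sign flip (so $(P_2^\pm, P_1^{T,\mp})$). Each of these four pieces is a sub-$\alg{A}^\even$-bimodule, and it remains to identify each as a standard irreducible $\hs{H}_{E_{\alpha\beta}}$ in order to read off $m^\even$ as a sum of elementary matrices. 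Since $KO$-dimension $6\bmod 8$ forces $m^\odd=(m^\even)^T$ by Proposition~\ref{real26reduction}, only $m^\even$ needs to be determined.

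The identification step is the crux, because one has to pin down complex-linearity versus conjugate-linearity of each factor. When $\gamma_j\in M_{k_j}(\field{K}_j)$, the preceding lemma shows that $M_{k_j}(\field{K}_j)^{\gamma_j}$ acts on $P_j^\pm\field{C}^n$ through its two block summands, each by its defining complex-linear irrep of dimension $r_j$ or $n-r_j$; this gives the $\rep{r}_j,\rep{r}_j'$ labels. When $\gamma_j\in iM_{k_j}(\field{K}_j)$, the proof of the preceding lemma shows that the centralizer $M_c(\field{C})$ acts on $P_j^+\field{C}^n\cong\field{C}^c$ by the standard $\rep{c}_j$, while the antiunitary $I$ intertwining $P_j^+$ and $P_j^-$ forces the action on $P_j^-\field{C}^n\cong\field{C}^c$ to be the conjugate-linear $\crep{c}_j$. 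For the right action on the second tensor factor the transpose in $b\mapsto 1_n\otimes\lambda_\beta(b)^T$ interchanges complex-linear and conjugate-linear, which is precisely why the $\crep{c}$ appears paired to $\rep{c}$ across the two spectral projections, and which produces the ``twisting'' visible in cases $(1)$, $(2)$ and $(3)$ of the statement.

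The main obstacle will be exactly this bookkeeping: keeping track of the complex-conjugation that the transpose induces on the right factor, and matching the four sub-bimodules above (two from each $\field{C}^n\otimes\field{C}^n$ summand) to the claimed list of $E_{\alpha\beta}$ in the correct order so that e.g.\ in case $(4)$ we get $E_{\rep{r}_1\rep{r}_2}+E_{\rep{r}_1'\rep{r}_2'}$ from the first summand and $E_{\rep{r}_2\rep{r}_1'}+E_{\rep{r}_2'\rep{r}_1}$ from the second (note the asymmetric pairing, caused by the sign on $\gamma$ in the second block). Once the pairing is verified in one representative case, the remaining cases follow by specializing one or both of $\gamma_j$ to $1_n$ (collapsing one spectral summand) or to $iM_{k_j}(\field{K}_j)$ (collapsing the two real blocks of the centralizer into a single complex block, with conjugation across the two halves). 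No further computation beyond the preceding lemma and Proposition~\ref{linear} is needed.
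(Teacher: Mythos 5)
Your proposal is correct and follows the same route as the paper: apply the preceding Lemma to $\gamma_1$ and $\gamma_2$ separately to get the Wedderburn decomposition of $\alg{A}^\even = M_{k_1}(\field{K}_1)^{\gamma_1}\oplus M_{k_2}(\field{K}_2)^{\gamma_2}$, then decompose $\hs{H}^\even$ via the spectral projections $P_j^\pm$ to read off $m^\even$. The paper itself simply asserts that the Lemma ``immediately implies'' the Proposition, so you are not departing from it but rather supplying the bookkeeping it leaves implicit; your discussion of how the transpose in the right action interchanges $\rep{c}$ and $\crep{c}$ across the two spectral summands, and of the asymmetric pairing coming from the sign on the second block of $\gamma$, is exactly the content that needs to be checked and you identify it correctly.
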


One can check in each case that $(\hs{H},\gamma)$ is quasi-orientable as an even $\alg{A}^\even$-bimodule. However, Propositions~\ref{converseorientable} and~\ref{intform} immediately imply the following:

\begin{corollary}
 The following are equivalent for $(\hs{H},\gamma)$ as an even $\alg{A}^\even$-bimodule:
\begin{enumerate}
 \item $\gamma_1 \in M_{k_1}(\field{K}_1)$ and $\gamma_2 \in M_{k_2}(\field{K}_2)$;
 \item $(\hs{H},\gamma)$ is orientable;
 \item $(\hs{H},\gamma)$ has non-vanishing intersection form;
 \item $(\hs{H},\gamma)$ is complex-linear.
\end{enumerate}
\end{corollary}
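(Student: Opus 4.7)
The plan is to deduce the corollary by direct case-by-case inspection of the six configurations enumerated in the preceding proposition, combined with the general criteria for orientability (Proposition~\ref{converseorientable}) and for the intersection form (Proposition~\ref{intform}).

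First I would settle the equivalence $(1) \Leftrightarrow (4)$: reading off the explicit expressions for $m^\even$ case by case, a conjugate-linear irreducible representation appears in $\supp(m^\even)$ precisely when the corresponding summand of $\alg{A}^\even$ has the form $M_c(\field{C})$, which happens exactly in cases 1, 2 and 3 --- that is, precisely when $\gamma_1 \in iM_{k_1}(\field{K}_1)$ or $\gamma_2 \in iM_{k_2}(\field{K}_2)$. The implication $(4) \Rightarrow (2)$ is already recorded earlier, since each $(\hs{H},\gamma)$ arising from an irreducible triplet has been verified to be quasi-orientable and, by Proposition~\ref{converseorientable}, any complex-linear quasi-orientable even bimodule is automatically orientable (the three sign conditions are vacuous when no $\crep{n}_i$ label occurs in $\supp(\mu)$). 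For the converse $(2) \Rightarrow (4)$ I would invoke the explicit forms of $m^\even$ and $m^\odd = (m^\even)^T$ in cases 1, 2 and 3: in each of them one reads off pairs $\mu_{\rep{c}\,\beta} = 1$, $\mu_{\crep{c}\,\beta} = -1$ (or their transposed analogues), violating the first or second sign condition of Proposition~\ref{converseorientable} and hence precluding orientability.

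For the equivalence $(1) \Leftrightarrow (3)$, I would apply Proposition~\ref{intform}, which reduces the question to whether $\widehat{m^\even} - \widehat{m^\odd} = \widehat{m^\even} - \widehat{m^\even}^T$ is the zero matrix. In cases 4, 5 and 6 a short computation produces at least one off-diagonal entry equal to $\pm 1$ (for instance $\widehat{m^\even}_{12} - \widehat{m^\even}_{21} = 1 - 0$ in case 6, with the obvious analogues in cases 4 and 5), so the intersection form is non-vanishing. In cases 1, 2 and 3 the pairing of $\rep{c}$ and $\crep{c}$ within a common summand $M_c(\field{C})$ of $\alg{A}^\even$ forces the folding $m \mapsto \widehat{m}$ to symmetrise the contributions; a short bookkeeping check shows that $\widehat{m^\even}$ is itself symmetric in each of these cases, so $\widehat{m^\even} - \widehat{m^\odd}$ vanishes identically and the intersection form is zero.

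The main obstacle is really only the routine bookkeeping for cases 1 and 2, where four distinct basis matrices $E_{\alpha\beta}$ contribute to $m^\even$ and one must track carefully how they pair up under the folding map; once this verification is complete, assembling the implications $(1) \Leftrightarrow (4)$, $(4) \Leftrightarrow (2)$ and $(1) \Leftrightarrow (3)$ yields the result.
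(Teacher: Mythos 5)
Your proposal is correct and takes essentially the same route as the paper, which simply asserts that the corollary is an immediate consequence of Propositions~\ref{converseorientable} and~\ref{intform} applied to the six cases of the preceding proposition; your case-by-case verification is the bookkeeping that the paper leaves to the reader.
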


This then motivates us to restrict ourselves to the case where $\gamma_1 \in M_{k_1}(\field{K}_1)$ and $\gamma_2 \in M_{k_2}(\field{K}_2)$. Note, however, that in no case is Poincar{\'e} duality possible.

\subsection{Off-diagonal Dirac operators}

Let us now consider the slightly more general $S^0$-real $\alg{A}^\even$-bimodule $(\hs{H}_F,\gamma_F,J_F,\epsilon_F)$ of $KO$-dimension $6 \bmod 8$ given by taking the direct sum of $N$ copies of $(\hs{H},\gamma,J,\epsilon)$, where $N \in \semiring{N}$. If we modify our earlier conventions slightly to allow for the summand $0$ in Wedderburn decompositions, we can therefore write
\begin{equation}
 \alg{A}^\even = M_{k_1 r_1 /n}(\field{K}_1) \oplus M_{k_1 r_1^\prime /n}(\field{K}_1) \oplus M_{k_2 r_2 / n}(\field{K}_2) \oplus M_{k_2 r_2^\prime}(\field{K}_2),
\end{equation}
so that $(\hs{H}_F,\gamma_F,J_F)$ is the real $\alg{A}^\even$-bimodule of $KO$-dimension $6 \bmod 8$ with signed multiplicity matrix
\begin{equation}
 \mu_F = N(E_{\rep{r}_1 \rep{r}_2} - E_{\rep{r}_1 \rep{r}_2^\prime} - E_{\rep{r}_1^\prime \rep{r}_2} + E_{\rep{r}_1^\prime \rep{r}_2^\prime} - E_{\rep{r}_2 \rep{r}_1} + E_{\rep{r}_2 \rep{r}_1^\prime} + E_{\rep{r}_2^\prime \rep{r}_1} - E_{\rep{r}_2^\prime \rep{r}_1^\prime}),
\end{equation}
whilst $(\hs{H}_f,\gamma_f) := ((\hs{H}_F)_i,(\gamma_F)_i)$ is the even $\alg{A}^\even$-bimodule with signed multiplicity matrix
\begin{equation}
 \mu_f = N(E_{\rep{r}_1 \rep{r}_2} - E_{\rep{r}_1 \rep{r}_2^\prime} - E_{\rep{r}_1^\prime \rep{r}_2} + E_{\rep{r}_1^\prime \rep{r}_2^\prime}).
\end{equation}
It then follows also that $(\hs{H}_{\overline{f}},\gamma_{\overline{f}}) := (J_F \hs{H}_f, - (J_F \gamma_f J_F)|_{J_F \hs{H}_f})$ is the even $\alg{A}^\even$-bimodule with signed multiplicty matrix
\[
 \mu_{\overline{f}} = -\mu_f^T = N(- E_{\rep{r}_2 \rep{r}_1} + E_{\rep{r}_2 \rep{r}_1^\prime} + E_{\rep{r}_2^\prime \rep{r}_1} - E_{\rep{r}_2^\prime \rep{r}_1^\prime}).
\]

Now, for $\alg{C}$ a unital \Star-subalgebra of $\alg{A}^\even$, let us call a Dirac operator $D \in \ms{D}_0(\alg{C},\hs{H}_F,\gamma_F,J_F)$ \term{off-diagonal} if it does not commute with $\epsilon_F$, or equivalently~\cite{CC08b}*{\S 4} if $[D,\mathcal{Z}(\alg{A})] \neq \{0\}$. If $\ms{D}_1(\alg{C},\hs{H}_F,\gamma_F,J_F,\epsilon_F) \subseteq \ms{D}_0(\alg{C},\hs{H}_F,\gamma_F,J_F)$ is the subspace consisting of Dirac operators anti-commuting with $\epsilon_F$, then, in fact,
\begin{equation}
 \ms{D}_0(\alg{C},\hs{H}_F,\gamma_F,J_F) = \ms{D}_0(\alg{C},\hs{H}_F,\gamma_F,J_F,\epsilon_F) \oplus \ms{D}_1(\alg{C},\hs{H}_F,\gamma_F,J_F,\epsilon_F),
\end{equation}
as can be seen from writing 
\[
 D = \frac{1}{2}\{D,\epsilon_F\}\epsilon_F + \frac{1}{2}[D,\epsilon_F]\epsilon_F
\]
for $D \in \ms{D}_0(\alg{C},\hs{H}_F,\gamma_F,J_F)$. Thus, non-zero off-diagonal Dirac operators exist for $(\hs{H}_F,\gamma_F,J_F,\epsilon_F)$ as an $S^0$-real $\alg{C}$-bimodule if and only if 
\[
 \ms{D}_1(\alg{C},\hs{H}_F,\gamma_F,J_F,\epsilon_F) \neq \{0\}.
\]
Our goal is to generalise Theorem 4.1 in~\cite{CC08b}*{\S 4} and characterise subalgebras of $\alg{A}^\even$ of maximal dimension admitting off-diagonal Dirac operators.

The following result is the first step in this direction:

\begin{proposition}[\cite{CC08b}*{Lemma 4.2}]
 A unital \Star-subalgebra $\alg{C} \subseteq \alg{A}^\even$ admits off-diagonal Dirac operators if and only if there exists some partial unitary $T \in \bdd(\field{C}^{r_1} \oplus \field{C}^{r_1^\prime} \oplus \field{C}^{r_2} \oplus \field{C}^{r_2^\prime})$ with support contained in one of $\field{C}^{r_1}$ or $\field{C}^{r_1^\prime}$ and range contained in one of $\field{C}^{r_2}$ or $\field{C}^{r_2^\prime}$, such that
\[
 \alg{C} \subseteq \alg{A}(T) := \{a \in \alg{A}^\even \mid [a,T] = [a^*,T] = 0\}.
\]
\end{proposition}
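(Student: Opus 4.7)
The plan is to reduce the existence of off-diagonal Dirac operators to an intertwiner analysis between $\hs{H}_f$ and $\hs{H}_{\overline{f}}$, and then to extract the partial unitary $T$ from the structural form forced by the generalised order one condition together with $J_F$-equivariance.

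First, I would split any $D \in \ms{D}_0(\alg{C},\hs{H}_F,\gamma_F,J_F)$ as $D = \tfrac{1}{2}\{D,\epsilon_F\}\epsilon_F + \tfrac{1}{2}[D,\epsilon_F]\epsilon_F$; the off-diagonal piece takes the form $D_o = \bigl(\begin{smallmatrix} 0 & \Delta^* \\ \Delta & 0 \end{smallmatrix}\bigr)$ with respect to $\hs{H}_F = \hs{H}_f \oplus \hs{H}_{\overline{f}}$, for some linear $\Delta:\hs{H}_f \to \hs{H}_{\overline{f}}$. The constraints on $D_o$ translate to: $\Delta$ suitably anticommutes with the gradings $\gamma_f,\gamma_{\overline{f}}$; the $J_F$-equivariance $J_F D_o = D_o J_F$ becomes the condition $\tilde J_F^*\Delta = \Delta^*\tilde J_F$ on $\hs{H}_f$; and $\Delta$ satisfies the generalised order one condition of Proposition~\ref{order1decomp} for the $\alg{C}$-bimodule structures on $\hs{H}_f$ and $\hs{H}_{\overline{f}}$. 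The proposition thus reduces to characterising those $\alg{C}$ for which a nonzero such $\Delta$ can exist.

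Next, I would analyse $\Delta$ componentwise. The supports of $\mu_f$ and $\mu_{\overline{f}}=-\mu_f^T$ lie in disjoint sets $\{\rep{r}_1,\rep{r}_1'\}\times\{\rep{r}_2,\rep{r}_2'\}$ and $\{\rep{r}_2,\rep{r}_2'\}\times\{\rep{r}_1,\rep{r}_1'\}$ respectively, which in particular yields $\bdd^1_{\alg{A}^\even}(\hs{H}_f,\hs{H}_{\overline{f}}) = \{0\}$ (consistent with the fact that $\alg{C}=\alg{A}^\even$ itself admits no off-diagonal Diracs). A nonzero component of $\Delta$ therefore has the form $\Delta^{\rep{r}_j \rep{r}_i}_{\rep{r}_i \rep{r}_j} : \field{C}^{r_i}\otimes\field{C}^N\otimes\field{C}^{r_j} \to \field{C}^{r_j}\otimes\field{C}^N\otimes\field{C}^{r_i}$ with $i\in\{1,1'\}$ and $j\in\{2,2'\}$. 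Adapting the Schur-type argument behind Proposition~\ref{zeromean} to the $\alg{C}$-equivariant setting, together with the $J_F$-equivariance constraint, forces each nonzero such component to take the factored form $\xi_1\otimes\eta\otimes\xi_2 \mapsto T\xi_1\otimes\tau(\eta)\otimes T^*\xi_2$, for some linear $T:\field{C}^{r_i}\to\field{C}^{r_j}$ intertwining the actions of $\alg{C}$ on $\field{C}^{r_i}$ and $\field{C}^{r_j}$ and some endomorphism $\tau$ of the multiplicity space $\field{C}^N$.

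For the forward direction, given a nonzero $\Delta$, I would take the polar decomposition $T = U|T|$ and keep the partial isometry $U$, which has support in $\field{C}^{r_i}$ and range in $\field{C}^{r_j}$. Since $\alg{C}$ commutes with $T$, by functional calculus it commutes with each polar factor, so $aU = Ua$ for $a\in\alg{C}$; the self-adjointness relation on the antilinear operator $\tilde J_F^*\Delta$ further yields the companion relation $a^*U = Ua^*$, establishing $\alg{C}\subseteq\alg{A}(U)$. Conversely, given $T$ with $\alg{C}\subseteq\alg{A}(T)$, I would construct $\Delta$ as a single-component map via the above ansatz (with $\tau$ any fixed nonzero element of $M_N(\field{C})$ and zero components elsewhere), extending it using the self-adjointness and $J_F$-equivariance constraints to a valid $D_o \in \ms{D}_1(\alg{C},\hs{H}_F,\gamma_F,J_F,\epsilon_F)$; order one for $\alg{C}$ then follows directly from $\alg{C}\subseteq\alg{A}(T)$. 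The main obstacle is the factorisation step: showing that the generalised order one condition for $\alg{C}$, combined with the self-adjointness of $\tilde J_F^*\Delta$, really forces every nonzero component of $\Delta$ into the factored $T\otimes\tau\otimes T^*$ form rather than some more general sum of intertwiners. This amounts to a refined Schur-type analysis on the $\alg{C}$-isotypic decompositions of $\field{C}^{r_i}$ and $\field{C}^{r_j}$, carefully tracking how the $N$-fold multiplicity factor $\field{C}^N$ (on which the order one condition is vacuous) absorbs the freedom parametrised by $\tau$, while matching this against the rigidity imposed by self-adjointness of the antilinear $\tilde J_F^*\Delta$.
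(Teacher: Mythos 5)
Your overall framework matches the paper's: reduce to a nonzero $\Delta:\hs{H}_f\to\hs{H}_{\overline{f}}$ satisfying the generalised order one condition and anticommuting with the gradings, analyse components, extract a partial isometry via polar decomposition, and reverse the construction for the converse. The converse direction (building a single-component map $T\otimes\Upsilon\otimes T^*$) is essentially identical to what the paper does. But the forward direction hinges on your claim that the constraints force every nonzero component of $\Delta$ into the factored form $\xi_1\otimes\eta\otimes\xi_2\mapsto T\xi_1\otimes\tau(\eta)\otimes T^*\xi_2$, and you rightly flag this as the main obstacle: it is a genuine gap, and the claim is in fact false in general. Generalised order one only says the component splits as a left-$\alg{C}$-linear plus a right-$\alg{C}$-linear piece, and even the left-linear piece of a single component can be a sum $\sum_i A_i\otimes B_i$ with several distinct intertwiners $A_i$ in the first leg. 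The constraint $\tilde J_F^*\Delta=\Delta^*\tilde J_F$ relates \emph{different} components via the flip $(\alpha,\beta)\leftrightarrow(\beta,\alpha)$; it does not couple the first and third tensor legs of a single component so as to force a rank-one factorisation.

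The paper avoids this entirely by a more modest observation. After reducing (up to a symmetric left/right argument) to a nonzero left $\alg{C}$-linear $S$ anticommuting with the gradings, fix a nonzero component $S_{\alpha\beta}^{\gamma\delta}$ and write $S_{\alpha\beta}^{\gamma\delta}=\sum_i A_i\otimes B_i$ with $A_i\in M_{n_\gamma\times n_\alpha}(\field{C})$ and the $B_i$ linearly independent. Since $\lambda_{\overline{f}}(a)S-S\lambda_f(a)=\sum_i(\lambda_\gamma(a)A_i-A_i\lambda_\alpha(a))\otimes B_i$, linear independence of the $B_i$ forces $\lambda_\gamma(a)A_i=A_i\lambda_\alpha(a)$ for \emph{each} $i$ (and likewise with $a^*$) whenever $a$ commutes with $S_{\alpha\beta}^{\gamma\delta}$. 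So $\alg{C}\subseteq\alg{A}(A_1)$ for the single matrix $A_1$, with no need for $S$ itself to factor. Taking the polar decomposition of $A_1$ and noting that commutation with $A_1$ and $A_1^*$ passes to the partial isometry factor then yields the required $T$. If you replace your factorisation step with this linear-independence trick, your forward argument closes; note also that $J_F$-equivariance is only needed for the converse, not for extracting $T$.
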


\begin{proof}
 First note that the map $\ms{D}_1(\alg{C},\hs{H}_F,\gamma_F,J_F,\epsilon_F) \to \bdd_\alg{C}^1(\hs{H}_f,\hs{H}_{\overline{f}})$ given by $D \mapsto P_{-i} D P_i$ is an isomorphism, so that $\alg{C}$ admits off-diagonal Dirac operators if and only if $\bdd_\alg{C}^1(\hs{H}_f,\hs{H}_{\overline{f}}) \neq \{0\}$. Since a map $S \in \bdd(\hs{H}_f,\hs{H}_{\overline{f}})$ satisfies the generalised order one condition for $\alg{C}$ if and only if $\rho_{\overline{f}}(c) S - S \rho_f(C)$ is left $\alg{C}$-linear for all $c \in \alg{C}$, $\alg{C}$ admits off-diagonal Dirac operators only if
\[
 \{S \in \lbdd_\alg{C}(\hs{H}_f,\hs{H}_{\overline{f}}) \mid -\gamma_{\overline{f}}S = S\gamma_f\} \neq \{0\},
\]
or equivalently,
\[
 \alg{C} \subseteq \alg{A}_S := \{a \in \alg{C}^\even \mid \lambda_{\overline{f}}(a)S = S\lambda(a), \; \lambda_{\overline{f}}(a^*)S = S\lambda(a^*)\}
\]
for some non-zero $S \in \bdd(\hs{H}_f,\hs{H}_{\overline{f}})$ such that $-\gamma_{\overline{f}}S = S\gamma_f$. 

Now, let $S \in \bdd(\hs{H}_f,\hs{H}_{\overline{f}})$ be non-zero and such that $-\gamma_{\overline{f}}S = S\gamma_f$. Then, the support of $S$ must have non-zero intersection with one of $(\hs{H}_F)_{\rep{r}_1\rep{r}_2}$ or $(\hs{H}_F)_{\rep{r}_1^\prime \rep{r}_2^\prime}$, and the range of $S$ must have non-zero intersection with one of $(\hs{H}_F)_{\rep{r}_2\rep{r}_1}$ or $(\hs{H}_F)_{\rep{r}_2^\prime \rep{r}_1^\prime}$. Thus, $S_{\alpha\beta}^{\gamma\delta} \neq 0$ for some $(\alpha,\beta) \in \{(\rep{r}_1,\rep{r}_2),(\rep{r}_1^\prime,\rep{r}_2^\prime)\}$ and $(\gamma,\delta) \in \{(\rep{r}_2,\rep{r}_1),(\rep{r}_2^\prime, \rep{r}_1^\prime)\}$, so that $\alg{A}_S \subseteq \alg{A}_{S_{\alpha\beta}^{\gamma\delta}}$. Let us now write
\[
 S_{\alpha\beta}^{\gamma\delta} = \sum_{i} A_i \otimes B_i
\]
for non-zero $A_i \in M_{n_\gamma \times n_\alpha}(\field{C})$ and for linearly independent $B_i \in M_{N n_\delta \times N n_\beta}(\field{C})$. Then for all $a \in \alg{A}^\even$,
\[
 \lambda_{\overline{f}}(a)S - S\lambda_f(a) = \sum_i \bigl(\lambda_\gamma(a) A_i - A_i \lambda_\alpha(a)\bigr) \otimes B_i,
\]
so by linear independence of the $B_i$, $a \in \alg{A}_{S_{\alpha\beta}^{\gamma\delta}}$ if and only if for each $i$,
\[
 \lambda_\gamma(a) A_i = A_i \lambda_\alpha(a), \quad \lambda_\gamma(a^*) A_i = A_i \lambda_\alpha(a),
\]
and hence
\[
 \alg{A}(S) \subseteq \alg{A}_{S_{\alpha\beta}^{\gamma\delta}} \subseteq \alg{A}(T_0) := \{a \in \alg{A}^\even \mid [a,T_0] = 0, [a^*,T_1]=0\}
\]
for $T_0 = A_1$, say, viewing $T_0$ and the elements of $\alg{A}^\even$ as operators on $\field{C}^{r_1} \oplus \field{C}^{r_1^\prime} \oplus \field{C}^{r_2} \oplus \field{C}^{r_2^\prime}$. However, if $T_0 = P T$ is the polar decomposition of $T_0$ into a positive operator $P$ on $\field{C}^{n_\gamma}$ and a partial isometry $T : \field{C}^{n_\alpha} \to \field{C}^{n_\gamma}$, it follows that $a \in \alg{A}^\even$ commutes with $T_0$ only if it commutes with $T$, and hence $\alg{A}_0 \subseteq \alg{A}(T_0) \subseteq \alg{A}(T)$, proving the one direction of the claim.

Now suppose that $\alg{C} = \alg{A}(T)$ for a suitable partial isometry $T$, which we view as a partial isometry $\field{C}^{n_{\alpha_0}} \to \field{C}^{n_{\gamma_0}}$ for some $\alpha_0 \in \{\rep{r}_1,\rep{r}_1^\prime\}$, $\gamma_0 \in \{\rep{r}_2,\rep{r}_2^\prime\}$. Then for any non-zero $\Upsilon \in M_N(\field{C})$, we can define an element $S(\Upsilon) \in \lrbdd_\alg{C}(\hs{H}_f,\hs{H}_{\overline{f}})$ by setting
\[
 S(\Upsilon)_{\alpha\beta}^{\gamma\delta} =
\begin{cases}
 T \otimes \Upsilon \otimes T^* &\text{if $\alpha = \delta = \alpha_0$, $\beta = \gamma = \gamma_0$,}\\
 0 &\text{otherwise,}
\end{cases}
\]
which, as noted above, corresponds to a unique non-zero element of the space $\ms{D}_1(\alg{C},\hs{H}_F,\gamma_F,J_F,\epsilon_F)$, so that $\alg{C}$ does indeed admit off-diagonal Dirac operators.
\end{proof}

In light of the above characterisation, it suffices to consider subalgebras $\alg{A}(T)$ for partial isometries $T : \field{C}^{r_1} \to \field{C}^{r_2}$, so that
\begin{align}
 \alg{A}(T) &= \{(a_1,a_2,b_1,b_2) \in \alg{A}^\even \mid b_1 T = T a_1, \; b_1^* T = T a_1^*\}\\
 &\cong \alg{A}_0(T) \oplus M_{k_1 r_1^\prime / n}(\field{K}_1) \oplus M_{k_2 r_2^\prime /n}(\field{K}_k),
\end{align}
where
\begin{equation}
 \alg{A}_0(T) := \{(a,b) \in M_{k_1 r_1/n}(\field{K}_1) \oplus M_{k_2 r_2 /n}(\field{K}_2) \mid b T = T a, \; b^* T = T a^*\},
\end{equation}
so that our problem is reduced to that of maximising the dimension of $\alg{A}_0(T)$.

It is reasonable to assume that $T$ is, in some sense, compatible with the algebraic structures of $M_{k_1 r_1/n}(\field{K}_1)$ and $M_{k_2 r_2/n}(\field{K}_2)$, so as to minimise the restrictiveness of the defining condition on $\alg{A}_0(T)$, and hence maximise the dimension of $\alg{A}_0(T)$. It turns out that this notion of compatibility takes the form of the following conditions on $T$:
\begin{enumerate}
 \item The subspace $\supp(T)$ of $\field{C}^{r_1}$ is either a $\field{K}_1$-linear subspace of $\field{C}^{r_1} = \field{K}_1^{k_1 r_1/n}$ or, if $\field{K}_1 = \field{H}$, $\supp(T) = E \oplus \field{C}$ for $E$ an $\field{H}$-linear subspace of $\field{C}^{r_1} = \field{H}^{r_1/2}$;
 \item The subspace $\im(T)$ of $\field{C}^{r_2}$ is either a $\field{K}_2$-linear subspace of $\field{C}^{r_2} = \field{K}_1^{k_2 r_2/n}$ or, if $\field{K}_2 = \field{H}$, $\im(T) = E \oplus \field{C}$ for $E$ an $\field{H}$-linear subspace of $\field{C}^{r_2} = \field{H}^{r_2/2}$.
\end{enumerate}
Now, let $r = \rank(T)$, let $d(r) = \dim_{\field{R}}(\alg{A}_0(T))$, and let
\[
 d_i = 
\begin{cases}
 1 &\text{if $\field{K}_i = \field{R}$,}\\
 2 &\text{if $\field{K}_i = \field{C}$,}\\
 \frac{1}{2} &\text{if $\field{K}_i = \field{H}$.}
\end{cases}
\]
Under these assumptions, then, one can show that
\begin{enumerate}
 \item If $\field{K}_1 = \field{K}_2$ or $\field{K}_2 = \field{C}$, and, if $\field{K}_1 = \field{H}$, $r$ is even, then
\begin{equation}
 \alg{A}_0(T) \cong M_{k_1 r /n}(\field{K}_1) \oplus M_{k_1 (r_1 - r)/n}(\field{K}_1) \oplus M_{k_2 (r_2 - r)/n}(\field{K}_2),
\end{equation}
and hence
\[
 d(r) = d_1 r^2 + d_1 (r - r_1)^2 + d_2 (r - r_2)^2;
\]
 \item If $(\field{K}_1,\field{K}_2) = (\field{H},\field{R})$ and $r$ is odd, then
\begin{equation}
 \alg{A}_0(T) \cong \bigl(M_{(r-1)/2}(\field{H}) \cap M_{r-1}(\field{R})\bigr) \oplus \field{R} \oplus M_{(r_2-r-1)/2}(\field{H}) \oplus M_{r_1 - r}(\field{R}),
\end{equation}
and hence
\[
 d(r) = (r-1)^2 + 1 + \frac{1}{2}(r - r_2 + 1)^2 + (r - r_1)^2;
\]
 \item If $(\field{K}_1,\field{K}_2) = (\field{H},\field{C})$ and $r$ is odd, then
\begin{equation}
 \alg{A}_0(T) \cong M_{(r-1)/2}(\field{H}) \oplus \field{C} \oplus M_{(r_2 - r - 1)/2}(\field{H}) \oplus M_{r_1 - r}(\field{C}),
\end{equation}
and hence
\[
 d(r) = \frac{1}{2}(r-1)^2 + 2 + \frac{1}{2}(r-r_2+1)^2 + 2(r-r_1)^2;
\]
 \item If $\field{K}_1 = \field{K}_2 = \field{H}$ and $r$ is odd, then
\begin{equation}
 \alg{A}_0(T) \cong M_{(r-1)/2}(\field{H}) \oplus \field{C} \oplus M_{(r_1 - r - 1)/2}(\field{H}) \oplus M_{(r_2-r-1)/2}(\field{H}),
\end{equation}
and hence
\begin{equation}
 d(r) = \frac{1}{2}(r-1)^2 + 2 + \frac{1}{2}(r-r_1+1)^2 + \frac{1}{2}(r-r_2+1)^2.
\end{equation}
\end{enumerate}
The other cases are obtained easily, by symmetry, from the ones listed above.

Now, let $R_{\textrm max}$ be the set of all $r \in \{1, \dotsc, \min(r_1,r_2)\}$ maximising the value of $d(r)$. By checking case by case, one can arrive at the following generalisation of Theorem 4.1 in~\cite{CC08b}:

\begin{proposition}
 Let $T : \field{C}^{r_1} \to \field{C}^{r_2}$ be a partial isometry. Then $\alg{A}(T)$ attains maximal dimension only if $\rank(T) \in R_{\textrm max}$, where $R_{\textrm max} = \{1\}$ except in the following cases:
\begin{enumerate}
 \item $(\field{K}_1,\field{K}_2)=(\field{C},\field{C})$ and $(r_1,r_2)=(2,2)$, in which case $R_{\textrm max} = \{2\}$;
 \item $(\field{K}_1,\field{K}_2)=(\field{C},\field{C})$ and $(r_1,r_2)=(3,3)$, in which case $R_{\textrm max} = \{1,2\}$;
 \item $(\field{K}_1,\field{K}_2)=(\field{C},\field{R})$ and $(r_1,r_2)=(2,2)$, in which case $R_{\textrm max} = \{1,2\}$;
 \item $(\field{K}_1,\field{K}_2)=(\field{C},\field{H})$ and $(r_1,r_2)=(2,2)$, in which case $R_{\textrm max} = \{1,2\}$;
 \item $(\field{K}_1,\field{K}_2)=(\field{R},\field{C})$ and $(r_1,r_2)=(2,2)$, in which case $R_{\textrm max} = \{1,2\}$;
 \item $(\field{K}_1,\field{K}_2)=(\field{R},\field{R})$ and $(r_1,r_2)=(2,2)$, in which case $R_{\textrm max} = \{2\}$;
 \item $(\field{K}_1,\field{K}_2)=(\field{R},\field{R})$ and $(r_1,r_2)=(3,3)$, in which case $R_{\textrm max} = \{1,2\}$;
 \item $(\field{K}_1,\field{K}_2)=(\field{R},\field{H})$ and $r_1=2$, in which case $R_{\textrm max} = \{1,2\}$;
 \item $(\field{K}_1,\field{K}_2)=(\field{H},\field{C})$ and $(r_1,r_2)=(2,2)$, in which case $R_{\textrm max} = \{1,2\}$;
 \item $(\field{K}_1,\field{K}_2)=(\field{H},\field{R})$ and $r_2=2$, in which case $R_{\textrm max} = \{1,2\}$;
 \item $(\field{K}_1,\field{K}_2)=(\field{H},\field{H})$ and $(r_1,r_2)=(4,4)$, in which case $R_{\textrm max} = \{4\}$;
 \item $(\field{K}_1,\field{K}_2)=(\field{H},\field{H})$ and $(r_1,r_2)\neq(4,4)$, in which case $R_{\textrm max} = \{2\}$.
\end{enumerate}
Moreover, if $T$ satisfies the aforementioned compatibility conditions, then $\alg{A}(T)$ does indeed attain maximal dimension whenever $\rank(T) \in R_{\textrm max}$.
\end{proposition}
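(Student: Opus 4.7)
The plan is to perform an exhaustive but organized case analysis on the pair $(\field{K}_1,\field{K}_2)$, using the explicit dimension formulas $d(r) = \dim_\field{R}(\alg{A}_0(T))$ already derived in the discussion preceding the statement. The key observation is that in the ``generic'' situation (where either $\field{K}_1=\field{K}_2$ or $\field{K}_2=\field{C}$, and subject to the parity restriction when $\field{K}_1=\field{H}$), the function
\[
 d(r) = d_1 r^2 + d_1(r-r_1)^2 + d_2(r-r_2)^2
\]
is a sum of squares with positive coefficients, hence a convex quadratic in $r$. Consequently its maximum on the interval $[1, \min(r_1,r_2)]$ is attained at one of the two endpoints, so the problem reduces to comparing $d(1)$ with $d(\min(r_1,r_2))$ in each case.

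First I would treat the generic case: expand $d(1)$ and $d(r_{\max})$ for $r_{\max} = \min(r_1,r_2)$ and compute the difference $d(r_{\max}) - d(1)$ as a polynomial in the parameters $r_1$, $r_2$, $d_1$, $d_2$. This difference factors nicely, and its sign can be read off by inspection, with equality characterizing the boundary (``tie'') cases. For generic $(r_1,r_2)$ away from the small exceptional values listed in the statement, this difference has a definite sign giving $R_{\max} = \{1\}$; the small cases where it vanishes or changes sign give the listed exceptions. Next, I would handle the remaining ``non-generic'' pairs, namely $(\field{K}_1,\field{K}_2) \in \{(\field{H},\field{R}),(\field{H},\field{C}),(\field{H},\field{H})\}$ together with odd-rank cases. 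Here there are potentially two competing formulas for $d(r)$ (one for even $r$, one for odd $r$), and I would tabulate $d(r)$ for $r$ running from $1$ up to $\min(r_1,r_2)$, relying on convexity of each parity-restricted piece to reduce the search to endpoints and the boundary between the two parities.

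For the converse direction, once the optimal rank values are identified, I would verify that partial isometries $T$ satisfying the compatibility conditions stated (support and image being $\field{K}_i$-linear subspaces, with the quaternionic exceptional summand when needed) actually realize the dimension formulas that were derived. This amounts to checking that the decompositions of $\alg{A}_0(T)$ listed in items (1)--(4) of the preceding discussion are attained by such $T$, which is straightforward from the explicit form of the compatibility conditions.

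The main obstacle will be purely combinatorial: keeping track of all the sub-cases, especially the small-parameter exceptions listed (items (1)--(12) of the statement), where the inequality between $d(1)$ and $d(\min(r_1,r_2))$ becomes an equality or reverses, and where the quaternionic odd-rank formulas can tie with neighboring even-rank values. No single step is conceptually difficult, but getting the exhaustive list exactly right — in particular correctly handling the quaternionic ``$E \oplus \field{C}$'' exceptional supports which give rise to the odd-rank formulas — will require careful arithmetic. I expect the argument to be essentially a verification, and the resulting proof to consist of a table comparing $d(r)$ across $r$ for each pair $(\field{K}_1,\field{K}_2)$, with the listed exceptions flagged as precisely those where the inequality is not strict.
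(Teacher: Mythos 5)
Your approach is sound and essentially the same as the paper's, which amounts to a case-by-case verification using the dimension formulas derived just beforehand; the convexity observation is a useful way to organize the search, but it is not a fundamentally different route, so in that sense the proposal matches the paper.

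Two points warrant attention before you execute the plan, however. First, do not take the displayed $d(r)$ expressions at face value: the normalization $d_{\field{H}} = \tfrac{1}{2}$ does not reproduce $\dim_{\field{R}} M_{r/2}(\field{H}) = r^2$ (it gives $r^2/2$), and in formula (3) for $(\field{K}_1,\field{K}_2)=(\field{H},\field{C})$ the remnant factors appear with $r_1$ and $r_2$ interchanged — the quaternionic remnant lives in $\ker T \subseteq \field{C}^{r_1}$ and the complex one in $(\im T)^\perp \subseteq \field{C}^{r_2}$, so these should read $M_{(r_1-r-1)/2}(\field{H})$ and $M_{r_2-r}(\field{C})$. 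It is safer to compute dimensions directly from the stated decompositions of $\alg{A}_0(T)$ than from the $d(r)$ formulas. Second, your convexity argument is strong enough to expose what seem to be misprints in the proposition itself: when $\field{K}_1,\field{K}_2 \neq \field{H}$ there is no parity split, and strict convexity of $d$ on $\{1,\dotsc,\min(r_1,r_2)\}$ forces the argmax set to be $\{1\}$, $\{\min(r_1,r_2)\}$, or $\{1,\min(r_1,r_2)\}$; in particular $R_{\textrm{max}}=\{1,2\}$ cannot occur when $\min(r_1,r_2)=3$. Items 2 and 7 assert exactly this for $(\field{C},\field{C})$ and $(\field{R},\field{R})$ at $(3,3)$, whereas direct computation gives $d(1)=d(3)>d(2)$, so those entries should presumably read $\{1,3\}$. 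Be prepared for the carefully executed comparison to yield a slightly amended list — this is not a gap in your method but a consequence of running it correctly.
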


One must carry out the same calculations for the other possibilities for the domain and range of $T$, but this can be done simply by replacing $(r_1,r_2)$ in the above equations and claims with $(r_1,r_2^\prime)$, $(r_1^\prime,r_2)$ and $(r_1^\prime,r_2^\prime)$. Thus, one can determine the maximal dimension of a subalgebra of $\alg{A}^\even$ admitting off-diagonal operators by comparing the maximal values of $\dim_{\field{R}}(\alg{A}(T))$ for $T : \field{C}^{r_1} \to \field{C}^{r_2}$, $T : \field{C}^{r_1} \to \field{C}^{r_2^\prime}$, $T : \field{C}^{r_1^\prime} \to \field{C}^{r_2}$, and $T : \field{C}^{r_1^\prime} \to \field{C}^{r_2^\prime}$.

Finally, by means of the discussion above and the fact that $\Sp(n)$ acts transitively on $1$-dimensional subspaces of $\field{C}^n$, one can readily check that the real \Cstar-algebra $\alg{A}_F$ and the $S^0$-real $\alg{A}_F$-bimodule $(\hs{H}_F,\gamma_F,J_F,\epsilon_F)$ of $KO$-dimension $6 \bmod 8$ of the NCG Standard Model are uniquely determined, up to inner automorphisms of $\alg{A}^\even$ and unitary equivalence, by the following choice of inputs:
\begin{itemize}
 \item $n = 4$;
 \item $(\field{K}_1,\field{K}_2) = (\field{H},\field{C})$;
 \item $g_1 \in M_2(\field{H})$, $g_2 \in M_4(\field{C})$;
 \item $(r_1,r_2) = (2,4)$;
 \item $N = 3$.
\end{itemize}
The value of $N$, by construction, corresponds to the number of generations of fermions, whilst the values of $n$, $r_1$ and $r_2$ give rise to the number of species of fermion of each chirality per generation. The significance of the other inputs remains to be seen.

\section{Conclusion}

As we have seen, the structure theory first developed by Paschke and Sitarz~\cite{PS98} and by Krajewski~\cite{Kraj98} for finite real spectral triples of $KO$-dimension $0 \bmod 8$ and satisfying orientability and Poincar{\'e} duality can be extended quite fully to the case of arbitrary $KO$-dimension and without the assumptions of orientability and Poincar{\'e} duality. In particular, once a suitable ordering is fixed on the spectrum of a finite-dimensional real \Cstar-algebra $\alg{A}$, the study of finite real spectral triples with algebra $\alg{A}$ reduces completely to the study of the appropriate multiplicity matrices and of certain moduli spaces constructed using those matrices. This reduction is what has allowed for the success of Krajewski's diagrammatic approach~\cite{Kraj98}*{\S 4} in the cases dealt with by Iochum, Jureit, Sch{\"u}cker, and Stephan~\cites{ACG1,ACG2,ACG3,ACG4,ACG5,JS08,Sch05}. We have also seen how to apply this theory both to the ``finite geometries'' of the current version of the NCG Standard Model~\cites{Connes06,CCM07,CM08} and to Chamseddine and Connes's framework~\cites{CC08a,CC08b} for deriving the same finite geometries.

Dropping the orientability requirement comes at a fairly steep cost, as even bimodules of various sorts generally have fairly intricate moduli spaces of Dirac operators. It would therefore be useful to characterise the precise nature of the failure of orientability (and of Poincar{\'e} duality) for the finite spectral triple of the current noncommutative-geometric Standard Model. It would also be useful to generalise and study the physically-desirable conditions identified in the extant literature on finite spectral triples, such as dynamical non-degeneracy~\cite{Sch05} and anomaly cancellation~\cite{Kraj98}. Indeed, it would be natural to generalise Krajewski diagrams~\cite{Kraj98} and the combinatorial analysis they facilitate~\cite{JS08} to bilateral spectral triples of all types. The paper by Paschke and Sitarz~\cite{PS98} also contains further material for generalisation, namely discussion of the noncommutative differential calculus of a finite spectral triple and of quantum group symmetries. In particular, one might hope to characterise finite spectral triples equivariant under the action or coaction of a suitable Hopf algebra~\cites{PS00,Sit03}.

Finally, as was mentioned earlier, the finite geometry of the current NCG Standard Model fails to be $S^0$-real. However, this failure is specifically the failure of the Dirac operator $D$ to commute with the $S^0$-real structure $\epsilon$. The ``off-diagonal'' part of $D$ does, however, take a very special form; we hope to provide in future work a more geometrical interpretation of this term, which provides for Majorana fermions and for the so-called see-saw mechanism~\cite{CCM07}.

\begin{bibdiv}
\begin{biblist}
\bib{Bar07}{article}{
	title={A Lorentzian version of the non-commutative geometry of the standard model of particle physics},
	author={Barrett, John W.},
	journal={J. Math. Phys.},
	volume={48},
	date={2007},
	number={012303},
}
\bib{CC08a}{article}{
	title={Conceptual explanation for the algebra in the noncommutative approach to the Standard Model},
	author={Chamseddine, Ali H.},
	author={Connes, Alain},
	journal={Phys. Rev. Lett.},
	volume={99},
	date={2007},
	number={191601},
}
\bib{CC08b}{article}{
	title={Why the standard model},
	author={Chamseddine, Ali H.},
	author={Connes, Alain},
	journal={J. Geom. Phys.},
	volume={58},
	date={2008},
	pages={38--47},
}
\bib{CCM07}{article}{
	title={Gravity and the Standard Model with neutrino mixing},
	author={Chamseddine, Ali H.},
	author={Connes, Alain},
	author={Marcolli, Matilde},
	journal={Adv. Theor. Math. Phys.},
	volume={11},
	date={2007},
	pages={991--1089},
}
\bib{Connes95a}{article}{
	title={Geometry from the spectral point of view},
	author={Connes, Alain},
	journal={Lett. Math. Phys.},
	volume={34},
	date={1995},
	number={3},
	pages={203--238},
}
\bib{Connes95}{article}{
	title={Noncommutative geometry and reality},
	author={Connes, Alain},
	journal={J. Math. Phys.},
	volume={6},
	date={1995},
	pages={6194--6231},
}
\bib{Connes06}{article}{
	title={Noncommutative geometry and the Standard Model with neutrino mixing},
	author={Connes, Alain},
	journal={JHEP},
	volume={11},
	date={2006},
	number={81},
}
\bib{CM08}{book}{
	title={Noncommutative Geometry, Quantum Fields and Motives},
	author={Connes, Alain},
	author={Marcolli, Matilde},
	series={Colloquium Publications},
	volume={55},
	publisher={American Mathematical Society},
	address={Providence, RI},
	date={2007},
}
\bib{Ell07}{article}{
	title={Towards a theory of classification},
	author={Elliott, George A.},
	journal={Adv. in Math.},
	volume={223},
	date={2010},
	number={1},
	pages={30--48},
}
\bib{Ell08}{misc}{
	title={private conversation},
	author={Elliott, George A.},
	date={2008},
}
\bib{Fare}{book}{
	title={Algebras of Linear Transformations},
	author={Farenick, Douglas R.},
	publisher={Springer},
	address={New York},
	date={2000},
}
\bib{ACG1}{article}{
	title={On a classification of irreducible almost commutative geometries},
	author={Iochum, Bruno},
	author={Sch{\"u}cker, Thomas},
	author={Stephan, Christoph},
	journal={J. Math. Phys.},
	volume={45},
	date={2004},
	pages={5003--5041},
}
\bib{ACG2}{article}{
	title={On a classification of irreducible almost commutative geometries, a second helping},
	author={Jureit, Jan-H.},
	author={Stephan, Christoph A.},
	journal={J. Math. Phys.},
	volume={46},
	date={2005},
	number={043512},
}
\bib{ACG3}{article}{
	title={On a classification of irreducible almost commutative geometries III},
	author={Jureit, Jan-Hendrik},
	author={Sch{\"u}cker, Thomas},
	author={Stephan, Christoph},
	journal={J. Math. Phys.},
	volume={46},
	date={2005},
	number={072303},
}
\bib{ACG4}{article}{
	title={On a classification of irreducible almost commutative geometries IV},
	author={Jureit, Jan-Hendrik},
	author={Stephan, Christoph A.},
	journal={J. Math. Phys.},
	volume={49},
	date={2008},
	pages={033502},
}
\bib{ACG5}{article}{
	title={On a classification of irreducible almost commutative geometries, V},
	author = {Jureit, Jan-Hendrik},
	author={Stephan, Christoph A.},
	date={2009},
}
\bib{JS08}{article}{
	title={Finding the standard model of particle physics, a combinatorial problem},
	author={Jureit, Jan-H.},
	author={Stephan, Christoph A.},
	journal={Comp. Phys. Comm.},
	volume={178},
	date={2008},
	pages={230--247},
}
\bib{Kraj98}{article}{
	title={Classification of finite spectral triples},
	author={Krajewski, Thomas},
	journal={J. Geom. Phys.},
	volume={28},
	date={1998},
	pages={1--30},
}
\bib{Li}{book}{
	title={Introduction to Operator Algebras},
	author={Li, Bing-Ren},
	publisher={World Scientific},
	address={Singapore},
	date={1992},
}
\bib{PS98}{article}{
	title={Discrete spectral triples and their symmetries},
	author={Paschke, Mario},
	author={Sitarz, Andrzej},
	journal={J. Math. Phys.},
	volume={39},
	date={1998},
	pages={6191--6205},
}
\bib{PS00}{article}{
	title={The geometry of noncommutative symmetries},
	author={Paschke, Mario},
	author={Sitarz, Andrzej},
	journal={Acta Physica Polonica B},
	volume={31},
	date={2000},
	pages={1897--1911},
}
\bib{Sch05}{article}{
	title={Krajewski diagrams and spin lifts},
	author={Sch{\"u}cker, Thomas},
	date={2005},
	eprint={arXiv:hep-th/0501181v2},
}
\bib{Sit03}{article}{
	title={Equivariant spectral triples},
	author={Sitarz, Andrzej},
	book={
		title={Noncommutative Geometry and Quantum Groups},
		editor={Hajac, Piotr M.},
		editor={Pusz, Wies{\l}aw},
		series={Banach Center Publ.},
		volume={61},	
		publisher={Polish Acad. Sci.},
		address={Warsaw},
		date={2003},
	},
	pages={231--268},
}
\bib{Schwarz75}{article}{
	title={Smooth functions invariant under the action of a compact Lie group},
	author={Schwarz, Gerald W.},
	journal={Topology},
	volume={14},
	date={1975},
	pages={63--68},
}
\bib{St06}{article}{
	title={Almost-commutative geometry, massive neutrinos and the orientability axiom in $KO$-dimension $6$},
	author={Stephan, Christoph A.},
	date={2006},
	eprint={arXiv:hep-th/0610097v1},
}
\end{biblist}
\end{bibdiv}

\end{document}